\documentclass[submission,copyright,creativecommons]{eptcs}

\usepackage{underscore}           




\usepackage{marginnote}
\usepackage{microtype}
\usepackage{multicol}
\usepackage{mathtools,amsthm}
\usepackage{ragged2e} 
\usepackage{graphicx}
\usepackage{cmll}
\hypersetup{hidelinks}

\newcommand{\macrospath}{./}

\usepackage[utf8]{inputenc}

\usepackage{amssymb}
\usepackage{stmaryrd}
\usepackage{bussproofs}
\usepackage{mathtools}
\usepackage{bbold}

\usepackage{ifthen}
\usepackage{xspace}
\usepackage{fancybox}
\usepackage{hhline}

\usepackage{xcolor}

\newcommand{\sem}[1]{[\![#1]\!]}

\newcommand{\ignore}[1]{}

\newcommand{\myinput}[1]{\ifthenelse{\boolean{withimages}}{\input{#1}}{}}

\newcommand{\reflemma}[1]{Lemma~\ref{l:#1}}
\newcommand{\reflemmas}[2]{Lemmas~\ref{l:#1} and \ref{l:#2}} 
\newcommand{\reflemmap}[2]{Lemma~\ref{l:#1}.\ref{p:#1-#2}}
\newcommand{\reflemmasp}[4]{Lemmas~\ref{l:#1}.\ref{p:#1-#2} and \ref{l:#3}.\ref{p:#3-#4}}

\newcommand{\refpoint}[1]{Point~\ref{p:#1}}
\newcommand{\refpoints}[2]{Points~\ref{p:#1}-\ref{p:#2}}


\newcommand{\refthm}[1]{Thm.~\ref{thm:#1}}

\newcommand{\refprop}[1]{Prop.~\ref{prop:#1}}
\newcommand{\refpropp}[2]{Prop.~\ref{prop:#1}.\ref{p:#1-#2}} 
\newcommand{\refpropps}[3]{Prop.~\ref{prop:#1}.\ref{p:#1-#2}-\ref{p:#1-#3}} 
\newcommand{\refsect}[1]{Sect.~\ref{sect:#1}}

\renewcommand{\refeq}[1]{(\ref{eq:#1})} 
\newcommand{\reffig}[1]{Fig.~\ref{fig:#1}}
\newcommand{\refcoro}[1]{Cor.~\ref{coro:#1}}
\newcommand{\refcor}[1]{Cor.\,\ref{coro:#1}}
\newcommand{\refdef}[1]{Definition~\ref{def:#1}}

\newcommand{\refrmk}[1]{Rmk.~\ref{rmk:#1}} 
\newcommand{\refex}[1]{Ex.~\ref{ex:#1}}


\newcommand{\ie}{\textit{i.e.}\xspace}
\newcommand{\eg}{\textit{e.g.}\xspace}
\newcommand{\ih}{\textit{i.h.}\xspace}








\newcommand{\defeq}{\coloneqq} 
\newcommand{\eqdef}{\eqqcolon} 
\newcommand{\grameq}{\Coloneqq} 

\newcommand{\nat}{\mathbb{N}}
\newcommand{\size}[1]{|#1|}





\renewcommand{\l}{\lambda}
\newcommand{\isub}[2]{\{#1/#2\}}
\renewcommand{\isub}[2]{\{#1{\shortleftarrow}#2\}}



\newcommand{\rootRew}[1]{\mapsto_{#1}}
\newcommand{\Rew}[1]{\rightarrow_{#1}}





\newcommand{\rtobv}{\rootRew{\betav}} 


\newcommand{\slsym}{\sigma_1}
\newcommand{\srsym}{\sigma_3}

\newcommand{\rtosl}{\rootRew{\slsym}}
\newcommand{\rtosr}{\rootRew{\srsym}}




\newcommand{\betav}{{\beta_v}} 

\newcommand{\betavm}{{\bilancia\beta_v}} 
\newcommand{\tobv}{\Rew{\betav}} 

\newcommand{\tobvm}{\Rew{\betavm}} 


\newcommand{\tosig}{\Rew{\sigma}} 
\newcommand{\sigm}{\bilancia{\sigma}}

\newcommand{\tosigm}{\Rew{\sigm}} 

\newcommand{\sigl}{\bilancia{\sigma}_1}
\newcommand{\sigr}{\bilancia{\sigma}_3}
\newcommand{\tosl}{\Rew{\sigl}}
\newcommand{\tosr}{\Rew{\sigr}}


\newcommand{\toshuf}{\tovm} 
\newcommand{\tovm}{\Rew{\vmsym}} 






\newcommand{\bilanciasym}{\flat}
\newcommand{\bilancia}[1]{{#1}^{\bilanciasym\!}}
\newcommand{\vmsym}{\mathsf{shuf}} 
\newcommand{\shuf}{\vmsym} 

\newcommand{\shufeq}{\shufextth} 
\newcommand{\shufeqext}{\shufeqext} 





\newcommand{\tm}{t}
\newcommand{\tmtwo}{u}
\newcommand{\tmthree}{s}
\newcommand{\tmfour}{r}



\newcommand{\tmp}{\tm'}
\newcommand{\tmtwop}{\tmtwo'}
\newcommand{\tmthreep}{\tmthree'}


\newcommand{\var}{x}
\newcommand{\vartwo}{y}
\newcommand{\varthree}{z}



\newcommand{\val}{v}
\newcommand{\valtwo}{\val'}



\newcommand{\ctxholep}[1]{\langle #1\rangle}
\newcommand{\ctxhole}{\ctxholep{\cdot}}

\newcommand{\ctx}{C}

\newcommand{\ctxp}[1]{\ctx\ctxholep{#1}}







\newcommand{\arbctxp}[1]{\arbctxp{#1}}
\newcommand{\arbctxtwop}[1]{\arbctxtwop{#1}}

\newcommand{\deriv}{d}
\newcommand{\derivp}{d'} 

























\newcommand{\mctx}{B}

\newcommand{\mctxp}[1]{\mctx\ctxholep{#1}}

\newcommand{\la}[1]{\lambda #1.}







\newcommand{\myproof}[1]{
\ifthenelse{\boolean{omitproofs}}{\begin{IEEEproof} Proof available but omitted for readability. \end{IEEEproof}}{#1}}






\newcommand{\withproofs}[1]{\ifthenelse{\boolean{withproofs}}{#1}{}}

\newcommand{\withoutproofs}[1]{\ifthenelse{\boolean{withproofs}}{}{#1}}




\newcommand{\NoteProof}[1]{\withproofs{\marginpar{\scriptsize \ \ Proof p.\,{\pageref{#1}}}}} 
\newcommand{\NoteState}[1]{\withproofs{\marginpar{\scriptsize \ \ See p.~{\pageref{#1}}}}} 



\newcommand{\firecalc}{\lambda_\mathsf{fire}}

\newcommand{\shufcalc}{\lambda_\shuf}
\newcommand{\shufcalcm}{\lambda_\bilancia{\shuf}}

\newcommand{\doubt}[1]{}






\newcommand{\Rule}{\mathsf{r}}

\newcounter{numberone}
\newcounter{numberoneroman}
\newcounter{numberonealph}



\newcommand{\emptymset}{\zero}







\newboolean{withproofs}
\setboolean{withproofs}{false}

\renewcommand{\NoteProof}[1]{
\marginnote{
\scriptsize{Proof p.\,{\pageref{#1}}}}}
\renewcommand{\NoteState}[1]{
\marginnote{
\scriptsize{See p.\,{\pageref{#1}}}}}

\newcommand{\concl}[4]{#1 \vartriangleright #2 \vdash #3 \colon\! #4}

\newcommand{\Type}[2]{#1 \vartriangleright #2}

\newcommand{\Pair}[2]{#1 \multimap #2}
\renewcommand{\isub}[2]{\{#2/#1\}}
\newcommand{\Dom}[1]{\mathsf{dom}(#1)}
\newcommand{\Fv}[1]{\mathsf{fv}(#1)}
\newcommand{\Ax}{\mathsf{ax}}

\renewcommand{\sem}[2]{\llbracket #1 \rrbracket_{#2}}

\newcommand{\Fin}{\mathrm{f}}

\newcommand{\MultiFin}[1]{\M_\Fin(#1)}
\newcommand{\U}{\mathcal{U}}
\newcommand{\M}{\mathcal{M}}
\newcommand{\Nat}{\mathbb{N}}

\renewcommand{\vmsym}{\mathsf{sh}}
\renewcommand{\shufeq}{\simeq_\shuf}
\newcommand{\betaveq}{\simeq_\betav}
\newcommand{\shufm}{\bilancia{\shuf}}
\newcommand{\toshufm}{\Rew{\shufm}}
\newcommand{\rtoshuf}{\rootRew{\shuf}}
\newcommand{\rtos}{\rootRew{\sigma}}

\newcommand{\RevTo}[1]{{\,}_{#1}\!\!\leftarrow}

\newcommand{\MRevTo}[1]{{\,}_{#1}^*\!\!\leftarrow}

\newcommand{\Balanced}{Balanced}

\newcommand{\anf}{a}
\newcommand{\wnf}{n}
\newcommand{\anfSet}{\Lambda_a}
\newcommand{\wnfSet}{\Lambda_n}
\newcommand{\valSet}{\Lambda_v}

\DeclarePairedDelimiter\abs{\lvert}{\rvert}%

\renewcommand{\size}[1]{\abs{#1}}
\newcommand{\sizeZero}[1]{\abs{#1}_{\bilanciasym\!}}

\newcommand{\Lengsym}{\mathrm{leng}}
\newcommand{\Leng}[1]{\Lengsym(#1)}
\newcommand{\Lengbv}[1]{\Lengsym_{\betavm}(#1)}

\renewcommand{\emptymset}{[\,]}

\theoremstyle{definition}
\newtheorem{definition}{Definition}

\newtheorem{example}[definition]{Example}
\newtheorem{remark}[definition]{Remark} 

\theoremstyle{plain}
\newtheorem{lemma}[definition]{Lemma} 
\newtheorem{theorem}[definition]{Theorem}
\newtheorem{proposition}[definition]{Proposition} 
\newtheorem{corollary}[definition]{Corollary} 
\newtheorem{lemmaAppendix}{Lemma} 
\newtheorem{theoremAppendix}{Theorem} 
\newtheorem{propositionAppendix}{Proposition} 
\newtheorem{corollaryAppendix}{Corollary} 

\title{Towards a Semantic Measure of the Execution Time in Call-by-Value lambda-Calculus (Long Version)}
\author{Giulio Guerrieri
\institute{Dipartimento di Informatica -- Scienza e Ingegneria (DISI), Universit\`a di Bologna, Bologna, Italy}
\email{\href{mailto:giulio.guerrieri@unibo.it}{giulio.guerrieri@unibo.it}}
}

\begin{document}

\maketitle




\begin{abstract}
  We investigate the possibility of a semantic account of the execution time (i.e. the number of $\betav$-steps leading to the normal form, if any) for the shuffling calculus, an extension of Plotkin's 
  call-by-value $\lambda$-calculus. For this purpose, we use a linear logic based denotational model that can be seen as a non-idempotent intersection type system: relational semantics. 
  Our investigation is inspired by similar ones for linear logic proof-nets and untyped call-by-name $\lambda$-calculus. 
  We first prove a qualitative result: a (possibly open) term is normalizable for weak reduction (which does not reduce under abstractions) if and only if its interpretation is not empty. We then show that the size of type derivations can be used to measure the execution time. Finally, we show that, differently from the case of linear logic and call-by-name $\lambda$-calculus, the quantitative information enclosed in type derivations does not lift to types (i.e.~to the interpretation of terms). 
  To get a truly semantic measure of execution time in a call-by-value setting, we conjecture that a refinement of its \mbox{syntax and operational semantics is needed.}
\end{abstract}

\section{Introduction}
\label{sect:intro}

Type systems enforce properties of programs, such as termination or deadlock-freedom. 
The guarantee provided by most type systems for the $\lambda$-calculus is \emph{termination}. 

\emph{Intersection types} have been introduced as a way of extending simple types for the $\lambda$-calculus to ``finite polymorphism'', by adding a new type constructor $\cap$ and new typing rules governing it. 
Contrary to simple types, intersection types provide a sound and \emph{complete} characterization of termination: not only typed programs terminate, but all terminating programs are typable as well (see \cite{DBLP:journals/aml/CoppoD78,DBLP:journals/ndjfl/CoppoD80,Pottinger80,DBLP:books/daglib/0071545} where different intersection type systems characterize different notions of normalization). 
Intersection types are idempotent, that is, they verify the equation $A \cap A = A$.
This corresponds to an interpretation of a typed term $\tm \colon A \cap B$ as ``$\tm$ can be used both as data of type $A$ and as data of type $B$''. 

More recently \cite{DBLP:conf/tacs/Gardner94,DBLP:journals/logcom/Kfoury00,DBLP:conf/icfp/NeergaardM04,Carvalho07,deCarvalho18} (a survey can be found in \cite{DBLP:journals/igpl/BucciarelliKV17}), \emph{non-idempotent} variants of intersection types have been introduced: they are obtained by dropping the equation $A \cap A = A$.
In a non-idempotent setting, the meaning of the typed term $\tm \colon A \cap A \cap B$ is refined as ``$\tm$ can be used twice as data of type $A$ and once as data of type $B$''.
This could give to programmers a way to keep control on the performance of their code and to count resource consumption.
Finite multisets are the natural setting to interpret the associative, commutative and non-idempotent connective $\cap$: 
if $A$ and $B$ are non-idempotent intersection types, 
the multiset $[A, A, B]$ represents the non-idempotent intersection type $A \cap A \cap B$. 

Non-idempotent intersection types have two main features, both enlightened 
by de Carvalho \cite{Carvalho07,deCarvalho18}:
\begin{enumerate}
  \item \emph{Bounds on the execution time}: they go beyond simply qualitative characterisations of termination, as type derivations provide \emph{quantitative} bounds on the execution time (\ie~on
  the number of $\beta$-steps to reach the $\beta$-normal form). 
  Therefore, non-idempotent intersection types give intensional insights on programs, and seem to provide a tool to reason about complexity of programs.
  The approach is defining a measure for type derivations and showing that the measure gives (a bound to) the length of the evaluation of typed terms.
  \item \emph{Linear logic interpretation}: non-idempotent intersection types are deeply linked to linear logic ($\mathsf{LL}$) \cite{DBLP:journals/tcs/Girard87}. 
  Relational semantics \cite{DBLP:journals/apal/Girard88,DBLP:journals/apal/BucciarelliE01}\,---\,
  the category $\mathbf{Rel}$ of sets and relations endowed with the comonad $\oc$ of finite multisets\,---\,is a sort of ``canonical'' denotational model of $\mathsf{LL}$; 
  the Kleisli category $\mathbf{Rel}_\oc$ of the comonad $\oc$ is a CCC and then provides a denotational model of the ordinary (\ie~call-by-name) $\lambda$-calculus.
  Non-idempotent intersection types can be seen as a syntactic presentation of $\mathbf{Rel}_\oc$: the semantics of a term $\tm$ is the set of conclusions of all type derivations of $\tm$.
\end{enumerate}

These two facts together have a potential, fascinating consequence: 
denotational semantics may provide abstract tools for complexity analysis, that are theoretically solid, being grounded on $\mathsf{LL}$.

Starting from \cite{Carvalho07,deCarvalho18}, research on relational semantics/non-idempotent intersection types has proliferated:
various works in the literature explore their power in bounding the execution time or in characterizing normalization \cite{DBLP:journals/tcs/CarvalhoPF11,DBLP:journals/apal/BucciarelliEM12,DBLP:journals/corr/BernadetL13,DBLP:conf/ictac/KesnerV15,DBLP:journals/iandc/BenedettiR16,DBLP:journals/iandc/CarvalhoF16,DBLP:journals/mscs/PaoliniPR17,DBLP:conf/rta/KesnerV17,DBLP:journals/igpl/BucciarelliKV17,MazzaPellissierVial18}.
All these works study relational semantics/non-idempotent intersection types either in $\mathsf{LL}$ proof-nets (the graphical representation of proofs in $\mathsf{LL}$), or in some variant of ordinary (\ie~call-by-name) $\lambda$-calculus. 
In the second case, the construction of the relational model $\mathbf{Rel}_\oc$ sketched above essentially relies on Girard's call-by-name translation $(\cdot)^n$ of intuitionistic logic into $\mathsf{LL}$, which decomposes the intuitionistic arrow as $(A \Rightarrow B)^n = \oc A^n \multimap B^n$.

Ehrhard \cite{DBLP:conf/csl/Ehrhard12} showed that the relational semantics $\mathbf{Rel}$ of $\mathsf{LL}$ induces also a denotational model for the \emph{call-by-value} $\lambda$-calculus\footnotemark\ that can still be viewed as a non-idempotent intersection type system.
\footnotetext{In call-by-value evaluation $\tobv$, function's arguments are evaluated before being passed to the function, so that $\beta$-redexes can fire only when their arguments are values, \ie~abstractions or variables.
The idea is that only values can be erased or duplicated.	
Call-by-value evaluation is the most common parameter passing mechanism 
used by programming languages.}
The syntactic counterpart of this construction is Girard's (``boring'') call-by-value translation $(\cdot)^v$ of intuitionistic logic into $\mathsf{LL}$ \cite{DBLP:journals/tcs/Girard87}, which decomposes the intuitionistic arrow as $(A \Rightarrow B)^v = \oc(A^v \multimap B^v)$.
Just few works have started the study of relational semantics/non-idempotent intersection types in a call-by-value setting \cite{DBLP:conf/csl/Ehrhard12,DBLP:conf/lfcs/Diaz-CaroMP13,DBLP:conf/fossacs/CarraroG14,DBLP:conf/ppdp/EhrhardG16}, and no one investigates their bounding power on the execution time in such a framework.
Our paper aims to fill this gap and 
study the information enclosed in relational semantics/non-idempotent intersection types concerning the execution time in the call-by-value $\lambda$-calculus.

A difficulty arises immediately in the qualitative characterization of call-by-value normalization 
via the relational model.
One would expect 
that the semantics of a term $\tm$ is non-empty if and only if $\tm$ is (strongly) normalizable for (some restriction of) the call-by-value evaluation $\tobv$, but it is impossible to get this result in Plotkin's original call-by-value $\lambda$-calculus $\lambda_v$ \cite{DBLP:journals/tcs/Plotkin75}.
Indeed, the terms $\tm$ and $\tmtwo$ below are $\betav$-normal but their semantics in the relational model are empty:
\begin{align}\label{eq:premature}
  \tm &\defeq (\lambda y.\Delta)(zI) \Delta & \tmtwo &\defeq \Delta ((\lambda y.\Delta)(zI)) & \text{(where }\Delta &\defeq \lambda x. xx\text{ and } I \defeq \la{\var}{\var} \text{)}
\end{align}
Actually, $\tm$ and $\tmtwo$ should behave like the famous divergent term $\Delta\Delta$, since in $\lambda_v$ they are observationally equivalent to $\Delta\Delta$ with respect all closing contexts and have the same semantics as $\Delta\Delta$ in all non-trivial denotational models of Plotkin's $\lambda_v$. 

The reason of this mismatching is that in $\lambda_v$ there are \emph{stuck $\beta$-redexes} such as $(\lambda y.\Delta)(zI)$ in Eq.~\refeq{premature}, \ie~$\beta$-redexes that $\betav$-reduction will never fire because their argument is normal but not a value (nor will it ever become one).
The real problem with stuck $\beta$-redexes is that they may prevent the creation of other $\beta_v$-redexes, providing \emph{``premature''} \emph{$\beta_v$-normal forms} like $\tm$ and $\tmtwo$ in Eq.~\refeq{premature}.
The issue 
affects termination and thus can impact on the study of 
observational equivalence and other operational properties in $\lambda_v$.

In a call-by-value setting, the issue of stuck $\beta$-redexes and then of premature $\beta_v$-normal forms arises only with \emph{open terms} (in particular, when the reduction under abstractions is allowed, since it forces to deal with ``locally open'' terms). 
Even if to model functional programming languages with a call-by-value parameter passing, such as OCaml, it is usually enough to just consider closed terms and weak evaluation (\ie~not reducing under abstractions: function bodies are evaluated only when all parameters are supplied), the importance to consider open terms in a call-by-value setting can be found, for example, in partial evaluation (which evaluates a function when not all parameters are supplied, see \cite{Jones:1993:PEA:153676}),
in the theory of proof assistants such as Coq (in particular, for type checking in a system based on dependent types, see \cite{DBLP:conf/icfp/GregoireL02}), 
or to reason about (denotational or operational) equivalences of terms in $\lambda_v$ that are congruences, or about other theoretical properties of $\lambda_v$ such as separability or solvability \cite{DBLP:conf/ictcs/Paolini01,parametricBook,DBLP:conf/flops/AccattoliP12,DBLP:conf/fossacs/CarraroG14}.

To overcome the issue of stuck $\beta$-redexes, we study relational semantics/non-idempotent intersection types in the \emph{shuffling calculus} $\shufcalc$, a conservative extension of Plotkin's $\lambda_{v}$ proposed in \cite{DBLP:conf/fossacs/CarraroG14} and further studied in \cite{Guerrieri15,GuerrieriPR15,DBLP:conf/aplas/AccattoliG16,DBLP:journals/lmcs/GuerrieriPR17}. 
It keeps the same term syntax as $\lambda_{v}$ and adds to $\betav$-reduction two commutation rules, $\sigma_{1}$ and $\sigma_{3}$, which ``shuffle'' constructors in order to move stuck $\beta$-redexes: 
they unblock $\beta_v$-redexes that are hidden by the ``hyper-sequential structure'' of terms. 
These commutation rules (referred also as \emph{$\sigma$-reduction rules}) are similar to Regnier's $\sigma$-rules for the call-by-name $\lambda$-calculus \cite{Reg:Thesis:92,Regnier94} and are inspired by the aforementioned 
$(\cdot)^v$ translation of the $\lambda$-calculus into $\mathsf{LL}$ proof-nets. 

Following the same approach used in \cite{deCarvalho18} for the call-by-name $\lambda$-calculus and in \cite{DBLP:journals/tcs/CarvalhoPF11} for $\mathsf{LL}$ proof-nets, we prove that in the shuffling calculus $\shufcalc$:
\begin{enumerate}
  \item (\emph{qualitative result})  relational semantics is adequate for $\shufcalc$, \ie~a possibly open term is normalizable for weak reduction (
  not reducing under $\lambda$'s) if and only if its interpretation in  relational semantics is not empty (\refthm{characterize-normalizable});
  this result was already proven in \cite{DBLP:conf/fossacs/CarraroG14} using different techniques;
  \item (\emph{quantiative result}) the size of type derivations can be used to measure the execution time, \ie~the number of $\betav$-steps (and not $\sigma$-steps) to reach the normal form of the weak reduction (\refprop{number-steps}). 
\end{enumerate}

Finally, we show that, differently from the case of $\mathsf{LL}$ and call-by-name $\lambda$-calculus, we are \emph{not} able to lift the quantitative information enclosed in type derivations 
to types (\ie~to the interpretation of terms) following the same technique used in \cite{deCarvalho18,DBLP:journals/tcs/CarvalhoPF11}, as our \refex{counterexample} shows.
In order to get a genuine semantic measure of execution time in a call-by-value setting, we conjecture that a refinement of its syntax and operational semantics is needed.

Even if our main goal has not yet been achieved, this investigation led to new interesting results:
\begin{enumerate}
  \item all normalizing weak reduction sequences (if any) in $\shufcalc$ from a given term have the same number of $\betav$-steps (\refcoro{same-number}); this is not obvious, as we shall explain in \refex{critical-pair-sigma};
  \item terms whose weak reduction in $\shufcalc$ ends in a value has an elegant semantic characterization (\refprop{semantic-valuable}), and the number of $\betav$-steps needed to reach their normal form can be computed in a simple way from a specific type derivation (\refthm{number-steps-value}). 
  \item all our qualitative and quantitative results for $\shufcalc$ are still valid in Plotkin's $\lambda_v$ restricted to closed terms (which models functional programming languages), see \refthm{characterize-normalizable-Plotkin}, \refcoro{same-number-Plotkin} and \refthm{number-steps-value-Plotkin}.
\end{enumerate}

Omitted proofs are in Appendix~\ref{sect:proofs} together with a list of notations and terminology used here.

\section{The shuffling calculus}
\label{sect:calculus}


\begin{figure}[!htb]
  \centering
  \scalebox{0.85}{\parbox{1.0\linewidth}{
%
%
%
%
  \begin{align*}
    \text{\emph{terms}:}		&& \tm, \tmtwo, \tmthree &\Coloneqq \, \val  \,\mid\,  \tm\tmtwo			&&(\textup{set: } \Lambda)  \\
    \text{\emph{values}:}		&& \val &\Coloneqq \, \var  \,\mid\, \la{\var}\tm 					&&(\textup{set: } \valSet)  \\
    \text{\emph{contexts}:}		&& \ctx	&\Coloneqq \, \ctxhole \mid \la{\var}{\ctx} \mid \ctx\tm \mid \tm\ctx 		&&(\textup{set: }\Lambda_\ctx) \\
    \text{\emph{\Balanced\ contexts}:}	&& \mctx&\Coloneqq \, \ctxhole \mid (\la{\var}{\mctx})\tm \mid \mctx\tm \mid \tm\mctx 	&&(\textup{set: }\Lambda_\mctx)	
    \\[-2\baselineskip]
  \end{align*}
  \begin{align*}
    \text{\emph{Root-steps}:}	&& (\la{\var}{\tm}){\val} &\rtobv \tm\isub{\var}{\val}  \qquad (\l\var.\tm)\tmtwo\tmthree \rtosl (\l\var.\tm \tmthree)\tmtwo,  \ \var \!\notin\! \Fv{\tmthree} \qquad 	    \val ((\l\var.\tmthree)\tmtwo) \rtosr (\l\var.\val \tmthree)\tmtwo,  \ \var \!\notin\! \Fv{\val} \\
    && &\rtos \, \defeq \ \rtosl \!\cup \rtosr \qquad\quad\, \rtoshuf \,\defeq \, \rtobv \!\cup \rtos
    \\[.2\baselineskip]
    \Rule\text{-\emph{reduction}:} &&   \tm \Rew{\Rule}  \tmtwo  &\iff \exists \, \ctx \in \Lambda_\ctx, \, \exists\, \tm'\!, \tmtwo' \!\in \Lambda : \tm = \ctxp{\tm'}, \, \tmtwo = \ctxp{\tmtwo'}, \, \tm' \!\rootRew{\Rule} \tmtwo' \\
    \bilancia{\Rule}\text{-\emph{reduction}:} &&   \tm \Rew{\bilancia{\Rule}}  \tmtwo  &\iff \exists \, \mctx \in \Lambda_\mctx, \, \exists\, \tm'\!, \tmtwo' \!\in \Lambda : \tm = \mctxp{\tm'}, \, \tmtwo = \mctxp{\tmtwo'}, \, \tm' \!\rootRew{\Rule} \tmtwo'
  \end{align*}
%
%
%
%
  }}
  \caption{\label{fig:shuffling-calculus} The shuffling $\l$-calculus $\shufcalc$}
\end{figure}

In this section we introduce the \emph{shuffling calculus} $\shufcalc$, namely the call-by-value $\lambda$-calculus defined in \cite{DBLP:conf/fossacs/CarraroG14} and further studied in \cite{Guerrieri15,GuerrieriPR15,DBLP:conf/aplas/AccattoliG16,DBLP:journals/lmcs/GuerrieriPR17}: 
it adds two commutation rules\,---\,the $\sigma_1$- and $\sigma_3$-reductions\,---\,to 
Plotkin's pure (i.e.~without constants) call-by-value $\lambda$-calculus $\lambda_v$ 
\cite{DBLP:journals/tcs/Plotkin75}.
The syntax for terms of $\shufcalc$ 
is the same as Plotkin's 
$\lambda_v$ and then the same as the ordinary (i.e. call-by-name) $\lambda$-calculus
, see \reffig{shuffling-calculus}.


Clearly, $\valSet \subsetneq \Lambda$.
All terms are considered up to $\alpha$-conversion (\ie~renaming of bound variables
).
%
The set of free variables of a term $\tm$ 
is denoted by $\Fv{\tm}$: $\tm$ is \emph{open} if $\Fv{\tm} \neq \emptyset$, \emph{closed} otherwise.
Given $\val \in \valSet$, $\tm \isub{\var}{\val}$ denotes the term obtained by the \emph{capture-avoiding substitution} of $\val$ for each free occurrence of $\var$ in the term 
$\tm$. 
Note that if $\val, \valtwo \in \Lambda_v$ then $\val\{\valtwo/\var\} \in \valSet$ (values are closed under substitution).


One-hole contexts $\ctx$ are defined as usual, see \reffig{shuffling-calculus}.   
We use $\ctxp{\tm}$ for the term obtained by the capture-allowing substitution of the term $\tm$ for the hole $\ctxhole$ in the context $\ctx$.
In \reffig{shuffling-calculus} we define also a special kind of contexts, \emph{balanced contexts} $\mctx$.

Reductions in the shuffling calculus are defined in \reffig{shuffling-calculus} as follows: 
given a \emph{root-step} rule $\rootRew{\Rule} \, \subseteq \Lambda \times \Lambda$
, we define the \emph{$\Rule$-reduction} $\Rew{\Rule}$ (resp.~\emph{$\bilancia{\Rule}$-reduction} $\Rew{\bilancia{\Rule}}$) as the closure of $\rootRew{\Rule}$ under contexts (resp.~balanced contexts).
The $\bilancia{\Rule}$-reduction is non-deterministic and\,---\,because of balanced contexts\,---\,can reduce under abstractions, but it is ``morally'' \emph{weak}: it reduces under a $\lambda$ only when the $\lambda$ is applied to an argument.
Clearly, $\toshufm \,\subsetneq\, \toshuf$ since $\toshuf$ can freely reduce under $\lambda$'s.

The root-steps used in the shuffling calculus are $\rootRew{\betav}$ (the reduction rule in Plotkin's $\lambda_v$), the commutation rules $\rootRew{\sigma_1}$ and $\rootRew{\sigma_3}$, and $\rootRew{\sigma} \, \defeq \ \rootRew{\sigma_1} \!\cup \rootRew{\sigma_3}$ and $\rootRew{\shuf} \,\defeq \ \rootRew{\beta_v} \!\cup \rootRew{\sigma}$.
The side conditions for $\rootRew{\sigma_1}$ and $ \rootRew{\sigma_3}$ in \reffig{shuffling-calculus} can be always fulfilled by $\alpha$-renaming. 
For any $\Rule \in \{\beta_v, \sigma_1, \sigma_3, \sigma, \shuf\}$, if $\tm \rootRew{\Rule} \tmp$ then $\tm$ is a \emph{$\Rule$-redex} and $\tmp$ is its \emph{$\Rule$-contractum}. 
A term of the shape $(\la{\var}{\tm})\tmtwo$ 
is a \emph{$\beta$-redex}.
Clearly, any $\betav$-redex is a $\beta$-redex but the converse does not hold: $(\la{\var}{\varthree}) (\vartwo I)$ is a $\beta$-redex but not a $\betav$-redex. 
Redexes of different kind may \emph{overlap}: 
for instance, the term $\Delta I \Delta$ is a $\sigma_1$-redex and contains the $\beta_v$-redex $\Delta I$; the term $\Delta (I \Delta) (xI)$ is a $\sigma_1$-redex and contains the $\sigma_3$-redex $\Delta(I\Delta)$, which contains in turn the $\beta_v$-redex $I \Delta$.


From definitions in \reffig{shuffling-calculus} it follows that $\Rew{\shuf} \, = \, \Rew{\betav} \!\cup \Rew{\sigma}$ and $\Rew{\sigma} \, = \, \Rew{\sigma_1} \!\cup \Rew{\sigma_3}$, as well as $\toshufm \, = \, \tobvm \!\cup \tosigm$ and $\tosigm \, = \, \tosl \!\cup \tosr$.
The \emph{shuffling} (resp.~\emph{balanced shuffling}) \emph{calculus} \emph{$\shufcalc$} (resp.~$\shufcalcm$) is the set $\Lambda$ of terms endowed with the reduction $\to_{\shuf}$ (resp.~$\toshufm$). 
The set $\Lambda$ endowed with the reduction $\to_{\beta_v}$ is Plotkin's pure call-by-value $\lambda$-calculus $\lambda_v$ \cite{DBLP:journals/tcs/Plotkin75}, a sub-calculus of $\shufcalc$.

\begin{proposition}[Basic properties of 
reductions, \cite{DBLP:journals/tcs/Plotkin75,DBLP:conf/fossacs/CarraroG14}]\label{prop:general-properties}
  The $\sigma$- and $\sigm$-reductions are confluent and strongly normalizing. 
  The $\betav$-, $\betavm$-, $\shuf$- and $\shufm$-reductions are confluent.
\end{proposition}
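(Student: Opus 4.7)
My plan is to treat the four claims in three stages. The first stage is strong normalization of $\sigma$ (from which $\sigm$-SN follows since $\Rew{\bilancia{\sigma}}$ is a sub-reduction of $\Rew{\sigma}$), together with confluence of $\sigma$ and $\sigm$; the second is confluence of $\betav$ and $\betavm$; the third is to derive confluence of $\shuf$ and $\shufm$ from these by Hindley--Rosen's lemma. Throughout, I would verify that the closing diagrams built for the unrestricted reductions can be realised inside balanced contexts whenever their inputs already are, so that the balanced variants inherit confluence for free.

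For the first stage, I would exhibit a weight $W\colon \Lambda \to \Nat$ that is monotone under every context and strictly decreases under each root $\sigma$-step. Both rules
\[
  (\lambda x.\, t)\, u\, s \rtosl (\lambda x.\, t\, s)\, u \qquad \val((\lambda x.\, t)\, u) \rtosr (\lambda x.\, \val\, t)\, u
\]
preserve the multiset of constructors but ``float an abstraction one application level outward''. A natural candidate for $W$ penalises each application whose left or right subterm hides a $\lambda$-headed subterm under some blocking applications; a convenient closed form is $W(x)=0$, $W(\lambda x.\, t)=W(t)$, and $W(t\, u) = W(t)+W(u) + p(t,u)$, where $p(t,u)$ counts the blocking applications that separate the outermost floatable $\lambda$ from the current application node. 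A direct case check on each rule shows $W$ drops at every root $\sigma$-step, yielding $\sigma$-SN. Local confluence is then a finite verification of the critical pairs between $\sigma_1$ and $\sigma_3$ at the root and at their proper overlaps; together with SN, Newman's lemma gives confluence of $\sigma$. The same joining sequences consist only of $\sigma$-steps inside the original redex position, so they stay inside balanced contexts, giving confluence of $\sigm$.

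For the second stage, I would reprise Plotkin's Tait--Martin-L\"of proof: define a parallel $\betav$-reduction $\Rightarrow$ and show it enjoys the diamond property via the substitution lemma ``$t \Rightarrow t'$ and $\val \Rightarrow \val'$ imply $t\isub{x}{\val} \Rightarrow t'\isub{x}{\val'}$'', which relies on values being closed under value-substitution. Confluence of $\betavm$ follows from the same argument relativised to balanced contexts, which are stable under parallel contraction of the $\betav$-redexes they already contain. For the third stage, Hindley--Rosen reduces confluence of $\shuf = \betav \cup \sigma$ to the \emph{commutation} of $\Rew{\betav}$ and $\Rew{\sigma}$: every span $u \RevTo{\betav} t \Rew{\sigma} w$ must admit an $s$ with $u \Rew{\sigma}^* s \MRevTo{\betav} w$. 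I expect this to be the main obstacle. The delicate cases are those where the two redexes overlap --- e.g.\ a $\betav$-redex inside the $u$ or $s$ of an outer $\sigma_1$-redex $(\lambda x.\, t)\, u\, s$, or inside the $\val$ of an outer $\sigma_3$-redex $\val((\lambda x.\, t)\, u)$ --- since the $\sigma$-step can relocate, duplicate, or expose the surrounding applications of the $\betav$-redex. I would dispatch these by case analysis on the $\sigma$-rule and the relative position of the $\betav$-redex, producing in each case a short closing diagram of $\sigma$- and $\betav$-steps. All such diagrams remain inside balanced contexts, because $\sigma$ preserves the ``applied abstraction'' shape on which a $\betav$-redex depends, so the same argument yields confluence of $\shufm$.
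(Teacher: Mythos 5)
The paper does not actually prove this proposition: it imports it from \cite{DBLP:journals/tcs/Plotkin75} (confluence of $\betav$) and \cite{DBLP:conf/fossacs/CarraroG14} (the $\sigma$- and $\shuf$-claims), so there is no in-paper argument to compare against, and your overall architecture---a decreasing weight plus Newman's lemma for $\sigma$, Tait--Martin-L\"of parallel reduction for $\betav$, Hindley--Rosen via commutation of $\to_{\betav}$ and $\to_{\sigma}$ for $\shuf$, with the balanced variants obtained by restriction---is the standard route and essentially the one taken in the cited sources. The one ingredient you commit to concretely, however, is the weak point: the weight $W$ for strong normalization of $\sigma$ is too vague to verify and, on its most natural reading, does not decrease. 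Any count of ``blocking applications separating a floatable $\lambda$ from an application node'' must contend with the fact that $\sigma_1$ pushes the argument deeper, not shallower: in $(\la{\var}{\tm})\tmtwo\tmthree \rtosl (\la{\var}{\tm\tmthree})\tmtwo$ an abstraction occurring inside $\tmthree$ sits under one application of the redex on the left but under two on the right (and symmetrically for $\val$ in $\sigma_3$), so such per-node distances can increase and their sum need not drop. A measure that does work: both root-steps preserve the size of the term and the number of $\lambda$-occurrences, while strictly increasing the sum, over all occurrences of abstractions $\la{\vartwo}{\tmthree}$, of the size of the body $\tmthree$ (the contracted $\lambda$'s body absorbs $\tmthree$, resp.\ $\val$, and every other body keeps its size); since this sum is bounded by the number of $\lambda$'s times the size of the term, $\tosig$ terminates, and $\tosigm\,\subseteq\,\tosig$ inherits termination.

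The second soft spot is that the balanced cases and the commutation diagrams are asserted rather than argued. The claim that the $\sigma$ critical-pair joins ``consist only of $\sigma$-steps inside the original redex position, so they stay inside balanced contexts'' is exactly what \refex{critical-pair-sigma} warns against: there a $\sigm$ critical pair is rejoined only by one $\sigr$-step on one side against two $\sigl$-steps on the other, at positions displaced by the rewrite, so one must actually check that the displaced positions remain balanced (they do, because $\sigma$-steps preserve the property of an abstraction being applied, but this is a verification, not a triviality). Likewise, for Hindley--Rosen you should record the precise shape of the local diagram: it closes with at most one $\betav$-step from the $\sigma$-reduct and possibly many $\sigma$-steps from the $\betav$-reduct (to account for a $\betav$-step duplicating or erasing a $\sigma$-redex residing in the substituted value); without this asymmetry the local commutation does not tile into commutation of the reflexive--transitive closures. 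With the corrected measure and these checks spelled out, the proof goes through.
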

\begin{example}\label{ex:reductions}
  Recall the terms $\tm$ and $\tmtwo$ in Eq.~\refeq{premature}: 
  $\tm =\! (\lambda y.\Delta)(xI) \Delta \!\tosl\! (\lambda y.\Delta\Delta) (x I) \!\tobvm\! (\lambda y.\Delta\Delta) (x I) \allowbreak\tobvm \!\dots$ and $\tmtwo = \Delta((\lambda y.\Delta)(x I)) \!\tosr\! (\lambda y.\Delta\Delta) (x I) \!\tobvm\! (\lambda y.\Delta\Delta) (x I) \!\tobvm \!\dots$ are the only possible $\shuf$-reduction paths from $\tm$ and $\tmtwo$ respectively: 
  $\tm$ and $\tmtwo$ are not $\shuf$-normalizable and $\tm \shufeq \tmtwo$.  
  But $\tm$ and $\tmtwo$ are $\betav$-normal ($(\lambda y.\Delta) (x I)$ is a stuck $\beta$-redex) 
  and different, hence $\tm \not\simeq_{\betav} \tmtwo$ by confluence of $\to_{\betav}$ (\refprop{general-properties}).
\end{example}

Example~\ref{ex:reductions} shows how $\sigma$-reduction shuffles constructors and moves stuck $\beta$-redex in order to unblock $\beta_v$-redexes which are hidden by the ``hyper-sequential structure'' of terms, avoiding ``premature'' normal forms.
An alternative approach to circumvent the issue of stuck $\beta$-redexes is given by $\lambda_\mathsf{vsub}$, the call-by-value $\lambda$-calculus with explicit substitutions introduced in \cite{DBLP:conf/flops/AccattoliP12}, 
where hidden $\beta_v$-redexes are reduced using rules acting at a distance. 
In \cite{DBLP:conf/aplas/AccattoliG16} it has been shown that $\lambda_\mathsf{vsub}$ and $\shufcalc$ can be 
embedded in each other preserving termination and divergence.
Interestingly, both calculi are inspired by an analysis of Girard's ``boring'' call-by-value translation of $\lambda$-terms into linear logic proof-nets \cite{DBLP:journals/tcs/Girard87,DBLP:journals/tcs/Accattoli15} according to the linear recursive type $o = \oc o \multimap \oc o$, or equivalently $o = \oc (o \multimap o)$.
In this translation, $\shuf$-reduction corresponds to cut-elimination, more precisely $\betav$-steps (resp.~$\sigma$-steps) correspond to exponential (resp.~multiplicative) cut-elimination steps;
$\shufm$-reduction corresponds to cut-elimination at depth~$0$.

Consider the two subsets of terms defined by mutual induction (notice that $\anfSet \subsetneq \wnfSet \supsetneq \valSet$):
\begin{align*}
 \anf & \grameq \var\val \mid \var\anf    \mid \anf\wnf		\quad \textup{(set: } \anfSet\textup{)} &
 \wnf & \grameq \val \mid \anf \mid (\la{\var}{\wnf})\anf	\quad \textup{(set: } \wnfSet\textup{)}.
\end{align*}
Any $\tm \in \anfSet$ is neither a value nor a $\beta$-redex, but an open applicative term 
with a free ``head variable''.

\newcounter{prop:syntactic-normal}
\addtocounter{prop:syntactic-normal}{\value{definition}}
\begin{proposition}[Syntactic characterization on $\shufm$-normal forms]
\label{prop:syntactic-normal}
  Let $\tm$ be a term:
\NoteProof{propAppendix:syntactic-normal}
  \begin{itemize}
    \item $\tm$ is $\shufm$-normal iff $\tm \in \wnfSet$;
    \item $\tm$ is $\shufm$-normal and is neither a value nor a $\beta$-redex iff $\tm \in \anfSet$.
  \end{itemize}
\end{proposition}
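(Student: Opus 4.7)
The plan is to prove the two bullets together by structural induction on $\tm$. Since the second bullet follows immediately from the first (if $\tm \in \wnfSet$, then reading off the grammar, $\tm$ is an $\anf$ iff $\tm$ is neither a value nor of the form $(\la{\var}{\wnf})\anf$, \ie{} neither a value nor a $\beta$-redex), the bulk of the work is the first bullet.

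For the ($\Leftarrow$) direction, I would do a mutual induction on the grammars of $\wnfSet$ and $\anfSet$. Values are $\shufm$-normal because any non-trivial balanced context whose hole sits inside an abstraction must be of the shape $(\la{\var}{\mctx'})\tm$, \ie{} requires the abstraction to be already applied; hence a bare value admits no $\shufm$-step. For an $\anf$, I check in each case ($\var\val$, $\var\anf$, $\anf\wnf$) that the root is not a $\betav$-, $\sigma_1$- or $\sigma_3$-redex (the head is always a variable, never an abstraction applied to something), and the subterms are $\shufm$-normal by mutual induction. For $(\la{\var}{\wnf})\anf$: it is not a $\betav$-redex (its argument $\anf$ is not a value), not a $\sigma_1$-redex (only two nested applications at the head, not three), not a $\sigma_3$-redex (left-hand side is an abstraction, yes a value, but the right-hand side $\anf$ is not of the form $(\l y.t)u$); and the components $\wnf$ and $\anf$ are $\shufm$-normal by IH, while the body $\wnf$ lives in the balanced context $(\la{\var}{\ctxhole})\anf$ so reductions inside it would propagate.

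For the ($\Rightarrow$) direction, I proceed by induction on $\tm$. The variable and abstraction cases put $\tm$ directly in $\valSet \subseteq \wnfSet$. The interesting case is $\tm = \tm_1\tm_2$. Since $\mctx\tm_2$ and $\tm_1\mctx$ are balanced contexts, both $\tm_1$ and $\tm_2$ are $\shufm$-normal, hence by IH both lie in $\wnfSet$. I then split on the head of $\tm_1$:
\begin{itemize}
  \item If $\tm_1 = x$: then $\tm = x\,\tm_2$. If $\tm_2$ were $(\la{\varthree}{\wnf'})\anf$, then $\tm$ would be a $\sigma_3$-redex at the root; so $\tm_2 \in \valSet \cup \anfSet$, giving $\tm \in \anfSet \subseteq \wnfSet$.
  \item If $\tm_1 = \la{\vartwo}{\tmthree}$: then $\tm_2 \notin \valSet$ (otherwise $\tm$ would be a $\betav$-redex), and $\tm_2 \neq (\la{\varthree}{\wnf'})\anf$ (otherwise $\tm$ would be a $\sigma_3$-redex), so $\tm_2 \in \anfSet$; moreover $\tmthree$ is $\shufm$-normal via the balanced context $(\la{\vartwo}{\mctx'})\tm_2$, so by IH $\tmthree \in \wnfSet$, placing $\tm$ in the pattern $(\la{\var}{\wnf})\anf \subseteq \wnfSet$.
  \item If $\tm_1$ is itself an application: $\tm_1 \in \wnfSet \setminus \valSet$, so $\tm_1 \in \anfSet$ or $\tm_1 = (\la{\vartwo}{\wnf'})\anf'$. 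The latter would make $\tm$ a $\sigma_1$-redex at the root (up to $\alpha$-renaming of $\vartwo$), contradicting $\shufm$-normality. Hence $\tm_1 \in \anfSet$ and $\tm = \tm_1\,\tm_2$ matches $\anf\wnf \in \anfSet$.
\end{itemize}

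The main obstacle is simply keeping the case analysis for the ($\Rightarrow$) direction tidy: one must remember to rule out the $\sigma_1$- and $\sigma_3$-redexes at the root in the three subcases of $\tm_1$, which is precisely where the grammar of $\wnfSet$ has been tuned to absorb the shapes that would otherwise be reducible. Once this bookkeeping is in place, both directions close in a single structural induction and the second bullet follows by inspection of the $\wnfSet$-grammar.
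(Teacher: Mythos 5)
Your proposal is correct and follows essentially the same route as the paper: a structural induction for the left-to-right direction with a case split on the shape of the left subterm of an application (ruling out $\beta_v$-, $\sigma_1$- and $\sigma_3$-redexes at the root in each subcase), and a mutual induction on the grammars of $\wnfSet$ and $\anfSet$ for the converse, with the second bullet read off from the first by inspecting the grammar. The only cosmetic difference is that the paper proves both bullets simultaneously in the forward direction, whereas you derive the second from the first; the underlying case analysis is identical.
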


Stuck $\beta$-redexes correspond to $\shufm$-normal forms of the shape $(\la{\var}{\wnf})\anf$.
As a consequence of \refprop{syntactic-normal}, the behaviour of \emph{closed} terms with respect to $\shufm$-reduction (resp.~$\betavm$-reduction) is quite simple: either they diverge or they $\shufm$-normalize (resp.~$\betavm$-normalize) to a closed value. Indeed:

\newcounter{coro:syntactic-normal-closed}
\addtocounter{coro:syntactic-normal-closed}{\value{definition}}
\begin{corollary}[Syntactic characterization of closed $\shufm$- and $\betavm$-normal forms]
	\label{coro:syntactic-normal-closed}
	\NoteProof{coroAppendix:syntactic-normal-closed}
	Let $\tm$ be a closed term:
	 $\tm$ is $\shufm$-normal iff $\tm$ is $\betavm$-normal iff $\tm$ is a value iff $\tm = \la{\var}{\tmtwo}$ for some term $\tmtwo$ with $\Fv{\tmtwo} \subseteq\{\var\}$.
\end{corollary}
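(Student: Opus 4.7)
The four conditions to prove equivalent are: (a) $\tm$ is $\shufm$-normal, (b) $\tm$ is $\betavm$-normal, (c) $\tm$ is a value, and (d) $\tm = \la{\var}{\tmtwo}$ for some $\tmtwo$ with $\Fv{\tmtwo} \subseteq \{\var\}$. The plan is to establish the chain $(d)\!\iff\!(c)\!\Rightarrow\!(a)\!\Rightarrow\!(b)\!\Rightarrow\!(c)$. The equivalence $(c)\!\iff\!(d)$ is immediate from closedness of $\tm$: a closed value cannot be a variable, so it must be an abstraction, and being closed forces the binder to capture every free variable of the body. The implication $(c)\!\Rightarrow\!(a)$ follows from $\valSet \subseteq \wnfSet$ (by the grammar of $\wnfSet$) together with \refprop{syntactic-normal}. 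And $(a)\!\Rightarrow\!(b)$ is immediate from $\toshufm = \tobvm \cup \tosigm$, since every $\betavm$-step is a $\shufm$-step.

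The only non-routine step is $(b)\!\Rightarrow\!(c)$ for closed $\tm$, which I would prove by structural induction on $\tm$. The variable case is vacuous by closedness; the abstraction case already gives a value. For an application $\tm = \tmthree\tmfour$, observe that $\ctxhole\tmfour$ and $\tmthree\ctxhole$ are both balanced contexts, arising from the productions $\mctx\tm$ and $\tm\mctx$ with $\mctx = \ctxhole$. Hence both $\tmthree$ and $\tmfour$ inherit $\betavm$-normality from $\tm$; closedness of $\tm$ also descends to both subterms. By the induction hypothesis each of $\tmthree, \tmfour$ is a value, and since $\tmthree$ is closed it cannot be a variable, so $\tmthree = \la{\var}{\tmfive}$. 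Then $\tm = (\la{\var}{\tmfive})\tmfour$ with $\tmfour \in \valSet$ is a $\betav$-redex at the root, and the empty context is balanced, so $\tm$ has a $\betavm$-redex\,---\,contradicting $\betavm$-normality.

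The main subtlety I anticipate is the bookkeeping about which positions of an application are balanced-reduction positions: what makes the argument go through is precisely that balanced contexts include \emph{both} $\mctx\tm$ and $\tm\mctx$ (only the function position of an abstraction-headed application is restricted, via $(\la{\var}{\mctx})\tm$). Consequently a closed application can never be $\betavm$-normal, and the induction closes cleanly. No appeal to the strong normalisation or confluence results of \refprop{general-properties} is needed; the statement follows purely from the shape of balanced contexts and the inductive characterisation of \refprop{syntactic-normal}.
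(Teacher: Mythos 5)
Your proof is correct, and for the only non\nobreakdash-trivial implication it takes a genuinely different route from the paper. The paper reads the whole corollary off \refprop{syntactic-normal}: a closed term in $\wnfSet$ must be a value, because every term in $\anfSet$ is open and in a stuck redex $(\la{\var}{\wnf})\anf$ the argument $\anf\in\anfSet$ is not under the binder, so the whole term is open; this gives ``closed $\shufm$-normal iff value iff closed abstraction'', and $\betavm$-normality is then handled via the inclusion $\tobvm\,\subseteq\,\toshufm$ together with the fact that abstractions are $\betavm$-normal. You instead prove ``closed and $\betavm$-normal $\Rightarrow$ value'' by a fresh structural induction: a closed $\betavm$-normal application is impossible because both components occupy balanced positions (via the productions $\mctx\tm$ and $\tm\mctx$), are closed, hence are values by the induction hypothesis, and so the term is a root $\betav$-redex fireable in the empty (balanced) context. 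This re-derives, for the closed case, part of what \refprop{syntactic-normal} already provides, so it is slightly longer; what it buys is a self-contained and fully explicit argument for the implication from $\betavm$-normality back to being a value --- exactly the direction that the paper's own write-up treats most tersely (its ``conversely'' clause in effect only restates that $\shufm$-normality implies $\betavm$-normality). Your closing remark that everything hinges on application positions always being balanced positions is accurate, and you are right that neither confluence nor strong normalization from \refprop{general-properties} is needed.
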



\section{A non-idempotent intersection type system}
\label{sect:type}

We aim to define a non-idempotent intersection type system in order to characterize the (strong) normalizable terms for the reduction $\toshuf$.
\emph{Types} are \emph{positive} or \emph{negative}, defined by mutual induction: 
\begin{align*}
  &\mbox{Negative Types:} & M,N &\grameq \Pair{P}{Q} &\qquad&&
  &\mbox{Positive Types:} & P,Q &\grameq [N_1, \dots, N_n] \ \mbox{ (with $n \in \nat$)}
\end{align*}
where $[N_1, \dots, N_n]$ is a (possibly empty) finite multiset of negative types;
in particular the \emph{empty multiset} $\emptymset$ (obtained for $n = 0$) is the only atomic (positive) type.
A positive type $[N_1, \dots, N_n]$ has to be intended as a conjunction $N_1 \land \dots \land N_n$ of negative types $N_1, \dots, N_n$, for a commutative and associative conjunction connective $\land$ that is not idempotent and whose neutral element is $\emptymset$.

The derivation rules for the non-idempotent intersection type system are in \reffig{types}.
In this typing system, \emph{judgments} have the shape $\Gamma \vdash \tm : P$ where $\tm$ is a term, $P$ is a positive type and $\Gamma$ is an \emph{environment} (\ie~a total function from variables to positive types whose domain $\Dom{\Gamma} = \{\var \mid \Gamma(\var) \neq \emptymset\}$ is finite
). 
The \emph{sum of environments} $\Gamma \uplus \Delta$ is defined pointwise via multiset sum: $(\Gamma \uplus \Delta)(\var) = \Gamma(\var) \uplus \Delta(\var)$. 
An environment $\Gamma$ such that $\Dom{\Gamma} \subseteq \{\var_1, \dots, \var_n\}$ with $\var_i \neq \var_j$ and $\Gamma(\var_i) = P_i$ for all $1 \leq i \neq j \leq k$ is often written as $\Gamma = \var_1 \colon\! P_1,\dots, \var_n \colon\! P_k$.
In particular, $\Gamma$ and $\Gamma, \var \colon\! \emptymset$ (where $\var \notin \Dom{\Gamma}$) are the same environment; 
and $\vdash \tm \colon\! P$ stands for the judgment $\Gamma \vdash \tm \colon\! P$ where $\Gamma$ is the \emph{empty environment}, \ie~$\Dom{\Gamma} = \emptyset$ (that is, $\Gamma(\var) = \emptymset$ for any variable $\var$).
Note that the sum of environments $\uplus$ is commutative, associative and its neutral element is the empty environment: given an environment $\Gamma$, one has $\Gamma \uplus \Delta = \Gamma$ iff $\Dom{\Delta} = \emptyset$. 
The notation $\concl{\pi}{\Gamma}{\tm}{P}$ means that $\pi$ is a derivation with conclusion the judgment $\Gamma \vdash \tm \colon P$.
We write $\Type{\pi}{\tm}$ if $\pi$ is such that $\concl{\pi}{\Gamma}{\tm}{P}$ for some environment $\Gamma$ and positive type $P$.

\begin{figure}[!t]
  \centering
      \AxiomC{}
      \RightLabel{\footnotesize$\Ax$}
      \UnaryInfC{$\var \!:\! P \vdash \var \!:\! P$}
      \DisplayProof
      \ \ 
      \AxiomC{$\Gamma \vdash \tm \!:\! [\Pair{P}{Q}]$}
      \AxiomC{$\Gamma' \vdash \tmtwo \!:\! P$}
      \RightLabel{\footnotesize$@$}
      \BinaryInfC{$\Gamma \uplus \Gamma' \vdash \tm\tmtwo \!:\! Q$}
      \DisplayProof
      \ \ 
      \AxiomC{$\Gamma_1, \var \!:\! P_1 \vdash \tm \!:\! Q_1$}
      \AxiomC{$\overset{n \in \nat}{\dots}$}
      \AxiomC{$\Gamma_n, \var \!:\! P_n \vdash \tm \!:\! Q_n$}
      \RightLabel{\footnotesize$\mathsf{\lambda}$}
      \TrinaryInfC{$\Gamma_1 \uplus \dots \uplus \Gamma_n \vdash \la{\var}{\tm} \!:\! [\Pair{P_1}{Q_1}, \dots, \Pair{P_n}{Q_n}]$}
      \DisplayProof
  \caption{Non-idempotent intersection type system for the shuffling calculus.
  }
  \label{fig:types}
\end{figure}

It is worth noticing that the type system in \reffig{types} is \emph{syntax oriented}: 
for each type judgment $J$ there is a \emph{unique} derivation rule whose conclusion matches the judgment $J$.

The \emph{size} $\size{\pi}$ of a type derivation $\pi$ is just the the number of $@$ rules in $\pi$. 
Note that judgments play no role in in the size of a derivation.

\begin{example}
\label{ex:size}
  Let $I = \la{\var}{\var}$. The derivations (typing $II$ and $I$ with same type and same environment)
  \begin{align*}
    \pi_{II} = 
    \AxiomC{}
    \RightLabel{\footnotesize{$\mathsf{ax}$}}
    \UnaryInfC{$\var \colon\! [\,] \vdash \var \colon\! [\,]$}
    \RightLabel{\footnotesize{$\lambda$}}
    \UnaryInfC{$\vdash I \colon\! [\Pair{[\,]}{[\,]}]$}
    \AxiomC{}
    \RightLabel{\footnotesize{$\lambda$}}
    \UnaryInfC{$\vdash I \colon\! [\,]$}
    \RightLabel{\footnotesize{$@$}}
    \BinaryInfC{$\vdash II \colon\! [\,]$}
    \DisplayProof    
    &&
    \pi_I =
    \AxiomC{}
    \RightLabel{\footnotesize{$\lambda$}}
    \UnaryInfC{$ \vdash I \colon\! [\,]$}
    \DisplayProof
  \end{align*}
  are such that $\size{\pi_{II}} = 1$ and $\size{\pi_I} = 0$.
  Note that $II \toshufm I$ and $\size{\pi_{II}} = \size{\pi_{I}} + 1$.
\end{example}

The following lemma (whose proof is quite technical) will play a crucial role 
to prove the substitution lemma (\reflemma{substitution}) and the subject reduction (\refprop{quant-subject-reduction}) and expansion (\refprop{quant-subject-expansion}).
\newcounter{l:value}
\addtocounter{l:value}{\value{definition}}
\begin{lemma}[Judgment decomposition for values]
\label{l:value}
  Let 
\NoteProof{lappendix:value}
  $\val \in \Lambda_v$, $\Delta$ be an environment, and $P_1, \dots, P_p$ be positive types (for some $p \in \nat$).
  There is a derivation $\concl{\pi}{\Delta}{\val}{P_1 \uplus \dots \uplus P_p}$ iff for all $1 \leq i \leq p$ there are an environment $\Delta_i$ and a derivation $\concl{\pi_i}{\Delta_i}{\val}{P_i}$ such that $\Delta = \biguplus_{i=1}^p \Delta_i$.
  Moreover, $\size{\pi} = \sum_{i=1}^p\size{\pi_i}$.
\end{lemma}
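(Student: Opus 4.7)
The plan is a straightforward case analysis on the value $\val$, exploiting the fact that the type system of \reffig{types} is syntax-oriented: only the $\Ax$ rule can type a variable and only the $\lambda$ rule can type an abstraction. Each direction of the biconditional and the size identity then reduce to unpacking/repacking a single derivation step, so no induction on derivations or on $\val$ is really needed.

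\emph{Case $\val = \var$.} For the left-to-right direction, the derivation $\pi$ must end with an instance of $\Ax$, hence $\Delta = \var \colon P_1 \uplus \dots \uplus P_p$. I would then take $\Delta_i \defeq \var \colon P_i$ and let $\pi_i$ be the axiom $\Delta_i \vdash \var \colon P_i$; clearly $\Delta = \biguplus_{i=1}^p \Delta_i$ (with the convention that the empty $\biguplus$ is the empty environment, covering $p=0$). For the converse, each $\pi_i$ must be an axiom with $\Delta_i = \var \colon P_i$, and a single instance of $\Ax$ yields the required $\pi$. Since $\Ax$ contributes nothing to size, $\size{\pi} = 0 = \sum_i \size{\pi_i}$.

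\emph{Case $\val = \la{\var}{\tm}$.} Here $\pi$ must end with a $\lambda$ rule whose conclusion $\Delta \vdash \la{\var}{\tm} \colon P_1 \uplus \dots \uplus P_p$ has one premise per negative type occurring in the multiset $P_1 \uplus \dots \uplus P_p$. Writing $P_i = [\Pair{Q_{i,1}}{R_{i,1}}, \dots, \Pair{Q_{i,n_i}}{R_{i,n_i}}]$, the premises of $\pi$ split canonically into $p$ groups, the $i$-th group consisting of the $n_i$ subderivations $\concl{\rho_{i,j}}{\Gamma_{i,j}, \var \colon Q_{i,j}}{\tm}{R_{i,j}}$. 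I then define $\pi_i$ as the $\lambda$ rule applied to these $n_i$ premises, obtaining $\concl{\pi_i}{\Delta_i}{\la{\var}{\tm}}{P_i}$ with $\Delta_i \defeq \biguplus_{j=1}^{n_i} \Gamma_{i,j}$. The identity $\Delta = \biguplus_{i=1}^p \Delta_i$ follows from associativity and commutativity of $\uplus$. The converse direction is symmetric: given the $\pi_i$, juxtapose all their premises and apply a single $\lambda$ rule. The edge case $p = 0$ (whence $P_1 \uplus \dots \uplus P_p = \emptymset$) corresponds to a $\lambda$ rule with zero premises, forcing $\Delta$ to be empty; and each $P_i = \emptymset$ similarly forces $\Delta_i$ to be empty for the $n_i = 0$ subcase.

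\emph{Size.} Neither $\Ax$ nor $\lambda$ contributes to $\size{\cdot}$, which only counts $@$-rules. Consequently the $@$-rules occurring in $\pi$ are exactly those occurring in the premises $\rho_{i,j}$, which are distributed among the $\pi_i$; so $\size{\pi} = \sum_{i,j} \size{\rho_{i,j}} = \sum_i \size{\pi_i}$. I do not expect any real obstacle: the only mild subtlety is bookkeeping with empty environments and empty multisets (the cases $p=0$ and $n_i = 0$), and making explicit that the $\lambda$ rule genuinely allows $n = 0$ premises so that the empty positive type $\emptymset$ can always be derived for an abstraction in the empty environment.
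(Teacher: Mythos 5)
Your proposal is correct and follows essentially the same route as the paper's proof: a case analysis on whether $\val$ is a variable or an abstraction, using syntax-orientation to determine the last rule ($\Ax$ or $\lambda$), splitting or merging the $\lambda$-premises according to the multiset decomposition $P_1 \uplus \dots \uplus P_p$, and observing that the size identity holds because only $@$-rules are counted. The paper handles the grouping of premises ``up to renumbering'' exactly as your ``canonical split'' does, so there is nothing to add.
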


The left-to-right direction of \reflemma{value} means that, given 
$\concl{\pi}{\Delta}{\val}{P}$, for \emph{every} $p \in \nat$ and \emph{every} decomposition of the positive type $P$ into a multiset sum of positive types $P_1, \dots, P_p$
, there are environments $\Delta_1, \dots, \Delta_p$ such that $\Delta_i \vdash \val \colon\! P_i$ is derivable for all $1 \leq i \leq p$.

\newcounter{l:substitution}
\addtocounter{l:substitution}{\value{definition}}
\begin{lemma}[Substitution]
\label{l:substitution}
  Let 
\NoteProof{lappendix:substitution}
  $\tm \in \Lambda$ and $\val \in \Lambda_\val$.
  If $\concl{\pi}{\Gamma, \var \colon\! P}{\tm}{Q}$ and $\concl{\pi'}{\Delta}{\val}{P}$, then there exists $\concl{\pi''}{\Gamma \uplus \Delta}{\tm\isub\var\val}{Q}$ such that $\size{\pi''} = \size{\pi} + \size{\pi'}$.
\end{lemma}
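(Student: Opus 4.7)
\medskip

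\noindent\textbf{Proof plan.} I would proceed by structural induction on the term $\tm$ (equivalently, on the derivation $\pi$, which is syntax-directed as observed after Figure~\ref{fig:types}). The crucial non-trivial ingredient is the judgment decomposition Lemma~\ref{l:value}, which will be invoked in every non-axiom case to split $\pi'$ according to how the positive type $P$ is distributed across the premises. The bookkeeping on $\size{\cdot}$ is routine but depends on the additivity of sizes along the decomposition provided by Lemma~\ref{l:value}.

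\emph{Axiom cases.} If $\tm = \var$, then $\pi$ is $\Ax$ with $P = Q$ and $\Gamma$ empty, so $\tm\isub{\var}{\val} = \val$; set $\pi'' \defeq \pi'$ and note $\size{\pi} = 0$, giving the required equation. If $\tm = \vartwo \neq \var$, then $\pi$ is $\Ax$ with $\Gamma = \vartwo \colon\! Q$, forcing $P = \emptymset$; by Lemma~\ref{l:value} applied with $p = 0$, we get $\Delta = \emptyset$ and $\size{\pi'} = 0$, so $\pi'' \defeq \pi$ works since $\tm\isub{\var}{\val} = \vartwo$ and $\Gamma \uplus \Delta = \Gamma$.

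\emph{Abstraction case.} Suppose $\tm = \la{\vartwo}{\tmp}$ with $\pi$ ending in the $\lambda$-rule with premises $\concl{\pi_i}{\Gamma_i, \vartwo \colon\! P_i}{\tmp}{Q_i}$ for $1 \le i \le n$, where $Q = [\Pair{P_i}{Q_i}]_{i=1}^n$. Decompose the environment as $\Gamma_i = \Gamma_i', \var \colon\! P_i^\var$ with $\biguplus_i P_i^\var = P$ and $\biguplus_i \Gamma_i' = \Gamma$. By the left-to-right direction of Lemma~\ref{l:value} applied to the decomposition $P = P_1^\var \uplus \dots \uplus P_n^\var$, obtain $\pi'_i$ deriving $\Delta_i \vdash \val \colon\! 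P_i^\var$ with $\Delta = \biguplus_i \Delta_i$ and $\size{\pi'} = \sum_i \size{\pi'_i}$. Apply the induction hypothesis to each pair $(\pi_i, \pi'_i)$ to obtain $\pi''_i$ deriving $\Gamma_i' \uplus \Delta_i, \vartwo \colon\! P_i \vdash \tmp\isub{\var}{\val} \colon\! Q_i$ with $\size{\pi''_i} = \size{\pi_i} + \size{\pi'_i}$. Combining them via the $\lambda$-rule yields the desired $\pi''$, and summing sizes gives $\size{\pi''} = \size{\pi} + \size{\pi'}$.

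\emph{Application case.} If $\tm = \tm_1 \tm_2$ with $\pi$ ending in $@$ and premises $\concl{\pi_1}{\Gamma_1}{\tm_1}{[\Pair{R}{Q}]}$ and $\concl{\pi_2}{\Gamma_2}{\tm_2}{R}$, split $\Gamma_j = \Gamma_j', \var \colon\! P_j$ with $P_1 \uplus P_2 = P$ and $\Gamma_1' \uplus \Gamma_2' = \Gamma$. Lemma~\ref{l:value} supplies $\pi'_1, \pi'_2$ with $\Delta = \Delta_1 \uplus \Delta_2$ and additive sizes; two applications of the induction hypothesis and one $@$-rule produce $\pi''$, contributing exactly the $+1$ that matches the $@$-rule in $\pi$. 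The main subtlety throughout is that the type $P$ associated to $\var$ may be distributed non-trivially across premises (including the degenerate case $P = \emptymset$ handled uniformly by taking $p = 0$ in Lemma~\ref{l:value}); once this is settled, the size identity follows purely by arithmetic.
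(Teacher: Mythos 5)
Your proof is correct and follows essentially the same route as the paper's: structural induction on $\tm$, with the judgment decomposition Lemma~\ref{l:value} invoked in each non-axiom case (and with $p=0$ in the mismatched-variable case) to split the derivation of $\val$ according to how $P$ is distributed among the premises, followed by the same size bookkeeping. No gaps.
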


We can now prove the subject reduction, with a quantitative flavour about the size of type derivations in order to extract information about the execution time.

\newcounter{prop:quant-subject-reduction}
\addtocounter{prop:quant-subject-reduction}{\value{definition}}
\begin{proposition}[Quantitative balanced subject reduction]
\label{prop:quant-subject-reduction}
  Let 
\NoteProof{propappendix:quant-subject-reduction}
  $\tm, \tmp \in \Lambda$ and $\concl{\pi}{\Gamma}{\tm}{Q}$.
  \begin{enumerate}
    \item\label{p:quant-subject-reduction-betav} \emph{Shrinkage under $\betavm$-step:} If $\tm \tobvm \tmp$ then $\size{\pi} > 0$ and there exists a derivation $\pi'$ with conclusion $\Gamma \vdash \tmp \colon\! Q$ such that $\size{\pi'} = \size{\pi} - 1$.
    \item\label{p:quant-subject-reduction-sigma} \emph{Size invariance under $\sigm$-step:} If $\tm \tosigm \tmp$ then $\size{\pi} > 0$ and there exists a derivation $\pi'$ with conclusion $\Gamma \vdash \tmp \colon\! Q$ such that $\size{\pi'} = \size{\pi}$.
  \end{enumerate}
\end{proposition}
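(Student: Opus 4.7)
The plan is to reduce both items to three root cases by structural induction on the balanced context $\mctx$ witnessing the step $\tm = \mctxp{s'} \to \mctxp{t'} = \tmp$ with $s' \rtoshuf t'$. Throughout, I rely on \emph{inversion}---the type system is syntax-directed, so for every term shape there is essentially one applicable rule at the root of any derivation---and on the routine \emph{relevance property} of non-idempotent intersection types: if $\Gamma \vdash u : P$ is derivable, then $\Dom{\Gamma} \subseteq \Fv{u}$, proved by a straightforward induction on derivations.

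The heart of the argument is the root $\betav$-step. For $(\la{x}{s})\val \rtobv s\isub{x}{\val}$, inverting $\pi$ produces an $@$-rule with premises $\pi_\lambda \vartriangleright \Gamma_1 \vdash \la{x}{s} : [P \multimap Q]$ and $\pi_\val \vartriangleright \Gamma_2 \vdash \val : P$, with $\Gamma = \Gamma_1 \uplus \Gamma_2$. Since $[P \multimap Q]$ is a singleton multiset (so $n = 1$ in the $\lambda$-rule of $\pi_\lambda$), the $\lambda$-rule has a unique premise $\pi_s \vartriangleright \Gamma_1, x{:}P \vdash s : Q$. The Substitution Lemma applied to $\pi_s$ and $\pi_\val$ yields $\pi' \vartriangleright \Gamma \vdash s\isub{x}{\val} : Q$ with $\size{\pi'} = \size{\pi_s} + \size{\pi_\val}$; the $@$-rule at the root of $\pi$ contributes exactly $1$ to its size and the $\lambda$-rule contributes $0$, so $\size{\pi'} = \size{\pi} - 1$, and in particular $\size{\pi} \geq 1$.

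The two root $\sigma$-cases are size-preserving rearrangements of the same sub-derivations, with no substitution required. For $(\la{x}{s})tr \rtosl (\la{x}{sr})t$ with $x \notin \Fv{r}$, inversion on $\pi$ extracts two $@$-rules, one $\lambda$-rule, and sub-derivations $\pi_s \vartriangleright \Gamma_s, x{:}P_t \vdash s : [P_r \multimap Q]$, $\pi_t \vartriangleright \Gamma_t \vdash t : P_t$, and $\pi_r \vartriangleright \Gamma_r \vdash r : P_r$, with $\Gamma = \Gamma_s \uplus \Gamma_t \uplus \Gamma_r$. The relevance property together with the side condition $x \notin \Fv{r}$ gives $x \notin \Dom{\Gamma_r}$, which is exactly what is needed to reassemble the three sub-derivations in the shape demanded by $(\la{x}{sr})t$ (first an $@$-rule on $\pi_s$ and $\pi_r$, then a $\lambda$-rule, then an $@$-rule with $\pi_t$), producing $\pi'$ with the same environment $\Gamma$ and the same two $@$-rules, hence $\size{\pi'} = \size{\pi}$. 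The $\sigma_3$ case is entirely symmetric, using $x \notin \Fv{\val}$ to ensure the environment of the value's sub-derivation is $x$-free.

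The structural induction on $\mctx$ lifts these base cases in a routine fashion: for $\mctx = \mctx' u$ and $\mctx = u \mctx'$ inversion of an $@$-rule exposes a sub-derivation of the reducing subterm to which the induction hypothesis applies, and plugging the resulting derivation back through the unchanged $@$-rule yields the desired $\pi'$; for $\mctx = (\la{x}{\mctx'}) u$ the outer $@$-rule and the inner $\lambda$-rule (again with singleton output type, hence a single premise) together expose the sub-derivation on which to recur. In every case the additive size accounting transports the $-1$ (for $\betav$) or the invariance (for $\sigma$) from the induction hypothesis to the whole $\pi'$. I expect the main subtlety to be the environment bookkeeping in the $\sigma$-root cases, where making the reassembled environment equal to $\Gamma$ on the nose hinges on the relevance property; once that is in place, the rest is mechanical.
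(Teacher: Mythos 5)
Your proposal is correct and follows essentially the same route as the paper: induction on the position of the redex (the paper phrases it as induction on the term with cases root / application left / application right / step inside a $\beta$-redex, which is exactly your case analysis on the balanced context), with the Substitution Lemma handling the root $\betav$-case, the relevance lemma ($\Dom{\Gamma}\subseteq\Fv{u}$, the paper's ``free variables in environment'' lemma) handling the environment bookkeeping in the root $\sigma$-cases, and the singleton-multiset inversion for applied abstractions exposing the unique $\lambda$-premise. The claim $\size{\pi}>0$ is, as in the paper, immediate from the fact that any $\shufm$-reducible term is an application and hence its derivation ends in an $@$-rule.
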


In \refprop{quant-subject-reduction}, the fact that $\toshuf$ does not reduce under $\lambda$'s is crucial to get the quantitative information, otherwise one can have a term $\tm$ such that every derivation $\concl{\pi}{\Gamma}{\tm}{P}$ is such that $\size{\pi} = 0$ (and then there is no derivation $\pi'$ with conclusion $\Gamma \vdash \tmp \colon P$ such that $\size{\pi} = \size{\pi'} -1$):
this is the case, for example, for $\tm = \la\var\delta\delta \tobvm \tm$.

In order to prove the quantitative subject expansion (\refprop{quant-subject-expansion}), we first need the following technical lemma stating the commutation of abstraction with
abstraction and application.

\newcounter{l:commutation}
\addtocounter{l:commutation}{\value{definition}}
\begin{lemma}[Abstraction commutation]\hfill
\label{l:commutation}
\NoteProof{lappendix:commutation}
  \begin{enumerate}
    \item\label{p:commutation-abstraction} \emph{Abstraction vs.~abstraction:}
    Let $k \in \nat$.
    If $\concl{\pi}{\Delta}{\la{\vartwo}{(\la{\var}{\tm})\val}}{\biguplus_{i=1}^k [\Pair{P_i'}{P_i}]}$ and $\vartwo \notin \Fv{\val}$, then there is $\concl{\pi'}{\Delta}{(\la{\var}{\la{\vartwo}{\tm}})\val}{\biguplus_{i=1}^k [\Pair{P_i'}{P_i}]}$ such that $\size{\pi'} = \size{\pi} + 1 - k$.

    \item \label{p:commutation-application}\emph{Application vs.~abstraction:}
    If $\concl{\pi}{\Delta}{((\la{\var}{\tm})\val)((\la{\var}{\tmtwo})\val)}{P}$ then there exists a derivation $\concl{\pi'}{\Delta}{(\la{\var}{\tm\tmtwo})\val}{P}$ such that $\size{\pi'} = \size{\pi} - 1$.
  \end{enumerate}
\end{lemma}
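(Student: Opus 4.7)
The plan is to prove each item by inversion on $\pi$, using that the type system in \reffig{types} is syntax-oriented (so the last rule is forced by the shape of the term), and then to rebuild $\pi'$ by rearranging the extracted sub-derivations. The key tool in both items is \reflemma{value}: the derivation of $\val$ appears once on one side and with different multiplicity on the other, so we need the judgment decomposition to merge or split these derivations while keeping size under control.

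For Item~1, I would first invert the outer $\l$ to obtain, for each $i \in \{1,\dots,k\}$, a sub-derivation of $\Delta_i, \vartwo \colon\! P_i' \vdash (\la{\var}{\tm})\val \colon\! P_i$ with $\biguplus_{i=1}^k \Delta_i = \Delta$. Further inversion through the $@$ rule and the inner $\l$ yields, for each $i$, a sub-derivation $\sigma_i$ of $\Gamma_i, \vartwo \colon\! P_i', \var \colon\! R_i \vdash \tm \colon\! P_i$ and a sub-derivation $\tau_i$ of $\Gamma_i' \vdash \val \colon\! R_i$, for some positive type $R_i$, with $\Gamma_i \uplus \Gamma_i' = \Delta_i$. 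The hypothesis $\vartwo \notin \Fv{\val}$, via the standard relevance property ($\Dom{\Gamma} \subseteq \Fv{\tm}$ whenever $\Gamma \vdash \tm \colon\! P$, proved by straightforward induction on derivations), is what guarantees that the occurrence $\vartwo \colon\! P_i'$ sits in $\Gamma_i$ rather than $\Gamma_i'$. Then $\pi'$ is assembled bottom-up: a top $@$ whose left premise types $\la{\var}{\la{\vartwo}{\tm}} \colon\! [\Pair{\biguplus_i R_i}{\biguplus_{i=1}^k [\Pair{P_i'}{P_i}]}]$ via two nested $\l$ rules (the inner one with $k$ premises, each being $\sigma_i$ up to environment reordering), and whose right premise types $\val \colon\! \biguplus_i R_i$ and is obtained by merging the $\tau_i$'s through the right-to-left direction of \reflemma{value}. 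A direct $@$-count gives $\size{\pi} = k + \sum_i(\size{\sigma_i} + \size{\tau_i})$ and $\size{\pi'} = 1 + \sum_i \size{\sigma_i} + \sum_i \size{\tau_i}$, so $\size{\pi'} = \size{\pi} + 1 - k$.

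Item~2 follows the same recipe and is simpler, as there are no freshness conditions and no iteration over $k$. Inverting $\pi$ peels off three $@$ rules and two $\l$ rules, producing, for some positive type $S$, sub-derivations $\mu_1$ of $\Gamma_1, \var \colon\! R \vdash \tm \colon\! [\Pair{S}{P}]$ and $\mu_2$ of $\Gamma_2, \var \colon\! R' \vdash \tmtwo \colon\! S$, together with $\rho_1$, $\rho_2$ typing $\val$ with $R$ and $R'$ respectively. Then $\pi'$ applies an $@$ at the top, with left premise $\la{\var}{(\tm\tmtwo)} \colon\! [\Pair{R \uplus R'}{P}]$ obtained by one $\l$ rule followed by an $@$ rule that combines $\mu_1$ and $\mu_2$ (the environment sum naturally fuses $\var \colon\! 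R$ and $\var \colon\! R'$ into $\var \colon\! R \uplus R'$), and with right premise $\val \colon\! R \uplus R'$ obtained by merging $\rho_1$ and $\rho_2$ via \reflemma{value}. Counting gives $\size{\pi} = 3 + \size{\mu_1} + \size{\mu_2} + \size{\rho_1} + \size{\rho_2}$ and $\size{\pi'} = 2 + \size{\mu_1} + \size{\mu_2} + \size{\rho_1} + \size{\rho_2}$, hence $\size{\pi'} = \size{\pi} - 1$.

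The main obstacle is really just bookkeeping: tracking how the environments partition and recombine through the $@$ and $\l$ rules, and verifying that \reflemma{value} is applied with exactly the right decomposition. In Item~1 the side condition $\vartwo \notin \Fv{\val}$ is the only genuinely nontrivial ingredient beyond \reflemma{value}, and it enters precisely at the point where $\vartwo \colon\! P_i'$ must be attributed to the abstraction's premises rather than leaking into the derivation of $\val$; once this is settled, the size accounting is mechanical.
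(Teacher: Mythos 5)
Your proposal is correct and follows essentially the same route as the paper's proof: invert the syntax-oriented rules to extract the sub-derivations typing $\tm$ (resp.\ $\tm$, $\tmtwo$) and the copies of $\val$, use \reflemma{free} to confine $\vartwo$ to the right environments, merge the value derivations via the right-to-left direction of \reflemma{value}, and reassemble; your size bookkeeping ($k+\sum_i(\cdot)$ vs.\ $1+\sum_i(\cdot)$ in Item~1, $3+\cdots$ vs.\ $2+\cdots$ in Item~2) matches the paper's exactly.
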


\newcounter{prop:quant-subject-expansion}
\addtocounter{prop:quant-subject-expansion}{\value{definition}}
\begin{proposition}[Quantitative balanced subject expansion]
\label{prop:quant-subject-expansion}
\NoteProof{propappendix:quant-subject-expansion}
  Let $\tm, \tmp \in \Lambda$ and $\concl{\pi'}{\Gamma}{\tmp}{Q}$.
  \begin{enumerate}
    \item\label{p:quant-subject-expansion-betav}\emph{Enlargement under anti-$\betavm$-step:} If $\tm \tobvm \tmp$ then there is $\concl{\pi}{\Gamma}{\tm}{Q}$ with $\size{\pi} = \size{\pi'} + 1$.
    \item\label{p:quant-subject-expansion-sigma}\emph{Size invariance under anti-$\sigm$-step:} If $\tm \tosigm \tmp$ then $\size{\pi'} > 0$ and there is $\concl{\pi}{\Gamma}{\tm}{Q}$ with $\size{\pi} = \size{\pi'}$.
  \end{enumerate}
\end{proposition}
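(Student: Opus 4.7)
The plan is to prove both parts simultaneously by induction on the balanced context $\mctx$ in which the root step takes place: by definition of $\Rew{\bilancia{\Rule}}$ there exist $\tm_0$ and $\tmp_0$ with $\tm = \mctxp{\tm_0}$, $\tmp = \mctxp{\tmp_0}$, and $\tm_0 \rootRew{\Rule} \tmp_0$ for the relevant root rule. I would dispatch first the inductive cases $\mctx \in \{\mctx'\tmtwo, \tmtwo\mctx', (\la{\vartwo}{\mctx'})\tmtwo\}$. In each of them the top rule of $\pi'$ is an $@$, possibly followed, in the third case, by a $\lambda$ on its function side. A key observation is that in the third case the abstraction is applied, so the $\lambda$ rule has a \emph{single} sub-derivation. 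Hence in all three cases exactly one sub-derivation of $\pi'$ types the hole subterm, to which the induction hypothesis applies; reassembling yields $\pi$ with $\size{\pi} - \size{\pi'}$ equal to the gap at the hole, i.e., $+1$ in part~1 and $0$ in part~2. The inequality $\size{\pi'} > 0$ for part~2 then follows immediately from the outermost application rule.

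The $\betav$-root base case is $\tm_0 = (\la{\var}{\tmtwo})\val \rtobv \tmtwo\isub{\var}{\val} = \tmp_0$. The central ingredient is an \emph{anti-substitution} lemma, the converse of Lemma~\ref{l:substitution}: from any $\concl{\rho}{\Gamma}{\tmtwo\isub{\var}{\val}}{Q}$ one extracts a positive type $P$, environments $\Gamma_1, \Gamma_2$ with $\Gamma = \Gamma_1 \uplus \Gamma_2$, and derivations $\concl{\rho_1}{\Gamma_1, \var \!:\! P}{\tmtwo}{Q}$, $\concl{\rho_2}{\Gamma_2}{\val}{P}$ such that $\size{\rho} = \size{\rho_1} + \size{\rho_2}$. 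I would prove this by induction on $\tmtwo$, using Lemma~\ref{l:value} in two crucial ways: in the variable case $\tmtwo = \var$ to obtain $\rho_2$ directly from $\rho$, and in the $\lambda$ and application cases to coherently combine the per-sub-derivation typings of $\val$ arising from the distinct occurrences of $\var$ in $\tmtwo$. Given $\rho_1, \rho_2$, the derivation $\pi$ is built by applying $\lambda$ to $\rho_1$ (producing $\la{\var}{\tmtwo} \colon\! [\Pair{P}{Q}]$) and then $@$ with $\rho_2$, so that $\size{\pi} = \size{\rho_1} + \size{\rho_2} + 1 = \size{\pi'} + 1$.

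For the $\sigma$-root base case, consider $\sigma_1$: $\tm_0 = (\la{\var}{\tmtwo})\tmtwo'\tmthree \rtosl (\la{\var}{\tmtwo\tmthree})\tmtwo'$ with $\var \notin \Fv{\tmthree}$. Peeling off from $\pi'$ the outer $@$, then the $\lambda$ (a single sub-derivation, since the abstraction is applied), and then the inner $@$, produces sub-derivations of $\tmtwo$, $\tmthree$ and $\tmtwo'$ with environments of the form $\Gamma_\tmtwo, \var \!:\! P_1$; $\Gamma_\tmthree, \var \!:\! P_2$; $\Gamma_{\tmtwo'}$, satisfying $P_1 \uplus P_2 = P$. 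A standard relevance property of this type system together with $\var \notin \Fv{\tmthree}$ forces $P_2 = \emptymset$, so $P_1 = P$; the \emph{same} sub-derivations can then be reassembled in the order $@$, $@$, $\lambda$ to give $\pi$. The rule count is preserved, so $\size{\pi} = \size{\pi'}$, and $\size{\pi'} > 0$ holds because of the outermost $@$. The $\sigma_3$ case is entirely symmetric, using $\var \notin \Fv{\val}$.

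The main technical hurdle is the anti-substitution lemma: re-partitioning a single derivation of $\tmtwo\isub{\var}{\val}$ into sub-derivations of $\tmtwo$ and $\val$ is delicate because multiple free occurrences of $\var$ in $\tmtwo$ mean that the joint typing of $\val$ must be coherently split into per-occurrence typings whose multiset sum recovers the original; Lemma~\ref{l:value} is precisely the tool that enables this, and it is the symmetric counterpart of its use in Lemma~\ref{l:substitution}.
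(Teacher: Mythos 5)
Your proof is correct, but for the crucial root $\betav$-case it takes a genuinely different route from the paper's. The paper never proves an anti-substitution lemma: for $\tm = (\la{\var}{\tmtwo})\val \rtobv \tmtwo\isub{\var}{\val}$ it runs an \emph{inner} induction on the body $\tmtwo$, lifting each premise of the final rule of $\pi'$ to a derivation of a smaller redex $(\la{\var}{\tmthree})\val$, and then uses the abstraction-commutation lemma (\reflemma{commutation}) to reshuffle the resulting derivation of $\la{\vartwo}{(\la{\var}{\tmthree})\val}$ or $((\la{\var}{\tmthree})\val)((\la{\var}{\tmfour})\val)$ into one of $(\la{\var}{\tmtwo})\val$ with the exact size bookkeeping ($+1-n$ and $-1$ respectively); this is precisely why \reflemma{commutation} is stated just before the proposition. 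You instead prove the converse of \reflemma{substitution} outright, extracting from a derivation of $\tmtwo\isub{\var}{\val}$ a derivation of $\tmtwo$ under $\var\colon P$ and one of $\val\colon P$ with additive sizes, and then close with one $\lambda$ and one $@$. Your anti-substitution lemma is provable exactly as you sketch (the variable cases use \reflemma{value} with $p=0$ or the axiom directly, and the application/abstraction cases use the right-to-left direction of \reflemma{value} to merge the per-occurrence typings of $\val$), and it is the standard symmetric companion of the substitution lemma in this literature. What each approach buys: yours makes \reflemma{commutation} unnecessary for this proposition and is more uniform with the proof of subject reduction; the paper's avoids introducing a new splitting lemma at the cost of the two ad hoc commutation statements. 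The remaining parts of your argument (the three balanced-context cases with the singleton-multiset observation forcing a unary $\lambda$, the $\sigma$-cases via \reflemma{free} forcing the empty type on the variable absent from the displaced subterm, and $\size{\pi'}>0$ from the outermost $@$) match the paper's proof.
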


Actually, subject reduction and expansion hold for the whole $\shuf$-reduction $\toshuf$, not only for the balanced $\shuf$-reduction $\toshufm$.
The drawback for $\toshuf$ is that the quantitative information about the size of the derivation is lost in the case of a $\betav$-step.

\newcounter{l:subject-reduction}
\addtocounter{l:subject-reduction}{\value{definition}}
\begin{lemma}[Subject reduction]
\label{l:subject-reduction}
\NoteProof{lappendix:subject-reduction}
  Let $\tm, \tmp \in \Lambda$ and $\concl{\pi}{\Gamma}{\tm}{Q}$.
  \begin{enumerate}
    \item\label{p:subject-reduction-betav}\emph{Shrinkage under $\betav$-step:} If $\tm \tobv \tmp$ then there is $\concl{\pi'}{\Gamma}{\tmp}{Q}$ with $\size{\pi} \geq \size{\pi'}$.
    \item\label{p:subject-reduction-sigma}\emph{Size invariance under $\sigma$-step:} If $\tm \tosig \tmp$ then there is $\concl{\pi'}{\Gamma}{\tmp}{Q}$ such that $\size{\pi} = \size{\pi'}$.
  \end{enumerate}
\end{lemma}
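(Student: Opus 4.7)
The plan is to lift Proposition~\ref{prop:quant-subject-reduction} from the balanced reduction $\toshufm$ to the full reduction $\toshuf$, proceeding by induction on a context $\ctx \in \Lambda_\ctx$ such that $\tm = \ctxp{\tm_0}$, $\tmp = \ctxp{\tmp_0}$ and $\tm_0 \rootRew{\Rule} \tmp_0$, where the root step is $\betav$ for Part~\ref{p:subject-reduction-betav} and $\sigma$ for Part~\ref{p:subject-reduction-sigma}.

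The base case $\ctx = \ctxhole$ is immediate: the empty context lies in both $\Lambda_\ctx$ and $\Lambda_\mctx$, so the root step is in particular a $\toshufm$-step, and Proposition~\ref{prop:quant-subject-reduction} directly yields $\size{\pi'} = \size{\pi} - 1$ (implying $\size{\pi} \geq \size{\pi'}$) for $\betav$, and $\size{\pi'} = \size{\pi}$ for $\sigma$.

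For the inductive step I would case-split on the outer constructor of $\ctx$. If $\ctx = \ctx_1\tmtwo$ or $\ctx = \tmtwo\ctx_1$, then $\pi$ must end with an $@$ rule, so I apply the inductive hypothesis to the premise typing the subterm that contains the redex and recombine the replacement subderivation with the untouched premise via $@$, which propagates the size drop (resp.\ the size preservation). If $\ctx = \la{\var}{\ctx_1}$, then $\pi$ ends with a $\lambda$ rule with some $n \geq 0$ premises $\pi_1, \dots, \pi_n$ typing the body: when $n = 0$, the derivation $\pi$ has size $0$, its environment is empty, and the premise-less $\lambda$ rule produces a derivation of $\vdash \la{\var}{\ctx_1[\tmp_0]} \colon\! \emptymset$ also of size $0$; when $n \geq 1$, I apply the IH separately to each $\pi_i$ to obtain $\pi'_i$ with $\size{\pi_i} \geq \size{\pi'_i}$ (equality in the $\sigma$ case), and recombine the $\pi'_i$ via the $\lambda$ rule so that the environments and the multiset of arrow types line up as in $\pi$.

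The main obstacle is precisely the abstraction case, and it is exactly what forces the weaker $\geq$ in Part~\ref{p:subject-reduction-betav}. When $n > 1$, each of the $n$ premises individually shrinks by $1$ under a $\betav$-step, so the total size of $\pi$ drops by $n$ rather than by $1$; when $n = 0$, the $\betav$-step inside the body is entirely invisible to the type system, and the size does not decrease at all. Neither phenomenon can arise in the balanced case, since balanced contexts forbid reduction under an unapplied $\lambda$. Part~\ref{p:subject-reduction-sigma}, by contrast, lifts cleanly through the $\lambda$ rule because $\sigma$ preserves size on each premise independently of $n$, so the total size is preserved regardless of the multiplicity of the rule.
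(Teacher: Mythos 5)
Your proposal is correct and follows essentially the same route as the paper: the appendix proof of Lemma~\ref{l:subject-reduction} is declared ``analogous to'' that of Proposition~\ref{prop:quant-subject-reduction} (your hole, application-left and application-right cases), with the sole new case being the abstraction context, handled exactly as you do by applying the induction hypothesis to each of the $n$ premises of the $\lambda$ rule and recombining. Your diagnosis of why the bound degrades to $\size{\pi} \geq \size{\pi'}$ for $\betav$ (no decrease when $n=0$, a decrease of $n$ when $n>1$) matches the paper's own discussion following Lemma~\ref{l:subject-expansion}.
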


\newcounter{l:subject-expansion}
\addtocounter{l:subject-expansion}{\value{definition}}
\begin{lemma}[Subject expansion]
\label{l:subject-expansion}
\NoteProof{lappendix:subject-expansion}
  Let $\tm, \tmp \in \Lambda$ and $\concl{\pi'}{\Gamma}{\tmp}{Q}$.
  \begin{enumerate}
    \item\label{p:subject-expansion-betav}\emph{Enlargement under anti-$\betav$-step:} If $\tm \tobv \tmp$ then there is $\concl{\pi}{\Gamma}{\tm}{Q}$ with $\size{\pi} \geq \size{\pi'}$.
    \item\label{p:subject-expansion-sigma}\emph{Size invariance under anti-$\sigma$-step:} If $\tm \tosig \tmp$ then there is $\concl{\pi}{\Gamma}{\tm}{Q}$ such that $\size{\pi} = \size{\pi'}$.
  \end{enumerate}
\end{lemma}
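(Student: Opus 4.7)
The plan is to lift the balanced-case result \refprop{quant-subject-expansion} from balanced contexts to arbitrary contexts. Since $\tm \toshuf \tmp$ means there exist a context $\ctx$ and a root-step $r \rtoshuf r'$ with $\tm = \ctxp{r}$ and $\tmp = \ctxp{r'}$, I would proceed by induction on the structure of $\ctx$. Both clauses are treated uniformly: for (\ref{p:subject-expansion-betav}) I aim for $\size{\pi} \geq \size{\pi'}$, for (\ref{p:subject-expansion-sigma}) for $\size{\pi} = \size{\pi'}$.

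For the base case $\ctx = \ctxhole$, the root step is also a balanced step, so \refprop{quant-subject-expansion} directly supplies a $\pi$ with $\size{\pi} = \size{\pi'} + 1$ (hence $\geq$) in the $\betav$-case, and with $\size{\pi} = \size{\pi'}$ in the $\sigma$-case. For the two application cases, $\pi'$ must end with the $@$ rule; I apply the induction hypothesis to the premise containing the active sub-context, leave the other premise untouched, and reapply the $@$ rule. The $+1$ contributed by the $@$ rule cancels on both sides, and sizes propagate additively, preserving the required inequality (resp.\ equality).

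The delicate case is $\ctx = \la{\vartwo}{\ctxtwo}$. Here $\pi'$ ends with the $\lambda$ rule having $n$ premises (for some $n \in \nat$), each typing the body with $r'$ in place of the hole, under an extra hypothesis $\vartwo \!:\! P_i$. When $n \geq 1$, I apply the induction hypothesis to each premise to obtain $n$ derivations of the body with $r$ in place of the hole, each of size $\geq$ (resp.\ $=$) the corresponding original; recombining via a single $\lambda$ rule produces $\pi$ with the total sizes summing as required. When $n = 0$, the $\lambda$ rule has no premise, so $\size{\pi'} = 0$ and the conclusion type is $\emptymset$; I take $\pi$ to be the $\lambda$ rule with no premise typing the whole abstraction (with $r$ in place of the hole) with $\emptymset$, giving $\size{\pi} = 0 = \size{\pi'}$, which trivially satisfies both $\geq$ and $=$.

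The main obstacle, already flagged in the remark preceding the lemma, is exactly this $n = 0$ subcase in the $\betav$ setting: a $\betav$-step that occurs under a $\lambda$ whose body is erased (untyped) by the ambient $\lambda$ rule is invisible to the type system, so no shrinkage is observable. This is precisely why the sharp equality $\size{\pi} = \size{\pi'}+1$ of \refprop{quant-subject-expansion}(\ref{p:quant-subject-expansion-betav}) must be relaxed to an inequality in clause (\ref{p:subject-expansion-betav}), and the lower bound $\size{\pi'} > 0$ available in the balanced $\sigma$-case similarly cannot be preserved in clause (\ref{p:subject-expansion-sigma}) of the present lemma.
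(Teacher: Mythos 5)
Your proposal is correct and follows essentially the same route as the paper: the paper's proof also reduces everything to the balanced case (\refprop{quant-subject-expansion}) and observes that the only new situation is a step under an abstraction, handled exactly as you do by applying the induction hypothesis to each of the $n$ premises of the final $\lambda$ rule and recombining (with the $n=0$ and $n\geq 2$ subcases explaining why only $\size{\pi} \geq \size{\pi'}$ survives in the $\betav$ clause). Your closing diagnosis of why the sharp bounds of the balanced version cannot be preserved matches the paper's own discussion following the lemma.
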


In \reflemmasp{subject-reduction}{betav}{subject-expansion}{betav} it is impossible to estimate more precisely the relationship between $\size{\pi}$ and $\size{\pi'}$.
Indeed, \refex{size} has shown that there are $\concl{\pi_I}{\vartwo \colon [\,]}{I}{[\,]}$ and $\concl{\pi_{II}}{\vartwo \colon\! [\,]}{II}{[\,]}$ such that $\size{\pi_I} = 0$ and $\size{\pi_{II}} = 1$ (where $I = \la{\var}{\var}$).
So, given $k \in \nat$, consider the derivations $\concl{\pi_k}{\,}{\la{\vartwo}{II}}{[\Pair{[\,]}{[\,]}, \,\overset{k}{\dots}\,, \Pair{[\,]}{[\,]}]}$ and $\concl{\pi_k'}{\,}{\la{\vartwo}{I}}{[\Pair{[\,]}{[\,]}, \,\overset{k}{\dots}\,, \Pair{[\,]}{[\,]}]}$ below:
\begin{align*}
  \pi_n =
  \AxiomC{$\ \vdots\, \pi_{II}$}
  \noLine
  \UnaryInfC{$\vartwo \colon\! [\,] \vdash II \colon\! [\,]$}
  \AxiomC{$\overset{k}{\ldots}$}
  \AxiomC{$\ \vdots\, \pi_{II}$}
  \noLine
  \UnaryInfC{$\vartwo \colon\! [\,] \vdash II \colon\! [\,]$}
  \RightLabel{\footnotesize$\lambda$}
  \TrinaryInfC{$\vdash \la{\vartwo}{II} \colon\! [\Pair{[\,]}{[\,]}, \,\overset{k}{\dots}\,, \Pair{[\,]}{[\,]}]$}
  \DisplayProof
  &&
  \pi_n' =
  \AxiomC{$\ \vdots\, \pi_{I}$}
  \noLine
  \UnaryInfC{$\vartwo \colon\! [\,] \vdash I \colon\! [\,]$}
  \AxiomC{$\overset{k}{\ldots}$}
  \AxiomC{$\ \vdots\, \pi_{I}$}
  \noLine
  \UnaryInfC{$\vartwo \colon\! [\,] \vdash I \colon\! [\,]$}
  \RightLabel{\footnotesize$\lambda$}
  \TrinaryInfC{$\vdash \la{\vartwo}{I} \colon\! \big[ \Pair{[\,]}{[\,]}, \,\overset{k}{\dots}\,, \Pair{[\,]}{[\,]} \big]$}
  \DisplayProof
\end{align*}
Clearly, $\la{\vartwo}{II} \toshuf \la{\vartwo}{I}$ (but $\la{\vartwo}{II} \not\toshufm \la{\vartwo}{I}$) and the $\pi_k'$ (resp.~$\pi_k$) is the only derivation typing $\la{\vartwo}{I}$ (resp.~$\la{\vartwo}{II}$) with the same type and environment as $\pi_k$ (resp.~$\pi_k'$).
One has $\size{\pi_k} = k \cdot \size{\pi_{II}} = k$ and $\size{\pi_k'} = k \cdot \size{\pi_I} = 0$, thus 
 the difference of size of the derivations $\pi_k$ and $\pi_k'$ can be arbitrarely large (since $k \in \nat$); in particular $\size{\pi_0} = \size{\pi_0'}$, so for $k = 0$ the size of derivations does not even strictly decrease.

\section{Relational semantics: qualitative results}
\label{sect:qualitative-semantics}

\reflemmas{subject-reduction}{subject-expansion} have an important consequence: the non-idempotent intersection type system of \reffig{types} defines a denotational model for the shuffling calculus $\shufcalc$ (\refthm{invariance} below).

\begin{definition}[Suitable list of variables for a term, semantics of a term]
\label{def:semantics}
  Let $\tm \in \Lambda$ and let $\var_1, \dots, \var_k$ be pairwise distinct variables, for some $k \in \nat$.
  
  If $\Fv{\tm} \subseteq \{\var_1, \dots, \var_k\}$, then we say that the list $\vec{\var} = (\var_1, \dots, \var_k)$ is \emph{suitable for} $\tm$.
  
  If $\vec{\var} = (\var_1, \dots, \var_k)$ is suitable for $\tm$, the (\emph{relational}) \emph{semantics}, or \emph{interpretation}, \emph{of} $\tm$ \emph{for} $\vec{\var}$ is
  \begin{equation*}
    \sem{\tm}{\vec{\var}} = \{((P_1,\dots, P_k),Q) \mid \exists \, \concl{\pi}{\var_1 \colon\! P_1, \dots, \var_k \colon\! P_k}{\tm}{Q}\} \,.
  \end{equation*}
\end{definition}

Essentially, the semantics of a term $\tm$ for a suitable list $\vec{\var}$ of variables is the set of 
judgments for $\vec{\var}$ and $\tm$ that can be derived
in the non-idempotent intersection type system of \reffig{types}.

If we identify the negative type $\Pair{P}{Q}$ with the pair $(P,Q)$ and if we set $\U \defeq \bigcup_{k \in \Nat} \U_k$ where:
\begin{align*}
  \U_0 &\defeq \emptyset & &\U_{k+1} \defeq \MultiFin{\U_k} \times \MultiFin{\U_k} && \textup{($\MultiFin{X}$ is the set of finite multisets over the set $X$)}
\end{align*}
then, for any $\tm \in \Lambda$ and any suitable list $\vec{\var} = (\var_1, \dots, \var_k)$ for $\tm$, one has $\sem{\tm}{\vec{\var}} \subseteq \MultiFin{\U}^k \times \MultiFin{\U}$;
in particular, if $\tm$ is closed and $\vec{\var} = (\,)$, then $\sem{\tm}{} = \{Q \mid \exists\, \concl{\pi}{\ }{\tm}{Q}\} \subseteq \MultiFin{\U}$ (up to an obvious isomorphism).
Note that $\U = \MultiFin{\U} \times \MultiFin{\U}$:
\cite{DBLP:conf/csl/Ehrhard12,DBLP:conf/fossacs/CarraroG14} proved that the latter identity is enough to have a denotational model for $\shufcalc$. 
We can also prove it explicitly using \reflemmas{subject-reduction}{subject-expansion}.

\newcounter{thm:invariance}
\addtocounter{thm:invariance}{\value{definition}}
\begin{theorem}[Invariance under $\shuf$-equivalence]
\label{thm:invariance}
\NoteProof{thmappendix:invariance}
  Let $\tm, \tmtwo \in \Lambda$, let $k \in \nat$ and let $\vec{\var} = (\var_1, \dots, \var_k)$ be a suitable list of variables for $\tm$ and $\tmtwo$.
  If $\tm \shufeq \tmtwo$ then $\sem{\tm}{\vec{\var}} = \sem{\tmtwo}{\vec{\var}}$.
\end{theorem}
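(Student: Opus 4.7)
The proof will be essentially a direct corollary of subject reduction (Lemma \ref{l:subject-reduction}), subject expansion (Lemma \ref{l:subject-expansion}), and confluence of $\toshuf$ (Proposition \ref{prop:general-properties}). The key observation is that the semantics $\sem{\tm}{\vec{\var}}$ is defined in purely derivability terms: $((P_1,\dots,P_k),Q) \in \sem{\tm}{\vec{\var}}$ iff the judgment $\var_1\colon\!P_1,\dots,\var_k\colon\!P_k \vdash \tm \colon\! Q$ is derivable; the size of derivations, controlled precisely by the quantitative lemmas, is irrelevant here.

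First I would check a routine preliminary: if $\tm \toshuf \tmp$ then $\Fv{\tmp} \subseteq \Fv{\tm}$. This is immediate by inspection of the three root-rules ($\betav$, $\sigma_1$, $\sigma_3$) and a trivial induction on contexts, because substitution can only shrink the free variable set and the $\sigma$-rules merely shuffle constructors without introducing new free occurrences. Consequently, if $\vec{\var}$ is suitable for $\tm$, then $\vec{\var}$ is suitable for every $\toshuf$-reduct of $\tm$, and all semantics in what follows are well defined for the same list $\vec{\var}$.

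Second, I would establish invariance under a single step: if $\tm \toshuf \tmp$ and $\vec{\var}$ is suitable for $\tm$, then $\sem{\tm}{\vec{\var}} = \sem{\tmp}{\vec{\var}}$. The inclusion $\subseteq$ follows directly from Lemma \ref{l:subject-reduction} applied to any derivation $\concl{\pi}{\var_1\colon\!P_1,\dots,\var_k\colon\!P_k}{\tm}{Q}$ witnessing an element of $\sem{\tm}{\vec{\var}}$; the inclusion $\supseteq$ follows symmetrically from Lemma \ref{l:subject-expansion}. A straightforward induction on the length of a reduction sequence then yields $\sem{\tm}{\vec{\var}} = \sem{\tmp}{\vec{\var}}$ whenever $\tm \toshuf^* \tmp$.

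Finally, I would close the argument via confluence. By Proposition \ref{prop:general-properties}, the reduction $\toshuf$ is confluent, so $\shufeq$ coincides with the relation of having a common $\toshuf$-reduct. Thus if $\tm \shufeq \tmtwo$ there exists $\tmthree \in \Lambda$ with $\tm \toshuf^* \tmthree$ and $\tmtwo \toshuf^* \tmthree$; by the preliminary observation, $\vec{\var}$ remains suitable for $\tmthree$, and by the previous step, $\sem{\tm}{\vec{\var}} = \sem{\tmthree}{\vec{\var}} = \sem{\tmtwo}{\vec{\var}}$. I do not expect any genuine obstacle: all the difficult combinatorial work has already been absorbed into the subject reduction and expansion lemmas, and confluence lets us bypass any worry about intermediate terms in a zig-zag $\shufeq$-path possibly having more free variables than either endpoint.
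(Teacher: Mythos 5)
Your proposal is correct, and the core engine is the same as the paper's: one-step invariance of the set of derivable judgments via subject reduction (Lemma~\ref{l:subject-reduction}) for one inclusion and subject expansion (Lemma~\ref{l:subject-expansion}) for the other. Where you diverge is in how the equivalence $\shufeq$ is unfolded. The paper takes the raw zig-zag: it writes $\tm = \tm_0, \dots, \tm_q = \tmtwo$ with $\tm_i \toshuf \tm_{i+1}$ or $\tm_{i+1} \toshuf \tm_i$ and inducts on $q$, applying whichever of the two lemmas matches the direction of each leg; confluence is never invoked. You instead use confluence (Proposition~\ref{prop:general-properties}) to replace the zig-zag by a valley $\tm \toshuf^* \tmthree \MRevTo{\shuf} \tmtwo$ and then apply single-step invariance along each descending branch. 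Your route costs an appeal to confluence that the paper's direct induction does not need, but it buys a genuinely cleaner treatment of one point the paper glosses over: in a zig-zag, an intermediate term $\tm_i$ may have free variables outside $\{\var_1, \dots, \var_k\}$ (since expansion can introduce them), so $\sem{\tm_i}{\vec{\var}}$ is not literally well defined by Definition~\ref{def:semantics}, and the paper's induction implicitly works with the set of derivable judgments over $\vec{\var}$ rather than with the official interpretation. Your preliminary observation that $\Fv{\tmp} \subseteq \Fv{\tm}$ whenever $\tm \toshuf \tmp$, combined with the valley shape, guarantees $\vec{\var}$ stays suitable throughout, so every object you manipulate is an interpretation in the official sense. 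Both arguments are sound.
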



An interesting property of relational semantics is that all $\shufm$-normal forms have a non-empty interpretation (\reflemma{semantics-normal}).
To prove that we use the syntactic characterization of $\shufm$-normal forms (\refprop{syntactic-normal}).
Note that a stronger statement (\reflemmap{semantics-normal}{anf}) is required for $\shufm$-normal forms belonging to $\anfSet$, in order to handle the case where the $\shufm$-normal form is a $\beta$-redex.

\newcounter{l:semantics-normal}
\addtocounter{l:semantics-normal}{\value{definition}}
\begin{lemma}[Semantics and typability of $\shufm$-normal forms]
\label{l:semantics-normal}
\NoteProof{lappendix:semantics-normal}
  Let $\tm$ be a term, let $k \in \nat$ and let $\vec{\var} = (\var_1, \dots, \var_k)$ be a list of variables suitable for $\tm$.
  \begin{enumerate}
    \item\label{p:semantics-normal-anf} If $\tm \in \anfSet$ then for every positive type $Q$ there exist positive types $P_1, \dots, P_k$ and a derivation $\concl{\pi}{\var_1 \colon\! P_1, \dots, \var_k \colon\!  P_k}{\tm}{Q}$.
    \item\label{p:semantics-normal-wnf} If $\tm \in \wnfSet$ then there are positive types $Q, P_1, \dots, P_k$ and a derivation $\concl{\pi}{\var_1 \colon\! P_1, \dots, \var_k \colon\!  P_k\allowbreak}{\tm}{Q}$.
    \item\label{p:semantics-normal-nonempty} If $\tm$ is $\shufm$-normal then $\sem{\tm}{\vec{\var}} \neq \emptyset$.
  \end{enumerate}
\end{lemma}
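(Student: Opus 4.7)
The plan is to prove parts~\ref{p:semantics-normal-anf} and~\ref{p:semantics-normal-wnf} simultaneously by induction on the mutual grammars defining $\anfSet$ and $\wnfSet$, and then to derive part~\ref{p:semantics-normal-nonempty} as an immediate corollary via \refprop{syntactic-normal}. Two basic typability facts are used throughout: (i) every value $\val$ is typable with the atomic positive type $\emptymset$ in the empty environment, since if $\val$ is a variable then the $\mathsf{ax}$ instance $\var' \colon \emptymset \vdash \var' \colon \emptymset$ collapses to the empty context (because $\Gamma, \var' \colon \emptymset$ equals $\Gamma$), and if $\val = \la{\var''}{\tm''}$ then the $\lambda$ rule with $n = 0$ premises concludes $\vdash \val \colon \emptymset$ with no constraint on the body; (ii) the $\mathsf{ax}$ rule lets us assign any prescribed positive type to a head variable, and any $\var_j$ from the suitable list that does not occur free in $\tm$ can be padded with $\emptymset$. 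A crucial design point is that part~\ref{p:semantics-normal-anf} is strictly stronger than part~\ref{p:semantics-normal-wnf} because of the universal quantification over the target type $Q$; this strengthening is exactly what makes the induction go through.

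For part~\ref{p:semantics-normal-anf} I would case-split on the production of $\tm \in \anfSet$. For $\tm = \var\val$, I type $\var$ with $[\Pair{\emptymset}{Q}]$ via $\mathsf{ax}$, type $\val$ with $\emptymset$ by fact~(i), and combine via the @ rule. For $\tm = \var\, \anf'$, I apply the IH of~\ref{p:semantics-normal-anf} to $\anf'$ at target $\emptymset$ (any choice works) and close by @ as before. For $\tm = \anf\, \wnf$, I apply the IH of~\ref{p:semantics-normal-wnf} to $\wnf$ to obtain a positive type $P$ such that $\wnf$ is typable with $P$, then apply the IH of~\ref{p:semantics-normal-anf} to $\anf$ with target $[\Pair{P}{Q}]$, and glue the two derivations with @, summing their environments via $\uplus$.

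For part~\ref{p:semantics-normal-wnf}, the case $\tm = \val$ is handled directly by fact~(i), and the case $\tm = \anf$ follows immediately from part~\ref{p:semantics-normal-anf} by picking any target $Q$. The delicate case, which I expect to be the main obstacle, is $\tm = (\la{\var}{\wnf'})\anf$: I first apply the IH of~\ref{p:semantics-normal-wnf} to $\wnf'$ on the extended suitable list $\vec{\var}, \var$, obtaining some $Q$, some positive types $P_1, \dots, P_k, P$, and a derivation $\var_1 \colon P_1, \dots, \var_k \colon P_k, \var \colon P \vdash \wnf' \colon Q$; then the $\lambda$ rule with $n = 1$ yields $\var_1 \colon P_1, \dots, \var_k \colon P_k \vdash \la{\var}{\wnf'} \colon [\Pair{P}{Q}]$; finally I invoke the IH of~\ref{p:semantics-normal-anf} on $\anf$ \emph{with the specific target $P$ dictated by the body $\wnf'$}, and close with the @ rule, merging environments via $\uplus$. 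This is precisely the step where the universal quantifier over $Q$ in part~\ref{p:semantics-normal-anf} is indispensable, and is what forces the asymmetric formulation of the two statements.

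Part~\ref{p:semantics-normal-nonempty} is then immediate: if $\tm$ is $\shufm$-normal, \refprop{syntactic-normal} gives $\tm \in \wnfSet$, and part~\ref{p:semantics-normal-wnf} produces a derivation $\concl{\pi}{\var_1 \colon P_1, \dots, \var_k \colon P_k}{\tm}{Q}$ witnessing $((P_1,\dots,P_k),Q) \in \sem{\tm}{\vec{\var}}$.
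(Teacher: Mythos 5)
Your proposal is correct and follows essentially the same route as the paper's proof: a simultaneous induction on the mutual grammars of $\anfSet$ and $\wnfSet$, with the strengthened universally-quantified target type in part~\ref{p:semantics-normal-anf} used exactly where the paper uses it (the stuck $\beta$-redex case $(\la{\var}{\wnf})\anf$), values typed with $\emptymset$ in the empty environment, and part~\ref{p:semantics-normal-nonempty} obtained from part~\ref{p:semantics-normal-wnf} via \refprop{syntactic-normal}. No gaps.
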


A consequence of \refprop{quant-subject-reduction} (and \refthm{invariance} and \reflemma{semantics-normal}) is a qualitative result: a semantic and logical (if we consider our non-idempotent type system as a logical framework) characterization of (strong) $\shufm$-normalizable terms (\refthm{characterize-normalizable}).
In this theorem, the main equivalences are between Points \ref{p:characterize-normalizable-normalizable}, \ref{p:characterize-normalizable-nonempty} and \ref{p:characterize-normalizable-strongly-normalizable}
, already proven in \cite{DBLP:conf/fossacs/CarraroG14} using different techniques.
Points \ref{p:characterize-normalizable-equivalen-normal} and \ref{p:characterize-normalizable-derivable} can be seen as ``intermediate stages'' in the proof of the main equivalences, which are informative enough to deserve to be explicitely stated.

\newcounter{thm:characterize-normalizable}
\addtocounter{thm:characterize-normalizable}{\value{definition}}
\begin{theorem}[Semantic and logical characterization of $\shufm$-normalization]
\label{thm:characterize-normalizable}
\NoteProof{thmappendix:characterize-normalizable}
  Let $\tm \in \Lambda$ and let $\vec{\var} = (\var_1, \dots, \var_k)$ be a suitable list of variables for $\tm$.
  The following are equivalent:
  \begin{enumerate}
    \item\label{p:characterize-normalizable-normalizable} \emph{Normalizability:} $\tm$ is $\shufm$-normalizable;
    \item\label{p:characterize-normalizable-equivalen-normal} \emph{Completeness:} $\tm \shufeq \tmtwo$ for some $\shufm$-normal $\tmtwo \in \Lambda$;
    \item\label{p:characterize-normalizable-nonempty}\emph{Adequacy:} $\sem{\tm}{\vec{\var}} \neq \emptyset$;
    \item\label{p:characterize-normalizable-derivable}\emph{Derivability:} there is a derivation $\concl{\pi}{\var_1 \colon\! P_1, \dots, \var_k \colon\! P_k}{\tm}{Q}$ for some positive types $P_1, \dots, P_k, Q$;
    \item\label{p:characterize-normalizable-strongly-normalizable}\emph{Strong normalizabilty:} $\tm$ is strongly $\shufm$-normalizable.
  \end{enumerate}
\end{theorem}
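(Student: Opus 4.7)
The plan is to prove the five conditions equivalent by establishing the cycle
$(\ref{p:characterize-normalizable-normalizable}) \Rightarrow (\ref{p:characterize-normalizable-equivalen-normal}) \Rightarrow (\ref{p:characterize-normalizable-nonempty}) \Leftrightarrow (\ref{p:characterize-normalizable-derivable}) \Rightarrow (\ref{p:characterize-normalizable-strongly-normalizable}) \Rightarrow (\ref{p:characterize-normalizable-normalizable})$. Two of the arrows are essentially free: $(\ref{p:characterize-normalizable-normalizable}) \Rightarrow (\ref{p:characterize-normalizable-equivalen-normal})$ holds because a $\shufm$-normalizing reduction witnesses the $\simeq_{\shuf}$-equivalence with its normal form; $(\ref{p:characterize-normalizable-strongly-normalizable}) \Rightarrow (\ref{p:characterize-normalizable-normalizable})$ is immediate; and $(\ref{p:characterize-normalizable-nonempty}) \Leftrightarrow (\ref{p:characterize-normalizable-derivable})$ is just the definition of $\sem{\tm}{\vec{\var}}$ unfolded.

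For $(\ref{p:characterize-normalizable-equivalen-normal}) \Rightarrow (\ref{p:characterize-normalizable-nonempty})$ I would invoke two previously established results. Suppose $\tm \simeq_{\shuf} \tmtwo$ with $\tmtwo$ in $\shufm$-normal form. By \refthm{invariance}, $\sem{\tm}{\vec{\var}} = \sem{\tmtwo}{\vec{\var}}$, and by \reflemmap{semantics-normal}{nonempty} the right-hand side is non-empty; hence so is the left-hand side.

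The main substantive implication is $(\ref{p:characterize-normalizable-derivable}) \Rightarrow (\ref{p:characterize-normalizable-strongly-normalizable})$, and this is where the quantitative information carried by the type system becomes essential. Starting from a derivation $\concl{\pi}{\Gamma}{\tm}{Q}$, assume for contradiction that there is an infinite $\shufm$-reduction $\tm = \tm_0 \toshufm \tm_1 \toshufm \tm_2 \toshufm \cdots$. Apply \refprop{quant-subject-reduction} iteratively to build derivations $\concl{\pi_i}{\Gamma}{\tm_i}{Q}$ with $\size{\pi_0} = \size{\pi}$ and such that at each step the size either strictly decreases by $1$ (when $\tm_i \tobvm \tm_{i+1}$) or stays constant (when $\tm_i \tosigm \tm_{i+1}$). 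Since the sequence $(\size{\pi_i})_{i \in \nat}$ takes values in $\nat$, only finitely many steps can be $\tobvm$-steps; so there exists $N$ such that $\tm_N \tosigm \tm_{N+1} \tosigm \tm_{N+2} \tosigm \cdots$ is an infinite $\sigm$-reduction, contradicting the strong normalization of $\sigm$ recorded in \refprop{general-properties}.

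The hard part is this last implication, and the crucial ingredients are the strict decrease at $\tobvm$-steps (which requires the balanced restriction — the counterexample $\la{\var}{\delta\delta}$ discussed after \refprop{quant-subject-reduction} shows that unrestricted $\toshuf$ would not suffice) and the already-known strong normalization of $\sigm$, which together quarantine the two behaviours so that neither can run forever. All the remaining arrows are either definitional or a single application of a previous result, so the argument reduces to composing the quantitative subject reduction with the pre-existing termination theorem for $\sigma$.
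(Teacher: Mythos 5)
Your proposal is correct and follows essentially the same route as the paper's own proof: the same cycle of implications, the same appeal to \refthm{invariance} and \reflemmap{semantics-normal}{nonempty} for \eqref{p:characterize-normalizable-equivalen-normal}$\Rightarrow$\eqref{p:characterize-normalizable-nonempty}, and the same argument for \eqref{p:characterize-normalizable-derivable}$\Rightarrow$\eqref{p:characterize-normalizable-strongly-normalizable} combining the quantitative subject reduction (\refprop{quant-subject-reduction}) with the strong normalization of $\tosigm$ (\refprop{general-properties}).
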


Equivalence \eqref{p:characterize-normalizable-strongly-normalizable}$\Leftrightarrow$\eqref{p:characterize-normalizable-normalizable} means that normalization and strong normalization are equivalent for $\shufm$-reduction, thus in studying the termination of $\shufm$-reduction no intricacy arises from its non-determinism.
Equivalence \eqref{p:characterize-normalizable-normalizable}$\Leftrightarrow$\eqref{p:characterize-normalizable-equivalen-normal} says that $\shufm$-reduction is complete to get $\shufm$-normal forms;
in particular, this entails that every $\shuf$-normalizable term is $\shufm$-normalizable.
Equivalence \eqref{p:characterize-normalizable-normalizable}$\Leftrightarrow$\eqref{p:characterize-normalizable-equivalen-normal} is the analogue of a well-known theorem \cite[Thm.~8.3.11]{Barendregt84} for ordinary (\ie~call-by-name) $\lambda$-calculus relating head $\beta$-reduction and $\beta$-equivalence: this corroborates the idea that $\shufm$-reduction is the ``head reduction'' in a call-by-value setting, despite its non-determinism.
The equivalence \eqref{p:characterize-normalizable-nonempty}$\Leftrightarrow$\eqref{p:characterize-normalizable-derivable} holds by definition of relational semantics.

Implication \eqref{p:characterize-normalizable-normalizable}$\Rightarrow$\eqref{p:characterize-normalizable-nonempty} (or equivalently \eqref{p:characterize-normalizable-normalizable}$\Rightarrow$\eqref{p:characterize-normalizable-derivable}, \ie~``normalizable $\Rightarrow$ typable'') does not hold in Plotkin's $\lambda_v$: indeed, the (open) terms $\tm$ and $\tmtwo$ in Eq.~\refeq{premature} (see also \refex{reductions}) are $\betav$-normal (because of a stuck $\beta$-redex) but 
$\sem{\tm}{\var} = \emptyset = \sem{\tmtwo}{\var}$. 
Equivalences such as the ones in \refthm{characterize-normalizable} hold in a call-by-value setting provided that $\betav$-reduction is extended, \eg~by adding $\sigma$-reduction. 
In \cite{DBLP:conf/aplas/AccattoliG16}, $\shufcalc$ is proved to be termination equivalent to other extensions of $\lambda_v$ (in the framework Open Call-by-Value, where evaluation is call-by-value and weak, on possibly open terms) such as the fireball calculus \cite{parametricBook,DBLP:conf/icfp/GregoireL02,fireballs} and the value substitution calculus \cite{DBLP:conf/flops/AccattoliP12}, so \refthm{characterize-normalizable} is a general result \emph{characterizing termination in those calculi} as well.

\newcounter{l:semantic-value}
\addtocounter{l:semantic-value}{\value{definition}}
\begin{lemma}[Uniqueness of the derivation with empty types; Semantic and logical characterization of values]
  \label{l:semantic-value}
  \NoteProof{lappendix:semantic-value}
  Let $\tm \in \Lambda$ be $\shufm$-normal.
  
  \begin{enumerate}
    \item\label{p:semantic-value-uniqueness} If $\concl{\pi}{\,}{\tm}{\emptymset}$ and $\concl{\pi'}{\Gamma}{\tm}{\emptymset}$, then $\tm \in \valSet$, $\size{\pi} = 0$, $\Dom{\Gamma} = \emptyset$ and $\pi = \pi'$.
    More precisely, $\pi$ consists 
    of a rule $\Ax$ if $\tm$ is a variable, otherwise $\tm$ is an abstraction and $\pi$ consists 
    of a 0-ary~rule~$\lambda$.  

    \item\label{p:semantic-value-characterization} Given a list $\vec{\var} = (\var_1, \dots, \var_k)$ of variables suitable for $\tm$, the following are equivalent:
    \begin{multicols}{2}
    \begin{enumerate}
	    \item\label{p:semantic-value-value} $\tm$ is a value;
	    \item\label{p:semantic-value-semantic} $((\emptymset, \overset{k}{\dots\,}, \emptymset), \emptymset) \in \sem{\tm}{\vec{\var}}$\,;
	    \item\label{p:semantic-value-logic} there exists $\concl{\pi}{\,}{\tm}{\emptymset}$\,;
	    \item\label{p:semantic-value-size} there exists $\Type{\pi}{\tm}$ such that $\size{\pi} = 0$.
    \end{enumerate} 
    \end{multicols}	
  \end{enumerate}
\end{lemma}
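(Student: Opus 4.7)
The proof proceeds in two stages, with Part 2 essentially a corollary of Part 1.

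For Part 1, my plan is first to isolate the crucial syntactic fact: if $\tm$ is $\shufm$-normal and not a value, then every environment $\Gamma$ appearing in a derivation $\concl{\pi}{\Gamma}{\tm}{Q}$ is non-empty. By Proposition~\ref{prop:syntactic-normal}, a $\shufm$-normal $\tm$ is a value, or in $\anfSet$, or of the form $(\la{\var}{\wnf})\anf$. I first prove by induction on the grammar $\anf \grameq \var\val \mid \var\anf \mid \anf\wnf$ that for any derivation $\concl{\pi}{\Gamma}{\anf}{Q}$ the head variable of $\anf$ lies in $\Dom{\Gamma}$: the base case $\var\tm_1$ uses that the $\Ax$ rule forces $\var \colon\! [\Pair{P}{Q}]$ (a non-$\emptymset$ entry) into the environment, and the inductive step propagates the head variable through the $@$-rule. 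Since the head variable of the $\anf$ part is free in $\tm$ (we avoid capture by $\alpha$-conversion in the case $(\la{\var}{\wnf})\anf$), it appears in $\Dom{\Gamma}$ for any typing of $\tm$. Contrapositively, $\concl{\pi}{\,}{\tm}{\emptymset}$ forces $\tm \in \valSet$.

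Once $\tm$ is known to be a value, I inspect the possible derivations. If $\tm = \var$, the only rule is $\Ax$, which pins $\pi$ (and $\pi'$) to the instance with $P = \emptymset$; the resulting environment $\var \colon\! \emptymset$ has empty domain, and $\size{\pi} = 0$. If $\tm = \la{\var}{\tmtwo}$, the only rule is $\lambda$, and since the conclusion type $\emptymset$ is the empty multiset, $\pi$ (and $\pi'$) must use $\lambda$ with $n = 0$ premises; then $\Gamma$ is the empty sum of environments and $\size{\pi} = 0$. In both sub-cases $\pi$ and $\pi'$ have identical structure, so $\pi = \pi'$.

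For Part 2, the equivalence \eqref{p:semantic-value-semantic}$\Leftrightarrow$\eqref{p:semantic-value-logic} is immediate from Definition~\ref{def:semantics}, since declaring $\var_i \colon\! \emptymset$ does not enlarge $\Dom{\Gamma}$, so the two judgments coincide. Direction \eqref{p:semantic-value-logic}$\Rightarrow$\eqref{p:semantic-value-value} is Part 1, and \eqref{p:semantic-value-logic}$\Rightarrow$\eqref{p:semantic-value-size} also follows from Part 1 (which gives $\size{\pi} = 0$). For \eqref{p:semantic-value-value}$\Rightarrow$\eqref{p:semantic-value-logic}, I explicitly exhibit the witness: the $\Ax$-rule with $P = \emptymset$ if $\tm = \var$, or the $0$-ary $\lambda$-rule if $\tm = \la{\var}{\tmtwo}$. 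Finally, \eqref{p:semantic-value-size}$\Rightarrow$\eqref{p:semantic-value-value} is a direct inspection: $\size{\pi} = 0$ means $\pi$ contains no $@$ rule, so the last rule applied is $\Ax$ or $\lambda$, forcing $\tm$ to be a variable or an abstraction.

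The main obstacle is the inductive argument tracking the head variable through $\anfSet$, since this is what closes the loop between syntactic normality and empty typing environments. Everything else is syntax-directed bookkeeping once this lemma is in place.
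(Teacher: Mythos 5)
Your proposal is correct and follows essentially the same route as the paper: the paper's own proof rests on an auxiliary lemma stating that any typing derivation of a term in $\anfSet$ forces a non-empty entry in the environment (your head-variable induction is just a slightly sharper phrasing of that same lemma), and then handles the three cases of $\shufm$-normal forms and the value sub-cases exactly as you do. The only cosmetic slip is the remark about $\alpha$-conversion in the case $(\la{\var}{\wnf})\anf$ --- no capture can occur there since the argument $\anf$ lies outside the scope of the binder --- but this does not affect the argument.
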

Qualitatively, \reflemma{semantic-value} allows us to refine the semantic and logical characterization given by \refthm{characterize-normalizable} for a specific class of terms: the \emph{valuable} ones, \ie~the terms that $\shufm$-normalize to a value.
Valuable terms are all only the terms whose semantics contains a specific element: the point with only empty types.

\newcounter{prop:semantic-valuable}
\addtocounter{prop:semantic-valuable}{\value{definition}}
\begin{proposition}[Logical and semantic characterization of valuability]
	\label{prop:semantic-valuable}
\NoteProof{propappendix:semantic-valuable}
	Let $\tm$ be a 
	term and $\vec{\var} = (\var_1, \dots, \var_k)$ be a suitable list of variables for $\tm$.
	The following are equivalent:
		\begin{enumerate}
			\item\label{p:semantic-valuable-value}\emph{Valuability:} $\tm$ is $\shufm$-normalizable and the $\shufm$-normal form of $\tm$ is a value;
			\item\label{p:semantic-valuable-semantic}\emph{Empty point in the semantics:} $((\emptymset, \overset{k}{\dots}\,, \emptymset), \emptymset) \in \sem{\tm}{\vec{\var}}$;
			\item\label{p:semantic-valuable-logic}\emph{Derivability with empty types:} there exists a derivation $\concl{\pi}{\,}{\tm}{\emptymset}$.
		\end{enumerate} 
\end{proposition}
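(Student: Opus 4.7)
The plan is to establish (\ref{p:semantic-valuable-semantic}) $\Leftrightarrow$ (\ref{p:semantic-valuable-logic}) by unfolding the definition of $\sem{\cdot}{\vec\var}$, and then to establish (\ref{p:semantic-valuable-value}) $\Leftrightarrow$ (\ref{p:semantic-valuable-logic}) by combining \reflemma{semantic-value} with the quantitative subject reduction and expansion results (\refprop{quant-subject-reduction} and \refprop{quant-subject-expansion}).

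For (\ref{p:semantic-valuable-semantic}) $\Leftrightarrow$ (\ref{p:semantic-valuable-logic}), by definition $((\emptymset, \overset{k}{\dots\,}, \emptymset), \emptymset) \in \sem{\tm}{\vec{\var}}$ iff there is a derivation $\concl{\pi}{\var_1 \colon\! \emptymset, \dots, \var_k \colon\! \emptymset}{\tm}{\emptymset}$; since an environment binding every variable to $\emptymset$ has empty domain and so coincides with the empty environment, this is exactly the statement of (\ref{p:semantic-valuable-logic}).

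For (\ref{p:semantic-valuable-logic}) $\Rightarrow$ (\ref{p:semantic-valuable-value}), derivability yields $\shufm$-normalizability via \refthm{characterize-normalizable}; let $\tmtwo$ be the $\shufm$-normal form of $\tm$ (unique by confluence, \refprop{general-properties}). Iterating \refprop{quant-subject-reduction} along the reduction $\tm \toshufm^{*} \tmtwo$ produces a derivation $\concl{\pi'}{\,}{\tmtwo}{\emptymset}$ with the same empty environment and type $\emptymset$. Since $\tmtwo$ is $\shufm$-normal, the characterization of values in \reflemma{semantic-value} then forces $\tmtwo \in \valSet$, which is the valuability of $\tm$.

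Conversely, for (\ref{p:semantic-valuable-value}) $\Rightarrow$ (\ref{p:semantic-valuable-logic}), I would fix a reduction $\tm = \tm_0 \toshufm \tm_1 \toshufm \dots \toshufm \tm_n = \val$ with $\val \in \valSet$ and build derivations $\concl{\pi_i}{\,}{\tm_i}{\emptymset}$ by backward induction on $i$, starting from $\pi_n$ supplied by \reflemma{semantic-value} (and satisfying $\size{\pi_n} = 0$). The main obstacle is that the anti-$\sigm$ clause of \refprop{quant-subject-expansion} requires the derivation of the reduct to have strictly positive size, so it cannot be applied directly to $\pi_n$. I plan to discharge this by observing that, when $n \geq 1$, the last step $\tm_{n-1} \toshufm \val$ cannot be a $\sigm$-step: a $\sigm$-contractum has the shape $(\lambda \var.s)u$ plugged into a balanced context $\mctx$, but inspection of the grammar of $\mctx$ shows that $\mctxp{(\lambda \var.s)u}$ can be a value (a variable or an abstraction) only if $\mctx = \ctxhole$, which would force $\val = (\lambda \var.s)u$, contradicting that $\val$ is a value. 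Hence the last step is a $\betavm$-step and $\size{\pi_{n-1}} = 1$; a trivial induction using both clauses of \refprop{quant-subject-expansion} then yields $\size{\pi_i} \geq 1$ for every $i < n$, so the missing positivity hypothesis is always available. The derivation $\pi_0$ so obtained is the one witnessing (\ref{p:semantic-valuable-logic}).
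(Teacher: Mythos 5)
Your proof is correct, but it takes a more hands-on route than the paper. The paper proves \eqref{p:semantic-valuable-semantic}$\Leftrightarrow$\eqref{p:semantic-valuable-logic} from the definition exactly as you do, but for \eqref{p:semantic-valuable-value}$\Leftrightarrow$\eqref{p:semantic-valuable-semantic} it splits on whether $\tm$ is $\shufm$-normalizable: if not, \refthm{characterize-normalizable} gives $\sem{\tm}{\vec{\var}} = \emptyset$; if so, \refthm{invariance} gives $\sem{\tm}{\vec{\var}} = \sem{\tm_0}{\vec{\var}}$ for the normal form $\tm_0$ in one shot, and \reflemmap{semantic-value}{characterization} finishes. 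You instead re-derive the transfer between $\tm$ and $\tm_0$ by iterating subject reduction and expansion along an explicit reduction sequence; this is sound and arguably more self-contained, at the cost of redoing work that \refthm{invariance} already packages. One remark on your \eqref{p:semantic-valuable-value}$\Rightarrow$\eqref{p:semantic-valuable-logic} direction: the condition $\size{\pi'} > 0$ in \refproppos{quant-subject-expansion}{sigma} is stated as a \emph{conclusion} (it holds automatically because a $\sigm$-contractum is never a value, by \refrmk{positive-size}), not as a hypothesis you must supply, so your careful analysis of the last reduction step, while correct, is not needed; alternatively, the purely qualitative \reflemmap{subject-expansion}{sigma} would let you build the $\pi_i$ with no size bookkeeping at all.
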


%

\section{The quantitative side of type derivations}
\label{sect:quantitative-semantics}

By the quantitative subject reduction (\refprop{quant-subject-reduction}),  
the size of \emph{any} derivation typing a ($\shufm$-normalizable) term $\tm$ is an upper bound on the number of $\betavm$-steps in \emph{any} $\shufm$-normalizing reduction sequence from $\tm$, since the size of a type derivation decreases by $1$ after each $\betavm$-step, and does not change after each $\sigm$-step.
%

\begin{corollary}[Upper bound on the number of $\betavm$-steps]
	Let $\tm$ be a $\shufm$-normalizable term and $\tm_0$ be its $\shufm$-normal form. 
	For any reduction sequence $\deriv \colon \tm \toshufm^* \tm_0$ and any $\Type{\pi}{\tm}$, $\Lengbv{\deriv} \leq \size{\pi}$.
\end{corollary}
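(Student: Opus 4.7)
The plan is a straightforward induction on the length of the reduction sequence $\deriv$, using quantitative balanced subject reduction (\refprop{quant-subject-reduction}) as the engine: each $\betavm$-step shrinks the size of a typing derivation by exactly one, while each $\sigm$-step leaves it invariant. So after running $\deriv$ to completion, the number of $\betavm$-steps is exactly the amount by which the size of the derivation has dropped, and nonnegativity of size yields the bound.

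More precisely, I would proceed as follows. Let $\deriv \colon \tm = \tm_0' \toshufm \tm_1' \toshufm \cdots \toshufm \tm_n' = \tm_0$, and let $\concl{\pi}{\Gamma}{\tm}{Q}$ be given. I claim that for every $0 \leq i \leq n$ there exists a derivation $\concl{\pi_i}{\Gamma}{\tm_i'}{Q}$ such that $\size{\pi_i} = \size{\pi} - b_i$, where $b_i$ is the number of $\betavm$-steps occurring in the prefix $\tm_0' \toshufm \cdots \toshufm \tm_i'$ of $\deriv$. The base case $i = 0$ is trivial by taking $\pi_0 \defeq \pi$ (and $b_0 = 0$). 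For the inductive step, the step $\tm_i' \toshufm \tm_{i+1}'$ is either a $\betavm$-step, in which case \refpropp{quant-subject-reduction}{betav} applied to $\pi_i$ yields $\pi_{i+1}$ with $\size{\pi_{i+1}} = \size{\pi_i} - 1 = \size{\pi} - b_{i+1}$ (since $b_{i+1} = b_i + 1$), or a $\sigm$-step, in which case \refpropp{quant-subject-reduction}{sigma} applied to $\pi_i$ yields $\pi_{i+1}$ with $\size{\pi_{i+1}} = \size{\pi_i} = \size{\pi} - b_{i+1}$ (since $b_{i+1} = b_i$).

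Taking $i = n$, we get a derivation $\pi_n$ for $\tm_0$ with $\size{\pi_n} = \size{\pi} - \Lengbv{\deriv}$. Since $\size{\pi_n} \geq 0$ (as the size is just a count of $@$-rules), we obtain $\Lengbv{\deriv} \leq \size{\pi}$, as required.

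There is no real obstacle: all the work has been done in \refprop{quant-subject-reduction}, and the corollary amounts to iterating the two clauses of that proposition along $\deriv$. The only point worth being explicit about is bookkeeping the contribution of $\sigm$-steps (which contribute $0$ to both sides of the inequality) separately from that of $\betavm$-steps (which contribute $+1$ to $\Lengbv{\deriv}$ and $-1$ to the size of the maintained derivation).
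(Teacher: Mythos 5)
Your proof is correct and follows exactly the argument the paper itself gives (informally, in the paragraph preceding the corollary): iterate the two clauses of quantitative balanced subject reduction (\refprop{quant-subject-reduction}) along $\deriv$, so that the derivation size drops by one at each $\betavm$-step and is preserved at each $\sigm$-step, and conclude by nonnegativity of the size. The explicit induction with the bookkeeping quantity $b_i$ is just a careful write-up of that same idea.
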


In order to extract from a type derivation the \emph{exact} number of $\betavm$-steps to reach the $\shufm$-normal form, we have 
to take into account also the size of derivations of $\shufm$-normal forms.
Indeed, by \reflemmap{semantic-value}{characterization}, $\shufm$-normal forms that are not values admit only derivations with sizes greater than $0$.  
The sizes of type derivations of a $\shufm$-normal form $\tm$ are related to a special kind of size of $\tm$ that we now define.

%

The \emph{balanced size} of a term $\tm$, denoted by $\sizeZero{\tm}$, is defined by induction on $\tm$ as follows ($\val \in \valSet$):
\begin{align*}
\sizeZero{\val} &= 0 	& 
\sizeZero{\tm\tmtwo} &=
\begin{cases}
\sizeZero{\tmthree} + \sizeZero{\tmtwo} + 1	&\textup{if } \tm = \la{\var}{\tmthree} \\
\sizeZero{\tm} + \sizeZero{\tmtwo} + 1	&\textup{otherwise}
\end{cases}
\end{align*}
So, the balanced size of a term $\tm$ is the number of applications occurring in $\tm$ under a balanced context, \ie the number of pairs $(\tmtwo,\tmthree)$ such that $\tm = \mctxp{\tmtwo\tmthree}$ for some balanced context $\mctx$. 
For instance, $\sizeZero{(\la{\var}{\vartwo\vartwo})(\varthree\varthree)} = 3$ and $\sizeZero{(\la{\var}{\la{\var'\!}{\vartwo\vartwo}})(\varthree\varthree)} = 2$.
The following lemma can be seen as a quantitative version of \reflemma{semantics-normal}.

\newcounter{l:sizes}
\addtocounter{l:sizes}{\value{definition}}
\begin{lemma}[Relationship between sizes of normal forms and derivations]
	\label{l:sizes}
	\NoteProof{lappendix:sizes}
	Let $\tm \in \Lambda$.
	\begin{enumerate}
		\item\label{p:sizes-normal} If $\tm$ is $\shufm$-normal then $\sizeZero{\tm} = \min{\{\size{\pi} \mid \Type{\pi}{\tm}\}}$.
		\item\label{p:sizes-value} If $\tm$ is a value then $\sizeZero{\tm} = \min{\{\size{\pi} \mid \Type{\pi}{\tm}\}} = 0 
		$.
	\end{enumerate}
\end{lemma}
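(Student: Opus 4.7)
The plan is to use the syntactic characterisation of $\shufm$-normal forms provided by Proposition~\ref{prop:syntactic-normal}, which says that the $\shufm$-normal forms are exactly the terms in $\wnfSet$, and then proceed by induction. Part~\ref{p:sizes-value} is immediate: by definition $\sizeZero{\val} = 0$ for any $\val \in \valSet$, and Lemma~\ref{l:semantic-value} provides a derivation $\Type{\pi}{\val}$ with $\size{\pi} = 0$ (namely the $\Ax$ rule at arbitrary type if $\val$ is a variable, the $0$-ary $\lambda$ rule if $\val$ is an abstraction), so the minimum size is indeed $0$.

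For Part~\ref{p:sizes-normal}, I plan to prove a strengthened statement by mutual induction on the grammars of $\anfSet$ and $\wnfSet$. The strengthening concerns $\anfSet$: for every $\anf \in \anfSet$ and every positive type $Q$, there exists a derivation $\concl{\pi}{\Gamma}{\anf}{Q}$ with $\size{\pi} = \sizeZero{\anf}$, while every derivation $\Type{\pi}{\anf}$ satisfies $\size{\pi} \geq \sizeZero{\anf}$; for $\wnf \in \wnfSet$, existence of one derivation of size $\sizeZero{\wnf}$ together with the same lower bound suffices. Since the type system in Figure~\ref{fig:types} is syntax-oriented, each application in the term forces an $@$ rule in any derivation, contributing $+1$ to $\size{\pi}$, and $\sizeZero{\cdot}$ is designed to count exactly the applications occurring at balanced positions. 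The key inductive cases are then: $\anf = \var\val$ (minimum is $1$, achieved by typing $\var : [\Pair{\emptymset}{Q}]$ via $\Ax$ and $\val : \emptymset$ as in Part~\ref{p:sizes-value}, both of size $0$, combined by one $@$); $\anf = \var\anf'$ and $\anf = \anf'\wnf$ (apply the IH and combine via $@$); and $\wnf = (\la{\var}{\wnf'})\anf$, where both an $@$ and a $1$-ary $\lambda$ rule are forced, then the IH is applied to $\wnf'$ and $\anf$ with matching input--output types.

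The main obstacle is matching the types of subderivations, and this is precisely what motivates the strengthened claim on $\anfSet$: an $\anf$ always has a ``free head variable'' whose type is unconstrained by the $\Ax$ rule, so the output type of a minimum-size derivation can be chosen freely and then propagated upward. This is used crucially in the case $\anf = \anf'\wnf$, where the output type of the $\wnf$-subderivation produced by the IH must be matched as the input type for the $\anf'$-subderivation; without the strengthening, the minimum sizes of the two subderivations would not combine into $\sizeZero{\anf'\wnf}$ via the $@$ rule.
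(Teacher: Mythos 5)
Your proposal is correct and follows essentially the same route as the paper's proof: a mutual induction on the grammars of $\anfSet$ and $\wnfSet$, with exactly the same strengthening for $\anfSet$ (a minimal-size derivation exists for \emph{every} output type $Q$, thanks to the free head variable typed by $\Ax$), the same construction in each case, and the same lower-bound argument via the syntax-oriented rules forcing one $@$ per balanced application. The paper highlights the $\beta$-redex case $(\la{\var}{\wnf})\anf$ as the motivation for the strengthening while you highlight $\anf'\wnf$, but both cases use it in the same way, so this is only a difference of emphasis.
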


Thus, the balanced size of a $\shufm$-normal form $\wnf$ equals the minimal size of the type derivation of $\wnf$.

\newcounter{prop:number-steps}
\addtocounter{prop:number-steps}{\value{definition}}
\begin{proposition}[Exact number of $\betavm$-steps]
	\label{prop:number-steps}
\NoteProof{propappendix:number-steps}
	Let $\tm$ be a $\shufm$-normalizable term and $\tm_0$ be its $\shufm$-normal form. 
	For every reduction sequence $\deriv \colon \tm \toshufm^* \tm_0$ and every $\Type{\pi}{\tm}$ and $\Type{\pi_0}{\tm_0}$ such that $\size{\pi} = \min \{\size{\pi'} \mid \Type{\pi'}{\tm}\}$ and $\size{\pi_0} = \min \{\size{\pi_0'} \mid \Type{\pi_0'}{\tm_0} \}$, one has
	\begin{equation}\label{eq:number-steps}
		\Lengbv{\deriv} = \size{\pi} - \sizeZero{\tm_0} = \size{\pi} - \size{\pi_0}\,.
	\end{equation}
	If moreover $\tm_0$ is a value, then $\Lengbv{\deriv} = \size{\pi}$.
\end{proposition}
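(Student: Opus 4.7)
The plan is to combine the two quantitative balanced subject reduction/expansion propositions (\refprop{quant-subject-reduction} and \refprop{quant-subject-expansion}) with the quantitative characterization of $\shufm$-normal forms via their balanced size (\reflemma{sizes}). Since each $\betavm$-step changes the size of a derivation by exactly $-1$ and each $\sigm$-step leaves the size invariant, the number of $\betavm$-steps in $\deriv$ can be read off exactly from the difference of the (minimal) derivation sizes at the endpoints of $\deriv$.

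First I would iterate \refprop{quant-subject-reduction} along $\deriv$: starting from any $\Type{\pi}{\tm}$, by induction on the length of $\deriv$ I obtain a derivation $\pi_0^\star$ of $\tm_0$ with $\size{\pi_0^\star} = \size{\pi} - \Lengbv{\deriv}$. In particular, choosing $\pi$ of minimal size yields
\[
 \min\{\size{\pi_0'} \mid \Type{\pi_0'}{\tm_0}\} \;\le\; \size{\pi} - \Lengbv{\deriv}.
\]
Symmetrically, iterating \refprop{quant-subject-expansion} along $\deriv$, starting from a minimal $\Type{\pi_0}{\tm_0}$, produces a derivation $\pi^\star$ of $\tm$ with $\size{\pi^\star} = \size{\pi_0} + \Lengbv{\deriv}$, so
\[
 \min\{\size{\pi'} \mid \Type{\pi'}{\tm}\} \;\le\; \size{\pi_0} + \Lengbv{\deriv}.
\]
Combining the two inequalities with the choices of $\pi$ and $\pi_0$ as minimizers gives $\size{\pi} - \size{\pi_0} = \Lengbv{\deriv}$, which is the right-hand equality of \eqref{eq:number-steps}.

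For the left-hand equality, I would invoke \reflemmap{sizes}{normal}: since $\tm_0$ is $\shufm$-normal, $\sizeZero{\tm_0} = \min\{\size{\pi_0'} \mid \Type{\pi_0'}{\tm_0}\} = \size{\pi_0}$, which gives $\size{\pi} - \sizeZero{\tm_0} = \size{\pi} - \size{\pi_0} = \Lengbv{\deriv}$. Finally, when $\tm_0$ is a value, \reflemmap{sizes}{value} yields $\sizeZero{\tm_0} = 0$, so $\Lengbv{\deriv} = \size{\pi}$.

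The proof is conceptually straightforward once the two quantitative propositions are available: the only care needed is administrative, namely to ensure the induction on the length of $\deriv$ correctly handles the interleaving of $\betavm$- and $\sigm$-steps (each accounted for by a distinct clause of \refprop{quant-subject-reduction}/\ref{prop:quant-subject-expansion}), and that the existence of \emph{some} minimizer on each side is justified — this holds because both $\tm$ and $\tm_0$ are typable by \refthm{characterize-normalizable}, and the set of derivation sizes is a nonempty subset of $\nat$. The genuine subtlety that the statement bypasses — namely that different normalizing sequences from $\tm$ might a priori have different $\betavm$-lengths — is in fact a consequence: since the right-hand side $\size{\pi} - \size{\pi_0}$ depends only on $\tm$ and $\tm_0$ (and not on $\deriv$), all such sequences share the same $\Lengbv{\cdot}$, which is exactly the content of \refcoro{same-number}.
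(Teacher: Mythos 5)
Your argument is correct, and it organizes the same ingredients differently from the paper. The paper proves the first identity of \eqref{eq:number-steps} by a single induction on the length of $\deriv$, and the delicate point there is preserving the \emph{minimality} of the derivation size along the reduction: at each step it uses \refprop{quant-subject-reduction} to produce a derivation of the reduct of the right size, and then \refprop{quant-subject-expansion} to argue that this derivation is again of minimal size, so that the induction hypothesis applies. You instead run two independent global iterations --- \refprop{quant-subject-reduction} forward from a minimal $\Type{\pi}{\tm}$, giving $\size{\pi_0} \leq \size{\pi} - \Lengbv{\deriv}$, and \refprop{quant-subject-expansion} backward from a minimal $\Type{\pi_0}{\tm_0}$, giving $\size{\pi} \leq \size{\pi_0} + \Lengbv{\deriv}$ --- and conclude by sandwiching. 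This buys you a cleaner bookkeeping: minimality is only invoked at the two endpoints rather than being threaded through the induction as an invariant, and each iteration is a routine induction with no side condition. The cost is negligible; if anything, your version makes it more transparent that the two subject-reduction/expansion propositions each contribute exactly one of the two inequalities. The remaining steps (deriving the left-hand identity of \eqref{eq:number-steps} from \reflemmap{sizes}{normal} and the value case from \reflemmap{sizes}{value}) coincide with the paper's.
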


In particular, Eq.~\refeq{number-steps} implies that for \emph{any} reduction sequence $\deriv \colon \tm \toshufm^* \tm_0$ and \emph{any} $\Type{\pi}{\tm}$ and $\Type{\pi_0}{\tm_0}$ such that $\size{\pi_0} = \min \{\size{\pi_0'} \mid \Type{\pi_0'}{\tm_0} \}$, one has $\Lengbv{\deriv} \leq \size{\pi} - \sizeZero{\tm_0} = \size{\pi} - \size{\pi_0}\,$, since $\size{\pi} \geq \min \{\size{\pi'} \mid \Type{\pi'}{\tm}\}$.

\refprop{number-steps} could seem slightly disappointinig: it allows us to know the exact number of $\betavm$-steps of a $\shufm$-normalizing reduction sequence from $\tm$ only if we already know the $\shufm$-normal form $\tm_0$ of $\tm$ (or the minimal derivation of $\tm_0$), which essentially means that we have to perform the reduction sequence in order to know the exact number of its $\betavm$-steps.
However, \refprop{number-steps} says also that this limitation is circumvented in the case $\tm $ $\shufm$-reduces to a value. 
Moreover, a notable and immediate consequence of \refprop{number-steps} is: 

\begin{corollary}[Same number of $\betavm$-steps]
\label{coro:same-number}
	Let $\tm$ be a $\shufm$-normalizable term and $\tm_0$ be its $\shufm$-normal form.
	For all reduction sequences $\deriv \colon \tm \toshufm^* \tm_0$ and $\derivp \colon \tm \toshufm^* \tm_0$, one has $\Lengbv{\deriv} = \Lengbv{\derivp}$.
\end{corollary}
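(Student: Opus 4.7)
The plan is to derive this corollary as an immediate consequence of the preceding \refprop{number-steps}. Since $\tm$ is $\shufm$-normalizable with $\shufm$-normal form $\tm_0$, by \refthm{characterize-normalizable} (derivability) there exists at least one derivation $\Type{\pi}{\tm}$, and hence the set $\{\size{\pi'} \mid \Type{\pi'}{\tm}\}$ is a nonempty subset of $\nat$, so it admits a minimum. Fix one such minimal derivation $\Type{\pi}{\tm}$ with $\size{\pi} = \min\{\size{\pi'} \mid \Type{\pi'}{\tm}\}$.

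Next, I would apply \refprop{number-steps} to both $\deriv$ and $\derivp$ using this same $\pi$. For $\deriv$, the proposition yields $\Lengbv{\deriv} = \size{\pi} - \sizeZero{\tm_0}$; for $\derivp$, it yields $\Lengbv{\derivp} = \size{\pi} - \sizeZero{\tm_0}$. Both right-hand sides depend only on $\tm$ (through the choice of $\pi$, which is fixed) and on $\tm_0$ (through its balanced size), \emph{not} on the particular reduction sequence. Therefore $\Lengbv{\deriv} = \Lengbv{\derivp}$.

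There is essentially no obstacle here: all the work has already been done in \refprop{number-steps}, whose nontrivial content is that the balanced size of the normal form coincides with the minimal derivation size (\reflemma{sizes}) and that the size of a minimal derivation decreases by exactly one per $\betavm$-step and is invariant under $\sigm$-steps (quantitative subject reduction, \refprop{quant-subject-reduction}). The corollary is merely the observation that, since the quantity $\size{\pi} - \sizeZero{\tm_0}$ does not mention the reduction sequence, two different reduction sequences from $\tm$ to the same $\shufm$-normal form must have the same number of $\betavm$-steps (though, as emphasized in \refex{critical-pair-sigma}, they may differ in the number of $\sigm$-steps).
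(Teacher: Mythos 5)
Your proposal is correct and matches the paper's own treatment: the paper presents \refcoro{same-number} as an immediate consequence of \refprop{number-steps}, precisely because the quantity $\size{\pi} - \sizeZero{\tm_0}$ (for a minimal $\pi$) is independent of the reduction sequence. Your additional care in justifying the existence of a minimal derivation via \refthm{characterize-normalizable} is sound and introduces no circularity, since that theorem is established before the quantitative results.
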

 
Even if $\shufm$-reduction is weak, in the sense that it does not reduce under $\lambda$'s, \refcoro{same-number} is not obvious at all, since the rewriting theory of $\shufm$-reduction is not quite elegant, in particular it does not enjoy any form of (quasi-)diamond property because of $\sigma$-reduction, as shown by the following example.

\begin{example}
\label{ex:critical-pair-sigma}
  Let $\tm \defeq (\la{\vartwo}{\vartwo'})(\Delta(\var I))I$: one has $\tmtwo \defeq (\la{\vartwo}{\vartwo'})(\Delta(\var I)) \RevTo{\sigl} \tm \tosr (\la{\varthree}{(\la{\vartwo}{\vartwo'})(\varthree\varthree)})(\var I)I \eqdef \tmthree$ and the only way to join this critical pair is by performing \emph{one} $\sigr$-step from $\tmtwo$ and \emph{two} $\sigl$-steps from $\tmthree$, so that $\tmtwo \tosr (\la{\varthree}{(\la{\vartwo}{\vartwo'I})(\varthree\varthree)})(\var I)  \RevTo{\sigl} (\la{\varthree}{(\la{\vartwo}{\vartwo'})(\varthree\varthree)I})(\var I) \RevTo{\sigl} \tmthree$.
  Since each $\sigm$-step can create a new $\betav$-redex in a balanced context (as shown in \refex{reductions}), \emph{a priori} there is no evidence that \refcoro{same-number} should hold.
\end{example}

\refcor{same-number} allows us to define the following function $\Lengsym_\betav \colon \Lambda \to \nat \cup \{\infty\}$
\begin{equation*}
  \Lengbv{\tm} =
  \begin{cases}
    \Lengbv{\deriv} &\text{if there is a $\shufm$-normalizing reduction sequence $\deriv$ from $\tm$;} \\
    \infty &\text{otherwise.}
  \end{cases}
\end{equation*}
In other words, in $\shufcalc$ we can univocally associate with every term the number of $\betavm$-steps needed to reach its $\shufm$-normal form, if any (the infinity $\infty$ is associated with 
non-$\shufm$-normalizable terms).
The characterization  of $\shufm$-normalization given in \refthm{characterize-normalizable} allows us to determine through semantic or logical means if the value of $\Lengbv{\tm}$ is a finite number or not.

Quantitatively, via \reflemma{semantic-value} 
we can simplify the way to compute the number of $\betavm$-steps to reach the $\shufm$-normal form of a valuable (\ie~that reduces to a value) term $\tm$, using only a specific type derivation of $\tm$.

\newcounter{thm:number-steps-value}
\addtocounter{thm:number-steps-value}{\value{definition}}
\begin{theorem}[Exact number of $\betavm$-steps for valuables]
  \label{thm:number-steps-value}
  \NoteProof{thmappendix:number-steps-value}
  If $\tm \toshufm^*\! \val \in \valSet$ then $\Lengbv{\tm} = \size{\pi}$ for $\concl{\pi}{\,}{\tm}{\emptymset}$.
\end{theorem}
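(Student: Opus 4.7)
The plan is to track how the size of a derivation typing $\tm$ with empty environment and empty type evolves along a $\shufm$-normalizing reduction sequence to $\val$, and then pin down the derivation one lands on at the value by the uniqueness result of Lemma~\ref{l:semantic-value}.

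First, I would note that a derivation $\concl{\pi}{\,}{\tm}{\emptymset}$ exists at all: this is implication \eqref{p:semantic-valuable-value}$\Rightarrow$\eqref{p:semantic-valuable-logic} of Proposition~\ref{prop:semantic-valuable}, applied to the hypothesis $\tm \toshufm^* \val \in \valSet$. So the quantity $\size{\pi}$ in the statement is well-defined.

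Next, fix any $\shufm$-reduction sequence $\deriv \colon \tm \toshufm^* \val$ (one exists by hypothesis, and $\val$ is $\shufm$-normal by Corollary~\ref{coro:syntactic-normal-closed}, or more precisely because values are trivially $\shufm$-normal as can be read off Proposition~\ref{prop:syntactic-normal}). I would then iterate Proposition~\ref{prop:quant-subject-reduction} along $\deriv$ starting from $\pi$: each $\betavm$-step produces a derivation of the reduct with the same conclusion and size decreased by exactly $1$, while each $\sigm$-step produces one with the same conclusion and identical size. At the end of $\deriv$ this yields a derivation $\pi_\val$ with conclusion $\vdash \val \colon \emptymset$ satisfying
\[
  \size{\pi_\val} = \size{\pi} - \Lengbv{\deriv}.
\]

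To finish, I would invoke Lemma~\ref{l:semantic-value}\eqref{p:semantic-value-uniqueness}: any derivation with conclusion $\vdash \val \colon \emptymset$ has size $0$ (indeed, such a derivation is unique and is either an axiom or a $0$-ary $\lambda$-rule). Hence $\size{\pi_\val}=0$, so $\size{\pi} = \Lengbv{\deriv}$. Finally, Corollary~\ref{coro:same-number} gives $\Lengbv{\deriv} = \Lengbv{\tm}$, concluding the proof.

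I do not expect a genuine obstacle here: all the heavy lifting has been done in the quantitative subject reduction (Proposition~\ref{prop:quant-subject-reduction}) and in the uniqueness-of-empty-type-derivation result (Lemma~\ref{l:semantic-value}). The only point to be careful with is that, along the reduction, we keep track of the \emph{conclusion} of the derivation (not just the term and the size), so that when we arrive at $\val$ we are entitled to apply the uniqueness statement, which crucially concerns only derivations with conclusion $\vdash \val \colon \emptymset$.
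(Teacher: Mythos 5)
Your proposal is correct, and it reaches the conclusion by a more direct route than the paper. The paper's own proof is a two-line corollary of previously established machinery: it first shows (Lemma~\ref{l:minimal}) that any derivation $\concl{\pi}{\,}{\tm}{\emptymset}$ is of \emph{minimal} size among all derivations typing $\tm$, and then invokes the general formula of Proposition~\ref{prop:number-steps}, $\Lengbv{\deriv} = \size{\pi} - \sizeZero{\tm_0}$ for minimal $\pi$, together with $\sizeZero{\val}=0$. You instead bypass minimality and the balanced size $\sizeZero{\cdot}$ entirely: by iterating Proposition~\ref{prop:quant-subject-reduction} while tracking that the conclusion stays of the form $\vdash \cdot \colon \emptymset$, you land on a derivation $\concl{\pi_\val}{\,}{\val}{\emptymset}$ of size $\size{\pi}-\Lengbv{\deriv}$, which must have size $0$ by Lemma~\ref{l:semantic-value}(\ref{p:semantic-value-uniqueness}) (or even just Lemma~\ref{l:value} with $p=0$). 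Both arguments run on the same engine --- the induction along a reduction sequence via quantitative subject reduction --- but yours is self-contained for this particular statement and avoids the subject-expansion step that the paper needs to propagate minimality in the proof of Proposition~\ref{prop:number-steps}; the paper's factoring buys reuse of Proposition~\ref{prop:number-steps} for arbitrary $\shufm$-normal forms, not just values. Two minor points: the appeal to Corollary~\ref{coro:syntactic-normal-closed} for the $\shufm$-normality of $\val$ is misplaced (that corollary concerns closed terms), but your fallback to Proposition~\ref{prop:syntactic-normal} is the right citation; and note that your argument, applied to an arbitrary $\concl{\pi}{\,}{\tm}{\emptymset}$, in fact shows as a byproduct that all such derivations have the same size, which is exactly what makes the statement well-posed for \emph{any} such $\pi$.
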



\refprop{semantic-valuable} and \refthm{number-steps-value} provide a procedure to determine 
if a term $\tm$ $\shufm$-normalizes to a value 
and, in case, how many $\betavm$-steps are needed to reach its $\shufm$-normal form (this number does not depend on the reduction strategy according to \refcor{same-number}), considering only the term $\tm$ and without performing any $\shufm$-
step:
\begin{enumerate}
  \item check if there is a derivation $\pi$ with empty types, \ie~$\concl{\pi}{\,}{\tm}{\emptymset}$;
  \item if it is so (\ie~if $\tm$ $\shufm$-normalize to a value, according to \refprop{semantic-valuable}), compute the size $\size{\pi}$.
\end{enumerate}

Remind that, according to \refcoro{syntactic-normal-closed}, any closed term either is not $\shufm$-normalizable, or it $\shufm$-normalizes to a (closed) value. 
So, this procedure completely determines (qualitatively and quantitatively) the behavior of closed terms with respect to $\shufm$-reduction (and to $\betavm$-reduction, as we will see in \refsect{conclusions}).


%
%


\section{Conclusions}
\label{sect:conclusions}

\paragraph{Back to Plotkin's $\lambda_v$.}
%
%
The shuffling calculus $\shufcalc$ can be used to prove some properties of Plotkin's call-by-value $\lambda$-calculus $\lambda_v$ (whose only reduction rule is $\tobv$) \emph{restricted to closed terms}.
This is an example of how 
the study of some properties of a framework (in this case, $\lambda_v$) can be naturally 
done in a more general framework (in this case, $\shufcalc$).
It is worth noting that $\lambda_v$ with only closed terms is an interesting fragment: it represents the core of many functional programming languages, such as OCaml.

The starting point is \refcoro{syntactic-normal-closed}, which says that, in the closed setting with weak reduction, normal forms for $\shufcalc$ and $\lambda_v$ coincide: they are all and only closed values.
We can then reformulate \refthm{characterize-normalizable} and \refprop{semantic-valuable} as a semantic and logical characterization of $\betavm$-normalization in  Plotkin's $\lambda_v$ \emph{restricted to closed terms}.

\newcounter{thm:characterize-normalizable-Plotkin}
\addtocounter{thm:characterize-normalizable-Plotkin}{\value{definition}}
\begin{theorem}[Semantic and logical characterization of $\betavm$-normalization in the closed case]
	\label{thm:characterize-normalizable-Plotkin}
	\NoteProof{thmappendix:characterize-normalizable-Plotkin}
	Let $\tm$ be a closed term.
	The following are equivalent:
	\begin{enumerate}
	\item\label{p:characterize-normalizable-normalizable-Plotkin} \emph{Normalizability:} $\tm$ is $\betavm$-normalizable;
	\item\label{p:characterize-normalizable-valuable-Plotkin} \emph{Valuability:} $\tm \tobvm^* \val$ for some closed value $\val$;
	\item\label{p:characterize-normalizable-equivalen-normal-Plotkin} \emph{Completeness:} $\tm \betaveq \val$ for some closed  value $\val$;
	\item\label{p:characterize-normalizable-nonempty-Plotkin}\emph{Adequacy:} $\sem{\tm}{\vec{\var}} \neq \emptyset$ for any list $\vec{\var} = (\var_1, \dots, \var_k)$ (with $k \in \nat$) of pairwise distinct variables;
	\item\label{p:characterize-normalizable-semantic-Plotkin} \emph{Empty point:} $((\emptymset, \overset{k}{\dots}\,, \emptymset), \emptymset) \in \sem{\tm}{\vec{\var}}$ for any list $\vec{\var} = (\var_1, \dots, \var_k)$ ($k \in \nat$) of pairwise distinct variables;
	
	\item\label{p:characterize-normalizable-logic-Plotkin}\emph{Derivability with empty types:} there exists a derivation $\concl{\pi}{\,}{\tm}{\emptymset}$;
	
	\item\label{p:characterize-normalizable-derivable-Plotkin}\emph{Derivability:} there exists a derivation $\concl{\pi}{\,}{\tm}{Q}$ for some positive type $ Q$;
	\item\label{p:characterize-normalizable-strongly-normalizable-Plotkin}\emph{Strong normalizabilty:} $\tm$ is strongly $\betavm$-normalizable.
\end{enumerate}
\end{theorem}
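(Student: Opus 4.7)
The plan is to leverage the $\shufm$-level characterization (Theorem~\ref{thm:characterize-normalizable}) and the logical characterization of valuability (Proposition~\ref{prop:semantic-valuable}), then transfer everything to $\betavm$-reduction using two ingredients: Corollary~\ref{coro:syntactic-normal-closed}, which identifies closed $\shufm$-normal forms with closed $\betavm$-normal forms (both being closed values), and the inclusion $\tobvm \subseteq \toshufm$. The argument is then a cycle of implications $(\ref{p:characterize-normalizable-normalizable-Plotkin}) \Leftrightarrow (\ref{p:characterize-normalizable-valuable-Plotkin}) \Rightarrow (\ref{p:characterize-normalizable-equivalen-normal-Plotkin}) \Rightarrow (\ref{p:characterize-normalizable-semantic-Plotkin}) \Rightarrow (\ref{p:characterize-normalizable-logic-Plotkin}) \Rightarrow (\ref{p:characterize-normalizable-derivable-Plotkin}) \Rightarrow (\ref{p:characterize-normalizable-nonempty-Plotkin}) \Rightarrow (\ref{p:characterize-normalizable-strongly-normalizable-Plotkin}) \Rightarrow (\ref{p:characterize-normalizable-normalizable-Plotkin})$.

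First I would handle $(\ref{p:characterize-normalizable-normalizable-Plotkin}) \Leftrightarrow (\ref{p:characterize-normalizable-valuable-Plotkin})$: if the closed term $\tm$ $\betavm$-normalizes to $\tm_0$, then $\tm_0$ is closed and thus a closed value by Corollary~\ref{coro:syntactic-normal-closed}; the converse is immediate since values are $\betavm$-normal. Next, $(\ref{p:characterize-normalizable-valuable-Plotkin}) \Rightarrow (\ref{p:characterize-normalizable-equivalen-normal-Plotkin})$ is trivial because $\tobvm \subseteq \betaveq$. For $(\ref{p:characterize-normalizable-equivalen-normal-Plotkin}) \Rightarrow (\ref{p:characterize-normalizable-semantic-Plotkin})$, I would observe that $\betaveq$ is included in $\shufeq$ (since $\tobv \subseteq \toshuf$), so Theorem~\ref{thm:invariance} yields $\sem{\tm}{\vec{\var}} = \sem{\val}{\vec{\var}}$ for any suitable $\vec{\var}$, which is any pairwise-distinct list since $\tm$ is closed; because $\val$ is a value and hence $\shufm$-normal, Lemma~\ref{l:semantic-value}(\ref{p:semantic-value-characterization}) places the empty point in $\sem{\val}{\vec{\var}}$. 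The chain $(\ref{p:characterize-normalizable-semantic-Plotkin}) \Rightarrow (\ref{p:characterize-normalizable-logic-Plotkin}) \Rightarrow (\ref{p:characterize-normalizable-derivable-Plotkin}) \Rightarrow (\ref{p:characterize-normalizable-nonempty-Plotkin})$ is then a direct unfolding of Definition~\ref{def:semantics}: the empty point witnesses a derivation with empty types, which witnesses a derivation with some type, which witnesses non-emptiness of the semantics.

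To close the loop, from $(\ref{p:characterize-normalizable-nonempty-Plotkin})$ applied with $\vec{\var} = (\,)$ (legal because $\tm$ is closed), Theorem~\ref{thm:characterize-normalizable} provides strong $\shufm$-normalizability of $\tm$; since $\tobvm \subseteq \toshufm$, every $\betavm$-reduction sequence from $\tm$ is in particular a $\shufm$-reduction sequence and hence terminates, yielding $(\ref{p:characterize-normalizable-strongly-normalizable-Plotkin})$, and $(\ref{p:characterize-normalizable-strongly-normalizable-Plotkin}) \Rightarrow (\ref{p:characterize-normalizable-normalizable-Plotkin})$ is trivial. The only delicate point is this final transfer: \emph{a priori} one might fear that reaching a closed value from $\tm$ genuinely requires $\sigma$-steps, so that $\shufm$-normalizability does not entail $\betavm$-normalizability. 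This is defused precisely by the strong normalization clause of Theorem~\ref{thm:characterize-normalizable}: once \emph{every} $\shufm$-reduction strategy from $\tm$ is known to terminate, the sub-strategy consisting of only $\betavm$-steps also terminates, and by Corollary~\ref{coro:syntactic-normal-closed} it can only end at a closed value.
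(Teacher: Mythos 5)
Your proof is correct and follows essentially the same route as the paper's: both reduce the closed $\betavm$-case to Theorem~\ref{thm:characterize-normalizable}, Lemma~\ref{l:semantic-value}/Proposition~\ref{prop:semantic-valuable} and Theorem~\ref{thm:invariance}, via Corollary~\ref{coro:syntactic-normal-closed} and the inclusion $\tobvm \,\subseteq\, \toshufm$. The only difference is a harmless permutation of the cycle of implications (the paper places Adequacy between Completeness and Empty point, whereas you derive the empty point directly from invariance plus the value characterization and recover Adequacy later from Derivability), which changes nothing of substance.
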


We have already seen on p.~\pageref{thm:characterize-normalizable} that \refthm{characterize-normalizable-Plotkin} does not hold in $\lambda_v$ with open terms: closure is crucial.

\refthm{characterize-normalizable-Plotkin} entails that a closed term is $\shufm$-normalizable iff it is $\betavm$-normalizable iff it $\betavm$-reduces to a closed value.
Thus, \refcoro{same-number} and \refthm{number-steps-value} can be reformulated for $\lambda_v$ \emph{restricted to closed terms} as follows.

\newcounter{coro:same-number-Plotkin}
\addtocounter{coro:same-number-Plotkin}{\value{definition}}
\begin{corollary}[Same number of $\betavm$-steps]
	\label{coro:same-number-Plotkin}
	\NoteProof{coroappendix:same-number-Plotkin}
	Let $\tm$ be a closed $\betavm$-normalizable term and $\tm_0$ be its $\betavm$-normal form.
	For all reduction sequences $\deriv \colon \tm \tobvm^* \tm_0$ and $\derivp \colon \tm \tobvm^* \tm_0$, one has $\Lengbv{\deriv} = \Lengbv{\derivp}$.
\end{corollary}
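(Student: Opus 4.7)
}
The plan is to reduce the closed-$\lambda_v$ statement to the already established Corollary~\ref{coro:same-number} for $\shufcalc$, exploiting the fact that in the closed case $\betavm$- and $\shufm$-normal forms coincide. First, I would observe that every $\betavm$-step is in particular a $\shufm$-step, so both $\deriv$ and $\derivp$ can be viewed as $\shufm$-reduction sequences from $\tm$ to $\tm_0$, and moreover the count $\Lengbv{\cdot}$ of $\betavm$-steps coincides with the full length of each sequence (since neither contains any proper $\sigma$-step).

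Next, I would argue that $\tm_0$ is the unique $\shufm$-normal form of $\tm$. Since $\tm$ is closed and $\betavm$-normalizable, Theorem~\ref{thm:characterize-normalizable-Plotkin} (equivalences \eqref{p:characterize-normalizable-normalizable-Plotkin}$\Leftrightarrow$\eqref{p:characterize-normalizable-valuable-Plotkin}) gives that $\tm_0$ is a closed value, and Corollary~\ref{coro:syntactic-normal-closed} then tells us that $\tm_0$ is also $\shufm$-normal. Furthermore, Theorem~\ref{thm:characterize-normalizable-Plotkin} yields derivability of $\tm$, and then Theorem~\ref{thm:characterize-normalizable} (implication \eqref{p:characterize-normalizable-derivable}$\Rightarrow$\eqref{p:characterize-normalizable-normalizable}) gives that $\tm$ is $\shufm$-normalizable. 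By confluence of $\toshufm$ (Proposition~\ref{prop:general-properties}), this $\shufm$-normal form is unique, hence equals $\tm_0$.

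Therefore $\deriv$ and $\derivp$ are both $\shufm$-normalizing reduction sequences from $\tm$ to its $\shufm$-normal form $\tm_0$. Applying Corollary~\ref{coro:same-number} directly yields $\Lengbv{\deriv} = \Lengbv{\derivp}$, which is the desired conclusion since in the closed $\betavm$-setting this count coincides with the number of steps taken.

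The only subtle point (and it is a mild one, not a real obstacle) is to make sure the bookkeeping of reductions is compatible: $\Lengbv{\cdot}$ in Corollary~\ref{coro:same-number} is defined for $\toshufm$-sequences and counts the $\betavm$-substeps inside them, so when the sequence happens to be a pure $\betavm$-sequence this coincides with its length. Everything else is a direct transport of the $\shufcalc$ result along the inclusion $\tobvm\,\subseteq\,\toshufm$, using the closed-case syntactic characterisation in Corollary~\ref{coro:syntactic-normal-closed} to ensure that the target $\tm_0$ is truly the $\shufm$-normal form and not just a $\betavm$-normal form.
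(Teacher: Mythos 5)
Your proposal is correct and follows essentially the same route as the paper: use Corollary~\ref{coro:syntactic-normal-closed} to see that the closed $\betavm$-normal form $\tm_0$ is also $\shufm$-normal, view $\deriv$ and $\derivp$ as $\toshufm$-sequences via the inclusion $\tobvm\,\subseteq\,\toshufm$, and apply Corollary~\ref{coro:same-number}. The extra detour through Theorem~\ref{thm:characterize-normalizable-Plotkin} and confluence to identify $\tm_0$ as \emph{the} $\shufm$-normal form is harmless but unnecessary, since $\tm\toshufm^*\tm_0$ with $\tm_0$ $\shufm$-normal already gives this.
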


\newcounter{thm:number-steps-value-Plotkin}
\addtocounter{thm:number-steps-value-Plotkin}{\value{definition}}
\begin{theorem}[
	Number of $\betavm$-steps]
	\label{thm:number-steps-value-Plotkin}
	\NoteProof{thmappendix:number-steps-value-Plotkin}
	If $\tm $ is closed and $\betavm$-normalizable, then $\Lengbv{\tm} = \size{\pi}$ for $\concl{\pi}{\,}{\tm}{\emptymset}$.
\end{theorem}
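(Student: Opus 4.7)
The plan is to obtain this result essentially as a corollary of Theorem~\ref{thm:number-steps-value}, by reducing the closed $\betavm$-setting to the $\shufm$-setting via the closed-case characterizations already established.

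First I would invoke Theorem~\ref{thm:characterize-normalizable-Plotkin}: since $\tm$ is closed and $\betavm$-normalizable, equivalence \eqref{p:characterize-normalizable-normalizable-Plotkin}$\Leftrightarrow$\eqref{p:characterize-normalizable-valuable-Plotkin} yields a $\betavm$-reduction sequence $\deriv \colon \tm \tobvm^* \val$ for some closed value $\val$, and equivalence \eqref{p:characterize-normalizable-normalizable-Plotkin}$\Leftrightarrow$\eqref{p:characterize-normalizable-logic-Plotkin} guarantees the existence of the derivation $\concl{\pi}{\,}{\tm}{\emptymset}$ mentioned in the statement. Since $\tobvm \subseteq \toshufm$, the sequence $\deriv$ is also a $\shufm$-reduction sequence, and by Corollary~\ref{coro:syntactic-normal-closed} the closed value $\val$ is already $\shufm$-normal. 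Thus $\tm \toshufm^* \val \in \valSet$, which is exactly the hypothesis required by Theorem~\ref{thm:number-steps-value}.

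Second, I would apply Theorem~\ref{thm:number-steps-value} to the derivation $\pi$: it yields $\Lengbv{\tm} = \size{\pi}$, where on the left $\Lengbv{\tm}$ is the $\shufcalc$-level quantity counting the $\betavm$-steps in any $\shufm$-normalizing sequence from $\tm$ (well defined by Corollary~\ref{coro:same-number}). It remains to identify this with the intended $\Lengbv{\tm}$ of the closed $\betavm$-setting, namely the number of $\betavm$-steps in any $\betavm$-normalizing sequence from $\tm$ (well defined by Corollary~\ref{coro:same-number-Plotkin}). But the sequence $\deriv$ exhibited above is simultaneously $\betavm$-normalizing and $\shufm$-normalizing, and contains only $\betavm$-steps (no $\sigm$-step); hence both quantities simply count the length of $\deriv$ as a $\betavm$-sequence, and they agree.

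The main obstacle is essentially bookkeeping: one has to check that the two potential meanings of $\Lengbv{\tm}$\,---\,the one native to $\shufcalc$ used in Theorem~\ref{thm:number-steps-value} and the one native to closed $\lambda_v$ used in the statement\,---\,coincide. This reduces to noting that in the closed case the $\betavm$- and $\shufm$-normal forms are the same (Corollary~\ref{coro:syntactic-normal-closed} and Theorem~\ref{thm:characterize-normalizable-Plotkin}), and that any pure $\betavm$-reduction sequence to a value is a $\shufm$-normalizing sequence with the same $\betavm$-step count. Once this identification is made, Theorem~\ref{thm:number-steps-value} transfers the quantitative content directly, with no further calculation.
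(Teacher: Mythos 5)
Your proposal is correct and follows essentially the same route as the paper's proof: use Corollary~\ref{coro:syntactic-normal-closed} (via the closed-case characterization) to see that the $\betavm$-normal form is a closed value and hence $\shufm$-normal, observe $\tobvm \subseteq \toshufm$, and conclude by Theorem~\ref{thm:number-steps-value}. The extra bookkeeping you do to identify the two readings of $\Lengbv{\tm}$ is a point the paper leaves implicit, but it does not change the argument.
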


Clearly, the procedure sketched on p.~\pageref{sect:conclusions}, when applied to a \emph{closed} term $\tm$, determines if $\tm$ $\betavm$-normalizes
and, in case, how many $\betavm$-steps are needed to reach its $\betavm$-normal form.

\paragraph{Towards a semantic measure.}
In order to get a truly semantic measure of the execution time in the shuffling calculus $\shufcalc$, we should first be able to give an \emph{upper bound} to the number of $\betavm$-steps in a $\shufm$-reduction looking only at the semantics of terms. 
Therefore, we need to define a notion of size for the elements of the semantics of terms.
The most natural approach is the following.
For any positive type $P = [\Pair{P_1}{Q_1}, \dots, \Pair{P_k}{Q_k}] \in \MultiFin{\U}$ (with $k \in \nat$), the \emph{size of} $P$ is $\size{P} = k + \sum_{i=1}^k (\size{P_i} + \size{Q_i})$.
So, the size of a positive type $P$ is the number of occurrences of $\Pair{}{}$ in $P$; in particular, $\size{\emptymset} = 0$.
For any  $((P_1, \dots, P_n), Q) \in \MultiFin{\U}^k \times \MultiFin{\U}$ (with $k \in \nat$), the \emph{size of} $((P_1, \dots, P_k), Q)$ is \mbox{$\size{((P_1, \dots, P_k), Q)} = \size{Q} + \sum_{i=1}^k \size{P_i}$.}

The approach of \cite{deCarvalho18,DBLP:journals/tcs/CarvalhoPF11} relies on a crucial lemma to find an upper bound (and hence the exact length) of the execution time: 
it relates the size of a type derivation to the size of its conclusion, for a \emph{normal} term/proof-net.
In $\shufcalc$ this lemma should claim that ``For every \emph{$\shuf$-normal} form $\tm$, if $\concl{\pi}{\var_1\colon\! P_1, \dots, \var_k \colon\! P_k}{\tm}{Q}$ then $\size{\pi} \leq \size{((P_1, \dots, P_k), Q)}$''.
Unfortunately, in $\shufcalc$ this property is false!

\begin{example}
	\label{ex:counterexample}
	Let $\tm \defeq (\la{\var}{\var})(\vartwo\vartwo)$, which is a $\shuf$-normal form.
	Consider the derivation
	{\small
		\begin{align*}
		\pi \defeq
		\AxiomC{}
		\RightLabel{\footnotesize{$\mathsf{ax}$}}
		\UnaryInfC{$\var \colon\! \emptymset \vdash \var \colon\! \emptymset$}
		\RightLabel{\footnotesize{$\lambda$}}
		\UnaryInfC{$ \vdash \la{\var}{\var} \colon\! [\Pair{\emptymset}{\emptymset}]$}
		\AxiomC{}
		\RightLabel{\footnotesize{$\mathsf{ax}$}}
		\UnaryInfC{$\var \colon\! [\Pair{\emptymset}{\emptymset}] \vdash \var \colon\! [\Pair{\emptymset}{\emptymset}]$}
		\AxiomC{}
		\RightLabel{\footnotesize{$\mathsf{ax}$}}
		\UnaryInfC{$\var \colon\! \emptymset \vdash \var \colon\! \emptymset$}
		\RightLabel{\footnotesize{$@$}}
		\BinaryInfC{$\vartwo \colon\! [\Pair{\emptymset}{\emptymset}] \vdash \vartwo\vartwo \colon\! \emptymset$}
		\RightLabel{\footnotesize{$@$}}
		\BinaryInfC{$\vartwo \colon\! [\Pair{\emptymset}{\emptymset}] \vdash (\la{\var}{\var})(\vartwo\vartwo) \colon\! \emptymset$}
		\DisplayProof \,.		
		\end{align*}
	}
	\noindent Then, $\size{\pi} = 2 > 1 = \size{([\Pair{\emptymset}{\emptymset}],\emptymset)}$, which provides a counterexample to the property demanded above.
\end{example}

We conjecture that in order to overcome this counterexample (and to successfully follow the method of \cite{deCarvalho18,DBLP:journals/tcs/CarvalhoPF11} to get a purely semantic measure of the execution time) we should change the syntax and the operational semantics of our calculus, always remaining in a call-by-value setting equivalent (from the termination point of view) to $\shufcalc$ and the other calculi studied in \cite{DBLP:conf/aplas/AccattoliG16}.
Intuitively, in \refex{counterexample} $\tm$ contains 
one application\,---\,$(\la{\var}{\var})(\vartwo\vartwo)$\,---\,that is a stuck $\beta$-redex and is the source of one ``useless'' instance of the rule $@$ in $\pi$.
The idea for the new calculus is to ``fire'' a stuck $\beta$-redex $(\la{\var}{\tm})\tmtwo$ without performing the substitution $\tm\isub{\var}{\tmtwo}$ (as $\tmtwo$ might not be a value), but just creating an explicit substitution $\tm[\tmtwo/\var]$ that removes the application but ``stores'' the stuck $\beta$-redex. 
Such a calculus has been recently introduced~in~\cite{AccattoliGuerrieri18b}.

\paragraph{Related work.} 
This work has been presented at the workshop ITRS 2018. 
Later, the author further investigated this topic with Beniamino Accattoli in  \cite{AccattoliGuerrieri18b},
where we applied the same type system (and hence the same relational semantics) to a different call-by-value calculus with weak evaluation, $\firecalc$. 
The techniques used in both papers are similar (but not identical), some differences are due to the distinct calculi the type system is applied to. 
Some results are analogous: semantic and logical characterization of termination, extraction of quantitative information from type derivations. 
In \cite{AccattoliGuerrieri18b} we focused on an abstract characterization of the type derivations that provide an exact bound on the number of steps to reach the normal form.
Here, the semantic and logical characterization of termination is more informative than in \cite{AccattoliGuerrieri18b} because the reduction in $\shufcalc$ is not deterministic, contrary to $\firecalc$.
Moreover here, unlike \cite{AccattoliGuerrieri18b}, we investigate in detail the case of terms reducing to values and how the general results for $\shufcalc$ can be 
applied to 
analyze qualitative and quantitative properties of Plotkin's $\lambda_v$ restricted to closed terms (
see above).

Recently, Mazza, Pellissier and Vial \cite{MazzaPellissierVial18} introduced a general, elegant and abstract framework for building intersection (idempotent and non-idempotent) type systems characterizing normalization in different calculi.
However, such a work contains a wrong claim in one of its applications to concrete calculi and type systems, confirmed by a personal communication with the authors: they affirm that the same type system as the one used here characterizes normalization in Plotkin's $\lambda_v$ (endowed with the reduction $\tobvm$), but we have shown on p.~\pageref{thm:characterize-normalizable} that this is false for open terms.
Indeed, the property called full expansiveness in \cite{MazzaPellissierVial18} (which entails that ``normalizable $\Rightarrow$ typable'') actually does not hold in $\lambda_v$.
It is still true that their approach can be applied to characterize termination in Plotkin's $\lambda_v$ restricted to closed terms and in the shuffling calculus $\shufcalc$.
Proving that the abstract properties described in \cite{MazzaPellissierVial18} to characterize normalization hold in closed $\lambda_v$ or in $\shufcalc$ amounts essentially to show that subject reduction (our \refprop{quant-subject-reduction}), subject expansion (our \refprop{quant-subject-expansion}) and typability of normal forms (our \reflemma{semantics-normal}) hold.

The shuffling calculus $\shufcalc$ is 
compatible with Girard's call-by-value translation of $\lambda$-terms into linear logic ($\mathsf{LL}$) proof-nets: according to that, $\lambda$-values (which are the only duplicable and erasable $\lambda$-terms) are the only $\lambda$-terms translated as boxes; also, $\shuf$-reduction corresponds to cut-elimination and $\shufm$-reduction corresponds to cut-elimination at depth $0$ (i.e.~outside exponential boxes). 
The exact correspondence has many technical intricacies, which are outside the scope of this paper, anyway it can be recovered by composing the translation of the value substitution calculus (another extension of Plotkin's $\lambda_v$) into $\mathsf{LL}$ proof-nets (see \cite{DBLP:journals/tcs/Accattoli15}), and the encoding (studied in \cite{DBLP:conf/aplas/AccattoliG16}) of $\shufcalc$ into the value substitution calculus.
The relational semantics studied here is nothing but the relational semantics for $\mathsf{LL}$ (see \cite{DBLP:journals/tcs/CarvalhoPF11}) restricted to fragment of $\mathsf{LL}$ that is the image of Girard's call-by-value translation.
The notion of ``experiment'' in \cite{DBLP:journals/tcs/CarvalhoPF11} corresponds to our type derivation, and the ``result'' of an experiment there corresponds to the conclusion of a type derivation here.  
The main results of \cite{DBLP:journals/tcs/CarvalhoPF11} are similar to ours: characterization of normalization for $\mathsf{LL}$ proof-nets, extraction of quantitative information from (results of) experiments.
Nonetheless, the properties shown here for $\shufcalc$ cannot be derived by simply analyzing the analogous results for $\mathsf{LL}$ proof-nets (proven in \cite{DBLP:journals/tcs/CarvalhoPF11}) within its call-by-value fragment.
Indeed, \refex{counterexample} shows that some property, which holds in the\,---\,apparently\,---\,more general case of untyped $\mathsf{LL}$ proof-nets (as proven in \cite{DBLP:journals/tcs/CarvalhoPF11}), does not hold in the\,---\,apparently\,---\,special case of terms in $\shufcalc$.
It could seem surprising but, actually, there is no contradiction because $\mathsf{LL}$ proof-nets in \cite{DBLP:journals/tcs/CarvalhoPF11} always require an explicit constructor for dereliction, whereas $\shufcalc$ is outside of this fragment since variables correspond in $\mathsf{LL}$ proof-nets to exponential axioms (which keep implicit the dereliction).

\phantomsection
\addcontentsline{toc}{section}{References}
\bibliographystyle{eptcs}
\bibliography{\macrospath/biblio}

\newpage
\appendix

\section{Technical appendix: omitted proofs}
\label{sect:proofs}

The enumeration of propositions, theorems, lemmas already stated in the body of the article is unchanged

\subsection{Preliminaries and notations}
  The set of $\lambda$-terms is denoted by $\Lambda$.
  We set $I \defeq \lambda x.x$ and $\Delta \defeq \lambda x. xx$.
  Let $\Rew{\Rule} \, \subseteq \Lambda \times \Lambda$.
  \begin{itemize}
    \item The reflexive-transitive closure of $\Rew{\Rule}$ is denoted by $\Rew{\Rule}^*$.
    The \emph{$\Rule$-equivalence} $\simeq_\Rule$ is the 
    reflexive-transitive and symmetric closure of $\to_\Rule$.

    \item  Let $\tm$ be a term: $\tm$ is \emph{$\Rule$-normal} if there is no term $\tmtwo$ such that $\tm \to_\Rule \tmtwo$; $\tm$ is \emph{$\Rule$-normalizable} if there is a $\Rule$-normal term $\tmtwo$ such that $\tm \to_\Rule^* \tmtwo$, and we then say that $\tmtwo$ is a \emph{$\Rule$-normal form of $\tm$}; 
    $\tm$ is \emph{strongly $\Rule$-normalizable} if it does not exist an infinite sequence of $\Rule$-reductions starting from $\tm$.
    Finally, $\to_\Rule$ is \emph{strongly normalizing} if every $\tmtwo \in \Lambda$ is strongly $\Rule$-normalizable.
    
    \item $\Rew{\Rule}$ is \emph{confluent} if $\MRevTo{\Rule} \cdot\! \Rew{\Rule}^* \ \subseteq \ \Rew{\Rule}^* \!\cdot \MRevTo{\Rule}$.
    From confluence it follows that: $\tm \simeq_{\Rule} \tmtwo$ iff $\tm \to_\Rule^* \tmthree \,\,{}_\Rule^*\!\!\!\leftarrow \tmtwo$ for some term $\tmthree$; and any $\Rule$-normalizable term has a \emph{unique} $\Rule$-normal form.

  \end{itemize}

\subsection{Omitted proofs and remarks of Section~\ref{sect:calculus}}

\setcounter{propositionAppendix}{\value{prop:syntactic-normal}}
\begin{propositionAppendix}[Syntactic characterization on $\shufm$-normal forms]
\label{propAppendix:syntactic-normal}
  Let $\tm$ be a term:
\NoteState{prop:syntactic-normal}
  \begin{itemize}
    \item $\tm$ is $\shufm$-normal iff $\tm \in \wnfSet$;
    \item $\tm$ is $\shufm$-normal and is neither a value nor a $\beta$-redex iff $\tm \in \anfSet$.
  \end{itemize}
\end{propositionAppendix}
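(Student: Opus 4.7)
}
I would prove both items simultaneously by two inductions. For the ($\Leftarrow$) direction I would perform mutual induction on the grammars defining $\anfSet$ and $\wnfSet$, and for ($\Rightarrow$) I would do structural induction on $\tm$, with a case analysis on the head of applications. The overall structure is driven by the shape of balanced contexts: $\mctx \Coloneqq \ctxhole \mid (\la{\var}{\mctx})\tm \mid \mctx\tm \mid \tm\mctx$, so the only way $\shufm$-reduction can cross a $\lambda$ is through an \emph{applied} abstraction. This observation is the source of almost every sub-argument.

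For ($\Leftarrow$) by mutual induction: variables are values, hence in $\wnfSet$, and are $\shufm$-normal. Abstractions $\la{\var}{\tmtwo}$ are $\shufm$-normal because a balanced context matching a top-level abstraction must be the hole, and an abstraction is never a root redex. For $\var\val, \var\anf, \anf\wnf$ in $\anfSet$, the outer application is not a $\beta$-, $\sigma_1$- or $\sigma_3$-root-redex (the head is a variable or an $\anf$, hence neither an abstraction nor a value), so any hypothetical $\shufm$-step must be inside a subterm; by the induction hypothesis the subterms are $\shufm$-normal, and the $\val$ in $\var\val$ either is a variable (no reduction) or is an abstraction that, as argued above, is not an entry point for a balanced context. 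For $(\la{\var}{\wnf})\anf$, the outer application is not a $\betav$-root-redex (because $\anf \in \anfSet$ is an application, not a value), nor a $\sigma_1$-root-redex (the left is a bare abstraction, not an applied one), nor a $\sigma_3$-root-redex (the right $\anf$ starts with a free head variable, so it is not a $\beta$-redex); the inductive hypothesis closes reductions inside $\wnf$ (reachable through $(\la{\var}{\mctx})\anf$) and inside $\anf$ (through $(\la{\var}{\wnf})\mctx$). This simultaneously proves the ``$\in \anfSet \Rightarrow$ not a value, not a $\beta$-redex'' part, since every $\anf$ is an application with a free head variable.

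For ($\Rightarrow$) by induction on $\tm$: variables and abstractions are values and hence in $\wnfSet$ (and the second item is vacuous). If $\tm = \tm_1\tm_2$, I split on $\tm_1$. If $\tm_1$ is a variable $\var$, then $\tm_2$ must be $\shufm$-normal (reachable via $\var\mctx$), and the absence of a $\sigma_3$-redex at the root forces $\tm_2$ to not be a $\beta$-redex; by IH $\tm_2 \in \wnfSet$, and discarding the $(\la{\vartwo}{\wnf})\anf$ case (a $\beta$-redex) leaves $\tm_2$ as a value or an $\anf$, so $\tm \in \{\var\val, \var\anf\} \subseteq \anfSet \subseteq \wnfSet$. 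If $\tm_1 = \la{\vartwo}{\tm_1'}$, then $\tm_2$ is not a value (else root-$\betav$) and $\tm_2$ is not a $\beta$-redex (else root-$\sigma_3$); both $\tm_1'$ and $\tm_2$ are $\shufm$-normal (through $(\la{\vartwo}{\mctx})\tm_2$ and $(\la{\vartwo}{\tm_1'})\mctx$), so by IH $\tm_1' \in \wnfSet$ and $\tm_2 \in \anfSet$, giving $\tm = (\la{\vartwo}{\wnf})\anf \in \wnfSet$ (this case is a $\beta$-redex, so it does not enter $\anfSet$). If $\tm_1 = \tm_{1,1}\tm_{1,2}$ is an application, then $\tm_1$ is not a value, $\tm$ is not a $\beta$- or $\sigma_3$-root-redex, and $\shufm$-normality of $\tm$ forces $\tm_1$ to not be a $\beta$-redex (else $\tm$ would be a $\sigma_1$-root-redex) and to be $\shufm$-normal; applying IH twice (recognizing that $\tm_1$ satisfies the stronger hypothesis) yields $\tm_1 \in \anfSet$ and $\tm_2 \in \wnfSet$, so $\tm \in \anfSet$.

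The main subtlety, which I would address carefully at each step, is \emph{which} balanced contexts can reach a given subterm: since contexts may descend under an abstraction only when that abstraction is applied, the argument that ``every subterm must itself be $\shufm$-normal'' needs to be verified case by case rather than invoked blindly. The second subtlety is the interaction of $\sigma_3$ with the shape of the argument: preventing a root $\sigma_3$-redex is exactly what forces the argument of a head-variable application to not be a $\beta$-redex, which is precisely why the grammar of $\anf$ allows $\var\val$ and $\var\anf$ but nothing like $\var((\la{\vartwo}{\tmp})\tmtwo)$.
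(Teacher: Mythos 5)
Your proposal is correct and follows essentially the same route as the paper's proof: both directions are handled by induction (structural induction on $\tm$ with a case split on the head of an application for $\Rightarrow$, induction on the grammars of $\anfSet$ and $\wnfSet$ for $\Leftarrow$), proving the two items simultaneously, with the key observations being exactly the ones the paper uses (balanced contexts enter an abstraction only when it is applied, absence of a root $\sigma_1$-redex forces the left of an application not to be a $\beta$-redex, and absence of a root $\sigma_3$-redex forces the argument of an applied value not to be a $\beta$-redex). The only difference is cosmetic: the paper organizes the non-value case of $\Rightarrow$ into three possibilities derived from redex-freeness rather than by the syntactic shape of the head, but the content is the same.
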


\begin{proof}\hfill
  \begin{itemize}
    \item[$\Rightarrow$:] We prove the left-to-right direction of both statements simultaneously by induction on $\tm \in \Lambda$.

    If $\tm$ is a value then $\tm \in \wnfSet$ by definition.

    Otherwise $\tm = \tmtwo\tmthree$ for some terms $\tmtwo, \tmthree$. 
    By simple inspection of the rules of $\Rew{\shufm}$, one can deduce that $\tmtwo$ and $\tmthree$ $\shufm$-normal, $\tmtwo$ is not 
    a $\beta$-redex (otherwise $\tm$ would be a $\sigma_1$-redex) and if $\tmtwo$ is of the shape $\la{\var}{\tmtwo'}$ then $\tmtwo'$ is $\shufm$-normal; furthermore $\tm$ is neither a $\beta_v$- nor a $\sigma_3$-redex, hence there are only three possibilities:
    \begin{enumerate}
      \item $\tmtwo$ is not a value: by induction hypothesis $\tmtwo \in \anfSet$ and $\tmthree \in \wnfSet$, therefore $\tm \in 
      \anfSet$.
      \item $\tmtwo$ is not an abstraction and $\tmthree$ is not 
      a $\beta$-redex: either $\tmtwo$ is a variable or $\tmtwo \in \anfSet$ by induction hypothesis (since $\tmtwo$ is neither a value nor a $\beta$-redex). 
      If $\tmthree$ is not a value then $\tmthree \in \anfSet$ by induction hypothesis, 
      so $\tm \in \anfSet$ because $\tm$ is either of the form $\var\anf$ either of the form $\anf'\anf$ (with $\anfSet \subseteq \wnfSet$). 
      Otherwise $\tmthree$ is a value, 
	  thus $\tm \in \anfSet$ since $\tm$ is either of the form $\var\val$ either of the form $\anf\val$ (with $\valSet \subseteq \wnfSet$).
      \item $\tmthree$ is neither a value nor 
      a $\beta$-redex: by induction hypothesis $\tmthree \in \anfSet$; 
      if $\tmtwo$ is a variable then $\tm \in 
      \anfSet$ because $\tm$ is of the form $\var\anf$; 
      if $\tmtwo$ is an abstraction then $\tmtwo = \la{\var}{\tmtwo'}$ where $\tmtwo'$ is $\shufm$-normal, so $\tmtwo' \in \wnfSet$ by induction hypothesis and thus $\tm \in 
      \wnfSet$ since $\tm$ is of the form $(\la{\var}{\wnf})\anf$; 
      finally, if $\tmtwo$ is not a value then $\tmtwo \in \anfSet$ by induction hypothesis, hence $\tm \in 
      \anfSet$ because $\tm$ is of the form $\anf'\anf$ (with $\anfSet \subseteq \wnfSet$).
    \end{enumerate}

    \item[$\Leftarrow$:] 
    The second statement follows 
    from the first one, since $\anfSet \subseteq \wnfSet$ and if $\tm \in \anfSet$ then $\tm$ is neither a value nor a $\beta$-redex.
    We prove the first statement by induction on $\tm \in \wnfSet$.

    If $\tm$ is a value then $\tm$ is $\shufm$-normal (no rule of $\Rew{\shufm}$ can be applied to $\tm$).
  
    If $\tm = \var\val$ for some variable $\var$ and value $\val$ then $x$ and $\val$ are $\shufm$-normal and 
    $\var\val$ is not a $\shuf$-redex; therefore $\tm$ is $\shufm$-normal.

    If $\tm = \var\anf$ for some variable $\var$ and term $\anf \in \anfSet \subseteq \wnfSet$, then $\var$ and (by induction hypothesis) $\anf$ are $\shufm$-normal, moreover $\anf$ is not 
    a $\beta$-redex (so $\tm$ is not a $\sigma_3$-redex)
    , thus $\tm$ is $\shufm$-normal.

    If $\tm = \anf\wnf$ or $\tm = (\la{\var}{\wnf})\anf$ for some $\anf \in \anfSet \subseteq \wnfSet$ and $\wnf \in \wnfSet$, then $\anf$ and $\wnf$ are $\shufm$-normal by induction hypothesis; and $\anf$ is neither a value nor a 
    $\beta$-redex, thus 
    $\tm$ is not a $\shuf$-redex; hence, $\tm$ is $\shufm$-normal.
    \qedhere
  \end{itemize}
\end{proof}

\setcounter{corollaryAppendix}{\value{coro:syntactic-normal-closed}}
\begin{corollaryAppendix}[Syntactic characterization of closed $\shufm$- and $\betavm$-normal forms]
	\label{coroAppendix:syntactic-normal-closed}
	\NoteState{coro:syntactic-normal-closed}
	Let $\tm$ be a closed term:
	 $\tm$ is $\shufm$-normal iff $\tm$ is $\betavm$-normal iff $\tm$ is a value iff $\tm = \la{\var}{\tmtwo}$ for some term $\tmtwo$ with $\Fv{\tmtwo} \subseteq\{\var\}$.
\end{corollaryAppendix}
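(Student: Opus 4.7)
The plan is to establish the four-way equivalence by closing the cycle $(4) \Rightarrow (3) \Rightarrow (1) \Rightarrow (2) \Rightarrow (4)$, where $(1), (2), (3), (4)$ name respectively $\shufm$-normality, $\betavm$-normality, being a value, and having the form $\la{\var}{\tmtwo}$ with $\Fv{\tmtwo} \subseteq \{\var\}$. The implication $(4) \Rightarrow (3)$ is immediate since an abstraction is a value, and $(1) \Rightarrow (2)$ follows from the inclusion $\Rew{\betavm} \subseteq \Rew{\shufm}$.

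For $(3) \Rightarrow (1)$, I would observe that a closed value cannot be a variable, so it is an abstraction $\la{\var}{\tmtwo}$. Inspection of the balanced-context grammar shows that the only way to write $\la{\var}{\tmtwo} = \mctxp{\tm'}$ is with $\mctx = \ctxhole$ (so $\tm' = \la{\var}{\tmtwo}$), and an abstraction matches none of the root-step shapes of $\rootRew{\shuf}$. Alternatively, one can invoke Proposition~\ref{prop:syntactic-normal} and note that every abstraction lies in $\wnfSet$.

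The only substantive implication is $(2) \Rightarrow (4)$, which I would prove by structural induction on $\tm$. The variable case is excluded by closedness. If $\tm = \la{\var}{\tmtwo}$, then closedness gives $\Fv{\tmtwo} \subseteq \{\var\}$ and $(4)$ holds. The crucial case is $\tm = \tm_1\tm_2$: since $\Fv{\tm} = \Fv{\tm_1} \cup \Fv{\tm_2} = \emptyset$, both $\tm_1$ and $\tm_2$ are closed. Moreover, any $\betavm$-redex inside $\tm_1$ exhibited via a balanced context $\mctx_1$ would also be a $\betavm$-redex of $\tm$ via the balanced context $\mctx_1\tm_2$ (and symmetrically for $\tm_2$ via $\tm_1\mctx_2$), so $\tm_1$ and $\tm_2$ are both closed and $\betavm$-normal. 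Applying the induction hypothesis, both are abstractions. But then $\tm_2$ is a value and $\tm_1 = \la{\vartwo}{\tm_1'}$, so $\tm_1\tm_2$ is a $\betav$-redex at the root (with the empty balanced context), contradicting the $\betavm$-normality of $\tm$.

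The main obstacle, modest as it is, lies in the step ensuring that normality is inherited by the two immediate subterms of an application: this requires checking that $\mctx_1\tm_2$ and $\tm_1\mctx_2$ are themselves balanced contexts, which is direct from the grammar clauses $\mctx\tm$ and $\tm\mctx$. Once this is in place, the induction closes itself because a closed application of two abstractions is unavoidably a $\betav$-redex at the root. Note that no appeal to $\sigma$-reduction is needed in this direction, so the very same argument also yields $(1) \Rightarrow (4)$ without going through Proposition~\ref{prop:syntactic-normal}.
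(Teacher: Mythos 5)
Your proposal is correct, but it reaches the one non-trivial implication by a different route than the paper. The paper deduces ``closed and $\shufm$-normal $\Rightarrow$ value'' from the already-established syntactic characterization of $\shufm$-normal forms (Proposition~\ref{prop:syntactic-normal}): a closed $\shufm$-normal term lies in $\wnfSet$, and the non-value productions of $\wnfSet$ (the terms of $\anfSet$ and those of shape $(\la{\var}{\wnf})\anf$) are always open, because the free head variable of an $\anfSet$-subterm in argument position is not captured. You instead prove the stronger implication ``closed and $\betavm$-normal $\Rightarrow$ abstraction'' by a self-contained structural induction, using that balanced contexts are closed under the clauses $\mctx\tm$ and $\tm\mctx$ and that a closed application of two abstractions is a root $\betav$-redex. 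This buys something genuine: closing the four-way equivalence requires the implication from $\betavm$-normality back to the other properties, which does not follow from the inclusion $\tobvm \subseteq \toshufm$ (that only gives $\shufm$-normal $\Rightarrow$ $\betavm$-normal); the paper's treatment of this converse direction is elliptical (its ``conversely'' clause in effect re-derives the easy direction), whereas your induction delivers it explicitly and, as you note, yields the $\shufm$ case for free since no $\sigma$-rule is invoked. The only cost is a mild redundancy with Proposition~\ref{prop:syntactic-normal}, part of which your induction re-proves in the closed case. Both arguments are sound.
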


\begin{proof}
	By \refprop{syntactic-normal} and since $\tm$ is closed, $\tm$ is $\shufm$-normal iff $\tm$ is a value (as all terms in $\anfSet$ are open).
	Since $\tm$ is closed and variables are open, $\tm$ is a value iff $\tm = \la{\var}{\tmtwo}$ for some $\tmtwo$ with $\Fv{\tmtwo} \subseteq\{\var\}$.
	If $\tm$ is $\shufm$-normal then it is $\betavm$-normal because $\tobvm \, \subseteq \, \toshufm$;
	conversely, if $\tm$ is $\shufm$-normal then we have just proven that $\tm$ is an abstraction, which is $\betavm$-normal since $\tobvm$ does not reduce under $\lambda$'s. 
\end{proof}

\subsection{Omitted proofs and remarks of Section~\ref{sect:type}}

\begin{lemma}[Free variables in environment]
\label{l:free}
  If the judgment $\Gamma \vdash \tm \colon\! P$ is derivable then $\Dom{\Gamma} \subseteq \Fv{\tm}$.
\end{lemma}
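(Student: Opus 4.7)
The plan is to proceed by straightforward induction on the structure of the type derivation $\pi$ establishing $\Gamma \vdash \tm \colon P$, using the fact that the type system in Figure~\ref{fig:types} is syntax-oriented so there are only three cases to analyze: $\Ax$, $@$, and $\lambda$. Throughout, the key combinatorial fact is that $\Dom{\Gamma \uplus \Gamma'} \subseteq \Dom{\Gamma} \cup \Dom{\Gamma'}$, which follows immediately from the pointwise definition $(\Gamma \uplus \Delta)(\var) = \Gamma(\var) \uplus \Delta(\var)$ together with the fact that the sum of two multisets is empty iff both are.

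For the base case $\Ax$, the derivation concludes $\var \colon P \vdash \var \colon P$, and $\Dom{\var \colon P} \subseteq \{\var\} = \Fv{\var}$ regardless of whether $P = \emptymset$ (in which case the domain is empty) or not. For the $@$ case, the premises give $\Gamma \vdash \tm \colon [\Pair{P}{Q}]$ and $\Gamma' \vdash \tmtwo \colon P$, with conclusion $\Gamma \uplus \Gamma' \vdash \tm\tmtwo \colon Q$; the induction hypothesis yields $\Dom{\Gamma} \subseteq \Fv{\tm}$ and $\Dom{\Gamma'} \subseteq \Fv{\tmtwo}$, so $\Dom{\Gamma \uplus \Gamma'} \subseteq \Dom{\Gamma} \cup \Dom{\Gamma'} \subseteq \Fv{\tm} \cup \Fv{\tmtwo} = \Fv{\tm\tmtwo}$.

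The only mildly subtle case is $\lambda$, where I must take care to handle the binder correctly. The premises are $\Gamma_i, \var \colon P_i \vdash \tm \colon Q_i$ for $i = 1, \dots, n$ (with $n \in \nat$, possibly zero), and the conclusion is $\Gamma_1 \uplus \dots \uplus \Gamma_n \vdash \la{\var}{\tm} \colon [\Pair{P_1}{Q_1}, \dots, \Pair{P_n}{Q_n}]$. The notation $\Gamma_i, \var \colon P_i$ presupposes $\var \notin \Dom{\Gamma_i}$. By the induction hypothesis, $\Dom{\Gamma_i, \var \colon P_i} \subseteq \Fv{\tm}$, hence in particular $\Dom{\Gamma_i} \subseteq \Fv{\tm}$; combined with $\var \notin \Dom{\Gamma_i}$, this gives $\Dom{\Gamma_i} \subseteq \Fv{\tm} \setminus \{\var\} = \Fv{\la{\var}{\tm}}$. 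Taking the union over $i$ and using the multiset-sum inclusion gives the conclusion. The case $n = 0$ is trivial since the environment in the conclusion is empty.

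No real obstacle is expected; the only point to watch is the $\lambda$ case, where one must not forget that $\alpha$-conversion (and the convention implicit in the notation $\Gamma_i, \var \colon P_i$) ensures the bound variable $\var$ does not already appear in $\Dom{\Gamma_i}$, so removing it from $\Fv{\tm}$ when passing to $\Fv{\la{\var}{\tm}}$ does not harm the inclusion.
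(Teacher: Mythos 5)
Your proof is correct and matches the paper's approach: the paper dispatches this lemma with a one-line ``straightforward induction on $\tm$'', and since the type system is syntax-oriented, your induction on the derivation is the same argument spelled out. The case analysis, including the treatment of the binder in the $\lambda$ case, is exactly what the paper leaves implicit.
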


\begin{proof}
  By straightforward induction 
  on $\tm \in \Lambda$.
\end{proof}

\begin{remark}
\label{rmk:positive-size}
  If $\tm$ is an application and $\Type{\pi}{\tm}$
  , then $\size{\pi} > 0$.
\end{remark}

\setcounter{lemmaAppendix}{\value{l:value}}
\begin{lemmaAppendix}[Judgment decomposition for values]
\label{lappendix:value}
  Let 
\NoteState{l:value}
  $\val \in \Lambda_v$, $\Delta$ be an environment, and $P_1, \dots, P_p$ be positive types (for some $p \in \nat$).
  There is a derivation $\concl{\pi}{\Delta}{\val}{P_1 \uplus \dots \uplus P_p}$ iff for all $1 \leq i \leq p$ there are an environment $\Delta_i$ and a derivation $\concl{\pi_i}{\Delta_i}{\val}{P_i}$ such that $\Delta = \biguplus_{i=1}^p \Delta_i$.
  Moreover, $\size{\pi} = \sum_{i=1}^p\size{\pi_i}$.
\end{lemmaAppendix}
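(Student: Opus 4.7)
The plan is to proceed by case analysis on $\val$ (variable or abstraction), exploiting that the type system of \reffig{types} is syntax-oriented: for a value, the last rule of any derivation is forced (either $\Ax$ if $\val$ is a variable, or $\lambda$ if $\val$ is an abstraction). Both directions then amount to showing that each of these rules is compatible with splitting/merging the positive type at the conclusion along a multiset partition.

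For the forward direction, suppose $\concl{\pi}{\Delta}{\val}{P_1 \uplus \dots \uplus P_p}$. If $\val = \var$, then $\pi$ must end in $\Ax$, so $\Delta = \var \colon\! P_1 \uplus \dots \uplus P_p$; I would set $\pi_i$ to be the axiom $\var \colon\! P_i \vdash \var \colon\! P_i$ and $\Delta_i = \var \colon\! P_i$, which clearly satisfies $\biguplus_{i=1}^p \Delta_i = \Delta$ and $\size{\pi} = 0 = \sum_i \size{\pi_i}$. If $\val = \la{\var}{\tm}$, then $\pi$ ends in the $\lambda$ rule with $n$ premises $\concl{}{\Gamma_j, \var\!:\!P_j'}{\tm}{Q_j'}$ and conclusion type $[\Pair{P_1'}{Q_1'}, \dots, \Pair{P_n'}{Q_n'}] = P_1 \uplus \dots \uplus P_p$ and $\Delta = \biguplus_{j=1}^n \Gamma_j$. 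The key step is to fix a partition of $\{1, \dots, n\}$ into $p$ blocks $B_1, \dots, B_p$ induced by the multiset decomposition, so that $P_i = [\Pair{P_j'}{Q_j'} \mid j \in B_i]$; then for each $i$, I form $\pi_i$ by applying the $\lambda$ rule to the premises indexed by $B_i$, obtaining $\concl{\pi_i}{\Delta_i}{\la{\var}{\tm}}{P_i}$ with $\Delta_i = \biguplus_{j \in B_i} \Gamma_j$. Summing gives $\biguplus_{i=1}^p \Delta_i = \Delta$, and since the $\lambda$ (and $\Ax$) rule does not add to the size, $\size{\pi} = \sum_{j=1}^n \size{\text{premise}_j} = \sum_{i=1}^p \size{\pi_i}$.

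The backward direction essentially reverses this construction. Given $\concl{\pi_i}{\Delta_i}{\val}{P_i}$ for $1 \leq i \leq p$, if $\val = \var$ each $\pi_i$ must be an axiom on $\var\!:\!P_i$, and I combine them into the single axiom $\var \colon\! \biguplus_i P_i \vdash \var \colon\! \biguplus_i P_i$. If $\val = \la{\var}{\tm}$, each $\pi_i$ ends in a $\lambda$ rule with some collection of premises typing $\tm$; I concatenate all these premises into one multi-premise $\lambda$ rule, giving $\concl{\pi}{\biguplus_i \Delta_i}{\la{\var}{\tm}}{\biguplus_i P_i}$, and the size equation follows immediately.

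The main obstacle, such as it is, is purely bookkeeping: carefully handling the edge cases $p = 0$ (where the $\lambda$ rule is applied $0$-arily and $\Delta$ is the empty environment) and $P_i = \emptymset$ (where the corresponding block $B_i$ is empty and $\pi_i$ is either a $0$-ary $\lambda$ or the degenerate axiom $\var \colon\! \emptymset \vdash \var \colon\! \emptymset$), and making precise the partition of $\{1, \dots, n\}$ matching the multiset decomposition $[\Pair{P_1'}{Q_1'}, \dots, \Pair{P_n'}{Q_n'}] = P_1 \uplus \dots \uplus P_p$. There is no real mathematical difficulty: the lemma expresses exactly the additivity of the axiom and $\lambda$ rules in the positive type at the conclusion, matched by the multiset sum $\uplus$ on environments.
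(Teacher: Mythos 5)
Your proposal is correct and follows essentially the same route as the paper's proof: case analysis on whether $\val$ is a variable or an abstraction, using syntax-orientedness to force the last rule, and in the abstraction case splitting or merging the premises of the $\lambda$ rule according to the multiset decomposition of the conclusion type (your partition into blocks $B_i$ is the paper's ``up to renumbering'' into consecutive index intervals). The size bookkeeping and the edge cases you flag ($p=0$, $P_i=\emptymset$) are handled the same way in the paper.
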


\begin{proof}
  Both directions are proved by cases, depending on whether $\val$ is  a variable or an abstraction.
  \begin{description}
    \item [$\Rightarrow$:]
    If $\val = \vartwo$, then the last rule of $\pi$ is $\mathsf{ax}$ and thus $\Delta = y \colon\! P_1 \uplus \dots \uplus P_p$.
    So, for all $1 \leq i \leq p$, there are an environment $\Delta_i = y \colon\! P_i$ and a derivation 
    \begin{equation*}
      \pi_i = \AxiomC{}
      \RightLabel{\footnotesize$\mathsf{ax}$}
      \UnaryInfC{$\Delta_i \vdash \val \colon\! P_i$}
      \DisplayProof
    \end{equation*}
    with $\biguplus_{i=1}^p \Delta_i = \Delta$ and $\size{\pi} = 0 = \sum_{i=1}^p \size{\pi_i}$.
    
    If $\val = \la\var\tm$ then the last rule of $\pi$ is $\lambda$, so there are $n \in \nat$, positive types $Q_1, Q_1', \dots, Q_n, Q_n'$, environments $\Gamma_1, \dots, \Gamma_n$ such that $\Delta = \biguplus_{j=1}^n \!\Gamma_{\!j}$, $\biguplus_{i=1}^p \!P_i = [\Pair{Q_1\!}{Q_1'},\dots,\Pair{Q_n\!}{Q_n'}]$ and
    \begin{equation*}
      \pi = \AxiomC{$\ \vdots\,\pi_1'$}
      \noLine
      \UnaryInfC{$\Gamma_1, \var \colon\! Q_1 \vdash \tm \colon\! Q_1'$}
      \AxiomC{$\overset{n}{\dots}$}
      \AxiomC{$\ \vdots\,\pi_n'$}
      \noLine
      \UnaryInfC{$\Gamma_n, \var \colon\! Q_n \vdash \tm \colon\! Q_n'$}
      \RightLabel{\footnotesize$\lambda$}
      \TrinaryInfC{$\Delta \vdash \val \colon\! [\Pair{Q_1}{Q_1'}, \dots, \Pair{Q_n}{Q_n'}]$}
      \DisplayProof
    \end{equation*}
    So, up to renumbering the $Q_i$'s and $Q_i'$'s, there are environments $\Delta_1, \dots, \Delta_p$, derivations $\pi_1, \dots, \pi_p$ and integers $k_1 = 1 \leq k_2 \leq \dots \leq k_{p} \leq k_{p+1} = p$ such that, for all $1 \leq i \leq p$, $P_i = \biguplus_{j=k_i}^{k_{i+1}} [\Pair{Q_j}{Q_j'}]$ 
    \begin{align*}
      \text{and \ }
      \pi_i = \AxiomC{$\ \vdots\,\pi_{k_i}'$}
      \noLine
      \UnaryInfC{$\Gamma_{k_i}, \var \colon\! Q_{k_i} \vdash \tm \colon\! Q_{k_i}'$}
      \AxiomC{$\overset{k_{i+1}-k_i}{\ldots}$}
      \AxiomC{$\ \vdots\,\pi_{k_{i+1}}'$}
      \noLine
      \UnaryInfC{$\Gamma_{k_{i+1}}, \var \colon\! Q_{k_{i+1}} \vdash \tm \colon\! Q_{k_{i+1}}'$}
      \RightLabel{\footnotesize$\lambda$}
      \TrinaryInfC{$\Delta_i \vdash \val \colon\! P_i$}
      \DisplayProof
      &&\textup{with } \Delta_i = \biguplus_{j=k_i}^{k_{i+1}} \Gamma_j 
    \end{align*}
    where $\size{\pi_i} = \sum_{j=k_i}^{k_{i+1}} \size{\pi_j'}$, hence $\size{\pi} = \sum_{j=1}^n \size{\pi_j'} = \sum_{i=1}^p \sum_{j=k_i}^{k_{i+1}} \size{\pi_j'} = \sum_{i=1}^n \size{\pi_i}$.

    \item [$\Leftarrow$:]
    If $\val = \vartwo$ then, for all $1 \leq i \leq p$, the last rule of $\pi_i$ is $\mathsf{ax}$, so $\Delta_i = \vartwo \colon\! P_i$ and $\size{\pi_i} = 0$.
    Since $\Delta = \biguplus_{i=1}^p \Delta_i = \vartwo \colon\! \biguplus_{i=1}^p P_i$, there is a derivation 
    \begin{equation*}
      \pi = \AxiomC{}
      \RightLabel{\footnotesize$\mathsf{ax}$}
      \UnaryInfC{$\Delta \vdash \val \colon\! \biguplus_{i=1}^p P_i$}
      \DisplayProof
    \end{equation*}
     where $\size{\pi} = 0 = \sum_{i=1}^p \size{\pi_i}$.
     
    If $\val = \la\var\tm$ then, for all $1 \leq i \leq p$, the last rule of $\pi_i$ is $\lambda$, so there are $k_i \in \nat$, positive types $Q_{i1},Q_{i1}', \dots, Q_{ik_i},Q_{ik_i}'$, environments $\Delta_{i1}, \dots, \Delta_{ik_i}$ and derivations $\pi_{i1}, \dots, \pi_{ik_i}$ with $P_i = \biguplus_{j=1}^{k_i}[\Pair{Q_{ij}}{Q_{ij}'}]$, and $\Delta_i = \biguplus_{j=1}^{k_i} \Delta_{ij}$ and
    \begin{align*}
      \pi_i = \AxiomC{$\ \vdots\,\pi_{i1}$}
      \noLine
      \UnaryInfC{$\Delta_{i1}, \var \colon\! Q_{i1} \vdash \tm \colon\! Q_{i1}'$}
      \AxiomC{$\overset{k_i}{\dots}$}
      \AxiomC{$\ \vdots\,\pi_{ik_i}$}
      \noLine
      \UnaryInfC{$\Delta_{ik_i}, \var \colon\! Q_{ik_i} \vdash \tm \colon\! Q_{ik_i}'$}
      \RightLabel{\footnotesize$\lambda$}
      \TrinaryInfC{$\Delta_i \vdash \val \colon\! P_i$}
      \DisplayProof
      &&\textup{where } \size{\pi_i} = \sum_{j=1}^{k_i} \size{\pi_{ij}}.
    \end{align*}
    Since $\Delta = \biguplus_{i=1}^p \Delta_i$ and $\biguplus_{i=1}^p P_i = \biguplus_{i=1}^p\biguplus_{j=1}^{k_i}[\Pair{Q_{ij}}{Q_{ij}'}]$, there is a derivation $\pi =$
    {\footnotesize
    \begin{equation*}
      \AxiomC{$\ \vdots\,\pi_{11}$}
      \noLine
      \UnaryInfC{$\Delta_{11}, \var \colon\! Q_{11} \!\vdash\! \tm \colon\! Q_{11}' \ \, \overset{k_1}{\dots}$}
      \AxiomC{$\ \vdots\,\pi_{1k_1}$}
      \noLine
      \UnaryInfC{\!\!\!\!\!$\Delta_{1k_1}, \var \colon\! Q_{1k_1} \!\vdash\! \tm \colon\! Q_{1k_1}'$}
      \AxiomC{$\overset{p}{\dots}$}
      \AxiomC{$\ \vdots\,\pi_{p1}$}
      \noLine
      \UnaryInfC{$\Delta_{p1}, \var \colon\! Q_{p1} \!\vdash\! \tm \colon\! Q_{p1}' \ \, \overset{k_p}{\dots}$}
      \AxiomC{$\ \vdots\,\pi_{pk_p}$}
      \noLine
      \UnaryInfC{\!\!\!\!\!$\Delta_{pk_p}, \var \colon\! Q_{pk_p} \!\vdash\! \tm \colon\! Q_{pk_p}'$}
      \RightLabel{\tiny$\lambda$}
      \insertBetweenHyps{\hskip 3pt}
      \QuinaryInfC{\footnotesize$\Delta \vdash \val \colon\! \biguplus_{i=1}^p P_i$}
      \DisplayProof
    \end{equation*}
    }
    where $\size{\pi} = \sum_{i=1}^p \sum_{j=1}^{k_i} \size{\pi_{ij}} = \sum_{i=1}^p \size{\pi_i}$ because $\size{\pi_i} = \sum_{j=1}^{k_i} \size{\pi_{ij}}$.
    \qedhere
  \end{description}
\end{proof}

\begin{corollary}[Minimal derivation for values]
  \label{coro:minimal}
  For every $\val \in \valSet$, there exists 
  $\concl{\pi}{\,}{\val}{\emptymset}$ 
  with $\size{\pi} = 0$.
\end{corollary}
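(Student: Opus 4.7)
The plan is to obtain this corollary as the degenerate instance $p = 0$ of \reflemma{value} (Judgment decomposition for values). That lemma is stated for any $p \in \nat$, and with $p = 0$ the right-to-left hypothesis is vacuously satisfied: there are no positive types $P_i$ to decompose and no premise derivations $\pi_i$ to supply. Its conclusion then furnishes a derivation $\concl{\pi}{\Delta}{\val}{\biguplus_{i=1}^0 P_i}$ where both $\Delta = \biguplus_{i=1}^0 \Delta_i$ and $\biguplus_{i=1}^0 P_i$ are the neutral elements of $\uplus$\,---\,the empty environment and the empty multiset $\emptymset$ respectively. The size identity in \reflemma{value} degenerates to $\size{\pi} = \sum_{i=1}^0 \size{\pi_i} = 0$, yielding the desired $\concl{\pi}{\,}{\val}{\emptymset}$ with $\size{\pi} = 0$.

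If one prefers a self-contained argument, the derivation $\pi$ can be exhibited directly by case analysis on the shape of $\val$. When $\val = \vartwo$ is a variable, I would apply the axiom rule $\Ax$ with $P = \emptymset$ to obtain $\vartwo \colon\! \emptymset \vdash \vartwo \colon\! \emptymset$; recalling the notational convention from \refsect{type} that $\Gamma, \var \colon\! \emptymset$ coincides with $\Gamma$, this is exactly $\vdash \vartwo \colon\! \emptymset$. When $\val = \la{\var}{\tm}$ is an abstraction, I would apply the rule $\lambda$ with $n = 0$ premises; the premise list is empty and the conclusion reads precisely $\vdash \la{\var}{\tm} \colon\! [\,] = \emptymset$. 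In both cases the resulting derivation contains no occurrence of the rule $@$, hence has size $0$. Since the statement follows by either of these trivial constructions, I do not anticipate any real obstacle; the only subtlety worth flagging is the use of the convention identifying $\var \colon\! \emptymset$ with the empty environment and of the admissibility of the $0$-ary instance of rule $\lambda$, both of which are explicit in the definition of the type system in \reffig{types}.
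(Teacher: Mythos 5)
Your first paragraph is exactly the paper's proof: the corollary is obtained by applying the right-to-left direction of \reflemma{value} with $p = 0$, so the proposal is correct and takes the same approach. The additional self-contained case analysis on the shape of $\val$ is also sound but not needed.
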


\begin{proof}
  Apply the right-to-left direction of \reflemma{value} taking $p = 0$.
\end{proof}

\setcounter{lemmaAppendix}{\value{l:substitution}}
\begin{lemmaAppendix}[Substitution]
\label{lappendix:substitution}
  Let 
\NoteState{l:substitution}
  $\tm \in \Lambda$ and $\val \in \Lambda_\val$.
  If $\concl{\pi}{\Gamma, \var \colon\! P}{\tm}{Q}$ and $\concl{\pi'}{\Delta}{\val}{P}$, then there exists $\concl{\pi''}{\Gamma \uplus \Delta}{\tm\isub\var\val}{Q}$ such that $\size{\pi''} = \size{\pi} + \size{\pi'}$.
\end{lemmaAppendix}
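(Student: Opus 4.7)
The plan is to prove the substitution lemma by structural induction on the term $\tm$ (equivalently, on the derivation $\pi$), using Lemma~\ref{l:value} to decompose $\pi'$ whenever the variable $\var$ being substituted occurs several times (with possibly different types) inside $\tm$. The key observation is that the typing system is syntax-directed, so the shape of $\pi$ is determined by $\tm$: in each inductive step I can read off the structure of $\pi$ from the structure of $\tm$ and then re-assemble a derivation of $\tm\isub\var\val$ using derivations of $\val$ obtained from $\pi'$.

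First I would handle the two axiom cases. If $\tm = \var$, then $\pi$ is a single $\Ax$ rule, forcing $\Gamma = \emptyset$ and $Q = P$, so $\tm\isub\var\val = \val$ and I can take $\pi'' \defeq \pi'$; sizes match since $\size{\pi} = 0$. If $\tm = \vartwo$ with $\vartwo \neq \var$, then $\pi$ is again an $\Ax$ rule and, since the only declaration for $\var$ in the environment is $\var\colon\!P$, we must have $P = \emptymset$; applying Lemma~\ref{l:value} with $p = 0$ to $\pi'$ yields $\Dom\Delta = \emptyset$ and $\size{\pi'} = 0$, so $\pi'' \defeq \pi$ works.

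Next the abstraction case $\tm = \la{\vartwo}{\tmthree}$ (with $\vartwo \neq \var$ and $\vartwo \notin \Fv{\val}$ by $\alpha$-conversion). The last rule of $\pi$ is $\lambda$, with $n$ premises of the form $\Gamma_j, \vartwo\colon\!R_j, \var\colon\!P_j \vdash \tmthree \colon\! Q_j$ where $\Gamma = \biguplus_j \Gamma_j$, $P = \biguplus_j P_j$ and $Q = [\Pair{R_1}{Q_1}, \dots, \Pair{R_n}{Q_n}]$. Here I invoke Lemma~\ref{l:value} on $\pi'$ with this specific decomposition of $P$ into $P_1, \dots, P_n$ to obtain derivations $\pi'_j$ of $\Delta_j \vdash \val \colon\! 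P_j$ with $\Delta = \biguplus_j \Delta_j$ and $\size{\pi'} = \sum_j \size{\pi'_j}$. By the induction hypothesis applied to each premise of $\pi$ and the matching $\pi'_j$, I get derivations of $\Gamma_j \uplus \Delta_j, \vartwo\colon\!R_j \vdash \tmthree\isub\var\val \colon\! Q_j$ of size $\size{\pi_j} + \size{\pi'_j}$; gluing them with a $\lambda$ rule yields $\pi''$ of the desired conclusion and size $\size\pi + \size{\pi'}$. The application case $\tm = \tm_1\tm_2$ is analogous but easier: the $@$ rule of $\pi$ splits $\Gamma, \var\colon\!P$ into two pieces, Lemma~\ref{l:value} with $p = 2$ splits $\pi'$ accordingly, the induction hypothesis is applied on each branch, and the $+1$ contributed by the $@$ rule takes care of the book-keeping for the size.

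The only real subtlety (hence the main obstacle) is the abstraction case, because the environment $\Gamma, \var\colon\!P$ is distributed in an arbitrary way among the $n$ premises of the $\lambda$ rule, so one needs a decomposition of $\pi'$ matching \emph{exactly} this distribution, including the case $n = 0$ (where $P = \emptymset$, $\Delta$ is empty and $\size{\pi'} = 0$). This is precisely what Lemma~\ref{l:value} provides, and it is why that lemma was stated in the strong quantitative form (with arbitrary $p \in \nat$ and the additive size identity) rather than in a simpler binary form. Once Lemma~\ref{l:value} is available, each inductive step is a routine re-assembly that preserves the additive size identity $\size{\pi''} = \size\pi + \size{\pi'}$.
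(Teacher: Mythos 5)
Your proposal is correct and follows essentially the same route as the paper's proof: structural induction on $\tm$, with the two axiom cases handled exactly as you describe, and Lemma~\ref{l:value} invoked in the application and abstraction cases to split $\pi'$ according to the decomposition of $P$ induced by the premises of the last rule of $\pi$. Your identification of the abstraction case (arbitrary $n$-ary distribution of $\var\colon\!P$ over the premises, including $n=0$) as the point where the full quantitative strength of Lemma~\ref{l:value} is needed matches the paper's argument.
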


\begin{proof}
  By induction on $\tm \in \Lambda$.
  
  If $\tm = \var$, then $\tm\isub\var\val = \val$ and the last rule of $\pi$ is $\mathsf{ax}$ with $P = Q$ and $\Gamma = \vartwo_1 \colon\! \emptymset, \dots, \vartwo_n \colon\! \emptymset$ ($y_i \neq x$ for all $1 \leq i \leq n$), whence $\Gamma \uplus \Delta = \Delta$ and $\size{\pi} = 0$.
  We conclude by setting $\pi'' = \pi'$.
  
  If $\tm = \vartwo \neq \var$, then $\tm\isub\var\val = \vartwo$ and the last rule of $\pi$ is $\mathsf{ax}$ with $P = \emptymset$ (since $\var \neq \vartwo$), whence $\size{\pi} = 0$ and 
  $\Gamma = \vartwo \colon\! Q$.
  By \reflemma{value}, from $\concl{\pi'}{\Delta}{\val}{\emptymset}$ it follows that $\size{\pi'} = 0$ and $\Delta = \vartwo \colon\! \emptymset$, therefore $\Gamma \uplus \Delta = \Gamma$.
  So, the derivation 
  \begin{equation*}
    \pi'' = \AxiomC{}
    \RightLabel{\footnotesize$\mathsf{ax}$}
    \UnaryInfC{$\Gamma \vdash \vartwo \colon\! Q$}
    \DisplayProof
  \end{equation*}
  has conclusion $\Gamma \uplus \Delta \vdash \tm\isub\var\val \colon\! Q$ and is such that $\size{\pi''} = 0 = \size{\pi} + \size{\pi'}$.
  
  If $\tm = \tmtwo\tmthree$, then $\tm\isub\var\val = \tmtwo\isub\var\val\tmthree\isub\var\val$ and
  \begin{equation*}
      \pi = 
      \AxiomC{$\ \vdots\,\pi_1$}
      \noLine
      \UnaryInfC{$\Gamma_1, \var \colon\! P_1 \vdash \tmtwo \colon\! [\Pair{Q_2}{Q}]$}
      \AxiomC{$\ \vdots\,\pi_2$}
      \noLine
      \UnaryInfC{$\Gamma_2, \var \colon\! P_2 \vdash \tmthree \colon\! Q_2$}
      \RightLabel{\footnotesize$@$}
      \BinaryInfC{$\Gamma, \var \colon\! P \vdash \tm \colon\! Q$}
      \DisplayProof
  \end{equation*}
  with $\size{\pi} = \size{\pi_1} + \size{\pi_2} + 1$, $\Gamma = \Gamma_1 \uplus \Gamma_2$ and $P = P_1 \uplus P_2$.
  According to \reflemma{value}, there are environments $\Delta_1, \Delta_2$ and derivations $\concl{\pi_1'}{\Delta_1}{\val}{P_1}$ and $\concl{\pi_2'}{\Delta_2}{\val}{P_2}$ such that $\Delta = \Delta_1 \uplus \Delta_2$ and $\size{\pi'} = \size{\pi_1'} + \size{\pi_2'}$.
  By induction hypothesis, there are derivations $\pi_1''$ and $\pi_2''$ with conclusion $\Gamma_1 \uplus \Delta_1 \vdash \tmtwo\isub\var\val \colon\! [\Pair{Q_2}{Q}]$ and $\Gamma_2 \uplus \Delta_2 \vdash \tmthree\isub\var\val \colon\! Q_2$, respectively, such that $\size{\pi_1''} = \size{\pi_1} + \size{\pi_1'}$ and $\size{\pi_2''} = \size{\pi_2} + \size{\pi_2'}$.
  As $\Gamma \uplus \Delta = \Gamma_1 \uplus \Delta_1 \uplus \Gamma_2 \uplus \Delta_2$, there is a derivation 
  \begin{equation*}
      \pi'' = 
      \AxiomC{$\ \vdots\,\pi_1''$}
      \noLine
      \UnaryInfC{$\Gamma_1 \uplus \Delta_1 \vdash \tmtwo\isub\var\val \colon\! [\Pair{Q_2}{Q}]$}
      \AxiomC{$\ \vdots\,\pi_2''$}
      \noLine
      \UnaryInfC{$\Gamma_2 \uplus \Delta_2 \vdash \tmthree\isub\var\val \colon\! Q_2$}
      \RightLabel{\footnotesize$@$}
      \BinaryInfC{$\Gamma \uplus \Delta \vdash \tm\isub\var\val \colon\! Q$}
      \DisplayProof
  \end{equation*}
  where $\size{\pi''} = \size{\pi_1''} + \size{\pi_2''} + 1 =  \size{\pi_1} + \size{\pi_1'} + \size{\pi_2} + \size{\pi_2'} + 1 = \size{\pi} + \size{\pi'} + 1$.
  
  If $\tm = \la\vartwo\tmtwo$, then we can suppose without loss of generality that $\vartwo \notin \Fv{\val} \cup \{\var\}$, therefore $\tm\isub\var\val = \la\vartwo\tmtwo\isub\var\val$ and there are $n \in \nat$, environments $\Gamma_1, \dots, \Gamma_n$, positive types $P_1, Q_1, Q_1', \dots, P_n, Q_n, Q_n'$ such that $\Gamma = \biguplus_{i=1}^n \Gamma_i$ and $P = \biguplus_{i = 1}^n P_i$ and $Q = \biguplus_{i=1}^n [\Pair{Q_i}{Q_i'}]$ and 
  \begin{equation*}
    \pi = 
    \AxiomC{$\ \vdots\,\pi_1$}
    \noLine
    \UnaryInfC{$\Gamma_1, \vartwo \colon\! Q_1, \var \colon\! P_1 \vdash \tmtwo \colon\! Q_1'$}
    \AxiomC{$\overset{n}{\dots}$}
    \AxiomC{$\ \vdots\,\pi_n$}
    \noLine
    \UnaryInfC{$\Gamma_n, \vartwo \colon\! Q_n, \var \colon\! P_n \vdash \tmtwo \colon\! Q_n'$}
    \RightLabel{\footnotesize$\lambda$}
    \TrinaryInfC{$\Gamma, \var \colon\! P \vdash \tm \colon\! Q$}
    \DisplayProof
  \end{equation*}
  with $\size{\pi} = \sum_{i=1}^n \size{\pi_i}$.
  By applying \reflemma{value} to $\pi'$ (as $P = \biguplus_{i=1}^n P_i$), for all $1 \leq i \leq n$ there are an environment $\Delta_i$ and a derivation $\concl{\pi_i'}{\Delta_i}{\val}{P_i}$ such that $\Delta = \biguplus_{i=1}^n \Delta_i$ and $\size{\pi'} = \sum_{i=1}^n \size{\pi_i'}$.
  By induction hypothesis, for all $1 \leq i \leq n$, there is a derivation $\concl{\pi_i''}{\Gamma_i \uplus \Delta_i, \vartwo \colon\! Q_i}{\tmtwo\isub\var\val}{Q_i'}$ such that $\size{\pi_i''} = \size{\pi_i} + \size{\pi_i'}$.
  Since $\Gamma \uplus \Delta = \biguplus_{i=1}^n \Gamma_i \uplus \Delta_i$, there is a derivation
  \begin{equation*}
    \pi'' = 
    \AxiomC{$\ \vdots\,\pi_1''$}
    \noLine
    \UnaryInfC{$\Gamma_1 \uplus \Delta_1, \vartwo \colon\! Q_1 \vdash \tmtwo\isub\var\val \colon\! Q_1'$}
    \AxiomC{$\overset{n}{\dots}$}
    \AxiomC{$\ \vdots\,\pi_n''$}
    \noLine
    \UnaryInfC{$\Gamma_n \uplus \Delta_n, \vartwo \colon\! Q_n \vdash \tmtwo\isub\var\val \colon\! Q_n'$}
    \RightLabel{\footnotesize$\lambda$}
    \TrinaryInfC{$\Gamma \uplus \Delta \vdash \tm\isub\var\val \colon\! Q$}
    \DisplayProof
  \end{equation*}
  where $\size{\pi''} = \sum_{i=1}^n \size{\pi_i''} = \sum_{i=1}^n \size{\pi_i} + \sum_{i=1}^n \size{\pi_i'} = \size{\pi} + \size{\pi'}$.
  \qedhere
\end{proof}

\setcounter{propositionAppendix}{\value{prop:quant-subject-reduction}}
\begin{propositionAppendix}[Quantitative balanced subject reduction]
\label{propappendix:quant-subject-reduction}
  Let 
\NoteState{prop:quant-subject-reduction}
  $\tm, \tmp \in \Lambda$ and $\concl{\pi}{\Gamma}{\tm}{Q}$.
  \begin{enumerate}
    \item\label{pappendix:quant-subject-reduction-betav} \emph{Shrinkage under $\betavm$-step:} If $\tm \tobvm \tmp$ then $\size{\pi} > 0$ and there exists a derivation $\pi'$ with conclusion $\Gamma \vdash \tmp \colon\! Q$ such that $\size{\pi'} = \size{\pi} - 1$.
    \item\label{pappendix:quant-subject-reduction-sigma} \emph{Size invariance under $\sigm$-step:} If $\tm \tosigm \tmp$ then $\size{\pi} > 0$ and there exists a derivation $\pi'$ with conclusion $\Gamma \vdash \tmp \colon\! Q$ such that $\size{\pi'} = \size{\pi}$.
  \end{enumerate}
\end{propositionAppendix}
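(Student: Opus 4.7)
The proof proceeds by induction on the balanced context $\mctx$ witnessing the reduction, that is, such that $\tm = \mctxp{\tm'}$ and $\tmp = \mctxp{\tmp'}$ with $\tm' \rootRew{\Rule} \tmp'$ for $\Rule \in \{\betav, \sigma_1, \sigma_3\}$. In every case $\tm$ is an application: in the base case directly by the shape of the root redexes, in the inductive cases because a non-empty balanced context yields an application at the root. Hence $\size{\pi} > 0$ by \refrmk{positive-size}, which settles the size lower bound uniformly.

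\emph{Base case} $\mctx = \ctxhole$. I distinguish the three root rules. For $\rootRew{\betav}$, the derivation $\pi$ ends with a rule $@$ whose premises are $\concl{\pi_\la}{\Gamma_1}{\la{\var}{\tm''}}{[\Pair{P}{Q}]}$ and $\concl{\pi_\val}{\Gamma_2}{\val}{P}$; since the type of the abstraction is a singleton, the final $\lambda$ rule of $\pi_\la$ must have arity $1$, exposing a unique subderivation $\concl{\pi_0}{\Gamma_1, \var\colon\! P}{\tm''}{Q}$. Applying the substitution lemma (\reflemma{substitution}) to $\pi_0$ and $\pi_\val$ yields $\concl{\pi'}{\Gamma_1 \uplus \Gamma_2}{\tm''\isub{\var}{\val}}{Q}$ with $\size{\pi'} = \size{\pi_0} + \size{\pi_\val} = \size{\pi} - 1$, as wanted. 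For $\rootRew{\sigma_1}$, the derivation $\pi$ has the form $@$ over $@$ over a unary $\lambda$ (again forced by singleton types at the intermediate positions), with pieces $\pi_\tm$, $\pi_\tmtwo$, $\pi_\tmthree$ typing $\tm$, $\tmtwo$, $\tmthree$ in suitable contexts; I rebuild $\pi'$ by reassembling these three subderivations in the order $@$-$\lambda$-$@$ matching the shape of $(\la{\var}{\tm\tmthree})\tmtwo$. The side condition $\var \notin \Fv{\tmthree}$ ensures (via \reflemma{free}) that $\var \notin \Dom{\Gamma_\tmthree}$, so combining $\pi_\tm$ and $\pi_\tmthree$ with the inner $@$ is type-sound. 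Counting: both sides have exactly two $@$ rules on top of the three subderivations, so $\size{\pi'} = \size{\pi}$. The case $\rootRew{\sigma_3}$ is entirely symmetric, swapping the roles of the abstraction and the value $\val$, and again the side condition $\var \notin \Fv{\val}$ guarantees that $\pi_\val$ can be reused under a new binder for $\var$.

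\emph{Inductive step.} For each of the three non-trivial productions of balanced contexts I peel off the outermost application and apply the inductive hypothesis to the unique subderivation whose typed term contains the inner context $\mctx'$. If $\mctx = \mctx'\tmtwo$ or $\mctx = \tmtwo\mctx'$, the last rule of $\pi$ is $@$ and one of the two premises types $\mctx'p{\tm'}$; the IH furnishes a derivation of $\mctx'p{\tmp'}$ with the required size difference, and reassembling with $@$ propagates the $+(-1)$ (for $\betav$) or $0$ (for $\sigma$) to the top. If $\mctx = (\la{\var}{\mctx'})\tmtwo$, the outer $@$ sits over a $\lambda$ of arity $1$ (singleton type again), so the IH applies to the unique premise of the $\lambda$; rebuilding the $\lambda$ and the outer $@$ preserves the size bookkeeping.

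The main obstacle is the base case for $\rootRew{\sigma_1}$ and $\rootRew{\sigma_3}$: one has to carefully unpack all the constraints forced by the singleton types sitting on the spine of $\pi$, identify the three free subderivations, and check that the side conditions on free variables justify the redistribution of environments. Once this surgery is in place, the inductive step is routine. Note in passing that the argument crucially exploits that balanced contexts never enter an abstraction unless that abstraction is itself applied; this is what forces the intermediate $\lambda$ rules to be unary and makes the extraction of subderivations unambiguous.
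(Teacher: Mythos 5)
Your proposal is correct and follows essentially the same route as the paper: the induction on the balanced context is just a repackaging of the paper's induction on the term (your three context productions are exactly its ``Application Left'', ``Application Right'' and ``Step inside a $\beta$-redex'' cases), and the key ingredients coincide — the substitution lemma for the $\betav$ root step, the rearrangement of the two $@$ rules and the unary $\lambda$ for the $\sigma$ root steps, and \reflemma{free} to justify moving a subderivation under the new binder.
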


\begin{proof}
  \begin{enumerate}
    \item Since $\tm$ is not $\betav$-normal, $\tm$ is not a value and thus $\size{\pi} > 0$ according to \refrmk{positive-size}.
    The proof that there exists a derivation $\concl{\pi'}{\Gamma}{\tmp}{Q}$ such that $\size{\pi'} = \size{\pi} - 1$ is by induction on $\tm \in \Lambda$.
    Cases:
    \begin{itemize}
      \item \emph{Step at the root}, \ie~$\tm = (\la\var\tmtwo)\val \rtobv \tmtwo\isub\var\val = \tmp$: then, 
      \begin{equation*}
        \pi =
        \AxiomC{$\ \vdots\, \pi_1$}
        \noLine
        \UnaryInfC{$\Gamma_1, \var \colon\! P \vdash \tmtwo \colon\! Q$}
        \RightLabel{\footnotesize$\lambda$}
        \UnaryInfC{$\Gamma_1 \vdash \la\var\tmtwo \colon\! [\Pair{P}{Q}]$}
        \AxiomC{$\ \vdots\, \pi_2$}
        \noLine
        \UnaryInfC{$\Gamma_2 \vdash \val \colon\! P$}
        \RightLabel{\footnotesize$@$}
        \BinaryInfC{$\Gamma \vdash \tm \colon\! Q$}
        \DisplayProof
      \end{equation*}
      where $\Gamma = \Gamma_1 \uplus \Gamma_2$ and $\size{\pi} = \size{\pi_1} + \size{\pi_2} + 1$. 
      By the substitution lemma (\reflemma{substitution}), there exists a derivation $\concl{\pi'}{\Gamma}{\tmp}{Q}$ such that $\size{\pi'} = \size{\pi_1} + \size{\pi_2} = \size{\pi} - 1$.
      
      \item \emph{Application Left}, \ie~$\tm = \tmtwo\tmthree \tobvm \tmtwop\tmthree = \tmp$ with $\tmtwo \tobvm \tmtwop$: then,
      \begin{equation*}
        \pi =
        \AxiomC{$\ \vdots\, \pi_1$}
        \noLine
	\UnaryInfC{$\Gamma_1 \vdash \tmtwo \colon\! [\Pair{P}{Q}]$}
        \AxiomC{$\ \vdots\, \pi_2$}
        \noLine
        \UnaryInfC{$\Gamma_2 \vdash \tmthree \colon\! P$}
        \RightLabel{\footnotesize$@$}
        \BinaryInfC{$\Gamma \vdash \tm \colon\! Q$}
        \DisplayProof
      \end{equation*}
      where $\Gamma = \Gamma_1 \uplus \Gamma_2$ and $\size{\pi} = \size{\pi_1} + \size{\pi_2} + 1$. 
      By induction hypothesis, there exists a derivation $\concl{\pi_1'}{\Gamma_1}{\tmtwop}{[\Pair{P}{Q}]}$ such that $\size{\pi_1'} = \size{\pi_1} - 1$.
      Therefore, there exists a derivation 
      \begin{equation*}
        \pi' =
        \AxiomC{$\ \vdots\, \pi_1'$}
        \noLine
	\UnaryInfC{$\Gamma_1 \vdash \tmtwop \colon\! [\Pair{P}{Q}]$}
        \AxiomC{$\ \vdots\, \pi_2$}
        \noLine
        \UnaryInfC{$\Gamma_2 \vdash \tmthree \colon\! P$}
        \RightLabel{\footnotesize$@$}
        \BinaryInfC{$\Gamma \vdash \tmp \colon\! Q$}
        \DisplayProof
      \end{equation*}
      where $\size{\pi'} = \size{\pi_1'} + \size{\pi_2} + 1 = \size{\pi_1} + \size{\pi_2} + 1 -1 = \size{\pi} - 1$.
      
      \item \emph{Application Right}, \ie~$\tm = \tmtwo\tmthree \tobvm \tmtwo\tmthreep = \tmp$ with $\tmthree \tobvm \tmthreep$: then,
      \begin{equation*}
        \pi =
        \AxiomC{$\ \vdots\, \pi_1$}
        \noLine
	\UnaryInfC{$\Gamma_1 \vdash \tmtwo \colon\! [\Pair{P}{Q}]$}
        \AxiomC{$\ \vdots\, \pi_2$}
        \noLine
        \UnaryInfC{$\Gamma_2 \vdash \tmthree \colon\! P$}
        \RightLabel{\footnotesize$@$}
        \BinaryInfC{$\Gamma \vdash \tm \colon\! Q$}
        \DisplayProof
      \end{equation*}
      where $\Gamma = \Gamma_1 \uplus \Gamma_2$ and $\size{\pi} = \size{\pi_1} + \size{\pi_2} + 1$. 
      By induction hypothesis, there is a derivation $\concl{\pi_2'}{\Gamma_2}{\tmtwop}{Q}$ such that $\size{\pi_2'} = \size{\pi_2} - 1$.
      Therefore, there exists a derivation 
      \begin{equation*}
        \pi' =
        \AxiomC{$\ \vdots\, \pi_1$}
        \noLine
	\UnaryInfC{$\Gamma_1 \vdash \tmtwo \colon\! [\Pair{P}{Q}]$}
        \AxiomC{$\ \vdots\, \pi_2'$}
        \noLine
        \UnaryInfC{$\Gamma_2 \vdash \tmthreep \colon\! P$}
        \RightLabel{\footnotesize$@$}
        \BinaryInfC{$\Gamma \vdash \tmp \colon\! Q$}
        \DisplayProof
      \end{equation*}
      where $\size{\pi'} = \size{\pi_1} + \size{\pi_2'} + 1 = \size{\pi_1} + \size{\pi_2} + 1 -1 = \size{\pi} - 1$.

      \item \emph{Step inside a $\beta$-redex}, \ie~$\tm = (\la\var\tmtwo)\tmthree \tobvm (\la\var\tmtwop)\tmthree = \tmp$ with $\tmtwo \tobvm \tmtwop$: then, 
      \begin{equation*}
        \pi =
        \AxiomC{$\ \vdots\, \pi_1$}
        \noLine
        \UnaryInfC{$\Gamma_1, \var \colon\! P \vdash \tmtwo \colon\! Q$}
        \RightLabel{\footnotesize$\lambda$}
        \UnaryInfC{$\Gamma_1 \vdash \la\var\tmtwo \colon\! [\Pair{P}{Q}]$}
        \AxiomC{$\ \vdots\, \pi_2$}
        \noLine
        \UnaryInfC{$\Gamma_2 \vdash \tmthree \colon\! P$}
        \RightLabel{\footnotesize$@$}
        \BinaryInfC{$\Gamma \vdash \tm \colon\! Q$}
        \DisplayProof
      \end{equation*}
      where $\Gamma = \Gamma_1 \uplus \Gamma_2$ and $\size{\pi} = \size{\pi_1} + \size{\pi_2} + 1 > 0$. 
      By induction hypothesis, there exists a derivation $\concl{\pi_1'}{\Gamma_1}{\tmtwop}{[\Pair{P}{Q}]}$ such that $\size{\pi_1'} = \size{\pi_1} - 1$.
      Therefore, there is a derivation 
      \begin{equation*}
        \pi' =
        \AxiomC{$\ \vdots\, \pi_1'$}
        \noLine
        \UnaryInfC{$\Gamma_1, \var \colon\! P \vdash \tmtwop \colon\! Q$}
        \RightLabel{\footnotesize$\lambda$}
        \UnaryInfC{$\Gamma_1 \vdash \la\var\tmtwop \colon\! [\Pair{P}{Q}]$}
        \AxiomC{$\ \vdots\, \pi_2$}
        \noLine
        \UnaryInfC{$\Gamma_2 \vdash \tmthree \colon\! P$}
        \RightLabel{\footnotesize$@$}
        \BinaryInfC{$\Gamma \vdash \tmp \colon\! Q$}
        \DisplayProof
      \end{equation*}
      where $\size{\pi'} = \size{\pi_1'} + \size{\pi_2} + 1 = \size{\pi_1} + \size{\pi_2} + 1 - 1 = \size{\pi} - 1$. 
    \end{itemize}

    \item Since $\tm$ is not $\sigma$-normal, $\tm$ is not a value and thus $\size{\pi} > 0$ according to \refrmk{positive-size}.
    The proof that there exists a derivation $\pi'$ with conclusion $\Gamma \vdash \tmp \colon\! Q$ such that $\size{\pi'} = \size{\pi}$ is by induction con $\tm \in \Lambda$.
    Cases:
    \begin{itemize}
      \item \emph{Step at the root}: there are two sub-cases:
      \begin{itemize}
        \item $\tm = (\la\var\tmtwo)\tmthree\tmfour \rtosl (\la\var\tmtwo\tmfour)\tmthree = \tmp$ with $\var \notin \Fv{\tmfour}$: then, 
	\begin{equation*}
	  \pi =
	  \AxiomC{$\ \vdots\, \pi_1$}
	  \noLine
	  \UnaryInfC{$\Gamma_1, \var \colon\! P \vdash \tmtwo \colon\! [\Pair{Q'}{Q}]$}
	  \RightLabel{\footnotesize$\lambda$}
	  \UnaryInfC{$\Gamma_1 \vdash \la\var\tmtwo \colon\! [\Pair{P}{[\Pair{Q'}{Q}]}]$}
	  \AxiomC{$\ \vdots\, \pi_2$}
	  \noLine
	  \UnaryInfC{$\Gamma_2 \vdash \tmthree \colon\! P$}
	  \RightLabel{\footnotesize$@$}
	  \BinaryInfC{$\Gamma_1 \uplus \Gamma_2 \vdash \tm \colon\! [\Pair{Q'}{Q}]$}
	  \AxiomC{$\ \vdots\, \pi_3$}
	  \noLine
	  \UnaryInfC{$\Gamma_3 \vdash \tmfour \colon\! Q'$}
	  \RightLabel{\footnotesize$@$}
	  \BinaryInfC{$\Gamma \vdash \tm \colon\! Q$}
	  \DisplayProof
	\end{equation*}
	with  $\Gamma = \Gamma_1 \uplus \Gamma_2 \uplus \Gamma_3$ and $\size{\pi} = \size{\pi_1} + \size{\pi_2} + \size{\pi_3} + 2$. 
	By \reflemma{free}, $\var \notin \Dom{\Gamma_3}$.
	Therefore, there is a derivation 
	\begin{equation*}
	  \pi' =
	  \AxiomC{$\ \vdots\, \pi_1$}
	  \noLine
	  \UnaryInfC{$\Gamma_1, \var \colon\! P \vdash \tmtwo \colon\! [\Pair{Q'}{Q}]$}
	  \AxiomC{$\ \vdots\, \pi_3$}
	  \noLine
	  \UnaryInfC{$\Gamma_3 \vdash \tmfour \colon\! Q'$}
	  \RightLabel{\footnotesize$@$}
	  \BinaryInfC{$\Gamma_1 \uplus \Gamma_3, \var \colon\! P \vdash \tmtwo\tmfour \colon\! Q$}
	  \RightLabel{\footnotesize$\lambda$}
	  \UnaryInfC{$\Gamma_1 \uplus \Gamma_3 \vdash \la\var{\tmtwo\tmfour} \colon\! [\Pair{P}{Q}]$}
	  \AxiomC{$\ \vdots\, \pi_2$}
	  \noLine
	  \UnaryInfC{$\Gamma_2 \vdash \tmthree \colon\! P$}
	  \RightLabel{\footnotesize$@$}
	  \BinaryInfC{$\Gamma \vdash \tmp \colon\! {Q}$}
	  \DisplayProof
	\end{equation*}
	where $\size{\pi'} = \size{\pi_1} + \size{\pi_3} + 1 + \size{\pi_2} +1 = \size{\pi}$.

	\item $\tm = \val((\la\var\tmtwo)\tmthree) \rtosr (\la\var\val\tmtwo)\tmthree = \tmp$ with $\var \notin \Fv{\val}$: then,
	\begin{equation*}
	  \pi =
	  \AxiomC{$\ \vdots\, \pi_1$}
	  \noLine
	  \UnaryInfC{$\Gamma_1 \vdash \val \colon\! [\Pair{Q'}{Q}]$}
	  \AxiomC{$\ \vdots\, \pi_2$}
	  \noLine
	  \UnaryInfC{$\Gamma_2, \var \colon\! P \vdash \tmtwo \colon\! Q'$}
	  \RightLabel{\footnotesize$\lambda$}
	  \UnaryInfC{$\Gamma_2 \vdash \la\var\tmtwo \colon\! [\Pair{P}{Q'}]$}
	  \AxiomC{$\ \vdots\, \pi_3$}
	  \noLine
	  \UnaryInfC{$\Gamma_3 \vdash \tmthree \colon\! P$}
	  \RightLabel{\footnotesize$@$}
	  \BinaryInfC{$\Gamma_2 \uplus \Gamma_3 \vdash (\la\var\tmtwo)\tmthree \colon\! Q'$}
	  \RightLabel{\footnotesize$@$}
	  \BinaryInfC{$\Gamma \vdash \tm \colon\! Q$}
	  \DisplayProof
	\end{equation*}
	with $\Gamma = \Gamma_1 \uplus \Gamma_2 \uplus \Gamma_3$ and $\size{\pi} = \size{\pi_1} + \size{\pi_2} + \size{\pi_3} + 2$. 
		By \reflemma{free}, $\var \notin \Dom{\Gamma_1}$.
	So, there is a derivation 
	\begin{equation*}
	  \pi' =
	  \AxiomC{$\ \vdots\, \pi_1$}
	  \noLine
	  \UnaryInfC{$\Gamma_1 \vdash \val \colon\! [\Pair{Q'}{Q}]$}
	  \AxiomC{$\ \vdots\, \pi_3$}
	  \noLine
	  \UnaryInfC{$\Gamma_3, \var \colon\! P \vdash \tmtwo \colon\! Q'$}
	  \RightLabel{\footnotesize$@$}
	  \BinaryInfC{$\Gamma_1 \uplus \Gamma_3, \var \colon\! P \vdash \val\tmtwo \colon\! Q$}
	  \RightLabel{\footnotesize$\lambda$}
	  \UnaryInfC{$\Gamma_1 \uplus \Gamma_3 \vdash \la\var{\tmtwo\tmfour} \colon\! [\Pair{P}{Q}]$}
	  \AxiomC{$\ \vdots\, \pi_2$}
	  \noLine
	  \UnaryInfC{$\Gamma_2 \vdash \tmthree \colon\! P$}
	  \RightLabel{\footnotesize$@$}
	  \BinaryInfC{$\Gamma \vdash \tmp \colon\! {Q}$}
	  \DisplayProof
	\end{equation*}
	where $\size{\pi'} =  \size{\pi_1} + \size{\pi_3} + 1 + \size{\pi_2} +1 = \size{\pi}$.
      \end{itemize}

      \item \emph{Application Left}, \ie~$\tm = \tmtwo\tmthree \tosigm \tmtwop\tmthree = \tmp$ with $\tmtwo \tosigm \tmtwop$: then,
      \begin{equation*}
        \pi =
        \AxiomC{$\ \vdots\, \pi_1$}
        \noLine
	\UnaryInfC{$\Gamma_1 \vdash \tmtwo \colon\! [\Pair{P}{Q}]$}
        \AxiomC{$\ \vdots\, \pi_2$}
        \noLine
        \UnaryInfC{$\Gamma_2 \vdash \tmthree \colon\! P$}
        \RightLabel{\footnotesize$@$}
        \BinaryInfC{$\Gamma \vdash \tm \colon\! Q$}
        \DisplayProof
      \end{equation*}
      where $\Gamma = \Gamma_1 \uplus \Gamma_2$ and $\size{\pi} = \size{\pi_1} + \size{\pi_2} + 1$. 
      By induction hypothesis, there exists a derivation $\concl{\pi_1'}{\Gamma_1}{\tmtwop}{[\Pair{P}{Q}]}$ such that $\size{\pi_1'} = \size{\pi_1}$.
      Therefore, there exists a derivation 
      \begin{equation*}
        \pi' =
        \AxiomC{$\ \vdots\, \pi_1'$}
        \noLine
	\UnaryInfC{$\Gamma_1 \vdash \tmtwop \colon\! [\Pair{P}{Q}]$}
        \AxiomC{$\ \vdots\, \pi_2$}
        \noLine
        \UnaryInfC{$\Gamma_2 \vdash \tmthree \colon\! P$}
        \RightLabel{\footnotesize$@$}
        \BinaryInfC{$\Gamma \vdash \tmp \colon\! Q$}
        \DisplayProof
      \end{equation*}
      where $\size{\pi'} = \size{\pi_1'} + \size{\pi_2} + 1 = \size{\pi_1} + \size{\pi_2} + 1 = \size{\pi}$.
      
      \item \emph{Application Right}, \ie~$\tm = \tmtwo\tmthree \tosigm \tmtwo\tmthreep = \tmp$ with $\tmthree \tosigm \tmthreep$: then,
      \begin{equation*}
        \pi =
        \AxiomC{$\ \vdots\, \pi_1$}
        \noLine
	\UnaryInfC{$\Gamma_1 \vdash \tmtwo \colon\! [\Pair{P}{Q}]$}
        \AxiomC{$\ \vdots\, \pi_2$}
        \noLine
        \UnaryInfC{$\Gamma_2 \vdash \tmthree \colon\! P$}
        \RightLabel{\footnotesize$@$}
        \BinaryInfC{$\Gamma \vdash \tm \colon\! Q$}
        \DisplayProof
      \end{equation*}
      where $\Gamma = \Gamma_1 \uplus \Gamma_2$ and $\size{\pi} = \size{\pi_1} + \size{\pi_2} + 1$. 
      By induction hypothesis, there exists a derivation $\concl{\pi_2'}{\Gamma_2}{\tmtwop}{Q}$ such that $\size{\pi_2'} = \size{\pi_2}$.
      Therefore, there exists a derivation 
      \begin{equation*}
        \pi' =
        \AxiomC{$\ \vdots\, \pi_1$}
        \noLine
	\UnaryInfC{$\Gamma_1 \vdash \tmtwo \colon\! [\Pair{P}{Q}]$}
        \AxiomC{$\ \vdots\, \pi_2'$}
        \noLine
        \UnaryInfC{$\Gamma_2 \vdash \tmthreep \colon\! P$}
        \RightLabel{\footnotesize$@$}
        \BinaryInfC{$\Gamma \vdash \tmp \colon\! Q$}
        \DisplayProof
      \end{equation*}
      where $\size{\pi'} = \size{\pi_1} + \size{\pi_2'} + 1 = \size{\pi_1} + \size{\pi_2} + 1 = \size{\pi}$.

      \item \emph{Step inside a $\beta$-redex}, \ie~$\tm = (\la\var\tmtwo)\tmthree \tosigm (\la\var\tmtwop)\tmthree = \tmp$ with $\tmtwo \tosigm \tmtwop$: then, 
      \begin{equation*}
        \pi =
        \AxiomC{$\ \vdots\, \pi_1$}
        \noLine
        \UnaryInfC{$\Gamma_1, \var \colon\! P \vdash \tmtwo \colon\! Q$}
        \RightLabel{\footnotesize$\lambda$}
        \UnaryInfC{$\Gamma_1 \vdash \la\var\tmtwo \colon\! [\Pair{P}{Q}]$}
        \AxiomC{$\ \vdots\, \pi_2$}
        \noLine
        \UnaryInfC{$\Gamma_2 \vdash \tmthree \colon\! P$}
        \RightLabel{\footnotesize$@$}
        \BinaryInfC{$\Gamma \vdash \tm \colon\! Q$}
        \DisplayProof
      \end{equation*}
      where $\Gamma = \Gamma_1 \uplus \Gamma_2$ and $\size{\pi} = \size{\pi_1} + \size{\pi_2} + 1$. 
      By induction hypothesis, there exists a derivation $\concl{\pi_1'}{\Gamma_1}{\tmtwop}{[\Pair{P}{Q}]}$ such that $\size{\pi_1'} = \size{\pi_1}$.
      Therefore, there is a derivation 
      \begin{equation*}
        \pi' =
        \AxiomC{$\ \vdots\, \pi_1'$}
        \noLine
        \UnaryInfC{$\Gamma_1, \var \colon\! P \vdash \tmtwop \colon\! Q$}
        \RightLabel{\footnotesize$\lambda$}
        \UnaryInfC{$\Gamma_1 \vdash \la\var\tmtwop \colon\! [\Pair{P}{Q}]$}
        \AxiomC{$\ \vdots\, \pi_2$}
        \noLine
        \UnaryInfC{$\Gamma_2 \vdash \tmthree \colon\! P$}
        \RightLabel{\footnotesize$@$}
        \BinaryInfC{$\Gamma \vdash \tmp \colon\! Q$}
        \DisplayProof
      \end{equation*}
      where $\size{\pi'} = \size{\pi_1'} + \size{\pi_2} + 1 = \size{\pi_1} + \size{\pi_2} + 1 = \size{\pi}$. 
      \qedhere
    \end{itemize}
  \end{enumerate}

\end{proof}

\setcounter{lemmaAppendix}{\value{l:commutation}}
\begin{lemmaAppendix}[Abstraction commutation]\hfill
\label{lappendix:commutation}
\NoteState{l:commutation}
  \begin{enumerate}
    \item\label{pappendix:commutation-abstraction} \emph{Abstraction vs.~abstraction:}
    Let $k \in \nat$.
    If $\concl{\pi}{\Delta}{\la{\vartwo}{(\la{\var}{\tm})\val}}{\biguplus_{i=1}^k [\Pair{P_i'}{P_i}]}$ and $\vartwo \notin \Fv{\val}$, then there is $\concl{\pi'}{\Delta}{(\la{\var}{\la{\vartwo}{\tm}})\val}{\biguplus_{i=1}^k [\Pair{P_i'}{P_i}]}$ such that $\size{\pi'} = \size{\pi} + 1 - k$.

    \item \label{pappendix:commutation-application}\emph{Application vs.~abstraction:}
    If $\concl{\pi}{\Delta}{((\la{\var}{\tm})\val)((\la{\var}{\tmtwo})\val)}{P}$ then there exists a derivation $\concl{\pi'}{\Delta}{(\la{\var}{\tm\tmtwo})\val}{P}$ such that $\size{\pi'} = \size{\pi} - 1$.
  \end{enumerate}
\end{lemmaAppendix}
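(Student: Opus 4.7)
The plan is to prove both parts by inverting the syntax-oriented structure of $\pi$ (each conclusion determines the last rule uniquely), reassembling via the judgment decomposition lemma for values (\reflemma{value}) to combine value sub-derivations into one, and then comparing sizes.

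For Part~\ref{p:commutation-abstraction}, since $\pi$ concludes with an abstraction typed by a $k$-element multiset, its last rule must be a $\lambda$-rule with $k$ premises. Inverting it splits $\Delta = \biguplus_{i=1}^k \Delta_i$ and gives, for each $i$, a derivation of $\Delta_i, \vartwo \colon\! P_i' \vdash (\la{\var}{\tm})\val \colon\! P_i$. Each of these is in turn an $@$-rule above a $1$-ary $\lambda$-rule (since $\la{\var}{\tm}$ is typed by a singleton multiset), yielding sub-derivations $\concl{\rho_i}{\Gamma_i, \var \colon\! Q_i, \vartwo \colon\! P_i'}{\tm}{P_i}$ and $\concl{\sigma_i}{\Xi_i}{\val}{Q_i}$ with $\Delta_i = \Gamma_i \uplus \Xi_i$; by \reflemma{free}, $\vartwo \notin \Dom{\Xi_i}$ because $\vartwo \notin \Fv{\val}$, so $\vartwo$ stays in the $\Gamma_i$ side as I wrote. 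Using the right-to-left direction of \reflemma{value}, I merge the $\sigma_i$'s into a single $\concl{\sigma}{\biguplus_i \Xi_i}{\val}{\biguplus_i Q_i}$ of size $\sum_i \size{\sigma_i}$. Then $\pi'$ is built by (i) a $k$-ary $\lambda$-rule over $\rho_1,\dots,\rho_k$ concluding $(\biguplus_i \Gamma_i), \var \colon\! \biguplus_i Q_i \vdash \la{\vartwo}{\tm} \colon\! \biguplus_i [\Pair{P_i'}{P_i}]$, (ii) a $1$-ary $\lambda$-rule for the outer abstraction, and (iii) a single $@$-rule with $\sigma$ as argument. Counting: $\size{\pi} = \sum_i(\size{\rho_i} + \size{\sigma_i} + 1) = \sum_i\size{\rho_i} + \sum_i\size{\sigma_i} + k$ and $\size{\pi'} = \sum_i\size{\rho_i} + \sum_i\size{\sigma_i} + 1$, so $\size{\pi'} = \size{\pi} + 1 - k$ as required.

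For Part~\ref{p:commutation-application}, I invert $\pi$ analogously: the top $@$-rule splits it into a left branch producing $(\la{\var}{\tm})\val \colon\! [\Pair{Q}{P}]$ and a right branch producing $(\la{\var}{\tmtwo})\val \colon\! Q$; each is an $@$-rule above a $1$-ary $\lambda$-rule on the function side. This extracts $\concl{\rho_1}{\Gamma_1, \var \colon\! R}{\tm}{[\Pair{Q}{P}]}$ and $\concl{\rho_2}{\Gamma_2, \var \colon\! R'}{\tmtwo}{Q}$ together with $\concl{\sigma_1}{\Xi_1}{\val}{R}$ and $\concl{\sigma_2}{\Xi_2}{\val}{R'}$, where $\Delta = \Gamma_1 \uplus \Gamma_2 \uplus \Xi_1 \uplus \Xi_2$. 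I merge $\sigma_1,\sigma_2$ via \reflemma{value} into $\concl{\sigma}{\Xi_1 \uplus \Xi_2}{\val}{R \uplus R'}$ of size $\size{\sigma_1} + \size{\sigma_2}$, then assemble $\pi'$ by: an inner $@$-rule over $\rho_1$ and $\rho_2$ deriving $\Gamma_1 \uplus \Gamma_2, \var \colon\! R \uplus R' \vdash \tm\tmtwo \colon\! P$, a $1$-ary $\lambda$-rule, and an outer $@$-rule with $\sigma$ on the argument side. Counting: $\size{\pi} = 3 + \size{\rho_1} + \size{\rho_2} + \size{\sigma_1} + \size{\sigma_2}$ while $\size{\pi'} = 2 + \size{\rho_1} + \size{\rho_2} + \size{\sigma_1} + \size{\sigma_2}$, giving $\size{\pi'} = \size{\pi} - 1$.

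The main technical obstacle is in Part~\ref{p:commutation-abstraction}: the $k$-fold indexing forces careful tracking of environments, and the transition from $k$ separate $@$-rules (each contributing $+1$ to the size) in $\pi$ to just one $@$-rule in $\pi'$ is exactly what produces the $+1-k$ size correction, so it must be justified by the merge direction of \reflemma{value} with additivity of sizes. Everything else is routine inversion of syntax-oriented derivations, so the proof proceeds without induction on terms, differently from \reflemma{substitution}.
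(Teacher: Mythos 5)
Your proposal is correct and follows essentially the same route as the paper's proof: both parts proceed by inverting the syntax-oriented derivation, merging the value sub-derivations with the right-to-left direction of \reflemma{value} (with sizes adding up), and rebuilding the target derivation with a single $@$-rule, which yields exactly the $+1-k$ (resp.\ $-1$) size correction. The environment bookkeeping via \reflemma{free} and the final size computations match the paper's argument.
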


\begin{proof}
  \begin{enumerate}
    \item  There are derivations $\pi_1^1,\pi_1^2, \dots, \pi_k^1,\pi_k^2$, environments $\Gamma_1, \Delta_1, \dots, \Gamma_k, \Delta_k$ and positive types $Q_i, \dots, Q_k$ such that $\vartwo \notin \Dom{\Delta_i}$ for all $1 \leq i \leq k$ (by \reflemma{free}, since $\vartwo \notin \Fv{\val}$) and
    \begin{equation*}
      \pi =
      \AxiomC{$\ \vdots\, \pi_i^1$}
      \noLine
      \UnaryInfC{$\Gamma_i, \vartwo \colon P_i', \var \colon\! Q_i \vdash \tm \colon\! P_i$}
      \RightLabel{\footnotesize$\lambda$}
      \UnaryInfC{$\Gamma_i, \vartwo \colon P_i' \vdash \la{\var}{\tm} \colon\! [\Pair{Q_i}{P_i}]$}
      \AxiomC{$\ \vdots\, \pi_i^2$}
      \noLine
      \UnaryInfC{$\Delta_i \vdash \val \colon\! Q_i$}
      \RightLabel{\footnotesize$@$}
      \BinaryInfC{$\Gamma_i \uplus \Delta_i, \vartwo \colon P_i' \vdash (\la{\var}{\tm})\val \colon\! P_i$}
      \AxiomC{\textup{(for all $1 \leq i \leq n$)}}
      \RightLabel{\footnotesize$\lambda$}
      \BinaryInfC{$\Delta \vdash \la{\vartwo}{(\la{\var}{\tm})\val} \colon\! \biguplus_{i=1}^k [\Pair{P_i'}{P_i$}]}
      \DisplayProof
    \end{equation*}
    where $\Delta = \biguplus_{i=1}^k \Gamma_i \uplus \Delta_i$ and $\size{\pi} = \sum_{i=1}^k (\size{\pi_i^1} + \size{\pi_i^2} +1) = k + \sum_{i=1}^k (\size{\pi_i^1} + \size{\pi_i^2})$.
    According to \reflemma{value}, there exists $\concl{\pi_2}{\Delta'}{\val}{\biguplus_{i=1}^k Q_i}$ with $\Delta' = \biguplus_{i=1}^k \Delta_i$ (whence $\Delta = \Delta' \uplus \biguplus_{i=1}^k \Gamma_i$) and $\size{\pi_2} = \sum_{i=1}^k \size{\pi_i^2}$, thus one has
    \begin{equation*}
      \pi' =
      \AxiomC{$\ \vdots\, \pi_i^1$}
      \noLine
      \UnaryInfC{$\Gamma_i, \vartwo \colon\! P_i', \var \colon\! Q_i \vdash \tm \colon\! P_i$}
      \AxiomC{\textup{(for all $1 \leq i \leq n$)}}
      \RightLabel{\footnotesize$\lambda$}
      \BinaryInfC{$\biguplus_{i=1}^n \Gamma_i, \var \colon\! \biguplus_{i=1}^n Q_i \vdash \la{\vartwo}{\tm} \colon\! \biguplus_{i=1}^n [\Pair{P_i'}{P_i}]$}
      \RightLabel{\footnotesize$\lambda$}
      \UnaryInfC{$\biguplus_{i=1}^n \Gamma_i \vdash \la{\var}{\la\vartwo\tm} \colon\! [\Pair{\biguplus_{i=1}^n Q_i}{\biguplus_{i=n}^n[\Pair{P_i'}{P_i}]}]$}
      \AxiomC{$\ \vdots\, \pi_2$}
      \noLine
      \UnaryInfC{$\Delta' \vdash \val \colon\! \biguplus_{i=1}^n Q_i$}
      \RightLabel{\footnotesize$@$}
      \BinaryInfC{$\Delta \vdash (\la{\var}{\la\vartwo\tm})\val \colon\! \biguplus_{i=n}^n [\Pair{P_i'}{P_i}]$}
      \DisplayProof
    \end{equation*}
    where $\size{\pi'} = \size{\pi_2} + 1 + \sum_{i=1}^n \size{\pi_i^1} = 1+ \sum_{i=1}^n (\size{\pi_i^1} + \size{\pi_i^2}) = \size{\pi} +1 - n$.

    \item There are environments $\Delta_1, \Delta_2, \Delta_3, \Delta_4$, positive types $Q, P_1, P_2$ and derivations $\pi_1, \pi_2, \pi_3, \pi_4$ such that
    
    {\footnotesize
    \begin{equation*}
      \pi =
      \AxiomC{$\ \vdots\, \pi_3$}
      \noLine
      \UnaryInfC{$\Delta_3, \var \colon\! P_1 \vdash \tm \colon\! [\Pair{Q}{P}]$}
      \RightLabel{\footnotesize$\lambda$}
      \UnaryInfC{$\Delta_3 \vdash \la\var\tmtwo \colon\! [\Pair{P_1}{[\Pair{Q}{P}]}]$}
      \AxiomC{$\ \vdots\, \pi_1$}
      \noLine
      \UnaryInfC{$\Delta_1 \vdash \val \colon\! P_1$}
      \RightLabel{\footnotesize$@$}
      \BinaryInfC{$\Delta_3 \uplus \Delta_1 \vdash (\la{\var}{\tm})\val \colon\! [\Pair{Q}{P}]$}
      \AxiomC{$\ \vdots\, \pi_4$}
      \noLine
      \UnaryInfC{$\Delta_4, \var \colon\! P_2 \vdash \tmtwo \colon\! Q$}
      \RightLabel{\footnotesize$\lambda$}
      \UnaryInfC{$\Delta_4 \vdash \la\var\tmtwo \colon\! [\Pair{P_2}{Q}]$}
      \AxiomC{$\ \vdots\, \pi_2$}
      \noLine
      \UnaryInfC{$\Delta_2 \vdash \val \colon\! P_2$}
      \RightLabel{\footnotesize$@$}
      \BinaryInfC{$\Delta_4 \uplus \Delta_2 \vdash (\la{\var}{\tmtwo})\val \colon\! Q$}
      \RightLabel{\footnotesize$@$}
      \BinaryInfC{$\Delta \vdash ((\la{\var}{\tm})\val)((\la{\var}{\tmtwo})\val) \colon\! P$}
      \DisplayProof
    \end{equation*}
    }
    
    \noindent where $\Delta =  \Delta_1 \uplus \Delta_2 \uplus \Delta_3 \uplus \Delta_4$ and $\size{\pi} = \size{\pi_1} + \size{\pi_2} + \size{\pi_3} + \size{\pi_4} + 3$.
    According to \reflemma{value}, there is $\concl{\pi_0}{\Delta_1 \uplus \Delta_2}{\val}{P_1 \uplus P_2}$ such that $\size{\pi_0} = \size{\pi_1} + \size{\pi_2}$, thus there exists
    \begin{equation*}
      \pi' = 
      \AxiomC{$\ \vdots\, \pi_3$}
      \noLine
      \UnaryInfC{$\Delta_3, \var \colon\! P_1 \vdash \tm \colon\! [\Pair{Q}{P}]$}
      \AxiomC{$\ \vdots\, \pi_4$}
      \noLine
      \UnaryInfC{$\Delta_4, \var \colon\! P_2 \vdash \tmtwo \colon\! Q$}
      \RightLabel{\footnotesize$@$}
      \BinaryInfC{$\Delta_3 \uplus \Delta_4, \var \colon\! P_1 \uplus P_2 \vdash \tm\tmtwo \colon\! P$}
      \RightLabel{\footnotesize$\lambda$}
      \UnaryInfC{$\Delta_3 \uplus \Delta_4 \vdash \la\var{\tm\tmtwo} \colon\! [\Pair{P_1 \uplus P_2}{P}]$}
      \AxiomC{$\ \vdots\, \pi_0$}
      \noLine
      \UnaryInfC{$\Delta_1 \uplus \Delta_2 \vdash \val \colon\! P_1 \uplus P_2$}
      \RightLabel{\footnotesize$@$}
      \BinaryInfC{$\Delta \vdash (\la{\var}{\tm\tmtwo})\val \colon\! P$}
      \DisplayProof
    \end{equation*}
    where $\size{\pi'} = \size{\pi_0} + \size{\pi_3} + \size{\pi_4} + 2 =  \size{\pi_1} + \size{\pi_2} + \size{\pi_3} + \size{\pi_4} + 2 = \size{\pi} - 1$.
    \qedhere
  \end{enumerate}
\end{proof}

\setcounter{propositionAppendix}{\value{prop:quant-subject-expansion}}
\begin{propositionAppendix}[Quantitative balanced subject expansion]
\label{propappendix:quant-subject-expansion}
\NoteState{prop:quant-subject-expansion}
  Let $\tm, \tmp \in \Lambda$ and $\concl{\pi'}{\Gamma}{\tmp}{Q}$.
  \begin{enumerate}
    \item\label{pappendix:quant-subject-expansion-betav}\emph{Enlargement under anti-$\betavm$-step:} If $\tm \tobvm \tmp$ then there is $\concl{\pi}{\Gamma}{\tm}{Q}$ with $\size{\pi} = \size{\pi'} + 1$.
    \item\label{pappendix:quant-subject-expansion-sigma}\emph{Size invariance under anti-$\sigm$-step:} If $\tm \tosigm \tmp$ then $\size{\pi'} > 0$ and there is $\concl{\pi}{\Gamma}{\tm}{Q}$ with $\size{\pi} = \size{\pi'}$.
  \end{enumerate}
\end{propositionAppendix}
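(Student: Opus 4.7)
The plan is to prove both parts by induction on $\tm \in \Lambda$, mirroring in reverse the case analysis used for \refprop{quant-subject-reduction}. Since variables and plain abstractions are $\tobvm$- and $\tosigm$-normal, only $\tm = \tmtwo\tmthree$ is relevant, and the cases split according to whether the step occurs at the root of $\tm$, in $\tmtwo$, in $\tmthree$, or (when $\tmtwo = \la x.\tmfive$) inside the body of the $\lambda$ of a $\beta$-redex.

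In each of the three non-root sub-cases, the derivation $\pi'$ ends with an @ rule, possibly preceded (in the $\lambda$-body case) by a $\lambda$ rule with a single premise, since the outer @ forces the abstraction's type to be a singleton multiset. The induction hypothesis applied to the appropriate sub-derivation yields a derivation of the pre-step term with the required size, which is then recombined via the same rules as $\pi'$.

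For the root cases of $\sigma_1$ and $\sigma_3$ a direct rearrangement of the derivation tree suffices. In the $\sigma_1$ case $\tm = (\la x.\tmtwo)\tmthree\tmfour \rtosl (\la x.\tmtwo\tmfour)\tmthree = \tmp$ with $x \notin \Fv{\tmfour}$, the derivation $\pi'$ decomposes as an @ rule on top of a $\lambda$ rule (with a single premise) on top of an @ rule splitting into $\tmtwo$- and $\tmfour$-subderivations; \reflemma{free} places the typing $x\!:\!P$ entirely on the $\tmtwo$-side, so the same three building blocks can be recomposed as two @ rules surrounding one $\lambda$ rule to form $(\la x.\tmtwo)\tmthree\tmfour$, preserving the total count of @ rules. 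The $\sigma_3$ case is symmetric. In both root cases, $\size{\pi'} \geq 2 > 0$ follows from \refrmk{positive-size} since $\tmp$ is an application.

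The main obstacle is the root $\betav$ case $\tm = (\la x.\tmtwo)\val \rtobv \tmtwo\isub{x}{\val} = \tmp$, which demands an \emph{anti-substitution lemma}: from $\concl{\pi'}{\Gamma}{\tmtwo\isub{x}{\val}}{Q}$ one must extract an environment $\Gamma_1$, a positive type $P$, and an environment $\Gamma_2$ with $\Gamma = \Gamma_1 \uplus \Gamma_2$, together with derivations $\concl{\pi_1}{\Gamma_1, x\!:\!P}{\tmtwo}{Q}$ and $\concl{\pi_2}{\Gamma_2}{\val}{P}$ satisfying $\size{\pi_1} + \size{\pi_2} = \size{\pi'}$; a single $\lambda$ rule on $\pi_1$ followed by a single @ rule with $\pi_2$ then assembles $\pi$ with $\size{\pi} = \size{\pi'} + 1$. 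I would prove this auxiliary lemma by induction on $\tmtwo$: the variable cases are immediate (using \refcoro{minimal} when $\tmtwo = y \neq x$ to produce the empty-type, zero-size derivation of $\val$); the application case proceeds by two inductive invocations and merges the two resulting $\val$-derivations via the right-to-left direction of \reflemma{value}; and the abstraction case $\tmtwo = \la y.\tmthree$ (after $\alpha$-renaming so $y \notin \Fv{\val}$) applies the inductive hypothesis to each of the $k$ premises of the terminal $\lambda$ rule of $\pi'$ and combines the $k$ resulting derivations of $\val$ into a single one of type $\biguplus_{i=1}^{k} P_i$ via \reflemma{value}, with \reflemma{free} guaranteeing that each $y$-typing lives on the $\tmthree$-side of every decomposition. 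The abstraction commutation lemma \reflemma{commutation} then provides the bookkeeping needed to reconcile sizes in the sub-cases where the anti-$\betav$ (resp.~anti-$\sigma$) step unveils a $\lambda$--application structure that must be massaged into the shape expected by the reconstruction before the arithmetic $\size{\pi} = \size{\pi'} + 1$ (resp.~$\size{\pi} = \size{\pi'}$) can be verified.
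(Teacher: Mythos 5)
Your proposal is correct, but it takes a genuinely different route from the paper on the one case that matters, the root $\betav$-step. The paper does \emph{not} prove an anti-substitution lemma: for $\tm = (\la{\var}{\tmtwo})\val \rtobv \tmtwo\isub{\var}{\val}$ it runs a nested inner induction on $\tmtwo$, invoking the induction hypothesis on the smaller redexes $(\la{\var}{\tmthree})\val$ obtained by pushing the substitution under the top constructor of $\tmtwo$, and then uses \reflemma{commutation} to convert the resulting derivations of $\la{\vartwo}{(\la{\var}{\tmthree})\val}$ (resp.\ $((\la{\var}{\tmthree})\val)((\la{\var}{\tmfour})\val)$) into derivations of $(\la{\var}{\la{\vartwo}{\tmthree}})\val$ (resp.\ $(\la{\var}{\tmthree\tmfour})\val$) with the right size; this is exactly why \reflemma{commutation} is stated in the paper at all. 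Your route instead inverts \reflemma{substitution} directly: the anti-substitution lemma you sketch (decomposing $\Gamma$ as $\Gamma_1 \uplus \Gamma_2$ with $\concl{\pi_1}{\Gamma_1, \var\colon\! P}{\tmtwo}{Q}$, $\concl{\pi_2}{\Gamma_2}{\val}{P}$ and $\size{\pi_1}+\size{\pi_2}=\size{\pi'}$) is provable exactly as you describe, with \reflemma{value} merging the $\val$-derivations in the application and abstraction cases and \refcoro{minimal} handling the vacuous variable case, and once you have it the root case closes with one $\lambda$ and one $@$ rule, no commutation needed. The trade-off: the paper reuses its commutation lemmas and avoids an extra inverse lemma at the price of a slightly unusual nested induction on non-subterms; your version is the more standard argument in the non-idempotent literature and would render \reflemma{commutation} unnecessary for subject expansion, at the price of the multiset bookkeeping in the abstraction case of the auxiliary lemma. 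One small inaccuracy: your closing sentence still appeals to \reflemma{commutation} for ``reconciling sizes,'' but under your anti-substitution strategy it plays no role anywhere in the proof; this is harmless but should be dropped.
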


\begin{proof}
  \begin{enumerate}
    \item 
    By induction on $\tm \in \Lambda$.
    Cases:
    \begin{itemize}
      \item \emph{Step at the root}, \ie~$\tm = (\la\var\tmtwo)\val \rtobv \tmtwo\isub\var\val = \tmp$. 
      We proceed by induction on $\tmtwo \in \Lambda$.
      \begin{itemize}
        \item If $\tmtwo = \var$, then $\tmp = \val$ and $\concl{\pi'}{\Gamma}{\val}{Q}$, while $\tm = (\la{\var}{\var})\val$.
        We have the derivation
      \begin{equation*}
        \pi =
        \AxiomC{}
        \RightLabel{\footnotesize$\Ax$}
        \UnaryInfC{$\var \colon\! Q \vdash \var \colon\! Q$}
        \RightLabel{\footnotesize$\lambda$}
        \UnaryInfC{$\vdash \la{\var}{\var} \colon\! [\Pair{Q}{Q}]$}
        \AxiomC{$\ \vdots\, \pi'$}
        \noLine
        \UnaryInfC{$\Gamma \vdash \val \colon\! Q$}
        \RightLabel{\footnotesize$@$}
        \BinaryInfC{$\Gamma \vdash \tm \colon\! Q$}
        \DisplayProof
      \end{equation*}
      with $\size{\pi} = \size{\pi'} + 1$.
        
        \item If $\tmtwo = \vartwo \neq \var$ (we can suppose without loss of generality that $\var \notin \Fv{\val}$), then $\tmp = \vartwo$ and $\pi' = 
        \AxiomC{}
        \RightLabel{\footnotesize$\Ax$}
        \UnaryInfC{$\vartwo \colon\! Q \vdash \vartwo \colon\! Q$}\DisplayProof$
        with $\Gamma = \vartwo \colon\! Q$, while $\tm = (\la{\var}{\vartwo})\val$.
        Notice that $\size{\pi'} = 0$.
        We have:
	\begin{equation*}
	  \pi =
	  \AxiomC{}
	  \RightLabel{\footnotesize$\Ax$}
	  \UnaryInfC{$ \var \colon\! \emptymset, \vartwo \colon\! Q \vdash \vartwo \colon\! Q$}
	  \RightLabel{\footnotesize$\lambda$}
	  \UnaryInfC{$\vartwo \colon\! Q \vdash \la{\var}{\vartwo} \colon\! [\Pair{\emptymset}{Q}]$}
	  \AxiomC{}
	  \RightLabel{\footnotesize$\lambda$}
	  \UnaryInfC{$ \vdash \val \colon\! \emptymset$}
	  \RightLabel{\footnotesize$@$}
	  \BinaryInfC{$\Gamma \vdash \tm \colon\! Q$}
	  \DisplayProof
	\end{equation*}
	(notice that the rule $\lambda$ in $\pi$ has 0 premises) with $\size{\pi} = 1 = \size{\pi'} + 1$.
	
	\item If $\tmtwo = \la{\vartwo}{\tmthree}$ (we can suppose without loss of generality that $\vartwo \notin \Fv{\val} \cup \{\var\}$), then $\tmp = \la{\vartwo}{\tmthree \isub{\var}{\val}}$ and $\tm = (\la{\var}{\la{\vartwo}{\tmthree}})\val$.
	As $\concl{\pi'}{\Gamma}{\tmp}{Q}$,
	there are $n \in \nat$, positive types $P_1, Q_1, \dots, P_n, Q_n$, environments $\Gamma_1, \dots, \Gamma_n$ and derivations $\pi_1', \dots, \pi_n'$ such that $Q = [\Pair{P_1}{Q_1}, \dots, \Pair{P_n}{Q_n}]$ and
	\begin{equation*}
	  \pi' = 
	  \AxiomC{$\ \vdots\, \pi_1'$}
	  \noLine
	  \UnaryInfC{$\Gamma_1, \vartwo \colon\! P_1 \vdash \tmthree\isub{\var}{\val} \colon\! Q_1$}
	  \AxiomC{$\overset{n \in \nat}{\ldots}$}
	  \AxiomC{$\ \vdots\, \pi_n'$}
	  \noLine
	  \UnaryInfC{$\Gamma_n, \vartwo \colon\! P_n \vdash \tmthree\isub{\var}{\val} \colon\! Q_n$}
	  \RightLabel{\footnotesize$\lambda$}
	  \TrinaryInfC{$\Gamma \vdash \tmp \colon\! Q$}
	  \DisplayProof
	\end{equation*}
	where $\Gamma = \biguplus_{i=1}^n \Gamma_i$ and $\size{\pi'} = \sum_{i=1}^n \size{\pi_i'}$.
	Let $1 \leq i \leq n$: since $(\la{\var}{\tmthree})\val \rtobv \tmthree\isub{\var}{\val}$, then by \ih there is $\concl{\pi_i}{\Gamma_i, \vartwo \colon\! P_i}{(\la\var\tmthree)\val}{Q_i}$ with $\size{\pi_i} = \size{\pi_i'} + 1$.
	So, we set
	\begin{equation*}
	  \pi'' = 
	  \AxiomC{$\ \vdots\, \pi_1$}
	  \noLine
	  \UnaryInfC{$\Gamma_1, \vartwo \colon\! P_1 \vdash (\isub{\var}{\tmthree}){\val} \colon\! Q_1$}
	  \AxiomC{$\overset{n \in \nat}{\ldots}$}
	  \AxiomC{$\ \vdots\, \pi_n$}
	  \noLine
	  \UnaryInfC{$\Gamma_n, \vartwo \colon\! P_n \vdash (\la{\var}{\tmthree}){\val} \colon\! Q_n$}
	  \RightLabel{\footnotesize$\lambda$}
	  \TrinaryInfC{$\Gamma \vdash \la{\vartwo}{(\la{\var}{\tmthree})\val} \colon\! Q$}
	  \DisplayProof
	\end{equation*}
	where $\size{\pi''} = \sum_{i=1}^n \size{\pi_i} = \sum_{i=1}^n \size{\pi_i'} + n = \size{\pi'} + n$.
	According to \reflemmap{commutation}{abstraction}, there is a derivation $\concl{\pi}{\Gamma}{\tm}{Q}$ where $\size{\pi} = \size{\pi''} - n + 1 = \size{\pi'} + 1$.
	
	\item Finally, if $\tmtwo = \tmthree\tmfour$, then $\tmp = \tmthree\isub{\var}{\val} \tmfour\isub{\var}{\val}$ and $\tm = (\la{\var}{\tmthree\tmfour})\val$.
	Since $\concl{\pi'}{\Gamma}{\tmp}{Q}$, there are derivations $\pi_1'$ and $\pi_2'$, a positive type $P$, environments $\Gamma_1$ and $\Gamma_2$ (where $\Gamma = \Gamma_1 \uplus \Gamma_2$) such that
	\begin{equation*}
	  \pi' = 
	  \AxiomC{$\ \vdots\, \pi_1'$}
	  \noLine
	  \UnaryInfC{$\Gamma_1 \vdash \tmthree\isub{\var}{\val} \colon\! [\Pair{P}{Q}]$}
	  \AxiomC{$\ \vdots\, \pi_2'$}
	  \noLine
	  \UnaryInfC{$\Gamma_2 \vdash \tmfour\isub{\var}{\val} \colon\! P$}
	  \RightLabel{\footnotesize$@$}
	  \BinaryInfC{$\Gamma \vdash \tmp \colon\! Q$}
	  \DisplayProof
	\end{equation*}
	where $\size{\pi'} = \size{\pi_1'} + \size{\pi_2'} + 1$.
	Since $(\la{\var}{\tmthree})\val \rtobv \tmthree\isub{\var}{\val}$ and $(\la{\var}{\tmfour})\val \rtobv \tmfour\isub{\var}{\val}$, then by \ih there are $\concl{\pi_1}{\Gamma_1}{(\la{\var}{\tmthree})\val}{[\Pair{P}{Q}]}$ and $\concl{\pi_2}{\Gamma_2}{(\la{\var}{\tmfour})\val}{P}$ with $\size{\pi_1} = \size{\pi_1'} + 1$ and $\size{\pi_2} = \size{\pi_2'} + 1$.
	So, we set
	\begin{equation*}
	  \pi'' = 
	  \AxiomC{$\ \vdots\, \pi_1$}
	  \noLine
	  \UnaryInfC{$\Gamma_1 \vdash (\la{\var}{\tmthree})\val \colon\! [\Pair{P}{Q}]$}
	  \AxiomC{$\ \vdots\, \pi_2$}
	  \noLine
	  \UnaryInfC{$\Gamma_2 \vdash (\la{\var}{\tmfour})\val \colon\! P$}
	  \RightLabel{\footnotesize$@$}
	  \BinaryInfC{$\Gamma \vdash ((\la{\var}{\tmthree})\val)((\la{\var}{\tmfour})\val) \colon\! Q$}
	  \DisplayProof
	\end{equation*}
	where $\size{\pi''} = \size{\pi_1} + \size{\pi_2} + 1 = \size{\pi_1'} + \size{\pi_2'} + 3 = \size{\pi'} + 2$.
	According to \reflemmap{commutation}{application}, there is a derivation $\concl{\pi}{\Gamma}{\tm}{Q}$ with $\size{\pi} = \size{\pi''} - 1 = \size{\pi'} + 1$.
      \end{itemize}
      
      \item \emph{Application Left}, \ie~$\tm = \tmtwo\tmthree \tobvm \tmtwop\tmthree = \tmp$ with $\tmtwo \tobvm \tmtwop$: then,
      \begin{equation*}
        \pi' =
        \AxiomC{$\ \vdots\, \pi_1'$}
        \noLine
	\UnaryInfC{$\Gamma_1 \vdash \tmtwop \colon\! [\Pair{P}{Q}]$}
        \AxiomC{$\ \vdots\, \pi_2$}
        \noLine
        \UnaryInfC{$\Gamma_2 \vdash \tmthree \colon\! P$}
        \RightLabel{\footnotesize$@$}
        \BinaryInfC{$\Gamma \vdash \tmp \colon\! Q$}
        \DisplayProof
      \end{equation*}
      where $\Gamma = \Gamma_1 \uplus \Gamma_2$ and $\size{\pi'} = \size{\pi_1'} + \size{\pi_2} + 1$. 
      By induction hypothesis, there is $\concl{\pi_1}{\Gamma}{\tmtwo}{[\Pair{P}{Q}]}$ with $\size{\pi_1} = \size{\pi_1'} + 1$.
      So, there is
      \begin{equation*}
        \pi =
        \AxiomC{$\ \vdots\, \pi_1$}
        \noLine
	\UnaryInfC{$\Gamma_1 \vdash \tmtwo \colon\! [\Pair{P}{Q}]$}
        \AxiomC{$\ \vdots\, \pi_2$}
        \noLine
        \UnaryInfC{$\Gamma_2 \vdash \tmthree \colon\! P$}
        \RightLabel{\footnotesize$@$}
        \BinaryInfC{$\Gamma \vdash \tm \colon\! Q$}
        \DisplayProof
      \end{equation*}
      where $\size{\pi} = \size{\pi_1} + \size{\pi_2} + 1 = \size{\pi_1'} + 1 + \size{\pi_2} + 1 = \size{\pi'} + 1$.
      
      \item \emph{Application Right}, \ie~$\tm = \tmtwo\tmthree \tobvm \tmtwo\tmthreep = \tmp$ with $\tmthree \tobvm \tmthreep$: then,
      \begin{equation*}
        \pi' =
        \AxiomC{$\ \vdots\, \pi_1$}
        \noLine
	\UnaryInfC{$\Gamma_1 \vdash \tmtwo \colon\! [\Pair{P}{Q}]$}
        \AxiomC{$\ \vdots\, \pi_2'$}
        \noLine
        \UnaryInfC{$\Gamma_2 \vdash \tmthreep \colon\! P$}
        \RightLabel{\footnotesize$@$}
        \BinaryInfC{$\Gamma \vdash \tmp \colon\! Q$}
        \DisplayProof
      \end{equation*}
      where $\Gamma = \Gamma_1 \uplus \Gamma_2$ and $\size{\pi'} = \size{\pi_1} + \size{\pi_2'} + 1$. 
      By induction hypothesis, there exists $\concl{\pi_2}{\Gamma_2}{\tmthree}{P}$ with $\size{\pi_2} = \size{\pi_2'} + 1$.
      So, there is
      \begin{equation*}
        \pi =
        \AxiomC{$\ \vdots\, \pi_1$}
        \noLine
	\UnaryInfC{$\Gamma_1 \vdash \tmtwo \colon\! [\Pair{P}{Q}]$}
        \AxiomC{$\ \vdots\, \pi_2$}
        \noLine
        \UnaryInfC{$\Gamma_2 \vdash \tmthree \colon\! P$}
        \RightLabel{\footnotesize$@$}
        \BinaryInfC{$\Gamma \vdash \tm \colon\! Q$}
        \DisplayProof
      \end{equation*}
      where $\size{\pi} = \size{\pi_1} + \size{\pi_2} + 1 = \size{\pi_1} + \size{\pi_2'} +1 + 1 = \size{\pi'} + 1$.

      \item \emph{Step inside a $\beta$-redex}, \ie~$\tm = (\la\var\tmtwo)\tmthree \tobvm (\la\var\tmtwop)\tmthree = \tmp$ with $\tmtwo \tobvm \tmtwop$: then, 
      \begin{equation*}
        \pi' =
        \AxiomC{$\ \vdots\, \pi_1'$}
        \noLine
        \UnaryInfC{$\Gamma_1, \var \colon\! P \vdash \tmtwop \colon\! Q$}
        \RightLabel{\footnotesize$\lambda$}
        \UnaryInfC{$\Gamma_1 \vdash \la\var\tmtwop \colon\! [\Pair{P}{Q}]$}
        \AxiomC{$\ \vdots\, \pi_2$}
        \noLine
        \UnaryInfC{$\Gamma_2 \vdash \tmthree \colon\! P$}
        \RightLabel{\footnotesize$@$}
        \BinaryInfC{$\Gamma \vdash \tmp \colon\! Q$}
        \DisplayProof
      \end{equation*}
      where $\Gamma = \Gamma_1 \uplus \Gamma_2$ and $\size{\pi'} = \size{\pi_1'} + \size{\pi_2} + 1$. 
      By induction hypothesis, there exists $\concl{\pi_1}{\Gamma_1, \var \colon\! P}{\tmtwo}{Q}$ with $\size{\pi_1} = \size{\pi_1'} + 1$.
      So,~there~is 
      \begin{equation*}
        \pi =
        \AxiomC{$\ \vdots\, \pi_1$}
        \noLine
        \UnaryInfC{$\Gamma_1, \var \colon\! P \vdash \tmtwo \colon\! Q$}
        \RightLabel{\footnotesize$\lambda$}
        \UnaryInfC{$\Gamma_1 \vdash \la\var\tmtwo \colon\! [\Pair{P}{Q}]$}
        \AxiomC{$\ \vdots\, \pi_2$}
        \noLine
        \UnaryInfC{$\Delta \vdash \tmthree \colon\! P$}
        \RightLabel{\footnotesize$@$}
        \BinaryInfC{$\Gamma \vdash \tm \colon\! Q$}
        \DisplayProof
      \end{equation*}
      where $\size{\pi} = \size{\pi_1} + \size{\pi_2} + 1 = \size{\pi_1'} + 1 + \size{\pi_2} + 1 = \size{\pi'} + 1$. 
    \end{itemize}

    \item Since $\tosigm$ cannot reduce to a value,
    $\tm'$ is not a value and thus $\size{\pi'} > 0$ according to \refrmk{positive-size}.
    The proof that there exists $\concl{\pi}{\Gamma}{\tm}{Q}$ with $\size{\pi} = \size{\pi'}$ is by induction con $\tm \in \Lambda$.
    Cases:
    \begin{itemize}
      \item \emph{Step at the root}: there are two sub-cases:
      \begin{itemize}
        \item $\tm = (\la\var\tmtwo)\tmthree\tmfour \rtosl (\la\var\tmtwo\tmfour)\tmthree = \tmp$ with $\var \notin \Fv{\tmfour}$. 
        So, 
	\begin{equation*}
	  \pi' =
	  \AxiomC{$\ \vdots\, \pi_1$}
	  \noLine
	  \UnaryInfC{$\Gamma_1, \var \colon\! P \vdash \tmtwo \colon\! [\Pair{Q'}{Q}]$}
	  \AxiomC{$\ \vdots\, \pi_3$}
	  \noLine
	  \UnaryInfC{$\Gamma_3 \vdash \tmfour \colon\! Q'$}
	  \RightLabel{\footnotesize$@$}
	  \BinaryInfC{$\Gamma_1 \uplus \Gamma_3, \var \colon\! P \vdash \tmtwo\tmfour \colon\! Q$}
	  \RightLabel{\footnotesize$\lambda$}
	  \UnaryInfC{$\Gamma_1 \uplus \Gamma_3 \vdash \la\var{\tmtwo\tmfour} \colon\! [\Pair{P}{Q}]$}
	  \AxiomC{$\ \vdots\, \pi_2$}
	  \noLine
	  \UnaryInfC{$\Gamma_2 \vdash \tmthree \colon\! P$}
	  \RightLabel{\footnotesize$@$}
	  \BinaryInfC{$\Gamma \vdash \tmp \colon\! {Q}$}
	  \DisplayProof
	\end{equation*}
	with $\var \notin \Dom{\Gamma_3}$ (by \reflemma{free} since $\var \notin \Fv{\tmfour}$), $\Gamma = \Gamma_1 \uplus \Gamma_2 \uplus \Gamma_3$ and $\size{\pi'} = \size{\pi_1} + \size{\pi_2} + \size{\pi_3} + 2$. 
	Therefore, there is a derivation 
	\begin{equation*}
	  \pi =
	  \AxiomC{$\ \vdots\, \pi_1$}
	  \noLine
	  \UnaryInfC{$\Gamma_1, \var \colon\! P \vdash \tmtwo \colon\! [\Pair{Q'}{Q}]$}
	  \RightLabel{\footnotesize$\lambda$}
	  \UnaryInfC{$\Gamma_1 \vdash \la\var\tmtwo \colon\! [\Pair{P}{[\Pair{Q'}{Q}]}]$}
	  \AxiomC{$\ \vdots\, \pi_2$}
	  \noLine
	  \UnaryInfC{$\Gamma_2 \vdash \tmthree \colon\! P$}
	  \RightLabel{\footnotesize$@$}
	  \BinaryInfC{$\Gamma_1 \uplus \Gamma_2 \vdash \tm \colon\! [\Pair{Q'}{Q}]$}
	  \AxiomC{$\ \vdots\, \pi_3$}
	  \noLine
	  \UnaryInfC{$\Gamma_3 \vdash \tmfour \colon\! Q'$}
	  \RightLabel{\footnotesize$@$}
	  \BinaryInfC{$\Gamma \vdash \tm \colon\! Q$}
	  \DisplayProof
	\end{equation*}
	where $\size{\pi} = \size{\pi_1} + \size{\pi_3} + \size{\pi_2} + 2 = \size{\pi'}$.

	\item $\tm = \val((\la\var\tmtwo)\tmthree) \rtosr (\la\var\val\tmtwo)\tmthree = \tmp$ with $\var \notin \Fv{\val}$.
	Therefore,
	\begin{equation*}
	  \pi' =
	  \AxiomC{$\ \vdots\, \pi_1$}
	  \noLine
	  \UnaryInfC{$\Gamma_1 \vdash \val \colon\! [\Pair{Q'}{Q}]$}
	  \AxiomC{$\ \vdots\, \pi_3$}
	  \noLine
	  \UnaryInfC{$\Gamma_3, \var \colon\! P \vdash \tmtwo \colon\! Q'$}
	  \RightLabel{\footnotesize$@$}
	  \BinaryInfC{$\Gamma_1 \uplus \Gamma_3, \var \colon\! P \vdash \val\tmtwo \colon\! Q$}
	  \RightLabel{\footnotesize$\lambda$}
	  \UnaryInfC{$\Gamma_1 \uplus \Gamma_3 \vdash \la\var{\tmtwo\tmfour} \colon\! [\Pair{P}{Q}]$}
	  \AxiomC{$\ \vdots\, \pi_2$}
	  \noLine
	  \UnaryInfC{$\Gamma_2 \vdash \tmthree \colon\! P$}
	  \RightLabel{\footnotesize$@$}
	  \BinaryInfC{$\Gamma \vdash \tmp \colon\! {Q}$}
	  \DisplayProof
	\end{equation*}
	with $\var \notin \Dom{\Gamma_1}$ (by \reflemma{free} since $\var \notin \Fv{\val}$), $\Gamma = \Gamma_1 \uplus \Gamma_2 \uplus \Gamma_3$ and $\size{\pi'} = \size{\pi_1} + \size{\pi_2} + \size{\pi_3} + 2$. 
	Thus, there is a derivation 
	\begin{equation*}
	  \pi =
	  \AxiomC{$\ \vdots\, \pi_1$}
	  \noLine
	  \UnaryInfC{$\Gamma_1 \vdash \val \colon\! [\Pair{Q'}{Q}]$}
	  \AxiomC{$\ \vdots\, \pi_2$}
	  \noLine
	  \UnaryInfC{$\Gamma_2, \var \colon\! P \vdash \tmtwo \colon\! Q'$}
	  \RightLabel{\footnotesize$\lambda$}
	  \UnaryInfC{$\Gamma_2 \vdash \la\var\tmtwo \colon\! [\Pair{P}{Q'}]$}
	  \AxiomC{$\ \vdots\, \pi_3$}
	  \noLine
	  \UnaryInfC{$\Gamma_3 \vdash \tmthree \colon\! P$}
	  \RightLabel{\footnotesize$@$}
	  \BinaryInfC{$\Gamma_2 \uplus \Gamma_3 \vdash (\la\var\tmtwo)\tmthree \colon\! Q'$}
	  \RightLabel{\footnotesize$@$}
	  \BinaryInfC{$\Gamma \vdash \tm \colon\! Q$}
	  \DisplayProof
	\end{equation*}
	where $\size{\pi} = \size{\pi_1} + \size{\pi_3} + \size{\pi_2} + 2 = \size{\pi'}$.
      \end{itemize}

      \item \emph{Application Left}, \ie~$\tm = \tmtwo\tmthree \tosigm \tmtwop\tmthree = \tmp$ with $\tmtwo \tosigm \tmtwop$: then,
      \begin{equation*}
        \pi' =
        \AxiomC{$\ \vdots\, \pi_1'$}
        \noLine
	\UnaryInfC{$\Gamma_1 \vdash \tmtwop \colon\! [\Pair{P}{Q}]$}
        \AxiomC{$\ \vdots\, \pi_2$}
        \noLine
        \UnaryInfC{$\Gamma_2 \vdash \tmthree \colon\! P$}
        \RightLabel{\footnotesize$@$}
        \BinaryInfC{$\Gamma \vdash \tmp \colon\! Q$}
        \DisplayProof
      \end{equation*}
      where $\Gamma = \Gamma_1 \uplus \Gamma_2$ and $\size{\pi'} = \size{\pi_1'} + \size{\pi_2} + 1$. 
      By induction hypothesis, there exists $\concl{\pi_1}{\Gamma_1}{\tmtwop}{[\Pair{P}{Q}]}$ with $\size{\pi_1} = \size{\pi_1'}$.
      So, there is 
      \begin{equation*}
        \pi =
        \AxiomC{$\ \vdots\, \pi_1$}
        \noLine
	\UnaryInfC{$\Gamma_1 \vdash \tmtwo \colon\! [\Pair{P}{Q}]$}
        \AxiomC{$\ \vdots\, \pi_2$}
        \noLine
        \UnaryInfC{$\Gamma_2 \vdash \tmthree \colon\! P$}
        \RightLabel{\footnotesize$@$}
        \BinaryInfC{$\Gamma \vdash \tm \colon\! Q$}
        \DisplayProof
      \end{equation*}
      where $\size{\pi} = \size{\pi_1} + \size{\pi_2} + 1 = \size{\pi_1'} + \size{\pi_2} + 1 = \size{\pi'}$.
      
      \item \emph{Application Right}, \ie~$\tm = \tmtwo\tmthree \tosigm \tmtwo\tmthreep = \tmp$ with $\tmthree \tosigm \tmthreep$: then,
      \begin{equation*}
        \pi' =
        \AxiomC{$\ \vdots\, \pi_1$}
        \noLine
	\UnaryInfC{$\Gamma_1 \vdash \tmtwo \colon\! [\Pair{P}{Q}]$}
        \AxiomC{$\ \vdots\, \pi_2'$}
        \noLine
        \UnaryInfC{$\Gamma_2 \vdash \tmthreep \colon\! P$}
        \RightLabel{\footnotesize$@$}
        \BinaryInfC{$\Gamma \vdash \tmp \colon\! Q$}
        \DisplayProof
      \end{equation*}
      where $\Gamma = \Gamma_1 \uplus \Gamma_2$ and $\size{\pi'} = \size{\pi_1} + \size{\pi_2'} + 1$. 
      By induction hypothesis, there exists $\concl{\pi_2}{\Gamma_2}{\tmthree}{P}$ such that $\size{\pi_2} = \size{\pi_2'}$.
      So, there is 
      \begin{equation*}
        \pi =
        \AxiomC{$\ \vdots\, \pi_1$}
        \noLine
	\UnaryInfC{$\Gamma_1 \vdash \tmtwo \colon\! [\Pair{P}{Q}]$}
        \AxiomC{$\ \vdots\, \pi_2$}
        \noLine
        \UnaryInfC{$\Gamma_2 \vdash \tmthree \colon\! P$}
        \RightLabel{\footnotesize$@$}
        \BinaryInfC{$\Gamma \vdash \tm \colon\! Q$}
        \DisplayProof
      \end{equation*}
      where $\size{\pi} = \size{\pi_1} + \size{\pi_2} + 1 = \size{\pi_1} + \size{\pi_2'} + 1 = \size{\pi'}$.

      \item \emph{Step inside a $\beta$-redex}, \ie~$\tm = (\la\var\tmtwo)\tmthree \tosigm (\la\var\tmtwop)\tmthree = \tmp$ with $\tmtwo \tosigm \tmtwop$: then, 
      \begin{equation*}
        \pi' =
        \AxiomC{$\ \vdots\, \pi_1'$}
        \noLine
        \UnaryInfC{$\Gamma_1, \var \colon\! P \vdash \tmtwop \colon\! Q$}
        \RightLabel{\footnotesize$\lambda$}
        \UnaryInfC{$\Gamma_1 \vdash \la\var\tmtwop \colon\! [\Pair{P}{Q}]$}
        \AxiomC{$\ \vdots\, \pi_2$}
        \noLine
        \UnaryInfC{$\Gamma_2 \vdash \tmthree \colon\! P$}
        \RightLabel{\footnotesize$@$}
        \BinaryInfC{$\Gamma \vdash \tmp \colon\! Q$}
        \DisplayProof
      \end{equation*}
      where $\Gamma = \Gamma_1 \uplus \Gamma_2$ and $\size{\pi'} = \size{\pi_1} + \size{\pi_2} + 1$. 
      By induction hypothesis, there exists $\concl{\pi_1}{\Gamma_1, \var \colon\! P}{\tmtwo}{Q}$ with $\size{\pi_1} = \size{\pi_1'}$.
      So, there is
      \begin{equation*}
        \pi =
        \AxiomC{$\ \vdots\, \pi_1$}
        \noLine
        \UnaryInfC{$\Gamma_1, \var \colon\! P \vdash \tmtwo \colon\! Q$}
        \RightLabel{\footnotesize$\lambda$}
        \UnaryInfC{$\Gamma_1 \vdash \la\var\tmtwo \colon\! [\Pair{P}{Q}]$}
        \AxiomC{$\ \vdots\, \pi_2$}
        \noLine
        \UnaryInfC{$\Gamma_2 \vdash \tmthree \colon\! P$}
        \RightLabel{\footnotesize$@$}
        \BinaryInfC{$\Gamma \vdash \tm \colon\! Q$}
        \DisplayProof
      \end{equation*}
      where $\size{\pi} = \size{\pi_1} + \size{\pi_2} + 1 = \size{\pi_1'} + \size{\pi_2} + 1 = \size{\pi'}$. 
      \qedhere
    \end{itemize}
  \end{enumerate}
\end{proof}

\setcounter{lemmaAppendix}{\value{l:subject-reduction}}
\begin{lemmaAppendix}[Subject reduction]
\label{lappendix:subject-reduction}
\NoteState{l:subject-reduction}
  Let $\tm, \tmp \in \Lambda$ and $\concl{\pi}{\Gamma}{\tm}{Q}$.
  \begin{enumerate}
    \item\label{pappendix:subject-reduction-betav}\emph{Shrinkage under $\betav$-step:} If $\tm \tobv \tmp$ then there is $\concl{\pi'}{\Gamma}{\tmp}{Q}$ with $\size{\pi} \geq \size{\pi'}$.
    \item\label{pappendix:subject-reduction-sigma}\emph{Size invariance under $\sigma$-step:} If $\tm \tosig \tmp$ then there is $\concl{\pi'}{\Gamma}{\tmp}{Q}$ such that $\size{\pi} = \size{\pi'}$.
  \end{enumerate}
\end{lemmaAppendix}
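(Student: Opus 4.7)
The plan is to proceed by straightforward induction on $\tm \in \Lambda$, reusing essentially all of the casework from the proof of Proposition~\ref{prop:quant-subject-reduction}. A root step $\tm \rtobv \tmp$ is handled via the substitution lemma (\reflemma{substitution}), yielding a derivation $\pi'$ with $\size{\pi'} = \size{\pi} - 1 \leq \size{\pi}$; a root step $\tm \rtos \tmp$ is handled by direct repackaging of subderivations (using \reflemma{free} to ensure the side conditions on free variables are respected), yielding $\size{\pi'} = \size{\pi}$. The application-left, application-right, and ``inside a $\beta$-redex'' cases (where the step occurs inside $\tmtwo$, $\tmthree$, or $\tmtwo$ of $(\la{\var}{\tmtwo})\tmthree$) carry over verbatim from the balanced version, since balanced contexts already reach these positions, and one just applies the induction hypothesis to the relevant subderivation and reassembles.

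The only genuinely new case is reduction under an abstraction: $\tm = \la{\var}{\tmtwo} \Rew{\Rule} \la{\var}{\tmtwop} = \tmp$ with $\tmtwo \Rew{\Rule} \tmtwop$ and $\Rule \in \{\betav, \sigma\}$. In this case, the last rule of $\pi$ is an $n$-ary $\lambda$-rule, so there exist environments $\Gamma_1, \dots, \Gamma_n$, positive types $P_1, Q_1, \dots, P_n, Q_n$, and derivations $\concl{\pi_i}{\Gamma_i, \var \colon\! P_i}{\tmtwo}{Q_i}$ for $1 \leq i \leq n$, with $\Gamma = \biguplus_{i=1}^n \Gamma_i$, $Q = [\Pair{P_1}{Q_1}, \dots, \Pair{P_n}{Q_n}]$, and $\size{\pi} = \sum_{i=1}^n \size{\pi_i}$. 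Applying the induction hypothesis to each $\pi_i$ yields derivations $\concl{\pi'_i}{\Gamma_i, \var \colon\! P_i}{\tmtwop}{Q_i}$ with $\size{\pi_i} \geq \size{\pi'_i}$ in the $\betav$ case and $\size{\pi_i} = \size{\pi'_i}$ in the $\sigma$ case. Reassembling them via an $n$-ary $\lambda$-rule produces $\concl{\pi'}{\Gamma}{\la{\var}{\tmtwop}}{Q}$ with $\size{\pi'} = \sum_{i=1}^n \size{\pi'_i}$, hence $\size{\pi'} \leq \size{\pi}$ (equality in the $\sigma$ case). When $n = 0$ both $\pi$ and $\pi'$ collapse to a single $0$-ary $\lambda$-rule (independent of the body), both of size $0$, and the inequality holds trivially.

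There is no substantial obstacle: the argument is a mild extension of the balanced quantitative subject reduction. The only conceptually delicate point is that we lose the exact bound $\size{\pi'} = \size{\pi} - 1$ for $\betav$-steps, because when the step occurs under a $\lambda$ with $n$ premises the total size drops by $n$, which can be any natural number\,---\,including $0$ when $n = 0$, and arbitrarily large when $n$ is large. The derivations $\pi_k$ and $\pi_k'$ displayed right after the statement of \reflemma{subject-reduction} (typing $\la{\vartwo}{II}$ and $\la{\vartwo}{I}$ respectively, with $\la{\vartwo}{II} \toshuf \la{\vartwo}{I}$) already exhibit this phenomenon, which is precisely why the statement must be weakened to $\size{\pi} \geq \size{\pi'}$.
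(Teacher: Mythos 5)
Your proposal is correct and follows essentially the same route as the paper's proof: the paper likewise declares the argument ``analogous to the proofs of Proposition~\ref{prop:quant-subject-reduction}, paying attention that now the induction hypothesis is weaker,'' and then spells out only the new abstraction case, decomposing the $n$-ary $\lambda$-rule, applying the induction hypothesis to each of the $n$ premises, and reassembling to get $\size{\pi} \geq \size{\pi'}$ for $\betav$ and equality for $\sigma$. Your closing observation about why the exact decrement is lost (the drop is distributed over $n$ premises, with $n$ possibly $0$ or large) matches the discussion the paper gives via the derivations $\pi_k$ and $\pi_k'$.
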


\begin{proof}
  Analogous to the proofs of \refpropps{quant-subject-reduction}{betav}{sigma}, paying attention that now the induction hypothesis is weaker.
  The only novelty is the presence of the following case, since $\toshuf$ reduces under $\lambda$'s:
  \begin{itemize}
    \item \emph{Abstraction}, \ie $\tm = \la\var\tmtwo \to_{\Rule} \la\var\tmtwop = \tmp$ with $\tmtwo \to_{\Rule} \tmtwop$ and $\Rule \in \{\betav,\sigma\}$: then, 
      \begin{equation*}
        \pi =
        \AxiomC{$\ \vdots\, \pi_1$}
        \noLine
        \UnaryInfC{$\Gamma_1, \var \colon\! P_1 \vdash \tmtwo \colon\! Q_1$}
        \AxiomC{$\overset{n \in \nat}{\ldots}$}
        \AxiomC{$\ \vdots\, \pi_n$}
        \noLine
        \UnaryInfC{$\Gamma_n, \var \colon\! P_n \vdash \tmtwo \colon\! Q_n$}
        \RightLabel{\footnotesize$\lambda$}
        \TrinaryInfC{$\Gamma \vdash \tm \colon\! [\Pair{P_1}{Q_1}, \dots, \Pair{P_n}{Q_n}]$}
        \DisplayProof
      \end{equation*}
      where $\Gamma = \biguplus_{i=1}^n \Gamma_i$ and $\size{\pi} = \sum_{i=1}^n\size{\pi_i}$. 
      By induction hypothesis, for all $1 \leq i \leq n$ there is $\concl{\pi_i'}{\Gamma_i, \var \colon\! P_i}{\tmtwop}{Q_i}$ with $\size{\pi_i} \geq \size{\pi_i'}$ if $\Rule = \betav$, and $\size{\pi_i} = \size{\pi_i'}$ if $\Rule = \sigma$.
      So, there is
      \begin{equation*}
        \pi' =
        \AxiomC{$\ \vdots\, \pi_1'$}
        \noLine
        \UnaryInfC{$\Gamma_1, \var \colon\! P_1 \vdash \tmtwop \colon\! Q_1$}
        \AxiomC{${\ldots}$}
        \AxiomC{$\ \vdots\, \pi_n'$}
        \noLine
        \UnaryInfC{$\Gamma_n, \var \colon\! P_n \vdash \tmtwop \colon\! Q_n$}
        \RightLabel{\footnotesize$\lambda$}
        \TrinaryInfC{$\Gamma \vdash \tmp \colon\! [\Pair{P_1}{Q_1}, \dots, \Pair{P_n}{Q_n}]$}
        \DisplayProof
      \end{equation*}
      where $\size{\pi'} = \sum_{i=1}^n \size{\pi_i'}$. 
      Therefore, $\size{\pi} \geq \size{\pi'}$ if $\Rule = \betav$, and $\size{\pi} = \size{\pi'}$ if $\Rule = \sigma$.
      \qedhere
  \end{itemize}
\end{proof}

\setcounter{lemmaAppendix}{\value{l:subject-expansion}}
\begin{lemmaAppendix}[Subject expansion]
\label{lappendix:subject-expansion}
\NoteState{l:subject-expansion}
  Let $\tm, \tmp \in \Lambda$ and $\concl{\pi'}{\Gamma}{\tmp}{Q}$.
  \begin{enumerate}
    \item\label{pappendix:subject-expansion-betav}\emph{Enlargement under anti-$\betav$-step:} If $\tm \tobv \tmp$ then there is $\concl{\pi}{\Gamma}{\tm}{Q}$ with $\size{\pi} \geq \size{\pi'}$.
    \item\label{pappendix:subject-expansion-sigma}\emph{Size invariance under anti-$\sigma$-step:} If $\tm \tosig \tmp$ then there is $\concl{\pi}{\Gamma}{\tm}{Q}$ such that $\size{\pi} = \size{\pi'}$.
  \end{enumerate}
\end{lemmaAppendix}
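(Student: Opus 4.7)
The plan is to prove both points of \reflemma{subject-expansion} by structural induction on $\tm$, mimicking the case analysis of \refprop{quant-subject-expansion} (the balanced subject expansion) and adjoining one new case that handles reduction under an abstraction. All the cases already treated in \refprop{quant-subject-expansion}\,---\,step at the root (with its two sub-cases for $\sigma_1$ and $\sigma_3$), application left, application right, and step inside a $\beta$-redex\,---\,carry over essentially unchanged: $\pi'$ is decomposed along the same rules, the induction hypothesis is invoked on the corresponding immediate subterm, and $\pi$ is reassembled identically, using the substitution lemma \reflemma{substitution} for the root $\betav$-step and the abstraction-commutation lemma \reflemma{commutation} for the root $\sigma$-steps. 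Since those cases already yield $\size{\pi} = \size{\pi'} + 1$ for a $\betav$-step and $\size{\pi} = \size{\pi'}$ for a $\sigma$-step, they \emph{a fortiori} satisfy the weaker bounds required here.

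The genuinely new case is \textbf{Abstraction}: $\tm = \la{\var}{\tmtwo} \to_\Rule \la{\var}{\tmtwop} = \tmp$ with $\tmtwo \to_\Rule \tmtwop$ and $\Rule \in \{\betav, \sigma\}$. Here $\pi'$ ends with an $n$-ary $\lambda$ rule whose premises are derivations $\pi_i'$ of $\Gamma_i, \var \colon\! P_i \vdash \tmtwop \colon\! Q_i$ for $1 \leq i \leq n$, with $\Gamma = \biguplus_{i=1}^n \Gamma_i$, $Q = [\Pair{P_1}{Q_1}, \dots, \Pair{P_n}{Q_n}]$, and $\size{\pi'} = \sum_{i=1}^n \size{\pi_i'}$. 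Applying the induction hypothesis on the strict subterm $\tmtwo$ of $\tm$ to each $\pi_i'$ yields a derivation $\pi_i$ of $\Gamma_i, \var \colon\! P_i \vdash \tmtwo \colon\! Q_i$ with $\size{\pi_i} \geq \size{\pi_i'}$ when $\Rule = \betav$ and $\size{\pi_i} = \size{\pi_i'}$ when $\Rule = \sigma$. Recomposing these via a single $\lambda$ rule produces the required $\pi$ of $\Gamma \vdash \tm \colon\! Q$ of size $\sum_{i=1}^n \size{\pi_i}$, whence $\size{\pi} \geq \size{\pi'}$ in the $\betav$ case and $\size{\pi} = \size{\pi'}$ in the $\sigma$ case.

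The main subtlety, and precisely the reason the size relation must be relaxed from $=$ to $\geq$ for $\betav$-steps, is the degenerate subcase $n = 0$ of the Abstraction case: then $\pi'$ is a $0$-ary $\lambda$ rule typing $\la{\var}{\tmtwop}$ with $\emptymset$, so $\size{\pi'} = 0$, and there are no premises on which to invoke the induction hypothesis to generate the expected $+1$ increment. Taking $\pi$ to be the analogous $0$-ary $\lambda$ rule typing $\la{\var}{\tmtwo}$ with $\emptymset$ gives $\size{\pi} = 0 = \size{\pi'}$, which satisfies $\size{\pi} \geq \size{\pi'}$ but fails the strict bound $\size{\pi} = \size{\pi'} + 1$ valid in the balanced setting. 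This matches the phenomenon exhibited after the statement of \reflemma{subject-expansion} in the body of the paper via the family $\la{\vartwo}{II} \toshuf \la{\vartwo}{I}$, where the gap between the sizes of derivations of $\la{\vartwo}{II}$ and of $\la{\vartwo}{I}$ with the same type and environment can be made arbitrarily large, or even zero.
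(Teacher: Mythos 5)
Your proof is correct and follows essentially the same route as the paper's: the paper likewise reduces to the case analysis of the balanced subject expansion (\refprop{quant-subject-expansion}) with the weakened induction hypothesis, and adds only the new Abstraction case, handled exactly as you do by decomposing the $n$-ary $\lambda$ rule, applying the induction hypothesis to each premise, and recomposing. Your observation that the $0$-ary $\lambda$ subcase is what forces the relaxation from $\size{\pi} = \size{\pi'} + 1$ to $\size{\pi} \geq \size{\pi'}$ matches the paper's own discussion of the example $\la{\vartwo}{II} \toshuf \la{\vartwo}{I}$.
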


\begin{proof}
  Analogous to the proofs of \refpropps{quant-subject-expansion}{betav}{sigma}, paying attention that now the induction hypothesis is weaker.
  The only novelty is the presence of the following case, since $\toshuf$ reduces under $\lambda$'s:
  \begin{itemize}
    \item \emph{Abstraction}, \ie $\tm = \la\var\tmtwo \to_{\Rule} \la\var\tmtwop = \tmp$ with $\tmtwo \to_{\Rule} \tmtwop$ and $\Rule \in \{\betav,\sigma\}$: then, 
      \begin{equation*}
        \pi' =
        \AxiomC{$\ \vdots\, \pi_1'$}
        \noLine
        \UnaryInfC{$\Gamma_1, \var \colon\! P_1 \vdash \tmtwop \colon\! Q_1$}
        \AxiomC{$\overset{n \in \nat}{\ldots}$}
        \AxiomC{$\ \vdots\, \pi_n'$}
        \noLine
        \UnaryInfC{$\Gamma_n, \var \colon\! P_n \vdash \tmtwop \colon\! Q_n$}
        \RightLabel{\footnotesize$\lambda$}
        \TrinaryInfC{$\Gamma \vdash \tmp \colon\! [\Pair{P_1}{Q_1}, \dots, \Pair{P_n}{Q_n}]$}
        \DisplayProof
      \end{equation*}
      where $\Gamma = \biguplus_{i=1}^n \Gamma_i$ and $\size{\pi'} = \sum_{i=1}^n\size{\pi_i'}$. 
      By induction hypothesis, for all $1 \leq i \leq n$ there is $\concl{\pi_i}{\Gamma_i, \var \colon\! P_i}{\tmtwo}{Q_i}$ with $\size{\pi_i} \geq \size{\pi_i'}$ if $\Rule = \betav$, and $\size{\pi_i} = \size{\pi_i'}$ if $\Rule = \sigma$.
      So, there is
      \begin{equation*}
        \pi =
        \AxiomC{$\ \vdots\, \pi_1$}
        \noLine
        \UnaryInfC{$\Gamma_1, \var \colon\! P_1 \vdash \tmtwo \colon\! Q_1$}
        \AxiomC{${\ldots}$}
        \AxiomC{$\ \vdots\, \pi_n$}
        \noLine
        \UnaryInfC{$\Gamma_n, \var \colon\! P_n \vdash \tmtwo \colon\! Q_n$}
        \RightLabel{\footnotesize$\lambda$}
        \TrinaryInfC{$\Gamma \vdash \tm \colon\! [\Pair{P_1}{Q_1}, \dots, \Pair{P_n}{Q_n}]$}
        \DisplayProof
      \end{equation*}
      where $\size{\pi} = \sum_{i=1}^n \size{\pi_i}$. 
      Therefore, $\size{\pi} \geq \size{\pi'}$ if $\Rule = \betav$, and $\size{\pi} = \size{\pi'}$ if $\Rule = \sigma$.
      \qedhere
  \end{itemize}
\end{proof}

\subsection{Omitted proofs and remarks of Section~\ref{sect:qualitative-semantics}}

\setcounter{theoremAppendix}{\value{thm:invariance}}
\begin{theoremAppendix}[Invariance under $\shuf$-equivalence]
\label{thmappendix:invariance}
\NoteState{thm:invariance}
  Let $\tm, \tmtwo \in \Lambda$, let $k \in \nat$ and let $\vec{\var} = (\var_1, \dots, \var_k)$ be a suitable list of variables for $\tm$ and $\tmtwo$.
  If $\tm \shufeq \tmtwo$ then $\sem{\tm}{\vec{\var}} = \sem{\tmtwo}{\vec{\var}}$.
\end{theoremAppendix}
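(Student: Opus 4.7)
The plan is to derive the theorem as a straightforward consequence of the subject reduction and subject expansion lemmas (\reflemmas{subject-reduction}{subject-expansion}), which together say that a single $\shuf$-step preserves derivable judgments (forgetting the size of the derivation).

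First I would prove invariance under a single $\shuf$-step: namely, that $\tm \to_\shuf \tmp$ implies $\sem{\tm}{\vec{\var}} = \sem{\tmp}{\vec{\var}}$. For one inclusion, pick $((P_1,\dots,P_k),Q) \in \sem{\tm}{\vec{\var}}$; by definition there is a derivation $\concl{\pi}{\var_1\!:\!P_1,\dots,\var_k\!:\!P_k}{\tm}{Q}$. Since $\to_\shuf{}={}\to_{\betav}\cup\to_\sigma$, \reflemmap{subject-reduction}{betav} or \reflemmap{subject-reduction}{sigma} applies and yields a derivation $\concl{\pi'}{\var_1\!:\!P_1,\dots,\var_k\!:\!P_k}{\tmp}{Q}$, so $((P_1,\dots,P_k),Q) \in \sem{\tmp}{\vec{\var}}$. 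The reverse inclusion is symmetric, using subject expansion (\reflemmap{subject-expansion}{betav} or \reflemmap{subject-expansion}{sigma}) in place of subject reduction. Note that the suitability of $\vec{\var}$ is preserved in both directions because a $\shuf$-step never introduces new free variables, and if $\tmp$ reintroduces a variable that disappeared in $\tm$, it is already among the $\var_i$'s by hypothesis.

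Second, I would extend from single steps to the full equivalence. By definition, $\shufeq$ is the reflexive-transitive and symmetric closure of $\to_\shuf$, so $\tm \shufeq \tmtwo$ means there exists a finite zig-zag $\tm = \tm_0 \;{\leftrightarrow_\shuf}\; \tm_1 \;{\leftrightarrow_\shuf}\; \cdots \;{\leftrightarrow_\shuf}\; \tm_n = \tmtwo$ (where $\leftrightarrow_\shuf$ denotes $\to_\shuf \cup \MRevTo{\shuf}$ for a single step). A simple induction on $n$, applying the single-step invariance at each link (and noting that $\vec{\var}$ remains suitable for every intermediate $\tm_i$ by the same argument as above), yields $\sem{\tm}{\vec{\var}} = \sem{\tm_0}{\vec{\var}} = \cdots = \sem{\tm_n}{\vec{\var}} = \sem{\tmtwo}{\vec{\var}}$.

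There is no real obstacle: the heavy lifting was done in proving subject reduction and expansion, and the only point requiring a little care is verifying that the environment signature $(\var_1\!:\!P_1,\dots,\var_k\!:\!P_k)$ is meaningful for every intermediate term in the zig-zag, which holds because suitability of $\vec{\var}$ is preserved under both $\to_\shuf$ and $\MRevTo{\shuf}$ when $\vec{\var}$ is suitable for both endpoints. Alternatively, one could observe by \reflemma{free} that any type derivation involves only free variables of the typed term, so restricting or padding $\vec{\var}$ along the chain poses no problem.
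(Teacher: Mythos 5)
Your proposal is correct and follows essentially the same route as the paper: the paper's proof likewise unfolds $\shufeq$ into a finite zig-zag of $\toshuf$-steps and closes it by induction, applying subject reduction (\reflemma{subject-reduction}) for forward links and subject expansion (\reflemma{subject-expansion}) for backward links. Your closing remark is the right way to handle the one subtlety you touch on --- an intermediate ``peak'' of the zig-zag may have free variables outside $\vec{\var}$, so one should propagate the derivable judgments themselves (whose environments are unchanged by both lemmas, and whose domains are controlled by \reflemma{free}) rather than the interpretations of the intermediate terms.
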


\begin{proof}
  Since $\tm \shufeq \tmtwo$, there exist $q \in \nat$ and $\tm_0, \dots, \tm_q$ such that $\tm = \tm_0$, $\tmtwo = \tm_q$ and $\tm_i \toshuf \tm_{i+1}$ or $\tm_{i+1} \toshuf \tm_i$, for all $0 \leq i < q$.
  Using subject reduction (\reflemma{subject-reduction}) and subject expansion (\reflemma{subject-expansion}), it is immediate to prove by induction on $q \in \nat$ that $\sem{\tm}{\vec{\var}} = \sem{\tmtwo}{\vec{\var}}$.
\end{proof}

\setcounter{lemmaAppendix}{\value{l:semantics-normal}}
\begin{lemmaAppendix}[Semantics and typability of $\shufm$-normal forms]
\label{lappendix:semantics-normal}
\NoteState{l:semantics-normal}
  Let $\tm$ be a term, let $k \in \nat$ and let $\vec{\var} = (\var_1, \dots, \var_k)$ be a list of variables suitable for $\tm$.
  \begin{enumerate}
    \item\label{pappendix:semantics-normal-anf} If $\tm \in \anfSet$ then for every positive type $Q$ there exist positive types $P_1, \dots, P_k$ and a derivation $\concl{\pi}{\var_1 \colon\! P_1, \dots, \var_k \colon\!  P_k}{\tm}{Q}$.
    \item\label{pappendix:semantics-normal-wnf} If $\tm \in \wnfSet$ then there are positive types $Q, P_1, \dots, P_k$ and a derivation $\concl{\pi}{\var_1 \colon\! P_1, \dots, \var_k \colon\!  P_k\allowbreak}{\tm}{Q}$.
    \item\label{pappendix:semantics-normal-nonempty} If $\tm$ is $\shufm$-normal then $\sem{\tm}{\vec{\var}} \neq \emptyset$.
  \end{enumerate}
\end{lemmaAppendix}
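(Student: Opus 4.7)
The plan is to prove Parts~\ref{p:semantics-normal-anf} and~\ref{p:semantics-normal-wnf} together by mutual induction on the grammars of $\anfSet$ and $\wnfSet$, and then deduce Part~\ref{p:semantics-normal-nonempty} immediately from Part~\ref{p:semantics-normal-wnf} combined with the syntactic characterization (\refprop{syntactic-normal}) that identifies $\shufm$-normal forms with $\wnfSet$. The distinction between the two statements is essential: for $\tm \in \anfSet$ the target type $Q$ is \emph{arbitrary} (because the head variable of $\tm$ can receive any type of the shape $[\Pair{\cdot}{Q}]$), whereas for $\tm \in \wnfSet$ only \emph{some} $Q$ is guaranteed. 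The environment issue with respect to the suitable list $\vec{\var}$ is not a real obstacle: any variable in $\vec\var$ not used by the derivation can be padded with the neutral element $\emptymset$.

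For Part~\ref{p:semantics-normal-anf}, I would split on the three productions $\var\val \mid \var\anf \mid \anf\wnf$. In the case $\tm = \var\val$, I use \refcoro{minimal} (equivalently, \reflemma{value} with $p=0$) to obtain a derivation $\vdash \val\colon\emptymset$ with empty environment, combine it with an axiom $\var\colon[\Pair{\emptymset}{Q}] \vdash \var\colon[\Pair{\emptymset}{Q}]$, and conclude by rule $@$. In the case $\tm = \var\anf$, I invoke the inductive hypothesis for $\anf$ choosing the arbitrary target, say $\emptymset$, to get $\Gamma \vdash \anf\colon\emptymset$, then combine with an axiom typing $\var\colon[\Pair{\emptymset}{Q}]$ and apply rule~$@$. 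In the case $\tm = \anf\wnf$, I first use Part~\ref{p:semantics-normal-wnf} (inductive hypothesis on $\wnf$) to obtain \emph{some} derivation $\Gamma' \vdash \wnf\colon P$, then use Part~\ref{p:semantics-normal-anf} (inductive hypothesis on $\anf$) with the specific target $[\Pair{P}{Q}]$, and finish with an $@$ rule.

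For Part~\ref{p:semantics-normal-wnf}, the cases $\wnf = \val$ and $\wnf = \anf$ are discharged immediately: the former by \refcoro{minimal}, the latter by Part~\ref{p:semantics-normal-anf}. The interesting case is $\wnf = (\la\var{\wnf'})\anf$: by the inductive hypothesis on $\wnf'$ (Part~\ref{p:semantics-normal-wnf}) one obtains $\Gamma, \var\colon P \vdash \wnf'\colon Q$ for some $P, Q$; a single $\lambda$-rule (with $n=1$) yields $\Gamma \vdash \la\var{\wnf'}\colon[\Pair{P}{Q}]$; then the inductive hypothesis on $\anf$ (Part~\ref{p:semantics-normal-anf}) applied with the specific target $P$ produces $\Gamma'' \vdash \anf\colon P$, and an $@$ rule concludes. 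Finally, Part~\ref{p:semantics-normal-nonempty} is immediate: if $\tm$ is $\shufm$-normal, \refprop{syntactic-normal} gives $\tm \in \wnfSet$, so Part~\ref{p:semantics-normal-wnf} furnishes a derivation, hence a point in $\sem{\tm}{\vec\var}$.

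The main obstacle is the \emph{asymmetry} between the two statements and the need to thread it through the mutual induction correctly: Part~\ref{p:semantics-normal-anf} must guarantee typability at any prescribed output type, while Part~\ref{p:semantics-normal-wnf} can only produce whatever type the recursive construction yields. The case $\anf = \anf'\wnf$ is where this asymmetry bites, because the type $P$ used for typing $\wnf$ is not under our control and must be fed back as an input constraint into the stronger statement for $\anf'$. The other delicate point is the base case $\tm = \var\val$, where the value $\val$ may itself contain free variables; here the use of \refcoro{minimal} is crucial, as it provides a derivation with empty environment and no constraints on the value's type, allowing the resulting environment to remain compatible with any suitable list $\vec\var$ after padding with $\emptymset$ entries.
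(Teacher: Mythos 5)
Your proposal is correct and follows essentially the same route as the paper: mutual induction on the grammars of $\anfSet$ and $\wnfSet$, with the strengthened ``any target type $Q$'' statement for $\anfSet$, the use of \refcoro{minimal} for values and for the base case $\var\val$, and Part~\ref{p:semantics-normal-nonempty} obtained directly from Part~\ref{p:semantics-normal-wnf} via \refprop{syntactic-normal}. The case analysis and the way the asymmetry is threaded through the induction match the paper's proof exactly.
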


\begin{proof}
  \refpoint{semantics-normal-nonempty} is an immediate consequence of \refpoint{semantics-normal-wnf} via the syntactic characterization of $\shufm$-normal forms (\refprop{syntactic-normal}).
  
  We prove simultaneously \refpoints{semantics-normal-anf}{semantics-normal-wnf} by mutual induction on $\tm \in \anfSet \cup \wnfSet$.
  
  Cases for $\tm \in \anfSet$:
  \begin{itemize}
    \item $\tm = \var\val$ for some variable $\var$ and value $\val$:
    since $\vec{\var}$ is suitable for $\tm$, $\var = \var_i$ for some $1 \leq i \leq k$.
    According to \refcor{minimal} there is a derivation $\concl{\pi'}{\,}{\val}{\emptymset}$; 
    thus, for any positive type $Q$ there exists the derivation
    \begin{equation*}
      \pi =
      \AxiomC{}
      \RightLabel{\footnotesize{$\mathsf{ax}$}}
      \UnaryInfC{$\var_i \colon\! [\Pair{\emptymset}{Q}] \vdash \var_i \colon\! [\Pair{\emptymset}{Q}]$}
      \AxiomC{$\ \vdots\,\pi'$}
      \noLine
      \UnaryInfC{$ \vdash \val \colon\! \emptymset$}
      \RightLabel{\footnotesize{$@$}}
      \BinaryInfC{$\ \var_i \colon\! [\Pair{\emptymset}{Q}] \vdash \tm \colon\! Q$}
      \DisplayProof
    \end{equation*}
    \item $\tm = \var\anf$ for some variable $\var$ and $\anf \in \anfSet$:
    since $\vec{\var}$ is suitable for $\tm$, one has $\var = \var_i$ for some $1 \leq i \leq k$.
    By \ih, there exists a derivation $\concl{\pi'}{\var_1 \colon\! P_1, \dots, \var_k \colon\! P_k}{\anf}{[\,]}$ for some positive types $P_1, \dots, P_k$.
    So, for any positive type $Q$ there exists a derivation
    \begin{equation*}
      \pi =
      \AxiomC{}
      \RightLabel{\footnotesize{$\mathsf{ax}$}}
      \UnaryInfC{$\var_i \colon\! [\Pair{\emptymset}{Q}] \vdash \var_i \colon\! [\Pair{\emptymset}{Q}]$}
      \AxiomC{$\ \vdots\,\pi'$}
      \noLine
      \UnaryInfC{$\var_1 \colon\! P_1, \dots, \var_k \colon\! P_k \vdash \anf \colon\! [\,]$}
      \RightLabel{\footnotesize{$@$}}
      \BinaryInfC{$\var_1 \colon\! P_1, \dots, \var_i \colon\! [\Pair{[\,]}{Q}] \uplus P_i, \dots, \var_k \colon\! P_k \vdash \tm \colon\! Q$}
      \DisplayProof
    \end{equation*}
    \item $\tm = \anf\wnf$ for some $\anf \in \anfSet$ and $\wnf \in \wnfSet$:
    by \ih applied to $\wnf$, there is a derivation $\concl{\pi''}{\var_1 \colon\! P_1, \dots, \var_k \colon\! P_k}{\wnf}{P}$ for some positive types $P, P_1, \dots, P_k$.
    Given a positive type $Q$, by \ih applied to $\anf$, there is a derivation $\concl{\pi'}{\var_1 \colon\! P_1', \dots, \var_k \colon\! P_k'}{\anf}{[\Pair{P}{Q}]}$ for some positive types $P_1', \dots, P_k'$.
    So, there is a derivation
    \begin{equation*}
      \pi =
      \AxiomC{$\ \vdots\,\pi'$}
      \noLine
      \UnaryInfC{$\var_1 \colon\! P_1', \dots, \var_k \colon\! P_k' \vdash \anf \colon\! [\Pair{P}{Q}]$}
      \AxiomC{$\ \vdots\,\pi''$}
      \noLine
      \UnaryInfC{$\var_1 \colon\! P_1, \dots, \var_k \colon\! P_k \vdash \wnf \colon\! P$}
      \RightLabel{\footnotesize{$@$}}
      \BinaryInfC{$\var_1 \colon\! P_1 \uplus P_1', \dots, \var_k \colon\! P_k \uplus P_k' \vdash \tm \colon\! Q$}
      \DisplayProof.
    \end{equation*}
  \end{itemize}
  
  This completes the case analysis for $\tm \in \anfSet$.
  Cases for $\tm \in \wnfSet$:
  \begin{itemize}
    \item $\tm \in \anfSet$: see above.
    \item $\tm$ is a value: the statement follows from \refcor{minimal}, taking $P_1 = \ldots = P_k = Q = \emptymset$.
    \item $\tm = (\la{\var}{\wnf})\anf$ for some $\wnf \in \wnfSet$ and $\anf \in \anfSet$:
    by \ih~applied to $\wnf$, there is $\concl{\pi'}{\var_1 \colon\! P_1', \dots, \var_k \colon\! P_k', \var \colon\! P}{\wnf}{Q}$ for some positive types $P_1', \dots, P_k', P, Q$.
    By \ih~applied to $\anf$, there exists a derivation $\concl{\pi''}{\var_1 \colon\! P_1, \dots, \var_k \colon\! P_k}{\anf}{P}$ for some positive types $P_1, \dots, P_k$.
    Therefore, there is a derivation
    \begin{equation*}
      \pi =
      \AxiomC{$\ \vdots\,\pi'$}
      \noLine
      \UnaryInfC{$\var_1 \colon\! P_1', \dots, \var_k \colon\! P_k', \var \colon\! P \vdash \wnf \colon\! Q$}
      \RightLabel{\footnotesize{$\lambda$}}
      \UnaryInfC{$\var_1 \colon\! P_1', \dots, \var_k \colon\! P_k' \vdash \la{\var}{\wnf} \colon\! [\Pair{P}{Q}]$}
      \AxiomC{$\ \vdots\,\pi''$}
      \noLine
      \UnaryInfC{$\var_1 \colon\! P_1, \dots, \var_k \colon\! P_k \vdash \anf \colon\! P$}
      \RightLabel{\footnotesize{$@$}}
      \BinaryInfC{$\var_1 \colon\! P_1 \uplus P_1', \dots, \var_k \colon\! P_k \uplus P_k' \vdash \tm \colon\! Q$}
      \DisplayProof.
    \end{equation*}    
    \qedhere
  \end{itemize}
\end{proof}

\setcounter{theoremAppendix}{\value{thm:characterize-normalizable}}
\begin{theoremAppendix}[Semantic and logic characterization of $\shufm$-normalization]
\label{thmappendix:characterize-normalizable}
\NoteState{thm:characterize-normalizable}
  Let $\tm \in \Lambda$ and let $\vec{\var} = (\var_1, \dots, \var_k)$ be a suitable list of variables for $\tm$.
  The following are equivalent:
  \begin{enumerate}
    \item\label{thmappendix:characterize-normalizable-normalizable} $\tm$ is $\shufm$-normalizable;
    \item\label{thmappendix:characterize-normalizable-equivalen-normal} $\tm \shufeq \tmtwo$ for some $\shufm$-normal $\tmtwo \in \Lambda$;
    \item\label{thmappendix:characterize-normalizable-nonempty} $\sem{\tm}{\vec{\var}} \neq \emptyset$;
    \item\label{thmappendix:characterize-normalizable-derivable} there exists a derivation $\concl{\pi}{\var_1 \colon\! P_1, \dots, \var_n \colon\! P_n}{\tm}{Q}$ for some positive types $P_1, \dots, P_n, Q$;
    \item\label{thmappendix:characterize-normalizable-strongly-normalizable} $\tm$ is strongly $\shufm$-normalizable.
  \end{enumerate}
\end{theoremAppendix}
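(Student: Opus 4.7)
\medskip

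\noindent\textbf{Proof plan.} The five statements form a cycle, but it is more convenient to aim for the fork
\eqref{thmappendix:characterize-normalizable-strongly-normalizable} $\Rightarrow$ \eqref{thmappendix:characterize-normalizable-normalizable} $\Rightarrow$ \eqref{thmappendix:characterize-normalizable-equivalen-normal} $\Rightarrow$ \eqref{thmappendix:characterize-normalizable-nonempty} $\Leftrightarrow$ \eqref{thmappendix:characterize-normalizable-derivable} $\Rightarrow$ \eqref{thmappendix:characterize-normalizable-strongly-normalizable},
closing the circle via \eqref{thmappendix:characterize-normalizable-derivable} $\Rightarrow$ \eqref{thmappendix:characterize-normalizable-strongly-normalizable}, which is the genuine content. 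The very first block is almost free: \eqref{thmappendix:characterize-normalizable-strongly-normalizable} $\Rightarrow$ \eqref{thmappendix:characterize-normalizable-normalizable} is by definition (a strongly normalizing term is in particular normalizable), \eqref{thmappendix:characterize-normalizable-normalizable} $\Rightarrow$ \eqref{thmappendix:characterize-normalizable-equivalen-normal} is immediate as any $\shufm$-reduction sequence $\tm \toshufm^* \tmtwo$ to a $\shufm$-normal form gives $\tm \shufeq \tmtwo$ (since $\toshufm \subseteq \toshuf$), and \eqref{thmappendix:characterize-normalizable-nonempty} $\Leftrightarrow$ \eqref{thmappendix:characterize-normalizable-derivable} is nothing but the definition of $\sem{\tm}{\vec{\var}}$.

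For \eqref{thmappendix:characterize-normalizable-equivalen-normal} $\Rightarrow$ \eqref{thmappendix:characterize-normalizable-nonempty}, I would take a $\shufm$-normal $\tmtwo$ with $\tm \shufeq \tmtwo$; invariance under $\shuf$-equivalence (\refthm{invariance}) gives $\sem{\tm}{\vec{\var}} = \sem{\tmtwo}{\vec{\var}}$, and the latter is non-empty by \reflemmap{semantics-normal}{nonempty} (taking, if needed, a list $\vec{\var}$ extending the free variables of $\tmtwo$, which can always be done up to adding empty types to the environment).

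The real work is \eqref{thmappendix:characterize-normalizable-derivable} $\Rightarrow$ \eqref{thmappendix:characterize-normalizable-strongly-normalizable}. The plan is to combine the quantitative balanced subject reduction (\refprop{quant-subject-reduction}) with the strong normalization of $\sigm$ (which follows from \refprop{general-properties}, since $\tosigm \subseteq \tosig$). Fix a derivation $\Type{\pi}{\tm}$ and suppose, for contradiction, that there is an infinite $\shufm$-reduction sequence starting from $\tm$. By \refpropp{quant-subject-reduction}{betav}, each $\betavm$-step strictly decreases the size of the associated type derivation by $1$, while by \refpropp{quant-subject-reduction}{sigma} each $\sigm$-step leaves the size unchanged; since sizes are non-negative integers, the total number of $\betavm$-steps in this infinite sequence is bounded by $\size{\pi}$. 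Hence from some point on the sequence consists only of $\sigm$-steps, contradicting strong normalization of $\sigm$. Therefore every $\shufm$-reduction from $\tm$ is finite, i.e.\ $\tm$ is strongly $\shufm$-normalizable.

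The only subtle point I foresee is making sure the ``lexicographic'' argument in the last implication is stated cleanly: the type derivation tracking the $\betavm$-count is not preserved canonically by $\sigm$-steps, but \refprop{quant-subject-reduction} explicitly produces one of the same size after each $\sigm$-step and of size strictly smaller after each $\betavm$-step, so one can pick these witnesses along the reduction and read off the bound on $\betavm$-steps, after which strong normalization of $\sigm$ closes the case. All other implications are essentially bookkeeping.
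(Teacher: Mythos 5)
Your proposal is correct and follows essentially the same route as the paper: the same cycle of implications, with \eqref{thmappendix:characterize-normalizable-equivalen-normal}$\Rightarrow$\eqref{thmappendix:characterize-normalizable-nonempty} via \refthm{invariance} and \reflemmap{semantics-normal}{nonempty}, and the key implication \eqref{thmappendix:characterize-normalizable-derivable}$\Rightarrow$\eqref{thmappendix:characterize-normalizable-strongly-normalizable} obtained by bounding the number of $\betavm$-steps with $\size{\pi}$ via \refprop{quant-subject-reduction} and then invoking strong normalization of $\tosigm$. The ``subtle point'' you flag about choosing witness derivations along the reduction is handled exactly as in the paper.
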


\begin{proof}
  \begin{description}
    \item[\eqref{p:characterize-normalizable-normalizable}$\Rightarrow$\eqref{p:characterize-normalizable-equivalen-normal}:] Trivial, since $\toshufm \,\subseteq\, \toshuf \,\subseteq\, \shufeq$.
    \item[\eqref{p:characterize-normalizable-equivalen-normal}$\Rightarrow$\eqref{p:characterize-normalizable-nonempty}:] First, note that we can suppose without loss of generality that  $\vec{\var}$ is suitable also for $\tmtwo$. By \reflemma{semantics-normal}, $\sem{\tmtwo}{\vec{\var}} \neq \emptyset$.
    By invariance of relational semantics (\refthm{invariance}), $\sem{\tm}{\vec{\var}} = \sem{\tmtwo}{\vec{\var}}$.
    \item[\eqref{p:characterize-normalizable-nonempty}$\Rightarrow$\eqref{p:characterize-normalizable-derivable}] Trivial, according to \refdef{semantics}.
    \item[\eqref{p:characterize-normalizable-derivable}$\Rightarrow$\eqref{p:characterize-normalizable-strongly-normalizable}:] If there is a derivation $\concl{\pi}{\var_1 \colon\! P_1, \dots, \var_k \colon\! P_k}{\tm}{Q}$ for some positive types $P_1, \dots, P_k, Q$, then every $\shufm$-reduction sequence from $\tm$ has at most $\size{\pi} \in \nat$ $\betavm$-reduction steps by the quantitative subject reduction (\refpropps{quant-subject-reduction}{betav}{sigma}).
	As there is no $\shufm$-reduction sequence from $\tm$ with infinitely many $\betavm$-reduction steps, then every infinite $\shufm$-reduction sequence from $\tm$ would have infinitely many $\sigm$-reduction steps, but this is impossible since $\tosigm$ is strongly normalizing. 
	Therefore, there is no infinite $\shufm$-reduction sequence from $\tm$, which means that $\tm$ is strongly $\shufm$-normalizable.

    \item[\eqref{p:characterize-normalizable-strongly-normalizable}$\Rightarrow$\eqref{p:characterize-normalizable-normalizable}:] Trivial. 
    \qedhere
  \end{description}
\end{proof}

\begin{lemma}\label{l:anf-environment}
	Let $\tm \in \anfSet$.
	For all $\concl{\pi}{\var_1 \colon\! P_1, \dots, \var_k \colon\! P_k}{\tm}{Q}$ there is $1 \leq i \leq k$ such that $P_i \neq \emptymset$.
\end{lemma}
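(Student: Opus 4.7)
The plan is to proceed by induction on the structure of $\anf \in \anfSet$, exploiting the fact that every element of $\anfSet$ has a ``head variable'' (obtained by peeling off applications to the left) that must receive a non-empty intersection type in any derivation.

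In the base cases $\anf = \var\val$ and $\anf = \var\anf'$, a derivation $\pi$ of $\Gamma \vdash \anf \colon\! Q$ necessarily ends with the rule $@$, which yields a subderivation of the left-hand side $\var$. Such a subderivation must be the axiom $\mathsf{ax}$, forcing $\var$ to be typed with some $[\Pair{P'}{Q}]$, hence the contribution of $\var$ to $\Gamma$ is non-empty. Since $\vec{\var}$ is suitable for $\anf$, we have $\var = \var_i$ for some $1 \leq i \leq k$, and the multiset $P_i$ in the assumed shape $\Gamma = \var_1\colon\! P_1, \dots, \var_k\colon\! P_k$ contains $[\Pair{P'}{Q}]$ as a submultiset, so $P_i \neq \emptymset$.

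For the inductive case $\anf = \anf''\wnf$ with $\anf'' \in \anfSet$ and $\wnf \in \wnfSet$, the last rule of $\pi$ is again $@$, decomposing it into subderivations $\concl{\pi_1}{\Gamma_1}{\anf''}{[\Pair{P}{Q}]}$ and $\concl{\pi_2}{\Gamma_2}{\wnf}{P}$ with $\Gamma = \Gamma_1 \uplus \Gamma_2$. Since $\Fv{\anf''} \subseteq \Fv{\anf} \subseteq \{\var_1,\dots,\var_k\}$, the list $\vec{\var}$ is also suitable for $\anf''$, and (padding with empty multisets if needed, which is legal by the paper's convention identifying $\Gamma$ and $\Gamma,\vartwo\colon\!\emptymset$) we can write $\Gamma_1 = \var_1\colon\! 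P_1',\dots,\var_k\colon\! P_k'$. Applying the induction hypothesis to $\pi_1$ yields an index $i$ with $P_i' \neq \emptymset$, and from $\Gamma = \Gamma_1 \uplus \Gamma_2$ we get $P_i = P_i' \uplus P_i'' \supseteq P_i' \neq \emptymset$.

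The only subtlety I foresee is purely administrative: aligning the environments of sub-derivations with the fixed list $\vec{\var} = (\var_1,\ldots,\var_k)$ of the outer statement, so that the non-emptiness of some entry transfers faithfully through the sum $\Gamma_1 \uplus \Gamma_2$. This is painless thanks to \reflemma{free} (which confines $\Dom{\Gamma_i}$ to the relevant free variables) together with the empty-entry convention on environments, so no genuine mathematical obstacle arises; the core content of the proof really is the observation that the head variable of an $\anfSet$-term is always typed via a non-trivial axiom.
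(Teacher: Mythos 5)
Your proof is correct and follows essentially the same route as the paper's: induction on the grammar of $\anfSet$, with the head-variable cases $\var\val$ and $\var\anf'$ handled by observing that the left premise of the final $@$ rule is an axiom forcing a non-empty entry for $\var$, and the case $\anf''\wnf$ handled by applying the induction hypothesis to the left subderivation and propagating non-emptiness through $\uplus$. The only cosmetic difference is that the paper merges the first two productions into a single case $\var\tmtwo$ with $\tmtwo \in \valSet \cup \anfSet$.
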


\begin{proof}
	By induction on $\tm \in \anfSet$.
	
	If $\tm = \var\tmtwo$ where $\tmtwo \in \valSet \cup \anfSet$, then $\var = \var_i$ for some $1 \leq i \leq k$, and hence
	\begin{equation*}
	\pi = 
	\AxiomC{}
	\RightLabel{\footnotesize{$\Ax$}}
	\UnaryInfC{$\var_1 \colon\! P_1', \dots, \var_i \colon\! [\Pair{P}{Q}], \dots, \var_k \colon\!P_k' \vdash \var \colon\! [\Pair{P}{Q}]$}
	\AxiomC{$\ \vdots\,\pi''$}
	\noLine
	\UnaryInfC{$\var_1 \colon\! P_1'', \dots, \var_k \colon\! P_k'' \vdash \tmtwo \colon\! P$}
	\RightLabel{\footnotesize{$@$}}
	\BinaryInfC{$\var_1 \colon\! P_1 , \dots , \var_k \colon\! P_k \vdash \tm \colon\! Q$}
	\DisplayProof
	\end{equation*}
	where $P_j = P_j' \uplus P_j''$ for all $1 \leq j \leq k$ such that $j \neq i$, and $P_i = [\Pair{P}{Q}] \uplus P_i'' \neq \emptymset$.
	
	If $\tm = \anf\wnf$ for some $\anf \in \anfSet$ and $\wnf \in \wnfSet$, then
	\begin{equation*}
	\pi = 
	\AxiomC{$\ \vdots\,\pi'$}
	\noLine
	\UnaryInfC{$\var_1 \colon\! P_1', \dots, \var_k \colon\!P_k' \vdash \anf \colon\! [\Pair{P}{Q}]$}
	\AxiomC{$\ \vdots\,\pi''$}
	\noLine
	\UnaryInfC{$\var_1 \colon\! P_1'', \dots, \var_k \colon\! P_k'' \vdash \wnf \colon\! P$}
	\RightLabel{\footnotesize{$@$}}
	\BinaryInfC{$\var_1 \colon\! P_1 , \dots , \var_k \colon\! P_k \vdash \tm \colon\! Q$}
	\DisplayProof
	\end{equation*}
	where $P_j = P_j' \uplus P_j''$ for all $1 \leq j \leq k$.
	By \ih, there is $1 \leq i \leq k$ such that $P_i' \neq \emptymset$, thus $P_i = P_i' \uplus P_i'' \neq \emptymset$.
\end{proof}

\setcounter{lemmaAppendix}{\value{l:semantic-value}}
\begin{lemmaAppendix}[Uniqueness of the derivation with empty types; Logic and semantic characterization of values]
  \label{lappendix:semantic-value}
  \NoteState{l:semantic-value}
  Let $\tm \in \Lambda$ be $\shufm$-normal.
  
  \begin{enumerate}
    \item\label{pappendix:semantic-value-uniqueness}  If $\concl{\pi}{\,}{\tm}{\emptymset}$ and $\concl{\pi'}{\Gamma}{\tm}{\emptymset}$, then $\tm \in \valSet$, $\size{\pi} = 0$, $\Dom{\Gamma} = \emptyset$ and $\pi = \pi'$.
    More precisely, $\pi$ consists 
    of a rule $\Ax$ if $\tm$ is a variable, otherwise $\tm$ is an abstraction and $\pi$ consists 
    of a 0-ary~rule~$\lambda$.  
    
    \item\label{pappendix:semantic-value-characterization} Given a list $\vec{\var} = (\var_1, \dots, \var_k)$ of variables suitable for $\tm$, the following are equivalent:
    \begin{multicols}{2}
    \begin{enumerate}
	    \item\label{pappendix:semantic-value-value} $\tm$ is a value;
	    \item\label{pappendix:semantic-value-semantic} $((\emptymset, \overset{k}{\dots\,}, \emptymset), \emptymset) \in \sem{\tm}{\vec{\var}}$;
	    \item\label{pappendix:semantic-value-logic} there 
	    is $\concl{\pi}{\,}{\tm}{\emptymset}$;
	    \item\label{pappendix:semantic-value-size} there 
	    is $\Type{\pi}{\tm}$ such that $\size{\pi} = 0$.
    \end{enumerate} 
    \end{multicols}	
  \end{enumerate}
\end{lemmaAppendix}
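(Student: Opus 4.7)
My approach rests on the syntactic characterization of $\shufm$-normal forms (\refprop{syntactic-normal}) and the judgment decomposition for values (\reflemma{value}), together with one auxiliary fact that I would isolate and prove first by a short induction on $\anf \in \anfSet$: \emph{every derivation typing a term $\anf \in \anfSet$ has a non-empty environment}. The base case $\anf = \var\val$ forces the $\Ax$ rule to assign the non-empty multiset $[\Pair{P}{Q}]$ to $\var$, and the two inductive cases propagate the non-emptiness from the premise typing the $\anfSet$-subterm.

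For Part~\ref{p:semantic-value-uniqueness}, \refprop{syntactic-normal} splits $\tm$ into three mutually exclusive shapes: $\tm \in \valSet$, $\tm \in \anfSet$, or $\tm = (\la{\var}{\wnf})\anf$ with $\anf \in \anfSet$. The existence of $\pi$ with empty environment rules out the last two shapes: directly in the $\anfSet$ case via the auxiliary lemma, and in the stuck-redex case by peeling off the outermost @-rule of $\pi$ and observing that its right premise must type $\anf$, hence forces a non-empty environment, contradicting the emptiness of the environment of the conclusion. So $\tm$ must be a value, and applying \reflemma{value} with $p = 0$ (the empty decomposition of $\emptymset$) to both $\pi$ and $\pi'$ immediately yields $\size{\pi} = \size{\pi'} = 0$ and $\Dom{\Gamma} = \emptyset$. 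A brief case analysis on the value $\tm$ then pins $\pi = \pi'$ down: if $\tm$ is a variable, the only admissible rule is $\Ax$ with $P = \emptymset$; if $\tm$ is an abstraction, the final $\lambda$-rule is forced to have $n = 0$ premises so that its output type be $[\,]$.

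For Part~\ref{p:semantic-value-characterization}, the equivalences are short. The equivalence \eqref{p:semantic-value-semantic}$\Leftrightarrow$\eqref{p:semantic-value-logic} is immediate from the definition of $\sem{\tm}{\vec{\var}}$ together with the convention that entries with type $\emptymset$ can be dropped from an environment. The implication \eqref{p:semantic-value-value}$\Rightarrow$\eqref{p:semantic-value-logic} is exactly \refcoro{minimal}. The implications \eqref{p:semantic-value-logic}$\Rightarrow$\eqref{p:semantic-value-value} and \eqref{p:semantic-value-logic}$\Rightarrow$\eqref{p:semantic-value-size} are immediate consequences of Part~\ref{p:semantic-value-uniqueness} (the same $\pi$ witnesses both). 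Finally, \eqref{p:semantic-value-size}$\Rightarrow$\eqref{p:semantic-value-value} is the contrapositive of \refrmk{positive-size}: a derivation of size $0$ cannot have an application as its subject, so $\tm$ must be a value.

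The main obstacle is Part~\ref{p:semantic-value-uniqueness}, and within it the exclusion of the stuck $\beta$-redex shape $\tm = (\la{\var}{\wnf})\anf$: one must descend into the derivation, identify the right premise as typing an $\anfSet$-term, and invoke the auxiliary lemma. That auxiliary lemma is itself the semantic counterpart of the syntactic observation that every $\anfSet$-term has a free head variable, and it is what makes the argument clean rather than forcing a bulky induction on derivations.
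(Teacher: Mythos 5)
Your proposal is correct and follows essentially the same route as the paper's proof: your auxiliary fact is exactly the paper's Lemma~\ref{l:anf-environment} (non-empty environment for every derivation typing a term of $\anfSet$), the three-way case split via \refprop{syntactic-normal} with the stuck-redex shape excluded by inspecting the right premise of the final $@$-rule is the paper's argument verbatim, and Part~\ref{p:semantic-value-characterization} is closed with the same ingredients (\refcoro{minimal}, \refrmk{positive-size}, and Part~\ref{p:semantic-value-uniqueness}). The only cosmetic difference is that you extract $\Dom{\Gamma}=\emptyset$ from \reflemma{value} with $p=0$ rather than from the explicit case analysis on the shape of the value, which is equally valid.
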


\begin{proof}
  \begin{enumerate}
    \item According to \refprop{syntactic-normal}, $\tm \in \wnfSet$ since $\tm$ is $\shufm$-normal, so there are only three cases.
	
    If $\tm \in \valSet$, then $\size{\pi} = 0$ by the left-to-right direction of \reflemma{value} (take $p = 0$). 
    Moreover, if $\tm$ is a variable, then $\tm = \var_i$ for some $1 \leq i \leq k$, and the only derivation with conclusion $\Gamma \vdash \tm \colon\!\emptymset$ is 
    \begin{equation*}
      \pi = 
      \AxiomC{}
      \RightLabel{\footnotesize$\Ax$}
      \UnaryInfC{$\var_i \colon\! \emptymset \vdash \var_i \colon \! \emptymset$}
      \DisplayProof
      \qquad\text{which means } \Dom{\Gamma} = \emptyset;
    \end{equation*}
    otherwise, $\tm = \la{\var}{\tmtwo}$ and the only derivation with conclusion $\Gamma \vdash \tm \colon\!\emptymset$ is 
    \begin{equation*}
      \pi = 
      \AxiomC{}
      \RightLabel{\footnotesize$\l$}
      \UnaryInfC{$\vdash \la{\var}{\tmtwo} \colon \! \emptymset$}
      \DisplayProof
      \qquad\text{\ie~a $0$-ary rule $\lambda$, which means } \Dom{\Gamma} = \emptyset.
    \end{equation*}
    
    If $\tm \in \anfSet$ then $\tm \notin \valSet$ and it is impossible that $\concl{\pi}{\,}{\tm}{\emptymset}$ by \reflemma{anf-environment}.
    
    Finally, if $\tm \in \wnfSet \smallsetminus (\valSet \cup \anfSet)$, then $\tm = (\la{\var}{\wnf})\anf$ for some $\wnf \in \wnfSet$ and $\anf \in \anfSet$.
    By necessity,
    \begin{equation*}
	    \pi = 
	    \AxiomC{$\ \vdots\,\pi'$}
	    \noLine
	    \UnaryInfC{$\vdash \la{\var}{\wnf} \colon\! [\Pair{P}{\emptymset}]$}
	    \AxiomC{$\ \vdots\,\pi''$}
	    \noLine
	    \UnaryInfC{$\vdash \anf \colon\! P$}
	    \RightLabel{\footnotesize{$@$}}
	    \BinaryInfC{$\vdash \tm \colon\! \emptymset$}
	    \DisplayProof
    \end{equation*}
    but it is impossible that $\concl{\pi''}{\,}{\anf}{P}$, according to \reflemma{anf-environment}. 
    Thus, it is impossible that $\concl{\pi}{\,}{\tm}{\emptymset}$.

    \item The equivalence \eqref{p:semantic-value-semantic}$\Leftrightarrow$\eqref{p:semantic-value-logic} follows immediately from \refdef{semantics}.
    From \reflemmap{semantic-value}{uniqueness} it follows that
    \eqref{p:semantic-value-logic}$\Rightarrow$\eqref{p:semantic-value-size}.
    By \refcoro{minimal}, the implication \eqref{p:semantic-value-value}$\Rightarrow$\eqref{p:semantic-value-logic} 
    holds.
    In order to prove that \eqref{p:semantic-value-size}$\Rightarrow$\eqref{p:semantic-value-value}, it is enough to notice that there is no instance of the rule $@$ in $\Type{\pi}{\tm}$ since $\size{\pi} = 0$, so $\tm$ is either a variable or an abstraction, \ie a value.
    \qedhere
  \end{enumerate}
\end{proof}

\setcounter{propositionAppendix}{\value{prop:semantic-valuable}}
\begin{propositionAppendix}[Logic and semantic characterization of valuability]
	\label{propappendix:semantic-valuable}
\NoteState{prop:semantic-valuable}
	Let $\tm$ be a 
	term and $\vec{\var} = (\var_1, \dots, \var_k)$ be a suitable list of variables for $\tm$.
	The following are equivalent:
		\begin{enumerate}
			\item\label{pappendix:semantic-valuable-value} \emph{Valuability:} $\tm$ is $\shufm$-normalizable and the $\shufm$-normal form of $\tm$ is a value;
			\item\label{pappendix:semantic-valuable-semantic} \emph{Empty point in the semantics:} $((\emptymset, \overset{k}{\dots}\,, \emptymset), \emptymset) \in \sem{\tm}{\vec{\var}}$;
			\item\label{pappendix:semantic-valuable-logic}\emph{Derivability with empty types:} there exists a derivation $\concl{\pi}{\,}{\tm}{\emptymset}$.
		\end{enumerate} 
\end{propositionAppendix}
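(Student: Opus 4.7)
The plan is to combine three ingredients already available: the semantic/logical characterization of $\shufm$-normalization (\refthm{characterize-normalizable}), the analysis of derivations with empty types on $\shufm$-normal forms (\reflemma{semantic-value}), and the quantitative subject reduction/expansion (\refpropp{quant-subject-reduction}{betav}{sigma} and \refpropp{quant-subject-expansion}{betav}{sigma}).

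First, I would dispose of \eqref{p:semantic-valuable-semantic}$\Leftrightarrow$\eqref{p:semantic-valuable-logic}, which is essentially by \refdef{semantics}: the judgment $\var_1 \colon\! \emptymset, \dots, \var_k \colon\! \emptymset \vdash \tm \colon\! \emptymset$ coincides with $\vdash \tm \colon\! \emptymset$ since environments and $\Gamma, \var \colon\! \emptymset$ denote the same object.

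Next, I would prove \eqref{p:semantic-valuable-value}$\Rightarrow$\eqref{p:semantic-valuable-logic} as follows. Assume $\tm \toshufm^* \val$ for some value $\val$. By \refcoro{minimal} (a consequence of the right-to-left direction of \reflemma{value} with $p = 0$) there exists $\concl{\pi_0}{\,}{\val}{\emptymset}$. Then I would apply \refpropp{quant-subject-expansion}{betav}{sigma} step by step along the reduction sequence $\tm \toshufm^* \val$ to lift this derivation, obtaining $\concl{\pi}{\,}{\tm}{\emptymset}$.

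Conversely, for \eqref{p:semantic-valuable-logic}$\Rightarrow$\eqref{p:semantic-valuable-value}, suppose $\concl{\pi}{\,}{\tm}{\emptymset}$. By \refthm{characterize-normalizable} (\eqref{p:characterize-normalizable-derivable}$\Rightarrow$\eqref{p:characterize-normalizable-normalizable}) $\tm$ is $\shufm$-normalizable, so let $\tm_0$ be its (unique, by confluence of $\toshufm$ in \refprop{general-properties}) $\shufm$-normal form. Repeatedly applying \refpropp{quant-subject-reduction}{betav}{sigma} along $\tm \toshufm^* \tm_0$ yields a derivation $\concl{\pi_0}{\,}{\tm_0}{\emptymset}$. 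Since $\tm_0$ is $\shufm$-normal, \reflemmap{semantic-value}{uniqueness} forces $\tm_0$ to be a value, which is the conclusion.

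There is no real obstacle here: the whole argument is a book-keeping exercise that glues together previously established machinery. The only slightly delicate point is making sure the environment stays empty through subject reduction/expansion, which is automatic since the propositions preserve the environment $\Gamma$ along a step, so starting from the empty environment guarantees the empty environment at every intermediate stage.
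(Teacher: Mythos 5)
Your proof is correct and follows essentially the same route as the paper: both reduce to the $\shufm$-normal form and then invoke \reflemma{semantic-value} to recognize when that normal form is a value. The only cosmetic difference is that the paper transfers the empty point between $\tm$ and its normal form via the invariance theorem (\refthm{invariance}) applied to $\sem{\cdot}{\vec{\var}}$, whereas you inline that theorem by iterating subject reduction and expansion on a single derivation, which is exactly how invariance is proved anyway.
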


\begin{proof}
	The equivalence \eqref{p:semantic-valuable-semantic}$\Leftrightarrow$\eqref{p:semantic-valuable-logic} follows immediately from \refdef{semantics}.
	In order to prove the equivalence \eqref{p:semantic-valuable-value}$\Leftrightarrow$\eqref{p:semantic-valuable-semantic}, let us consider the two possible cases:
	\begin{itemize}
	  \item either $\tm$ is not $\shufm$-normalizable, and then $\sem{\tm}{\vec{\var}} = \emptyset$ according to the semantic characterization of $\shufm$-normalization (\refthm{characterize-normalizable}), in particular $((\emptymset, \overset{k}{\dots}\,, \emptymset), \emptymset) \notin \sem{\tm}{\vec{\var}}$;

	  \item or $\tm$ is $\shufm$-normalizable; let $\tm_0$ be its $\shufm$-normal form;
	according to the semantic characterization of values (\reflemmap{semantic-value}{characterization}), $((\emptymset, \overset{k}{\dots}\,, \emptymset), \emptymset) \in \sem{\tm_0}{\vec{\var}}$ iff $\tm_0$ is a value;
	by invariance of the semantics (\refthm{invariance}), $\sem{\tm}{\vec{\var}} = \sem{\tm_0}{\vec{\var}}$;
	therefore, $((\emptymset, \overset{k}{\dots}\,, \emptymset), \emptymset) \in \sem{\tm}{\vec{\var}}$ iff $\tm_0$ is a value.
	\qedhere
	\end{itemize}
\end{proof}

\subsection{Omitted proofs and remarks of Section~\ref{sect:quantitative-semantics}}

\setcounter{lemmaAppendix}{\value{l:sizes}}
\begin{lemmaAppendix}[Relationship between sizes of normal forms and derivations]
\label{lappendix:sizes}
\NoteState{l:sizes}
	Let $\tm \in \Lambda$.
	\begin{enumerate}
		\item\label{pappendix:sizes-normal} If $\tm$ is $\shufm$-normal then $\sizeZero{\tm} = \min{\{\size{\pi} \mid \Type{\pi}{\tm}\}}$.
		\item\label{pappendix:sizes-value} If $\tm$ is a value then $\sizeZero{\tm} = \min{\{\size{\pi} \mid \Type{\pi}{\tm}\}} = 0 
		$.
	\end{enumerate}
\end{lemmaAppendix}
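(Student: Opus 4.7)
The plan is to derive Part 2 from Part 1 combined with the characterization of values: for any $\val \in \valSet$ one has $\sizeZero{\val} = 0$ by definition, and the right-to-left direction of \reflemma{value} (with $p = 0$) provides a derivation $\concl{\pi}{\,}{\val}{\emptymset}$ with $\size{\pi} = 0$, whence $\min\{\size{\pi} \mid \Type{\pi}{\val}\} = 0 = \sizeZero{\val}$.

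For Part 1, \refprop{syntactic-normal} reduces the claim to $\tm \in \wnfSet$, and I prove the two inequalities defining the minimum by mutual induction on $\tm \in \wnfSet \cup \anfSet$. For the upper bound $\min\{\size{\pi} \mid \Type{\pi}{\tm}\} \leq \sizeZero{\tm}$, I strengthen the statement on $\anfSet$-terms, mirroring \reflemma{semantics-normal}, by also asking that for every $\anf \in \anfSet$ and every positive type $Q$ there be a derivation typing $\anf$ with $Q$ of size exactly $\sizeZero{\anf}$. The witnesses are the ones built in the proof of \reflemma{semantics-normal}: axioms for head variables (contributing $0$), IH-given derivations for applicative subterms, size-$0$ derivations of value subterms provided by \reflemma{value}, and one outer $@$ (or an $@$ whose left premise is a $\lambda$-rule, in the case $(\la{\var}{\wnf})\anf$) per layer, contributing exactly $+1$ to the size and matching the recursive $+1$ in the definition of $\sizeZero{\cdot}$.

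For the lower bound $\min\{\size{\pi} \mid \Type{\pi}{\tm}\} \geq \sizeZero{\tm}$, the case $\tm \in \valSet$ is immediate since $\sizeZero{\val} = 0$. Otherwise $\tm$ is an application and every $\Type{\pi}{\tm}$ ends with an $@$ rule with subderivations $\pi_1, \pi_2$; applying the induction hypothesis (in its lower-bound form) to each yields $\size{\pi} = \size{\pi_1} + \size{\pi_2} + 1 \geq \sizeZero{\tm}$. The delicate subcase is $\tm = (\la{\var}{\wnf})\anf$: the $@$ rule forces the abstraction to carry the singleton type $[\Pair{P}{Q}]$, so the subderivation of $\la{\var}{\wnf}$ must be a $\lambda$-rule with \emph{exactly one} premise typing $\wnf$, and the IH on that premise supplies the needed $\sizeZero{\wnf}$ contribution.

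The main obstacle is precisely this last point: one has to exclude a $0$-ary $\lambda$-rule above the $@$ of a balanced $\beta$-redex, since such a rule would have size $0$ and break the lower bound; this is ruled out by the singleton shape of the left premise type of the surrounding $@$ rule. All remaining cases are routine bookkeeping that tracks the $+1$ in each clause of the definition of the balanced size.
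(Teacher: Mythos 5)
Your proposal is correct and follows essentially the same route as the paper's proof: reduction to $\wnfSet$ via \refprop{syntactic-normal}, the strengthened induction hypothesis on $\anfSet$-terms quantifying over an arbitrary target type $Q$ (with the witnesses of \reflemma{semantics-normal} realizing the upper bound), \refcoro{minimal} for values, and the observation that in a $\shufm$-normal $\beta$-redex the singleton type $[\Pair{P}{Q}]$ forces the $\lambda$-rule above the $@$ to be unary, which is exactly how the paper secures the lower bound in that case. The only difference is presentational: you split the two inequalities, whereas the paper establishes both simultaneously within each case of the induction.
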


\begin{proof}
	\begin{enumerate}
		\item By \refprop{syntactic-normal}, since $\tm$ is $\shufm$-normal, we can proceed by induction on $\tm \in \wnfSet$. 
		Moreover, for $\tm \in \anfSet$ we prove also that, given $\vec{\var} = (\var_1, \dots, \var_k)$ suitable for $\tm$, for \emph{any} positive type $Q$ there exist positive types $P_1, \dots, P_k$ and a derivation $\concl{\pi}{\var \colon\! P_1, \dots, \var_k \colon\! P_k}{\tm}{Q}$ such that $\sizeZero{\tm} = \size{\pi}$: this stronger statement is required to handle the case where $\tm$ is a $\shufm$-normal $\beta$-redex.
		
		If $\tm$ is a value, then $\sizeZero{\tm} = 0$ by definition, and there is a derivation $\concl{\pi}{\,}{\tm}{\emptymset}$ such that $\size{\pi} = 0$, according to \refcoro{minimal}.
		Thus, $\sizeZero{\tm} = \min{\{\size{\pi} \mid \Type{\pi}{\tm}\}}$ since $\size{\pi'} \geq 0$ for any derivation $\Type{\pi'}{\tm}$.
		
		If $\tm \in \anfSet$, then there are three cases:
		\begin{itemize}
			\item $\tm = \var\val$ for some variable $\var$ and value $\val$: $\sizeZero{\tm} = 1$, $k > 0$ and $\var = \var_i$ for some $1 \leq i \leq k$.
			According to \refcor{minimal}, there exists a derivation $\concl{\pi'}{\,}{\val}{\emptymset}$ such that $\size{\pi'} = 0$, so for any positive type $Q$ there exists the derivation
			
			\small{
			\begin{equation*}
			\pi =
			\AxiomC{}
			\RightLabel{\footnotesize{$\mathsf{ax}$}}
			\UnaryInfC{$\var_i \colon\! [\Pair{\emptymset}{Q}] \vdash \var_i \colon\! [\Pair{\emptymset}{Q}]$}
			\AxiomC{$\ \vdots\,\pi'$}
			\noLine
			\UnaryInfC{$\vdash \val \colon\! \emptymset$}
			\RightLabel{\footnotesize{$@$}}
			\BinaryInfC{$\var_i \colon\! [\Pair{\emptymset}{Q}] \vdash \tm \colon\! Q$}
			\DisplayProof
			\end{equation*}
			}
			
			\noindent where $\size{\pi} = \size{\pi'} + 1 = 1 = \sizeZero{\tm}$.
			The last rule of any derivation $\Type{\pi''}{\tm}$ is by necessity $@$, thus $\size{\pi''} \geq 1$ and hence $\min{\{\size{\pi''} \mid \Type{\pi''}{\tm}\}} = 1 = \sizeZero{\tm}$\,. 
			
			\item $\tm = \var\anf$ for some variable $\var$ and $\anf \in \anfSet$:
			$\sizeZero{\tm} = \sizeZero{\anf} + 1$, $k > 0$ and $\var = \var_i$ for some $1 \leq i \leq k$.
			By \ih, $\sizeZero{\anf} = \min{\{\size{\pi'} \mid \Type{\pi'}{\anf}\}}$ and there exists a derivation $\concl{\pi'}{\var_1 \colon\! P_1, \dots, \var_k \colon\! P_k}{\anf}{\emptymset}$ for some positive types $P_1, \dots, P_k$ such that $\sizeZero{\anf} = \size{\pi'}$.
			Therefore, for any positive type $Q$ there exists a derivation
			
			\small{
			\begin{equation*}
			\pi =
			\AxiomC{}
			\RightLabel{\footnotesize{$\mathsf{ax}$}}
			\UnaryInfC{$\var_i \colon\! [\Pair{\emptymset}{Q}] \vdash \var_i \colon\! [\Pair{\emptymset}{Q}]$}
			\AxiomC{$\ \vdots\,\pi'$}
			\noLine
			\UnaryInfC{$\var_1 \colon\! P_1, \dots, \var_k \colon\! P_k \vdash \anf \colon\! [\,]$}
			\RightLabel{\footnotesize{$@$}}
			\BinaryInfC{$\var_1 \colon\! P_1, \dots, \var_i \colon\! [\Pair{[\,]}{Q}] \uplus P_i, \dots, \var_k \colon\! P_k \vdash \tm \colon\! Q$}
			\DisplayProof
			\end{equation*}
			}
		
			\noindent where $\size{\pi} = \size{\pi'} + 1 = \sizeZero{\anf} + 1 = \sizeZero{\tm}$.
			The last rule of any derivation $\Type{\pi''}{\tm}$ is by necessity $@$, having a derivation typing $\anf$ as a premise, thus $\size{\pi''} \geq \sizeZero{\anf} + 1$ and so $\min{\{\size{\pi''} \mid \Type{\pi''}{\tm}\}} = \sizeZero{\anf} + 1 = \sizeZero{\tm}$\,. 
			
			\item $\tm = \anf\wnf$ with $\anf \in \anfSet$ and $\wnf \in \wnfSet$:
			by \ih applied to $\wnf$, $\min{\{\size{\pi''} \mid \Type{\pi''\allowbreak}{\wnf}\}} \allowbreak= \sizeZero{\wnf}$, in particular there is a derivation $\concl{\pi''}{\var_1 \colon\! P_1, \dots, \var_k \colon\! P_k}{\wnf}{P}$ for some positive types $P, P_1, \dots, P_k$ such that $\size{\pi} = \sizeZero{\wnf}$.
			For any positive type $Q$, by \ih~applied to $\anf$, there are positive types $P_1', \dots, P_k'$ and a derivation $\concl{\pi'}{\allowbreak\var_1 \colon\! P_1', \dots, \var_k \colon\! P_k'}{\anf}{[\Pair{P}{Q}]}$ such that $\sizeZero{\anf} = \size{\pi'} = \min\{\size{\pi'} \mid \Type{\pi'}{\anf}\}$.
			So, there is a derivation
			
			\begin{equation*}
				\pi =
				\AxiomC{$\ \vdots\,\pi'$}
				\noLine
				\UnaryInfC{$\var_1 \colon\! P_1', \dots, \var_k \colon\! P_k' \vdash \anf \colon\! [\Pair{P}{Q}]$}
				\AxiomC{$\ \vdots\,\pi''$}
				\noLine
				\UnaryInfC{$\var_1 \colon\! P_1, \dots, \var_k \colon\! P_k \vdash \wnf \colon\! P$}
				\RightLabel{\footnotesize{$@$}}
				\BinaryInfC{$\var_1 \colon\! P_1 \uplus P_1', \dots, \var_k \colon\! P_k \uplus P_k' \vdash \tm \colon\! Q$}
				\DisplayProof
			\end{equation*}
			
			\noindent where $\size{\pi} = \size{\pi'} + \size{\pi''} + 1 = \sizeZero{\anf} + \sizeZero{\wnf} + 1 = \sizeZero{\tm}$ (the last equation holds since $\anf$ is not an abstraction, see \refprop{syntactic-normal}).
			The last rule of any derivation $\Type{\pi'''}{\tm}$ is by necessity $@$, having derivations typing $\anf$ and $\wnf$ as premises, thus $\size{\pi'''} \geq \sizeZero{\anf} + \sizeZero{\wnf} + 1$ and hence $\min{\{\size{\pi'''} \mid \Type{\pi'''}{\tm}\}} = \sizeZero{\anf} + \sizeZero{\wnf} + 1 = \sizeZero{\tm}$\,. 
		\end{itemize}
		
		Finally, if $\tm = (\la{\var}{\wnf})\anf$ for some $\anf \in \anfSet$ and $\wnf \in \wnfSet$, then $\sizeZero{\tm} = \sizeZero{\wnf} + \sizeZero{\anf} + 1$ by definition.
		By \ih, applied to $\wnf$, there is a derivation $\concl{\pi'}{\var_1 \colon\! P_1', \dots, \var_k \colon\! P_k', \var \colon\! P}{\wnf}{Q}$ for some positive types $P_1', \dots, P_k', P, Q$ such that $\sizeZero{\wnf} = \size{\pi'} = \min{\{\size{\pi'} \mid \Type{\pi'}{\wnf}\}}$.
		By \ih applied to $\anf$, there exists a derivation $\concl{\pi''}{\var_1 \colon\! P_1, \dots, \var_k \colon\! P_k}{\anf}{P}$ for some positive types $P_1, \dots, P_k$ such that $\sizeZero{\anf} = \size{\pi''} = \min{\{\size{\pi''} \mid \Type{\pi''}{\anf}\}}$.
		Therefore, there is a derivation
		
		\begin{equation*}
			\pi =
			\AxiomC{$\ \vdots\,\pi'$}
			\noLine
			\UnaryInfC{$\var_1 \colon\! P_1', \dots, \var_k \colon\! P_k', \var \colon\! P \vdash \wnf \colon\! Q$}
			\RightLabel{\footnotesize{$\lambda$}}
			\UnaryInfC{$\var_1 \colon\! P_1', \dots, \var_k \colon\! P_k' \vdash \la{\var}{\wnf} \colon\! [\Pair{P}{Q}]$}
			\AxiomC{$\ \vdots\,\pi''$}
			\noLine
			\UnaryInfC{$\var_1 \colon\! P_1, \dots, \var_k \colon\! P_k \vdash \anf \colon\! P$}
			\RightLabel{\footnotesize{$@$}}
			\BinaryInfC{$\var_1 \colon\! P_1 \uplus P_1', \dots, \var_k \colon\! P_k \uplus P_k' \vdash \tm \colon\! Q$}
			\DisplayProof.
		\end{equation*}
		 
		\noindent where $\size{\pi} = \size{\pi'} + \size{\pi''} + 1 = \sizeZero{\wnf} + \sizeZero{\anf} + 1 = \sizeZero{\tm}$\,. 
		Given a derivation $\Type{\pi'''}{\tm}$, its last rule is by necessity $@$, having derivations typing $\anf$ and $\la{\var}{\wnf}$ as premises (the last rule of the latter is necessarily $\lambda$ having a derivation typing $\wnf$ as unique premise), thus $\size{\pi'''} \geq \sizeZero{\anf} + \sizeZero{\wnf} + 1$ and hence $\min{\{\size{\pi'''} \mid \Type{\pi'''}{\tm}\}} = \sizeZero{\anf} + \sizeZero{\wnf} + 1 = \sizeZero{\tm}$\,. 
		
		\item By definition, $\sizeZero{\tm} = 0$.
		According to \refcoro{minimal}, there is a derivation $\concl{\pi}{\,}{\tm}{\emptymset}$ such that $\size{\pi} = 0$, hence $\sizeZero{\tm} = \inf{\{\size{\pi} \mid \Type{\pi}{\tm}\}}$ since $\size{\pi'} \geq 0$ for any derivation $\Type{\pi'}{\tm}$.
		\qedhere
	\end{enumerate}
\end{proof}

\setcounter{propositionAppendix}{\value{prop:number-steps}}
\setcounter{equation}{1}
\begin{propositionAppendix}[Exact number of $\betavm$-steps]
\label{propappendix:number-steps}
\NoteState{prop:number-steps}
	Let $\tm$ be a $\shufm$-normalizable term and $\tm_0$ be its $\shufm$-normal form. 
	For every reduction sequence $\deriv \colon \tm \toshufm^* \tm_0$ and every $\Type{\pi}{\tm}$ and $\Type{\pi_0}{\tm_0}$ such that $\size{\pi} = \min \{\size{\pi'} \mid \Type{\pi'}{\tm}\}$ and $\size{\pi_0} = \min \{\size{\pi_0'} \mid \Type{\pi_0'}{\tm_0} \}$, one has
	\begin{equation}\label{eqAppendix:number-steps}
		\Lengbv{\deriv} = \size{\pi} - \sizeZero{\tm_0} = \size{\pi} - \size{\pi_0}\,.
	\end{equation}
	If moreover $\tm_0$ is a value, then $\Lengbv{\deriv} = \size{\pi}$.
\end{propositionAppendix}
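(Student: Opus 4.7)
The plan is to combine the quantitative balanced subject reduction and expansion (Proposition~\ref{prop:quant-subject-reduction} and Proposition~\ref{prop:quant-subject-expansion}) with the characterization of the minimal derivation size of a $\shufm$-normal form (Lemma~\ref{l:sizes}). Fix a $\shufm$-normalizing reduction sequence $\deriv\colon \tm = \tm_1 \toshufm \tm_2 \toshufm \dots \toshufm \tm_{n+1} = \tm_0$, and let $b = \Lengbv{\deriv}$ be the number of $\betavm$-steps and $s$ the number of $\sigm$-steps, so $n = b+s$.

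First I would iterate Proposition~\ref{prop:quant-subject-reduction} along $\deriv$ starting from any derivation $\concl{\sigma}{\Gamma}{\tm}{Q}$: each $\betavm$-step shrinks the size by exactly~$1$ and each $\sigm$-step leaves it unchanged, yielding a derivation $\Type{\sigma_0}{\tm_0}$ with $\size{\sigma_0} = \size{\sigma} - b$. Dually, iterating Proposition~\ref{prop:quant-subject-expansion} backwards along $\deriv$ starting from any $\Type{\tau_0}{\tm_0}$ (with the same conclusion as $\sigma_0$, which is forced by the shape of the rules) produces a derivation $\Type{\tau}{\tm}$ with $\size{\tau} = \size{\tau_0} + b$.

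Now specialize $\sigma = \pi$ (minimal on $\tm$) and $\tau_0 = \pi_0$ (minimal on $\tm_0$). The minimality hypotheses give $\size{\pi_0} \leq \size{\sigma_0} = \size{\pi} - b$ and $\size{\pi} \leq \size{\tau} = \size{\pi_0} + b$, and together these force the equality $b = \size{\pi} - \size{\pi_0}$. Since $\tm_0$ is $\shufm$-normal, Lemma~\ref{l:sizes}.\ref{p:sizes-normal} gives $\sizeZero{\tm_0} = \size{\pi_0}$, proving Eq.~\eqref{eq:number-steps}. If furthermore $\tm_0$ is a value, Lemma~\ref{l:sizes}.\ref{p:sizes-value} yields $\sizeZero{\tm_0} = 0$, so $\Lengbv{\deriv} = \size{\pi}$.

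No step looks really hard: the engine is the quantitative subject reduction/expansion already proved, plus the fact that minimality of $\pi_0$ exactly matches $\sizeZero{\tm_0}$ via Lemma~\ref{l:sizes}. The only subtle point is to make sure that when iterating subject expansion from $\pi_0$ we obtain a derivation of $\tm$ with the \emph{same} conclusion $\Gamma \vdash \tm\colon Q$ as $\pi$, so that minimality of $\pi$ is directly applicable; but this is guaranteed because both Proposition~\ref{prop:quant-subject-reduction} and Proposition~\ref{prop:quant-subject-expansion} preserve the environment and the type in their conclusions, so going down and then up stays within the same type judgment.
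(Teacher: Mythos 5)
Your proof is correct, and it is organized differently from the paper's. The paper proves Eq.~\eqref{eq:number-steps} by induction on the length of $\deriv$, at each step using quantitative subject reduction \emph{and} expansion together to show that minimality of the derivation size is preserved from $\tm$ to the next term, and then invoking the induction hypothesis on the tail of the sequence. You instead run the two propositions globally: pushing the minimal $\pi$ forward along $\deriv$ gives a derivation of $\tm_0$ of size $\size{\pi}-b$, hence $\size{\pi_0}\leq\size{\pi}-b$; pulling the minimal $\pi_0$ backward gives a derivation of $\tm$ of size $\size{\pi_0}+b$, hence $\size{\pi}\leq\size{\pi_0}+b$; the two inequalities sandwich $b=\size{\pi}-\size{\pi_0}$, and Lemma~\ref{l:sizes} converts $\size{\pi_0}$ into $\sizeZero{\tm_0}$ exactly as in the paper. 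Your argument buys a cleaner separation of concerns (no need to carry minimality as an invariant along the induction), at the cost of two traversals of $\deriv$ instead of one. One small remark: the parenthetical worry that $\tau_0$ must have ``the same conclusion as $\sigma_0$'' is a non-issue --- the minima in the statement range over \emph{all} derivations typing $\tm$ and $\tm_0$, irrespective of their concluding judgments, so you only need $\sigma_0$ to be \emph{some} derivation of $\tm_0$ and $\tau$ to be \emph{some} derivation of $\tm$; no matching of environments or types is required (and indeed $\pi_0$ need not have the same conclusion as $\sigma_0$).
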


\begin{proof}
	The statement concerning the case where $\tm_0$ is a value is an immediate consequence of Eq.~\refeq{number-steps} and \reflemmap{sizes}{value}.
	The second identity in Eq.~\refeq{number-steps} follows immediately from \reflemmap{sizes}{normal}.
	We prove the first identity in Eq.~\refeq{number-steps} by induction on $k = \Leng{\deriv} \in \nat$.
	
	If $k = 0$ then $\Lengbv{\deriv} = 0$ and $\tm = \tm_0$ (thus $\pi = \pi_0$), hence $\size{\pi} = \sizeZero{\tm} = \sizeZero{\tm_0}$ according to \reflemmap{sizes}{normal}, and therefore 
	$\Lengbv{\deriv} = 0 = \size{\pi} - \sizeZero{\tm_0}$.
	
	If $k > 0$ then $\deriv$ has the form $\tm \toshufm \tmp \toshufm^{n-1} \tm_0$ for some term $\tmp$; 
	let $\derivp$ be the sub-reduction sequence $\tmp \toshufm^{n-1} \tm_0$ in $\deriv$.
	There are two cases:
	\begin{itemize}
		\item \emph{A $\betav$-step} at the beginning of $\deriv$, \ie $\tm \tobvm \tmp$:
		according to the quantitative subject reduction for $\tobvm$ (\refpropp{quant-subject-reduction}{betav}), there is a derivation $\Type{\pi'}{\tmp}$ such that $\size{\pi'} = \size{\pi} - 2$.
		According to the quantitative subject expansion for $\tobvm$ (\refpropp{quant-subject-expansion}{betav}), for any derivation $\Type{\pi''}{\tmp}$ there exists a derivation $\Type{\pi'''}{\tm}$ such that $\size{\pi''} = \size{\pi'''} - 2$.
		Therefore, from the minimality of $\size{\pi}$ follows the minimality of $\size{\pi'}$ (among the derivations $\Type{\pi''}{\tmp}$).
		We can then apply the \ih to $\derivp$, so that
			$\Lengbv{\derivp} = \size{\pi'} - \sizeZero{\tm_0} = \size{\pi} - 1 - \sizeZero{\tm_0}$ 
		and hence $\Lengbv{\deriv} = \Lengbv{\derivp} + 1 = \size{\pi} - \sizeZero{\tm_0}$\,.
		\item \emph{A $\sigma$-step} at the beginning of $\deriv$, \ie $\tm \tosigm \tmp$:
		according to the quantitative subject reduction for $\tosigm$ (\refpropp{quant-subject-reduction}{sigma}), there is a derivation $\Type{\pi'}{\tmp}$ such that $\size{\pi'} = \size{\pi}$.
		By the quantitative subject expansion for $\tosigm$ (\refpropp{quant-subject-expansion}{sigma}), for any derivation $\Type{\pi''}{\tmp}$ there exists a derivation $\Type{\pi'''}{\tm}$ such that $\size{\pi''} = \size{\pi'''}$.
		Therefore, from the minimality of $\size{\pi}$ follows the minimality of $\size{\pi'}$ (among the derivations $\Type{\pi''}{\tmp}$).
		We can then apply the \ih to $\derivp$, so that
		$\Lengbv{\deriv} = \Lengbv{\derivp} = \size{\pi'} - \sizeZero{\tm_0} = \size{\pi} - \sizeZero{\tm_0}\,.$
		\qedhere
	\end{itemize}
\end{proof}

\begin{lemma}[Minimal derivation]
	\label{l:minimal}
	Let $\tm \in \Lambda$.
	If $\concl{\pi_0}{\,}{\tm}{\emptymset}$ then $\size{\pi_0} = \min{\{\size{\pi} \mid \Type{\pi}{\tm} \}}$.
\end{lemma}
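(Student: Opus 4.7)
The plan is to reduce the statement to results already established for valuable terms, rather than do an induction from scratch. The key observation is that a derivation $\concl{\pi_0}{\,}{\tm}{\emptymset}$ precisely witnesses the valuability of $\tm$ (Proposition~\ref{prop:semantic-valuable}), a class for which Theorem~\ref{thm:number-steps-value} already gives $\Lengbv{\tm} = \size{\pi_0}$. Combining this with the generic upper bound on $\Lengbv{\deriv}$ in terms of an arbitrary typing derivation (the remark following Proposition~\ref{prop:number-steps}) will force $\size{\pi_0}$ to be a lower bound for every other $\Type{\pi}{\tm}$.

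More concretely, I would argue as follows. First, from $\concl{\pi_0}{\,}{\tm}{\emptymset}$ and Proposition~\ref{prop:semantic-valuable} deduce that $\tm$ is $\shufm$-normalizable and its $\shufm$-normal form is a value $\val$. Pick any $\shufm$-normalizing reduction sequence $\deriv \colon \tm \toshufm^* \val$. Theorem~\ref{thm:number-steps-value} then yields $\Lengbv{\deriv} = \Lengbv{\tm} = \size{\pi_0}$ (with independence of the particular $\deriv$ ensured by Corollary~\ref{coro:same-number}).

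Next, let $\pi$ be an \emph{arbitrary} derivation $\Type{\pi}{\tm}$. The remark following Proposition~\ref{prop:number-steps} gives $\Lengbv{\deriv} \leq \size{\pi} - \sizeZero{\val}$, and Lemma~\ref{l:sizes}(\ref{p:sizes-value}) says $\sizeZero{\val} = 0$, since $\val$ is a value. Therefore $\size{\pi_0} = \Lengbv{\deriv} \leq \size{\pi}$. Since $\pi_0$ itself belongs to $\{\size{\pi} \mid \Type{\pi}{\tm}\}$, this lower bound is attained and the minimum equals $\size{\pi_0}$.

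I do not expect a genuine obstacle here: the lemma is essentially a repackaging of the quantitative bounds already at our disposal. The only point worth double-checking is that the bound $\Lengbv{\deriv} \leq \size{\pi} - \sizeZero{\val}$ is applied in the right direction — the minimality assumption of the remark concerns the derivation of the \emph{normal form}, which is handled for free by $\sizeZero{\val}=0$, not the derivation $\pi$ of the source term, so the bound is indeed valid for every $\Type{\pi}{\tm}$.
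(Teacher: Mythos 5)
There is a circularity in your argument: the step where you invoke Theorem~\ref{thm:number-steps-value} to obtain $\Lengbv{\deriv} = \size{\pi_0}$ begs the question. In the paper, Theorem~\ref{thm:number-steps-value} is itself proved \emph{from} Lemma~\ref{l:minimal}: its proof first applies Lemma~\ref{l:minimal} to conclude that the derivation with empty types has minimal size, and only then applies Proposition~\ref{prop:number-steps} (whose hypothesis requires minimality of the derivation typing the \emph{source} term) to conclude $\Lengbv{\tm} = \size{\pi}$. So you cannot appeal to that theorem here without already having the statement you are trying to prove; nor can you apply Proposition~\ref{prop:number-steps} directly to $\pi_0$, for the same reason.

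The gap is local and repairable, and the rest of your strategy is sound. The missing ingredient, $\Lengbv{\deriv} = \size{\pi_0}$, can be obtained without Theorem~\ref{thm:number-steps-value}: iterating quantitative subject reduction (Proposition~\ref{prop:quant-subject-reduction}) along $\deriv \colon \tm \toshufm^* \val$ transforms $\pi_0$ into a derivation $\concl{\pi_\val}{\,}{\val}{\emptymset}$ with $\size{\pi_\val} = \size{\pi_0} - \Lengbv{\deriv}$, and Lemma~\ref{l:semantic-value}.\ref{p:semantic-value-uniqueness} forces $\size{\pi_\val} = 0$, whence $\Lengbv{\deriv} = \size{\pi_0}$. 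Combined with your (correct) use of the remark following Proposition~\ref{prop:number-steps}, which gives $\Lengbv{\deriv} \leq \size{\pi} - \sizeZero{\val} = \size{\pi}$ for \emph{every} $\Type{\pi}{\tm}$ because only the derivation of the normal form needs to be minimal, this yields $\size{\pi_0} \leq \size{\pi}$ for all $\pi$, and the minimum is attained at $\pi_0$. (In fact the same subject-reduction argument applied to an arbitrary $\pi$ gives the lower bound directly, so you could bypass the remark altogether.) With this repair your proof becomes a clean, non-inductive alternative to the paper's argument, which instead proceeds by induction on the length of $\deriv$ and excludes a smaller derivation by contradiction at each step.
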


\begin{proof}
	By \refprop{semantic-valuable}, from $\concl{\pi_0}{\,}{\tm}{\emptymset}$ it follows that there exists a reduction sequence $\deriv \colon \tm \toshufm^* \val$ for some $\val \in \valSet$ (which is $\shufm$-normal).
	We proceed by induction on the number $\Leng{\deriv} \in \Nat$ of $\shufm$-steps in $\deriv$.
	
	If $\Leng{\deriv} = 0$ then $\tm = \val$ and hence $\size{\pi_0} = 0 = \min{\{\size{\pi} \mid \Type{\pi}{\tm}\}}$ according to \reflemmap{semantic-value}{uniqueness}.
	
	Otherwise $\Leng{\deriv} > 0$ and hence $\deriv$ is the concatenation of $\tm \toshufm \tm'$ and a reduction sequence $\deriv' \colon \tm' \toshufm^* \val$ (for some term $\tm'$), so that $\Leng{\deriv} = \Leng{\deriv'} + 1$.
	There are two cases:
	\begin{itemize}
		\item either $\tm \tobvm \tm'$ and then, by the quantitative subject reduction for $\tobvm$ (\refpropp{quant-subject-reduction}{betav}), there exists $\concl{\pi'_0}{\,}{\tm'}{\emptymset}$ with $\size{\pi_0} = \size{\pi'_0} + 1$; 
		by \ih, $\size{\pi'_0} = \min{\{\size{\pi'} \mid \Type{\pi'}{\tm'}\}}$;
		suppose by absurd that $\size{\pi_0} \neq \min{\{\size{\pi} \mid \Type{\pi}{\tm}\}}$: so, there would be $\Type{\pi_{\min{\!}}}{\tm}$ such that $\size{\pi_{\min{}}} < \size{\pi_0}$ and hence, by quantitative 
		subject reduction (\refpropp{quant-subject-reduction}{betav}), there would be $\Type{\pi_{\min{\!}}'}{\tm'}$ such that $\size{\pi_{\min{}}'} = \size{\pi_{\min{}}} - 1 < \size{\pi_0} - 1 = \size{\pi_0'}$, which is impossible;
		therefore, $\size{\pi_0} = \min{\{\size{\pi} \mid \Type{\pi}{\tm}\}}$.
		
		\item or $\tm \tosigm \tm'$ and then, by the quantitative subject reduction for $\tosigm$ (\refpropp{quant-subject-reduction}{sigma}), there is a derivation $\concl{\pi'_0}{\,}{\tm'}{\emptymset}$ with $\size{\pi_0} = \size{\pi'_0}$; 
		by \ih, $\size{\pi'_0} = \min{\{\size{\pi'} \mid \Type{\pi'}{\tm'}\}}$;
		suppose by absurd that $\size{\pi_0} \neq \min{\{\size{\pi} \mid \Type{\pi}{\tm}\}}$: so, there would be $\Type{\pi_{\min{\!}}}{\tm}$ such that $\size{\pi_{\min{}}} < \size{\pi_0}$ and hence, by quantitative 
		subject reduction (\refpropp{quant-subject-reduction}{sigma}), there would be $\Type{\pi_{\min{\!}}'}{\tm'}$ such that $\size{\pi_{\min{}}'} = \size{\pi_{\min{}}}  < \size{\pi_0} = \size{\pi_0'}$, which is impossible;
		therefore, $\size{\pi_0} = \min{\{\size{\pi} \mid \Type{\pi}{\tm}\}}$.
		\qedhere
	\end{itemize}
\end{proof}

\setcounter{theoremAppendix}{\value{thm:number-steps-value}}
\begin{theoremAppendix}[Exact number of $\betavm$-steps for valuables]
  \label{thmappendix:number-steps-value}
  \NoteState{thm:number-steps-value}
  If $\tm \toshufm^*\! \val \in \valSet$ then  $\Lengbv{\tm} = \size{\pi}$
for $\concl{\pi}{\,}{\tm}{\emptymset}$.
\end{theoremAppendix}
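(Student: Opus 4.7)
}
The plan is to reduce the statement to the already-proved exact count in \refprop{number-steps}, applied to the reduction sequence $\deriv\colon \tm \toshufm^* \val$. The key technical point is that \emph{any} derivation $\concl{\pi}{\,}{\tm}{\emptymset}$ with empty types is automatically of minimal size among all derivations typing $\tm$, so that the minimality hypothesis demanded by \refprop{number-steps} is discharged for free. Combined with \reflemmap{sizes}{value}, which says that the minimal size of a type derivation of the value $\val$ is $\sizeZero{\val}=0$, Eq.~\refeq{number-steps} instantly gives $\Lengbv{\deriv}=\size{\pi}-0=\size{\pi}$. Since the length $\Lengbv{\deriv}$ does not depend on $\deriv$ by \refcoro{same-number}, we may then set $\Lengbv{\tm}=\Lengbv{\deriv}=\size{\pi}$.

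First I would fix a $\shufm$-normalizing reduction sequence $\deriv\colon \tm \toshufm^* \val$ from the hypothesis, then invoke \refprop{number-steps} in the form that if $\size{\pi_{\min}}=\min\{\size{\pi'} \mid \Type{\pi'}{\tm}\}$ then $\Lengbv{\deriv}=\size{\pi_{\min}}-\sizeZero{\val}=\size{\pi_{\min}}$ (using $\sizeZero{\val}=0$ since $\val$ is a value). Then it remains to establish the minimality lemma:

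\begin{quote}
\emph{If $\concl{\pi}{\,}{\tm}{\emptymset}$, then $\size{\pi}=\min\{\size{\pi'} \mid \Type{\pi'}{\tm}\}$.}
\end{quote}

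I would prove this by induction on the length of any $\shufm$-reduction sequence from $\tm$ to a value (such a sequence exists by \refprop{semantic-valuable}, applied to the derivation $\pi$). In the base case $\tm\in\valSet$, \reflemmap{semantic-value}{uniqueness} forces $\size{\pi}=0$, which is trivially minimal. For the inductive step, pick one step $\tm \toshufm \tmp$ and apply quantitative balanced subject reduction (\refprop{quant-subject-reduction}) to $\pi$: this yields $\concl{\pi'}{\,}{\tmp}{\emptymset}$ with $\size{\pi'}=\size{\pi}-1$ in the $\betavm$ case and $\size{\pi'}=\size{\pi}$ in the $\sigm$ case. By induction $\pi'$ is minimal at $\tmp$. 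To transfer minimality back to $\tm$, take any $\Type{\pi_{\min}}{\tm}$ with minimal size and apply subject reduction again to get $\Type{\pi_{\min}'}{\tmp}$ of size $\size{\pi_{\min}}-1$ (resp.~$\size{\pi_{\min}}$); minimality of $\pi'$ forces $\size{\pi'}\leq\size{\pi_{\min}'}$, so $\size{\pi}\leq\size{\pi_{\min}}$ and hence $\size{\pi}=\size{\pi_{\min}}$.

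The main obstacle is precisely this minimality lemma, because the passage from ``$\pi$ has empty types'' to ``$\pi$ is of minimum size'' is not immediate from subject reduction alone: one has to combine subject reduction (which might leave room for smaller derivations of $\tm$ than $\pi$) with the induction at the reduct, exploiting that the type-preservation given by \refprop{quant-subject-reduction} is \emph{exact} (size decreases by exactly $1$ for $\betavm$ and stays equal for $\sigm$). Once this is in place, the theorem falls out by a single application of \refprop{number-steps} and \reflemmap{sizes}{value}, with no further computation.
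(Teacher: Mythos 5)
Your proposal is correct and follows essentially the same route as the paper: the paper also derives the theorem from \refprop{number-steps} together with a minimality lemma stating that any derivation $\concl{\pi}{\,}{\tm}{\emptymset}$ has minimum size, proved by induction on the length of a $\shufm$-reduction to a value using quantitative subject reduction (your direct comparison of sizes at the reduct is just the contrapositive of the paper's argument by contradiction). The only cosmetic difference is that the paper obtains $\Lengbv{\tm}=\Lengbv{\deriv}$ directly from the definition of $\Lengsym_{\betav}$ rather than citing \refcoro{same-number} explicitly.
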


\begin{proof}
According to \refprop{syntactic-normal}, $\tm$ is $\shufm$-normalizable; also, there exists $\concl{\pi}{\,}{\tm}{\emptymset}$ by \refprop{semantic-valuable}, since there is $\deriv \colon \tm \toshufm \val \in \valSet$.
By \reflemma{minimal},  $\size{\pi} = \min{\{\size{\pi'} \mid \Type{\pi'}{\tm}\}}$.
By \refprop{number-steps}, $\Lengbv{\tm} = \Lengbv{\deriv} = \size{\pi}$.
%
%
%
\end{proof}

\subsection{Omitted proofs and remarks of \refsect{conclusions}}

\setcounter{theoremAppendix}{\value{thm:characterize-normalizable-Plotkin}}
\begin{theoremAppendix}[Semantic and logical characterization of $\betavm$-normalization in the closed case]
	\label{thmappendix:characterize-normalizable-Plotkin}
	\NoteState{thm:characterize-normalizable-Plotkin}
	Let $\tm$ be a closed term.
	The following are equivalent:
	\begin{enumerate}
		\item\label{pappendix:characterize-normalizable-normalizable-Plotkin} \emph{Normalizability:} $\tm$ is $\betavm$-normalizable;
		\item\label{pappendix:characterize-normalizable-valuable-Plotkin} \emph{Valuability:} $\tm \tobvm^* \val$ for some closed value $\val$;

		\item\label{pappendix:characterize-normalizable-equivalen-normal-Plotkin} \emph{Completeness:} $\tm \betaveq \val$ for some closed  value $\val$;
		
		\item\label{pappendix:characterize-normalizable-nonempty-Plotkin}\emph{Adequacy:} $\sem{\tm}{\vec{\var}} \neq \emptyset$ for any list $\vec{\var} = (\var_1, \dots, \var_k)$ (with $k \in \nat$) of pairwise distinct variables;
		\item\label{pappendix:characterize-normalizable-semantic-Plotkin} \emph{Empty point:} $((\emptymset, \overset{k}{\dots}\,, \emptymset), \emptymset) \in \sem{\tm}{\vec{\var}}$ for any list $\vec{\var} = (\var_1, \dots, \var_k)$ ($k \in \nat$) of pairwise distinct variables;
		
		\item\label{pappendix:charachterize-normalizable-logic-Plotkin}\emph{Derivability with empty types:} there exists a derivation $\concl{\pi}{\,}{\tm}{\emptymset}$.
		
		\item\label{pappendix:characterize-normalizable-derivable-Plotkin}\emph{Derivability:} there exists a derivation $\concl{\pi}{\,}{\tm}{Q}$ for some positive type $ Q$;
		\item\label{pappendix:characterize-normalizable-strongly-normalizable-Plotkin}\emph{Strong normalizabilty:} $\tm$ is strongly $\betavm$-normalizable.
	\end{enumerate}
\end{theoremAppendix}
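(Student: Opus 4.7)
My plan is to organize the eight conditions as a cycle of implications, bridging the operational side of Plotkin's $\lambda_v$ with the semantic/logical side via the shuffling-calculus theory already developed. The cornerstone will be Corollary~\ref{coro:syntactic-normal-closed}, which states that for a closed term, being $\betavm$-normal, being $\shufm$-normal, and being a closed value all coincide.

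Several links in the cycle are routine. \eqref{p:characterize-normalizable-strongly-normalizable-Plotkin}$\Rightarrow$\eqref{p:characterize-normalizable-normalizable-Plotkin} is trivial; \eqref{p:characterize-normalizable-normalizable-Plotkin}$\Rightarrow$\eqref{p:characterize-normalizable-valuable-Plotkin} follows because a $\betavm$-normal form of a closed term is closed (closedness is preserved by $\tobvm$), hence a value by Corollary~\ref{coro:syntactic-normal-closed}; and \eqref{p:characterize-normalizable-valuable-Plotkin}$\Rightarrow$\eqref{p:characterize-normalizable-equivalen-normal-Plotkin} is immediate since $\tobvm \subseteq \tobv \subseteq \betaveq$. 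Among the semantic/logical conditions, Definition~\ref{def:semantics} together with the identification of an environment with its extension by empty types immediately yields \eqref{p:characterize-normalizable-logic-Plotkin}$\Leftrightarrow$\eqref{p:characterize-normalizable-semantic-Plotkin}, \eqref{p:characterize-normalizable-logic-Plotkin}$\Rightarrow$\eqref{p:characterize-normalizable-derivable-Plotkin}, and \eqref{p:characterize-normalizable-derivable-Plotkin}$\Leftrightarrow$\eqref{p:characterize-normalizable-nonempty-Plotkin}.

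The two non-trivial steps that close the cycle are \eqref{p:characterize-normalizable-equivalen-normal-Plotkin}$\Rightarrow$\eqref{p:characterize-normalizable-logic-Plotkin} and \eqref{p:characterize-normalizable-nonempty-Plotkin}$\Rightarrow$\eqref{p:characterize-normalizable-strongly-normalizable-Plotkin}. For the first, from $\tm \betaveq \val$ with $\val$ a closed value I would invoke confluence of $\tobv$ (Proposition~\ref{prop:general-properties}) to obtain a common $\tobv$-reduct $\tmthree$ of $\tm$ and $\val$; since $\tobv$-reducts of a value are still values (variables are $\tobv$-normal, abstractions $\tobv$-reduce only to abstractions), $\tmthree$ is a closed value. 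Lemma~\ref{l:value} with $p = 0$ then furnishes $\concl{\pi'}{\,}{\tmthree}{\emptymset}$, and iterated subject expansion for $\tobv$ (Lemma~\ref{l:subject-expansion}) lifts this typing back along $\tm \tobv^* \tmthree$ to $\concl{\pi}{\,}{\tm}{\emptymset}$. For the second, Theorem~\ref{thm:characterize-normalizable} upgrades non-emptiness of the semantics of $\tm$ to strong $\shufm$-normalizability, and since $\tobvm \subseteq \toshufm$, strong $\betavm$-normalizability follows.

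The hardest part will be to make sure the bridge from the $\shufm$- to the $\betavm$-side is not simulation-based but termination-based. A naive attempt to convert a $\shufm$-normalizing sequence into a $\betavm$-normalizing one risks failing because $\sigma$-steps can unblock $\betavm$-redexes unavailable in pure $\lambda_v$. The way around this is to avoid any such simulation: Theorem~\ref{thm:characterize-normalizable} gives strong $\shufm$-normalizability, which immediately entails strong $\betavm$-normalizability; closedness is preserved along $\tobvm$, so by Corollary~\ref{coro:syntactic-normal-closed} every $\betavm$-normal form reached from $\tm$ is necessarily a closed value, yielding \eqref{p:characterize-normalizable-valuable-Plotkin} with no reference to $\sigma$ at all.
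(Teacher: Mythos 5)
Your proposal is correct, and its skeleton (a cycle of implications anchored on \refcoro{syntactic-normal-closed} for the operational side and on \refthm{characterize-normalizable} for the strong-normalization step) matches the paper's. The genuine divergence is in how you pass from \emph{Completeness} to typability. The paper goes \emph{Completeness} $\Rightarrow$ \emph{Adequacy} by noting that $\betaveq \,\subseteq\, \shufeq$ and invoking the already-proven implication from completeness to adequacy in \refthm{characterize-normalizable} (which itself rests on \refthm{invariance} and \reflemma{semantics-normal}), and then reaches the \emph{Empty point} condition via \refprop{semantic-valuable}. You instead argue directly inside Plotkin's $\lambda_v$: confluence of $\tobv$ (\refprop{general-properties}) gives a common reduct of $\tm$ and the value $\val$, that reduct is again a closed value because $\tobv$-reducts of abstractions are abstractions, \reflemma{value} with $p=0$ (i.e.~\refcoro{minimal}) types it with $\emptymset$, and iterated subject expansion (\reflemma{subject-expansion}) transports the derivation back to $\tm$; the \emph{Empty point} and \emph{Adequacy} conditions then fall out of \refdef{semantics} by pure unfolding, with no appeal to \refprop{semantic-valuable}. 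Your route is slightly more self-contained on this link -- it never leaves $\lambda_v$ and uses only confluence plus subject expansion -- whereas the paper's is shorter on the page because it reuses the full strength of the shuffling-calculus theorems. Both are sound; your closing remark correctly identifies that the only safe bridge back from $\shufm$ to $\betavm$ is through strong normalization (an infinite $\tobvm$-sequence is an infinite $\toshufm$-sequence) rather than through any step-by-step simulation.
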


\begin{proof}
	\begin{description}
		\item[\eqref{p:characterize-normalizable-normalizable-Plotkin}$\Rightarrow$\eqref{p:characterize-normalizable-valuable-Plotkin}:] $\betavm$-normalizability of $\tm$ means that $\tm \tobvm^* \tmtwo$ for some $\betavm$-normal term $\tmtwo$.  As $\tm$ is closed, $\tmtwo$ is so, and hence $\tmtwo$ is a value according to \refcoro{syntactic-normal-closed}.
		
		\item[\eqref{p:characterize-normalizable-valuable-Plotkin}$\Rightarrow$\eqref{p:characterize-normalizable-equivalen-normal-Plotkin}:] Trivial, since $\tobvm \,\subseteq\, \tobv \,\subseteq\, \betaveq$.
		
		\item[\eqref{p:characterize-normalizable-equivalen-normal-Plotkin}$\Rightarrow$\eqref{p:characterize-normalizable-nonempty-Plotkin}:]
		First, note that any list of variables is suitable for $\tm$ and $\tmtwo$, since $\tm$ and $\tmtwo$ are closed.
		According to \refcoro{syntactic-normal-closed}, $\val$ is $\shufm$-normal;
		moreover $\tm \shufeq \val$ because $\betaveq \,\subseteq\, \shufeq$ (as $\tobv \,\subseteq\, \toshuf$).
		By the implication \eqref{p:characterize-normalizable-equivalen-normal}$\Rightarrow$\eqref{p:characterize-normalizable-nonempty} of \refthm{characterize-normalizable}, $\sem{\tm}{\vec{\var}} \neq \emptyset$.
		
		\item[\eqref{p:characterize-normalizable-nonempty-Plotkin}$\Rightarrow$\eqref{p:characterize-normalizable-semantic-Plotkin}:]
		Let $\vec{\var} = (\var_1, \dots, \var_k)$ (with $k \in \nat$) be a list of pairwise distinct variables: it is suitable for $\tm$ because $\tm$ is closed. 
		By the implication \eqref{p:characterize-normalizable-nonempty}$\Rightarrow$\eqref{p:characterize-normalizable-normalizable} of \refthm{characterize-normalizable}, from $\sem{\tm}{\vec{\var}} \neq \emptyset$ it follows that $\tm$ is $\shufm$-normalizable; moreover, the $\shufm$-normal form $\tmtwo$ of $\tm$ is closed (as $\tm$ is so) and hence $\tmtwo$ is a value by \refcoro{syntactic-normal-closed}. 
		By the implication \eqref{p:semantic-valuable-value}$\Rightarrow$\eqref{p:semantic-valuable-semantic} of \refprop{semantic-valuable}, $((\emptymset, \overset{k}{\dots}\,, \emptymset), \emptymset) \in \sem{\tm}{\vec{\var}}$.

		\item[\eqref{p:characterize-normalizable-semantic-Plotkin}$\Rightarrow$\eqref{p:characterize-normalizable-logic-Plotkin}:] Trivial, according to the definition of relational semantics (\refdef{semantics}).
		
		\item[\eqref{p:characterize-normalizable-logic-Plotkin}$\Rightarrow$\eqref{p:characterize-normalizable-derivable-Plotkin}]: Trivial.
		
		
		\item[\eqref{p:characterize-normalizable-derivable-Plotkin}$\Rightarrow$\eqref{p:characterize-normalizable-strongly-normalizable-Plotkin}:] 
		By the implication \eqref{p:characterize-normalizable-derivable}$\Rightarrow$\eqref{p:characterize-normalizable-strongly-normalizable} of \refthm{characterize-normalizable}, $\tm$ is strongly $\shufm$-normalizable, which implies that $\tm$ is strongly $\betavm$-normalizable since $\tobvm \,\subseteq\, \toshufm$.
		
		\item[\eqref{p:characterize-normalizable-strongly-normalizable-Plotkin}$\Rightarrow$\eqref{p:characterize-normalizable-normalizable-Plotkin}:] Trivial. 
		\qedhere
	\end{description}	
\end{proof}

\setcounter{corollaryAppendix}{\value{coro:same-number-Plotkin}}
\begin{corollaryAppendix}[Same number of $\betavm$-steps]
	\label{coroappendix:same-number-Plotkin}
	\NoteState{coro:same-number-Plotkin}
	Let $\tm$ be a closed $\betavm$-normalizable term and $\tm_0$ be its $\betavm$-normal form.
	For all reduction sequences $\deriv \colon \tm \tobvm^* \tm_0$ and $\derivp \colon \tm \tobvm^* \tm_0$, one has $\Lengbv{\deriv} = \Lengbv{\derivp}$.
\end{corollaryAppendix}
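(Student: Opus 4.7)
The plan is to reduce the statement to the analogous result \refcoro{same-number} already established for $\shufcalc$, by exploiting the inclusion $\tobvm \subseteq \toshufm$ together with the fact that in the closed setting $\betavm$- and $\shufm$-normal forms coincide.

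First, I would use the hypothesis that $\tm$ is closed and $\betavm$-normalizable to conclude, via \refcoro{syntactic-normal-closed}, that $\tm_0$ is a closed value and hence is also $\shufm$-normal. So $\tm_0$ is a $\shufm$-normal form reachable from $\tm$ (since any $\betavm$-reduction is a $\shufm$-reduction), and by confluence of $\toshufm$ (\refprop{general-properties}) it is \emph{the} unique $\shufm$-normal form of $\tm$.

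Next, I would observe that both reduction sequences $\deriv \colon \tm \tobvm^* \tm_0$ and $\derivp \colon \tm \tobvm^* \tm_0$ are, via $\tobvm \subseteq \toshufm$, also reduction sequences in $\shufcalc$ of the form $\tm \toshufm^* \tm_0$. Moreover they consist purely of $\betavm$-steps, so counting $\betavm$-steps in $\deriv$ and $\derivp$ as elements of $\shufcalc$ coincides with their lengths $\Lengbv{\deriv}$ and $\Lengbv{\derivp}$ as originally defined in $\lambda_v$.

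Finally I would invoke \refcoro{same-number} for $\shufcalc$, which gives $\Lengbv{\deriv} = \Lengbv{\derivp}$ directly. The argument is essentially a one-line transfer, so there is no real obstacle beyond verifying that the reformulation is legitimate; the content has already been absorbed by the detour through $\shufcalc$ and its quantitative subject reduction.
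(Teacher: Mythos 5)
Your proposal is correct and follows essentially the same route as the paper: use \refcoro{syntactic-normal-closed} to see that the closed $\betavm$-normal form $\tm_0$ is also $\shufm$-normal, view both sequences as $\shufm$-reduction sequences via $\tobvm \subseteq \toshufm$, and conclude by \refcoro{same-number}. Your extra remark that confluence of $\toshufm$ makes $\tm_0$ \emph{the} $\shufm$-normal form of $\tm$ is a small but legitimate detail needed to apply \refcoro{same-number}, which the paper leaves implicit.
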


\begin{proof}
	Since $\tm$ is closed, its $\betavm$-normal form $\tm_0$ is closed as well and hence is $\shufm$-normal according to \refcoro{syntactic-normal-closed}.
	As $\tobvm \,\subseteq\, \toshufm$, one has $\deriv \colon \tm \toshufm^* \tm_0$ and $\deriv' \colon \tm \toshufm^* \tm_0$.
	Therefore, $\Lengbv{\deriv} = \Lengbv{\derivp}$ by \refcoro{same-number}.
\end{proof}

\setcounter{theoremAppendix}{\value{thm:number-steps-value-Plotkin}}
\begin{theoremAppendix}[
	Number of $\betavm$-steps]
	\label{thmappendix:number-steps-value-Plotkin}
	\NoteState{thm:number-steps-value-Plotkin}
	If $\tm $ is closed and $\betavm$-normalizable, then $\Lengbv{\tm} = \size{\pi}$ for $\concl{\pi}{\,}{\tm}{\emptymset}$.
\end{theoremAppendix}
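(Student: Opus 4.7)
The plan is to reduce the statement to \refthm{number-steps-value} by invoking the closure-based syntactic bridge provided by \refcoro{syntactic-normal-closed}. Concretely, from $\betavm$-normalizability of $\tm$ there exists a reduction sequence $\deriv\colon \tm \tobvm^* \tm_0$ with $\tm_0$ a $\betavm$-normal form. First I would observe that $\tm_0$ is closed, because $\tobvm$ preserves closedness, and then apply \refcoro{syntactic-normal-closed} to conclude that $\tm_0$ is a (closed) value $\val$.

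Next, since $\tobvm \subseteq \toshufm$, the very same $\deriv$ is also a $\shufm$-reduction sequence, so $\tm \toshufm^* \val \in \valSet$. At this point the hypothesis of \refthm{number-steps-value} is satisfied, which guarantees the existence of a derivation $\concl{\pi}{\,}{\tm}{\emptymset}$ and delivers the identity $\Lengbv{\tm} = \size{\pi}$ we want.

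The only subtlety, and the single place one must be slightly careful, is the interpretation of the symbol $\Lengbv{\tm}$: it is defined with respect to $\shufm$-normalizing sequences, while the statement speaks of $\betavm$-normalization. This is a non-issue once one notes that, in the closed setting, the $\betavm$-normal form $\tm_0$ is $\shufm$-normal (again by \refcoro{syntactic-normal-closed}), hence the $\shufm$-normal form of $\tm$ coincides with its $\betavm$-normal form; by \refcoro{same-number} the number of $\betavm$-steps in any $\shufm$-normalizing sequence from $\tm$ is the same as in $\deriv$, so the value $\Lengbv{\tm}$ is unambiguous and computed by $\size{\pi}$, as required. No hard obstacle is expected, since all the heavy lifting has already been done in \refthm{number-steps-value}.
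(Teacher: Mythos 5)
Your proposal is correct and follows essentially the same route as the paper: use \refcoro{syntactic-normal-closed} to see that the closed $\betavm$-normal form is a value, observe that $\tobvm\,\subseteq\,\toshufm$ turns the $\betavm$-normalizing sequence into a $\shufm$-normalizing one, and conclude by \refthm{number-steps-value}. Your extra remark on why $\Lengbv{\tm}$ (defined via $\shufm$-normalizing sequences) is the right quantity is a useful clarification that the paper leaves implicit, but it does not change the argument.
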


\begin{proof}
	Since $\tm$ is closed, its $\betavm$-normal form $\tm_0$ is closed as well and hence is a value according to \refcoro{syntactic-normal-closed}.
	As $\tobvm \,\subseteq\, \toshufm$, one has $\tm \toshufm^* \tm_0$.
	By \refthm{number-steps-value}, $\Lengbv{\tm} = \size{\pi}$ where $\pi$ is the derivation $\concl{\pi}{\,}{\tm}{\emptymset}$. 
\end{proof}

\end{document}